\newcommand{\C}{\mathbb{C}}	                
\newcommand{\R}{\mathbb{R}}                     
\newcommand{\Z}{\mathbb{Z}}
\newcommand{\poly}{\text{poly}}
\newcommand{\pr}[1]{\mathbf{\normalfont Pr}\normalfont\lbrack #1 \rbrack} 
\newcommand{\ex}[1]{\mathbb{E}\normalfont\lbrack #1 \rbrack}
\newcommand{\bpr}[1]{\mathbf{\normalfont Pr}\normalfont \Big[#1 \Big]} 
\newcommand{\bex}[1]{\mathbb{E}\normalfont \Big[#1 \Big]}
\newcommand{\eps}{\epsilon}
\newcommand{\ttx}[1]{\texttt{#1}}
\newcommand{\ab}{\allowbreak}
\theoremstyle{theorem}
\newtheorem{theorem}{Theorem}
\theoremstyle{lemma}
\newtheorem{lemma}{Lemma}
\theoremstyle{corollary}
\newtheorem{corollary}{Corollary}
\theoremstyle{fact}
\newtheorem{fact}{Fact}
\theoremstyle{proposition}
\newtheorem{proposition}{Proposition}
\theoremstyle{definition}
\newtheorem{definition}{Definition}
\newtheorem{remark}{Remark}
\newtheorem{claim}{Claim}
\title{Perfect $L_p$ Sampling in a Data Stream \footnote{A preliminary version of this work appeared in FOCS 2018.}}
\author{
	Rajesh Jayaram\\
	Carnegie Mellon University\\
	\texttt{rkjayara@cs.cmu.edu}
	\and
	David P. Woodruff\\
	Carnegie Mellon University \\
	\texttt{dwoodruf@cs.cmu.edu}
	\thanks{The authors thank the partial support by the National Science Foundation under Grant No. CCF-1815840.}
}
\date{}
\begin{document}
	\maketitle
	
	\begin{abstract}
		In this paper, we resolve the one-pass space complexity of {\it perfect} $L_p$ sampling for $p \in (0,2)$ in a stream. 
Given a stream of updates (insertions and deletions) to the coordinates of an underlying vector $f \in \R^n$, a perfect $L_p$ sampler must output an index $i$ with probability $|f_i|^p/\|f\|_p^p$, and is allowed to fail with some probability $\delta$. So far, for $p > 0$ no algorithm has been shown to solve the problem exactly using $\text{poly}( \log n)$-bits of space. In 2010, Monemizadeh and Woodruff introduced an {\it approximate $L_p$ sampler}, which outputs $i$ with probability $(1 \pm \nu)|f_i|^p /\|f\|_p^p$, using space polynomial in $\nu^{-1}$ and $\log(n)$. The space complexity was later reduced by Jowhari, Sa\u{g}lam, and Tardos to roughly $O(\nu^{-p} \log^2 n \log \delta^{-1})$ for $p \in (0,2)$, which matches the $\Omega(\log^2 n \log \delta^{-1})$ lower bound in terms of $n$ and $\delta$, but is loose in terms of $\nu$.

				Given these nearly tight bounds, it is perhaps surprising that no lower bound exists in terms of $\nu$---not even a bound of $\Omega(\nu^{-1})$ is known. In this paper, we explain this phenomenon by demonstrating the existence of an $O(\log^2 n \log \delta^{-1})$-bit {\it perfect} $L_p$ sampler for $p \in (0,2)$. This shows that $\nu$ need not factor into the space of an $L_p$ sampler, which closes the complexity of the problem for this range of $p$. For $p=2$, our bound is $O(\log^3 n \log \delta^{-1})$-bits, which matches the prior best known upper bound of $O(\nu^{-2}\log^3n \log \delta^{-1})$, but has no dependence on $\nu$. For $p<2$, our bound holds in the random oracle model, matching the lower bounds in that model. However, we show that our algorithm can be derandomized with only a $O((\log \log n)^2)$ blow-up in the space (and no blow-up for $p=2$). Our derandomization technique is quite general, and can be used to derandomize a large class of linear sketches, including the more accurate count-sketch variant of Minton and Price \cite{minton2014improved}, resolving an open question in that paper. 
		
		Finally, we show that a $(1\pm\eps)$ relative error estimate of the frequency $f_i$ of the sampled index $i$ can be obtained using an additional $O( \eps^{-p} \log n)$-bits of space for $p < 2$, and $O( \eps^{-2} \log^2 n)$ bits for $p=2$, which was possible before only by running the prior algorithms with $\nu = \eps$.

	\end{abstract}

\thispagestyle{empty}
\clearpage
\tableofcontents
\thispagestyle{empty}
\clearpage
\setcounter{page}{1}

\section{Introduction}

The streaming model of computation has become increasingly important for the analysis of massive datasets, where the sheer size of the input imposes stringent restrictions on the resources available to algorithms. 
Examples of such datasets include internet traffic logs, sensor networks, financial transaction data, database logs, and scientific data streams (such as huge experiments in particle physics, genomics, and astronomy). 
Given their prevalence, there is a large body of literature devoted to designing extremely efficient one-pass algorithms for analyzing data streams. We refer the reader to \cite{babcock2002models, muthukrishnan2005data} for surveys of these algorithms and their applications.  

More recently, the technique of sampling has proven to be tremendously powerful for the analysis of data streams. Substantial literature has been devoted to the study of sampling for problems in big data  \cite{muthukrishnan2005data,DBLP:books/sp/16/Haas16, cohen2015stream,cohen2009stream, DBLP:journals/jcss/CohenDKLT14,DBLP:journals/pvldb/CohenCD11, DBLP:journals/tocs/EstanV03,DBLP:conf/sigmod/GibbonsM98,DBLP:books/lib/Knuth98,DBLP:journals/pvldb/MankuM12,DBLP:journals/toms/Vitter85, cohen2012don, DBLP:journals/vldb/GemullaLH08,  DBLP:conf/vldb/GemullaLH06}, with applications to network traffic analysis \cite{thottan2010anomaly,huang2007communication,gilbert2001quicksand,mai2006sampled,duffield2004sampling}, databases \cite{olken1993random, DBLP:books/sp/16/Haas16, haas1996selectivity, haas1992sequential,lipton1990practical,lipton1995query},  distributed computation \cite{woodruff2016distributed, cormode2010optimal, cormode2012continuous,tirthapura2011optimal}, and low-rank approximation \cite{woodruff2016distributed, frieze2004fast,deshpande2006adaptive}. While several models for sampling in data streams have been proposed \cite{babcock2002sampling, andoni2010streaming, cormode2010optimal}, one of the most widely studied are the $L_p$ samplers introduced in \cite{monemizadeh20101}. Roughly speaking, given a vector $f \in \R^n$, the goal of an $L_p$ sampler is to return an index $i \in \{1,2,\dots , n\}$ with probability $|f_i|^p/\|f\|_p^p$. In the data stream setting, the vector $f$ is given by a sequence of updates (insertions or deletions) to its coordinates of the form $f_i \leftarrow f_i + \Delta$, where $\Delta$ can either be positive or negative. A $1$-pass $L_p$ sampler must return an index given only one pass through the updates of the stream.

Since their introduction, $L_p$ samplers have been utilized to develop alternative algorithms for important streaming problems, such as the heavy hitters problem, $L_p$ estimation, cascaded norm estimation, and finding duplicates in data streams \cite{andoni2010streaming, monemizadeh20101, Jowhari:2011, braverman2012optimal}. For the case of $p=1$ and insertion only streams, where the updates to $f$ are strictly positive, the problem is easily solved using $O(\log n)$ bits of space with the well-known reservoir sampling algorithm \cite{vitter1985random}. When deletions to the stream are allowed or when $p \neq 1$, however, the problem is more complicated. In fact, the question of whether such samplers even exist was posed by Cormode, Murthukrishnan, and Rozenbaum in \cite{cormode2005summarizing}. Later on, Monemizadeh and Woodruff demonstrated that, if one permits the sampler to be \textit{approximately} correct, such samplers are indeed possible \cite{monemizadeh20101}. We formally state the guarantee given by an approximate $L_p$ sampler below.
 
\begin{definition}\label{def:sampler}
Let $f \in \R^n$ and $\nu \in [0,1)$. For $p > 0$, an \textit{approximate $L_p$ sampler} with $\nu$-relative error is an algorithm which returns an index $i \in \{1,2,\dots,n\}$ such that for every $j \in\{1,2,\dots,n\}$ 
	\[\pr{i = j}=	\frac{|f_j|^p}{\|f\|_p^p}(1 \pm \nu) + O(n^{-c})	\]
	Where $c \geq 1$ is some arbitrarily large constant. For $p=0$, the problem is to return $j$ with probability $(1 \pm \nu)\max\{1,|f_j|\}/|\{j\: : \: f_j \neq 0 \}| +O(n^{-c})$, If $\nu = 0$, then the sampler is said to be \textit{perfect}.  An $L_p$ sampler is allowed to output \texttt{FAIL} with some probability $\delta$. However, in this case it must not output any index.
\end{definition}
 
The one-pass approximate $L_p$ sampler introduced in \cite{monemizadeh20101} requires $\poly (\nu^{-1}, \log n)$ space, albeit with rather large exponents. 
Later on, in \cite{andoni2010streaming}, the complexity was reduced significantly to $O(\nu^{-p}\log^3(n) \log(1/\delta))$-bits\footnote{We note that previous works \cite{Jowhari:2011, kapralov2017optimal} have cited the sampler of \cite{andoni2010streaming} as using $O(\log^3(n))$-bits of space, however the space bound given in their paper is in \textit{machine words}, and is therefore a $O(\log^4(n))$ bit bound with $\delta = 1/\poly(n)$. In order to obtain an $O(\log^3(n)\log(1/\delta))$ bit sampler, their algorithm must be modified to use fewer repetitions.}
 for $p \in [1,2]$, using a technique known as \textit{precision sampling}. Roughly, the technique of precision sampling consists of scaling the coordinates $f_i$ by random variable coefficients $1/t_i$ as the updates arrive, resulting in a new stream vector $z \in \R^n$ with $z_i = f_i/t_i$. The algorithm then searches for all $z_i$ which cross a certain threshold $T$. Observe that if $t_i = u_i^{1/p}$ where $u_i$ is uniform on $[0,1]$, then the probability that $f_i/t_i \geq T$ is precisely $ \pr{u_i < |f_i|^p / T^p} = |f_i|^p/T^p$. By running an $L_p$ estimation algorithm to obtain $T \in [\frac{1}{2}\|f\|_p, \frac{3}{2}\|f\|_p]$, an $L_p$ sampler can then return any $i$ with $z_i \geq T$ as its output. These heavy coordinates can be found using any of the well-known $\eta$-heavy hitters algorithms for a sufficiently small precision $\eta$.  

\begin{figure}[]
	\begin{center}
		\begin{tabular}{| c | c| c| c | } 
			\hline
			$L_p$ sampling upper bound (bits) & $p$ range &  Notes  & Citation \\ \hline
			$O(\log^3(n))$ & $p=0$ & perfect $L_0$ sampler, $\delta = 1/\poly(n)$ & \cite{frahling2008sampling} \\ \hline
			$O(\log^2(n) \log(1/\delta_2))	$ &  $p =0 $ & perfect $L_0$ sampler& \cite{Jowhari:2011} \\ \hline
			$\poly \log(\nu^{-1},n)$	& $p \in [0,2]$ & $\delta = 1/\poly(n)$ & \cite{monemizadeh20101} \\ \hline
			$O(\nu^{-p}\log^3(n)\log(1/\delta))$ & $p \in [1,2]$ &  $(1\pm\nu)$-relative error & \cite{andoni2010streaming} \\ \hline
			$O(\nu^{-\max\{1,p\}}\log^2(n)\log(1/\delta)) $& $ p \in (0,2) \setminus \{1\} $ &$(1\pm\nu)$-relative error &\cite{Jowhari:2011} \\ \hline
				$O(\nu^{-1}\log(\nu^{-1}) \log^2(n) \log(1/\delta))	$ &  $p =1 $ & $(1\pm\nu)$-relative error& \cite{Jowhari:2011} \\ \hline
				$O(\log^2(n) \log(1/\delta) )	$ &  $p \in (0,2) $ & \begin{tabular}{c}
				perfect $L_p$ sampler, \\ random oracle model,\\ matches lower bound
			\end{tabular}& This work \\ \hline
		
			$O(\log^2(n) \log(1/\delta)(\log\log n)^2 )	$ &  $p \in (0,2) $ & 
				\begin{tabular}{c}
					perfect $L_p$ sampler
				\end{tabular}
		& This work \\ \hline
			$O(\log^3(n) \log(1/\delta))	$ &  $p =2  $ & perfect $L_2$ sampler & This work \\ \hline
			$O(\log^3(n) )	$ &  $p \in (0,2)  $ & $\delta = 1/\poly(n)$ & This work \\ \hline
		
		\end{tabular} 
	\end{center} \caption{Evolution of one pass $L_p$ sampling upper bounds, with the best known lower bound of $\Omega(\log^2(n)\log(1/\delta))$ for $p \geq 0$ \cite{kapralov2017optimal} (see also \cite{Jowhari:2011} for a lower bound for constant $\delta$). } \label{fig:results}
\end{figure} 

 Using a tighter analysis of this technique with the same scaling variables $t_i = u_i^{1/p}$, Jowhari, Sa\u{g}lam, and Tardos reduced the space complexity of $L_p$ sampling for $p<2$ to $O(\nu^{-\max\{1,p\}} \log^2(n) \allowbreak \log(1/\delta))$-bits for $p \in (0,2) \setminus \{1\}$, and $O(\nu^{-1}\log(\nu^{-1})\allowbreak \log^2(n)\log(1/\delta))$ bits of space for $p=1$ \cite{Jowhari:2011}. Roughly speaking, their improvements result from a more careful consideration of the precision $\eta$ needed to determine when a $z_i$ crosses the threshold, which they do via the tighter tail-error guarantee of the well-known count-sketch heavy hitters algorithm \cite{Charikar:2002}. In addition, they give an $O(\log^2(n)\log(1/\delta))$ perfect $L_0$ sampler, and demonstrated an $\Omega(\log^2(n))$-bit lower bound for $L_p$ samplers for any $p \geq 0$. Recently, this lower bound was extended to $\Omega(\log^2(n)\log(1/\delta))$ \cite{kapralov2017optimal} bits, which closes the complexity of  the problem for $p=0$.
 
 For $p \in (0,2)$, this means that the  upper and lower bounds for $L_p$ samplers are tight in terms of $n, \delta$, but a gap exists in the dependency on $\nu$. Being the case, it would seem natural to search for an $\Omega(\nu^{-p}\log^2(n)\log(1/\delta))$ lower bound to close the complexity of the problem. It is perhaps surprising, therefore, that no lower bound in terms of $\nu$ exists -- not even an $\Omega(\nu^{-1})$ bound is known. This poses the question of whether the $\Omega(\log^2(n)\log(1/\delta))$ lower bound is in fact correct.

\subsection{Our Contributions}
In this paper, we explain the phenomenon of the lack of an $\Omega(\nu^{-1})$ lower bound by showing that $\nu$ need not enter the space complexity of an $L_p$ sampler at all. In other words, we demonstrate the existence of \textit{perfect $L_p$ samplers} using $O(\log^2(n)\log(1/\delta) (\log\log n)^2)$-bits of space for $p \in (0,2)$, thus resolving the space complexity of the problem up to $\log \log(n)$ terms\footnote{A previous version of this work claimed $O(\log^2(n)\log(1/\delta))$ bits of space for $p<2$, but contained an error in the derandomization. Thus, this bound only held in the random oracle model. In the present version we correct this derandomization using a slightly different algorithm, albeit with a $(\log \log n)^2$ blow-up in the space. The algorithm from the previous version can be found in Appendix \ref{sec:appendix}, along with a new analysis of its derandomization which allows it to run in $O(\log^2(n) (\log \log (n))^2)$-bits of space.}. In the random oracle model, where we are given random access to an arbitrarily long tape of random bits which do not count against the space of the algorithm, our upper bound is $O(\log^2(n) \log(1/\delta))$, which matches the lower bound in the random oracle model. For $p=2$, our space is $O(\log^3(n)\log(1/\delta))$-bits, which matches the best known upper bounds in terms of $n,\delta$, yet again has no dependency on $\nu$. In addition, for $p<2$ and the high probabiltiy regime of $\delta < 1/n$, we obtain a $O(\log^3(n))$-bit perfect $L_p$ sampler, which also tightly matches the lower bound without paying the extra $(\log\log n)^2$ factor. A summary of the prior upper bounds for $L_p$ sampling, along with the contributions of this work, is given in Figure \ref{fig:results}.

In addition to outputting a perfect sample $i$ from the stream, for $p \in (0,2)$ we also show that, conditioned on an index being output, given an additional additive $O(\min\{\eps^{-2},\eps^{-p} \log(\frac{1}{\delta_2}) \}\ab\log(n)\ab\log(1/\delta_2))$-bits we can provide a $(1 \pm \eps)$ approximation of the frequency $|f_i|$ with probability $1-\delta_2$. This separates the space dependence on $\log^2(n)$ and $\eps$ for frequency approximation, allowing us to obtain a $(1 \pm \eps)$ approximation of $|f_i|$ in $O(\log^2(n) + \eps^{-p}\log(n))$ bits of space with constant probability, whereas before this required $O(\eps^{-p}\log^2(n))$ bits of space. For $p=2$, our bound is $O(\eps^{-2}\log^2(n)\log(1/\delta_2))$, which still improves upon the prior best known bounds for estimating the frequency by an $O(\log(n))$-factor. Finally, we show an $\Omega(\eps^{-p}\log(n) \log(1/\delta_2))$ bits of space lower bound for producing the $(1 \pm \eps)$ estimate (conditioned on an index being returned).

\subsection{Applications}
Since their introduction, it has been observed that $L_p$ samplers can be used as a building block in algorithms for many important streaming problems, such as finding heavy hitters, $L_p$-norm estimation, cascaded norm estimation, and finding duplicates in data streams \cite{andoni2010streaming, monemizadeh20101, Jowhari:2011, braverman2012optimal}.
$L_p$ samplers, particularly for $p=1$, are often used as a black-box subroutine to design representative histograms of $f$ on which more complicated algorithms are run \cite{gibbons1997fast, gibbons1998new,olken1993random,gilbert2002summarize,huang2007communication,cormode2005summarizing}. For these black-box applications, the only property of the samplers needed is the distribution of their samples. Samplers with relative error are statistically biased and, in the analysis of more complicated algorithms built upon such samplers, this bias and its propagation over multiple samples must be accounted for and bounded. The analysis and development of such algorithms would be simplified dramatically, therefore, with the assumptions that the samples were truly uniform (i.e., from a perfect $L_1$ sampler). In this case, no error terms or variational distance need be accounted for. Our results show that such an assumption is possible without affecting the space complexity of the sampler.

Note that in Definition \ref{def:sampler}, we allow a perfect sampler to have $n^{-c+1}$ variation distance to the true $L_p$ distribution. We note that this definition is in line with prior work, observing that even the perfect $L_0$ sampler of \cite{Jowhari:2011} incurs such an error from derandomizing with Nisan's PRG. Nevertheless, this error will never be detected if the sampler is run polynomially many times in the course of constructing a histogram, and such a sampler is therefore statistically indistinguishable from a truly uniform sampler and can be used as a black box.

Another motivation for utilizing perfect $L_p$ samplers comes from applications in privacy.  Here $f \in \mathbb{R}^n$ is some underlying dataset, and we would like to reveal a sample $i \in [n]$ drawn from the $L_p$ distribution over $f$ to some external party without revealing too much global information about $f$ itself. Using an approximate $L_p$ sampler introduces a $(1 \pm\nu)$ multiplicative bias into the sampling probabilities, and this bias can depend on global properties of the data. For instance, such a sampler might bias the sampling probabilities of a large set $S$ of coordinates by a $(1+\nu)$ factor if a certain global property $P$ holds for $f$, and may instead bias them by $(1 - \nu)$ if a disjoint property $P'$ holds. Using only a small number of samples, an adversary would then be able to distinguish whether $P$ or $P'$ holds by determining how these coordinates were biased. On the other hand, the bias in the samples produced by a perfect $L_p$ sampler is polynomially small, and thus the leakage of global information could be substantially smaller when using one, though one would need to formally define a notion of leakage and privacy for the given application.  

\subsection{Our Techniques}
Our main algorithm is inspired by the precision sampling technique used in prior works \cite{andoni2010streaming, Jowhari:2011}, but with some marked differences. To describe how our sampler achieves the improvements mentioned above, we begin by observing that all $L_p$ sampling algorithms since \cite{andoni2010streaming} have adhered to the same algorithmic template (shown in Figure \ref{fig:template}). This template employs the classic \textit{count-sketch} algorithm of \cite{charikar2002finding} as a subroutine, which is easily introduced. For $k \in \mathbb{N}$, let $[k]$ denote the set $\{1,2,\dots,k\}$. Given a precision parameter $\eta$, count-sketch selects pairwise independent hash functions $h_j:[n]\to [6/\eta^2]$ and $g_j:[n] \to \{1,-1\}$, for $j=1,2,\dots,d$ where $d = \Theta(\log(n))$. Then for all $i \in [d], \:j \in [6/\eta^2]$, it computes the following linear function $A_{i,j} = \sum_{k \in [n], h_i(k) = j} g_i(k) f_k$, and outputs an approximation $y$ of $f$ given by $y_k = \text{median}_{i \in [d]} \{g_i(k) A_{i,h_i(k)}\}$. We will discuss the estimation guarantee of count-sketch at a later point.

The algorithmic template is as follows. First, perform some linear transformation on the input vector $f$ to obtain a new vector $z$. Next, run an instance $A$ of count-sketch on $z$ to obtain the estimate $y$. Finally, run some statistical test on $y$. If the test fails, then output \ttx{FAIL}, otherwise output the index of the largest coordinate (in magnitude) of $y$. We first describe how the sampler of \cite{Jowhari:2011} implements the steps in this template. Afterwards we describe the different implementation decisions made in our algorithm that allow it to overcome the limitations of prior approaches. 

\begin{center}
	
	\begin{figure}
	\fbox{\parbox{\textwidth}{ \texttt{Input:} $f \in \R^n$\\ \texttt{Output:} a sampled index $i^* \in [n]$
			\begin{enumerate}[topsep=0pt,itemsep=-1ex,partopsep=1ex,parsep=1ex] 
				\item Perform a linear transformation on $f$ to obtain $z$.
				\item Run instance $A$ of count-sketch on $z$ to obtain the estimate $y$.
				\item Find $i^* = \arg \max_i |y_i|$. Then run a statistical test on $y$ to decide whether to output $i^*$ or \texttt{FAIL}.
			\end{enumerate}
	}}\caption{Algorithmic Template for $L_p$ sampling} \label{fig:template}
\end{figure}

\end{center}

\paragraph*{Prior Algorithms.} The samplers of \cite{Jowhari:2011, andoni2010streaming} utilize the technique known as \textit{precision sampling}, which employs the following linear transformation. The algorithms first generate random variables $(t_1,\dots,t_n)$ with limited independence, where each $t_i \sim \text{Uniform}[0,1]$. Next, each coordinate $f_i$ is scaled by the coefficient $1/t_i^{1/p}$ to obtain the transformed vector $z \in \R^n$ given by $z_i = f_i/t_i^{1/p}$, thus completing Step $1$ of Figure \ref{fig:template}. For simplicity, we now restrict to the case of $p=1$ and the algorithm of \cite{Jowhari:2011}. The goal of the algorithm is then to return an item $z_i$ that crosses the threshold $|z_i |> \nu^{-1}R$, where $R = \Theta(\|f\|_1)$ is a constant factor approximation of the $L_1$. Note the probability that this occurs is proportional to $\nu|f_i|/\|f\|_1$.

Next, implementing the second step of Figure \ref{fig:template}, the vector $z$ is hashed into count-sketch to find an item that has crossed the threshold. Using the stronger \textit{tail-guarantee} of count-sketch, the estimate vector $y$ satisfies $\|y - z\|_\infty \leq \sqrt{\eta}\|z_{\text{tail}(1/\eta)}\|_2$, where $z_{\text{tail}(1/\eta)}$ is $z$ with the $1/\eta$ largest coordinates (in magnitude) set to $0$. 
Now the algorithm runs into trouble when it incorrectly identifies $z_i$ as crossing the threshold when it has not, or vice-versa. However, if the tail error $\sqrt{\eta}\|z_{\text{tail}(1/\eta)}\|_2$ is at most $O(\|f\|_1)$, then since $t_i$ is a uniform variable the probability that $z_i$ is close enough to the threshold to be misidentified is $O(\nu)$, which results in at most $(1 \pm \nu)$ relative error in the sampling probabilities. Thus it will suffice to have $\sqrt{\eta}\|z_{\text{tail}(1/\eta)}\|_2 = O(\|f\|_1)$ with probability $1-\nu$.
To show that this is the case, consider the level sets $I_k = \{z_i  \; |\;  z_i \in (\frac{\|f\|_p}{2^{(k+1)/p}}, \frac{\|f\|_p}{2^{k/p}})\}$, and note $\ex{|I_k|} = 2^k$.  We observe here that results of \cite{Jowhari:2011} can be partially attributed to the fact that for $p<2$, the total contribution $\Theta(\frac{\|f\|_p^2}{2^{2k/p}}|I_k|)$ of the level sets to $\|z\|_2^2$ decreases geometrically with $k$, and so with constant probability we have $\|z\|_2 = O(\|f\|_p)$. Moreover, if one removes the top $\log(1/\nu)$ largest items, the contribution of the remaining items to the $L_2$ is $O(\|f\|_1)$ with probability $1-\nu$.  So taking $\eta = \log(1/\nu)$, the tail error from count-sketch has the desired size. Since the tail error does not include the $1/\eta$ largest coordinates, this holds even conditioned on a fixed value $t_{i^*}$ of the maximizer. 

Now with probability $\nu$ the guarantee on the error from the prior paragraph does not hold, and in this case one \textit{cannot} still output an index $i$, as this would result in a $\nu$-\textit{additive} error sampler. Thus, as in Step $3$ of Figure \ref{fig:template}, the algorithm must implement a statistical test to check that the guarantee holds. To do this, using the values of the largest $1/\eta$ coordinates of $y$, they produce an estimate of the tail-error and output \texttt{FAIL} if it is too large. Otherwise, the item $i^* = \arg \max_i |y_i|$ is output if $|y_{i^*}| > \nu^{-1}R$. The whole algorithm is run $O(\nu^{-1}\log(1/\delta))$ times so that an index is output with probability $1-\delta$.

\paragraph*{Our Algorithm.}
Our first observation is that, in order to obtain a truly perfect sampler, one needs to use different scaling variables $t_i$. Notice that the approach of scaling by inverse uniform variables and returning a coordinate which reaches a certain threshold $T$ faces the obvious issue of what to return when more than one of the variables $|z_i|$ crosses $T$. This is solved by simply outputting the maximum of all such coordinates. However, the probability of an index becoming the maximum \textit{and} reaching a threshold is drawn from an entirely different distribution, and for uniform variables $t_i$ this distribution does not appear to be the correct one. To overcome this, we must use a distribution where the maximum index $i$ of the variables $(|f_1t_2^{-1/p}|, |f_2t_2^{-1/p}|,\dots,|f_nt_n^{-1/p}|)$ is drawn \textit{exactly} according to the $L_p$ distribution $|f_i|^p/\|f\|_p^p$. We observe that the distribution of exponential random variables has precisely this property, and thus to implement Step $1$ of Figure \ref{fig:template} we set $z_i = f_i/t_i^{1/p}$ where $t_i$ is an exponential random variable. We remark that exponential variables have been used in the past, such as for $F_p$ moment estimation, $p > 2$, in \cite{andoni2010streaming} and regression in \cite{DBLP:journals/corr/abs-1305-5580}. However it appears that their applicability to sampling has never before been exploited.

Next, we carry out the count-sketch step by hashing our vector $z$ into a count-sketch data structure $A$. Because we are only interested in the maximizer of $z$, we develop a modified version of count-sketch, called \textit{count-max}. Instead of producing an estimate $y$ such that $\|y - z\|_\infty$ is small, count-max simple checks, for each $i \in [n]$, how many times $z_i$ hashed into the largest bucket (in absolute value) of a row of $A$. If this number is at least a $4/5$-fraction of the total number of rows, count-max declares that $z_i$ is the maximizer of $z$. We show that with high probability, count-max never incorrectly declares an item to be the maximizer, and moreover if $z_i > 20 (\sum_{j \neq i} z_j^2)^{1/2}$, then count-max will declare $i$ to be the maximizer. Using the \textit{min-stability} property of exponential random variables, we can show that the maximum item $|z_{i^*}| = \max\{|z_i|\}$ is distributed as $\|f\|_p/E^{1/p}$, where $E$ is another exponential random variable. Thus $|z_{i^*}| = \Omega(\|f\|_p)$ with constant probability. Using a more general analysis of the $L_2$ norm of the level sets $I_k$, we can show that $(\sum_{j \neq i^*} z_j^2)^{1/2}= O(\|f\|_p)$.   If all these events occur together (with sufficiently large constants), count-max will correctly determine the coordinate $i^* = \arg \max_i \{|z_i|\}$. However, just as in \cite{Jowhari:2011}, we cannot output an index anyway if these conditions do not hold, so we will need to run a statistical test to ensure that they do. 

\paragraph*{The Statistical Test.} 

To implement Step $3$ of the template, our algorithm simply tests whether count-max declares any coordinate $i \in [n]$ to be the maximizer, and we output \texttt{FAIL} if it does not. This approach guarantees that we correctly output the maximizer conditioned on not failing.
The primary technical challenge will be to show that, conditioned on $i = \arg \max_i \{|z_i|\}$. for some $i$, the probability of failing the statistical test \textit{does not depend on $i$}. In other words, conditioning on $|z_i|$ being the maximum does not change the failure probability. Let $z_{D(k)}$ be the $k$-th order statistic of $z$ (i.e., $|z_{D(1)}| \geq |z_{D(2)}| \geq \dots \geq |z_{D(n)}|)$. Here the $D(k)$'s are known as \textit{anti-ranks}. To analyze the conditional dependence, we must first obtain a closed form for $z_{D(k)}$ which separates the dependencies on $k$ and $D(k)$. Hypothetically, if $z_{D(k)}$ depended only on $k$, then our statistical test would be completely independent of $D(1)$, in which case we could safely fail whenever such an event occurred. Of course, in reality this is not the case. Consider the vector $f= (100n,1,1,1,\dots,1) \in \R^{n}$ and $p=1$. Clearly we expect $z_1$ to be the maximizer, and moreover we expect a gap of $\Theta(n)$ between $z_{1}$ and $z_{D(2)}$. On the other hand, if you were told that $D(1) \neq 1$, it is tempting to think that $z_{D(1)}$ just \textit{barely} beat out $z_1$ for its spot as the max, and so $z_1$ would not be far behind. Indeed, this intuition would be correct, and one can show that the probability that $z_{D(1)} - z_{D(2)} > n$ conditioned on $D(1) = i$ changes by an additive constant depending on whether or not $i=1$. Conditioned on this gap being smaller or larger, we are more or less likely (respectively) to output \texttt{FAIL}. In this setting, the probability of conditional failure can change by an $\Omega(1)$ factor depending on the value of $D(1)$. 

To handle scenarios of this form, our algorithm will utilize an additional linear transformation in Step $1$ of the template. Instead of only scaling by the random coefficients $1/t_i^{1/p}$, our algorithm first \textit{duplicates} the coordinates $f_i$ to remove all heavy items from the stream. If $f$ is the vector from the example above and $F$ is the duplicated vector, then after $\poly(n)$ duplications all copies of the heavy item $f_1$ will have weight at most $|f_1|/\|F\|_1 < 1/\poly(n)$. By uniformizing the relative weight of the coordinates, this washes out the dependency of $|z_{D(2)}|$ on $D(1)$, since $\|F_{-D(1)}\|_p^p = (1 \pm n^{-\Omega(c)}) \|F_{-j} \|_p^p$ after $n^c$ duplications, for any $j \in [n^c]$. Notice that this transformation blows-up the dimension of $f$ by a $\poly(n)$ factor. However, since our space usage is always $\poly \log(n)$, the result is only a constant factor increase in the complexity. 

After duplication, we scale $F$ by the coefficients $1/t_i^{1/p}$, and the rest of the algorithm proceeds as described above. Using expressions for the order statistics $z_{D(k)}$ which separate the dependence into the anti-ranks $D(j)$ and a set of exponentials  $E_1,E_2,\dots E_n$ \textit{independent} of the anti-ranks, after duplication we can derive tight concentration of the $z_{D(k)}$'s conditioned on fixed values of the $E_i$'s. Using this concentration result, we decompose our count-max data structure $A$ into two component variables: one independent of the anti-ranks (the independent component), and a small adversarial noise of relative weight $n^{-c}$. In order to bound the effect of the adversarial noise on the outcome of our tests we must $\mathbf{1)}$ randomize the threshold for our failure condition and $\mathbf{2)}$ demonstrate the anti-concentration of the resulting distribution over the independent components of $A$. This will demonstrate that with high probability, the result of the statistical test is completely determined by the value of the independent component, which allows us to fail without affecting the conditional probability of outputting $i \in [n]$.

\paragraph*{Derandomization}
Now the correctness of our sampler crucially relies on the full independence of the $t_i$'s to show that the variable $D(1)$ is drawn from precisely the correct distribution (namely, the $L_p$ distribution $|f_i|^p/\|f\|_p^p$). Being the case, we cannot directly implement our algorithm using any method of limited independence. In order to derandomize the algorithm from requiring full-independence, we will use a combination of Nisan's pseudorandom generator \cite{nisan1992pseudorandom}, as well as an extension of the recent PRG of \cite{gopalan2015pseudorandomness} which fools certain classes of \textit{Fourier transforms}. We first use a closer analysis of the seed length Nisan's generator requires to fool the randomness required for the count-max data structure, which avoids the standard $O(\log(n))$-space blowup which would be incurred by using Nisan's as a black box. Once the count-max has been derandomized, we demonstrate how the PRG of \cite{gopalan2015pseudorandomness} can be used to fool \textit{arbitrary} functions of $d$-halfspaces, so long as these half-spaces have bounded bit-complexity. We use this result to derandomize the exponential variables $t_i$ with a seed of length $O(\log^2(n) (\log \log n)^2)$, which will allow for the total derandomization of our algorithm for $\delta = \Theta(1)$ and $p < 2$ in the same space. 

Our derandomization technique is in fact fairly general, and can be applied to streaming algorithms beyond the sampler in this work. Namely, we demonstrate that \textit{any} streaming algorithm which stores a linear sketch $A \cdot f$, where the entries of $A$ are independent and can be sampled from with $O(\log(n))$-bits, can be derandomized with only a $O((\log \log n)^2)$-factor increase in the space requirements (see Theorem \ref{thm:derandomGeneral}). This improves the $O(\log(n))$-blow up incurred from black-box usage of Nisan's PRG. As an application, we derandomize the count-sketch variant of Minton and Price \cite{minton2014improved} to use $O( \eps^{-2}\log^2(n) (\log \log n)^2)$-bits of space, which gives improved concentration results for count-sketch when the hash functions are fully-independent. The problem of improving the derandomization of \cite{minton2014improved} beyond the black-box application of Nisan's PRG was an open problem. We remark that using $O(1/\eps^2 \log^3(n))$-bits of space in the classic count sketch of \cite{charikar2002finding} has strictly better error guarantees that those obtained from derandomizing \cite{minton2014improved} with Nisan's PRG to run in the same space. Our derandomization, in contrast, demonstrates a strong improvement on this, obtaining the same bounds with an $O((\log \log n)^2)$ instead of an $O(\log(n))$ factor blowup.


 
\paragraph*{Case of $p=2$.} Recall for $p<2$, we could show that the $L_2$ norm of the level sets $I_k$ decays geometrically with $k$. More precisely, for any $\gamma$ we have $\|z_{\text{tail}(\gamma)}\|_2 = O(\|F\|_p \gamma^{-1/p +1/2})$ with probability $1-O(e^{-\gamma})$. Using this, we actually do not need the tight concentration of the $z_{D(k)}$'s, since we can show that the top $n^{c/10}$ coordinates change by at most $(1 \pm n^{-\Omega(c)})$ depending on $D(1)$, and the $L_2$ norm of the remaining coordinates is only an $O(n^{-c/10(1/p - 1/2)})$ fraction of the whole $L_2$, and can thus be absorbed into the adversarial noise. For $p=2$ however, each level set $I_k$ contributes weight $O(\|F\|_p^2)$ to $\|z\|_2^2$, so  $\|z_{\text{tail}(\gamma)}\|_2 = O(\sqrt{\log(n)}\|F\|_p )$ even for $\gamma = \poly(n)$. Therefore, for $p=2$ it is essential that we show concentration of the $z_{D(k)}$'s for \textit{nearly all} $k$. Since $\|z\|_2^2$ will now be larger than $\|F\|_2^2$ by a factor of $\log(n)$ with high probability, count-max will only succeed in outputting the largest coordinate when it is an $O(\sqrt{\log(n)})$ factor larger than expected. This event occurs with probability $1/\log(n)$, so we will need to run the algorithm $\log(n)$ times in parallel to get constant probability, for a total $O(\log^3 n)$-bits of space. Using the same $O(\log^3(n))$-bit Nisan PRG seed for all $O(\log(n))$ repititions, we show that the entire algorithm for $p=2$ can be derandomized to run in $O(\log^3 n \log 1/\delta)$-bits of space. 

\paragraph*{Optimizing the Runtime.}
In addition to our core sampling algorithm, we show how the linear transformation step to construct $z$ can be implemented via a parameterized rounding scheme to improve the update time of the algorithm without affecting the space complexity, giving a run-time/relative sampling error trade-off. By rounding the scaling variables $1/t_i^{1/p}$ to powers of $(1 + \nu)$, we discretize their support to have size $O(\nu\log(n))$. We then simulate the update procedure by sampling from the distribution over updates to our count-max data-structure $A$ of duplicating an update and hashing each duplicate independently into $A$. Our simulation utilizes results on efficient generation of binomial random variables, through which we can iteratively reconstruct the updates to $A$ bin-by-bin instead of duplicate-by-duplicate. In addition, by using an auxiliary heavy-hitter data structure, we can improve our query time from the na\"ive $O(n)$ to $O(\poly \log(n))$ without increasing the space complexity.

\paragraph*{Estimating the Frequency.}
We show that allowing an additional additive $O(\min\{\eps^{-2}, \eps^{-p}\allowbreak  \log(\frac{1}{\delta_2}) \}\ab\log n\ab\log \delta_2^{-1})$ bits of space, we can provide an estimate $\tilde{f} = (1 \pm \eps)f_i$ of the outputted frequency $f_i$ with probability $1-\delta_2$ when $p<2$. To achieve this, we use our more general analysis of the contribution of the level sets $I_k$ to $\|z\|_2$, and give concentration bounds on the tail error when the top $\eps^{-p}$ items are removed. When $p=2$, for similar reasons as described in the sampling algorithm, we require another $O(\log n)$ factor in the space complexity to obtain a $(1 \pm \eps)$ estimate.  Finally, we demonstrate an $\Omega(\eps^{-p}\log n \log \delta_2^{-1})$ lower bound for this problem, which is nearly tight when $p<2$. To do so, we adapt a communication problem introduced in \cite{jayram2013optimal}, known as \textit{Augmented-Indexing on Large Domains}. We weaken the problem so that it need only succeed with constant probability, and then show that the same lower bound still holds. Using a reduction to this problem, we show that our lower bound for $L_p$ samplers holds even if the output index is from a distribution with constant \textit{additive} error from the true $L_p$ distribution $|f_i|^p/\|f\|_p^p$. 
 
	\section{Preliminaries}
 For $a,b,\eps\in \R$, we write $a = b\pm \eps$ to denote the containment $a \in [b - \eps, b + \eps]$. For positive integer $n$, we use $[n]$ to denote the set $\{1,2,\dots,n\}$, and $\tilde{O}(\cdot)$ notation to hide $\log(n)$ terms. For any vector $v \in \R^n$, we write $v_{(k)}$ to denote the $k$-th largest coordinate of $v$ in absolute value. In other words, $|v_{(1)}| \geq |v_{(2)}| \geq \dots \geq |v_{(n)}|$. For any $\gamma \in [n]$, we define $v_{\text{tail}(\gamma)}$ to be $v$ but with the top $\gamma$ coordinates (in absolute value) set equal to $0$. For any $i \in [n]$, we define $v_{-i}$ to be $v$ with the $i$-th coordinate set to $0$. We write $|v|$ to denote the entry-wise absolute value of $v$, so $|v|_j = |v_j|$ for all $j \in [n]$. All space bounds stated will be in bits. For our runtime complexity, we assume the unit cost RAM model, where a word of $O(\log(n))$-bits can be operated on in constant time, where $n$ is the dimension of the input streaming vector. Finally, we will use $\tilde{O}$ notation to hide poly$\log(n)$ factors; in other words $O(\log^c(n)) = \tilde{O}(1)$ for any constant $c$.
	
	Formally, a data stream is given by an underlying vector $f \in \R^n$, called the \emph{frequency vector}, which is initialized to $0^n$. The frequency vector then receives a stream of $m$ updates of the form $(i_t,\Delta_t) \in [n] \times \{-M,\dots,M\}$ for some $M > 0$ and $t \in [m]$. The update $(i,\Delta)$ causes the change $f_{i_t} \leftarrow f_{i_t} + \Delta_t$. For simplicity, we make the common assumption (\cite{braverman2016beating}) that $\log(mM) = O(\log(n))$, though our results generalize naturally to arbitrary $n,m$.  	
 In this paper, we will need Khintchine's and  McDiarmid's inequality

	\begin{fact}[ Khintchine inequality \cite{haagerup1981best}]\label{fact:khintchine}
		Let $x \in \R^n$ and $Q = \sum_{i=1}^n \varphi_i x_i$ for i.i.d. random variables $\varphi_i$ uniform on $\{1,-1\}$. Then
		$\pr{|Q| > t\|x\|_2} < 2e^{-  t^2/2}$.   
	\end{fact}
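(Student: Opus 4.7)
The plan is to prove the stated tail bound via the standard Chernoff--Cram\'er moment generating function method applied to Rademacher sums. This is the classical sub-Gaussian argument; the bound as stated has constant $2$ in front of the exponential, which is exactly what the symmetrization union bound yields, so no delicate Haagerup-type optimization of constants is required.

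First, I would fix $\lambda > 0$ and compute the moment generating function of $Q$. By independence of the $\varphi_i$,
\[
\mathbb{E}[e^{\lambda Q}] \;=\; \prod_{i=1}^n \mathbb{E}[e^{\lambda \varphi_i x_i}] \;=\; \prod_{i=1}^n \cosh(\lambda x_i).
\]
Next, I would invoke the elementary inequality $\cosh(y) \le e^{y^2/2}$ for all $y \in \R$, which follows by comparing Taylor series term by term since $(2k)! \ge 2^k k!$. This gives
\[
\mathbb{E}[e^{\lambda Q}] \;\le\; \prod_{i=1}^n e^{\lambda^2 x_i^2 / 2} \;=\; e^{\lambda^2 \|x\|_2^2 / 2}.
\]

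Then I would apply Markov's inequality to $e^{\lambda Q}$ to get $\pr{Q > t \|x\|_2} \le e^{-\lambda t \|x\|_2} \mathbb{E}[e^{\lambda Q}] \le \exp\bigl(-\lambda t \|x\|_2 + \lambda^2 \|x\|_2^2/2\bigr)$. Optimizing the right-hand side over $\lambda > 0$ by taking $\lambda = t / \|x\|_2$ yields $\pr{Q > t \|x\|_2} \le e^{-t^2/2}$. By symmetry of the Rademacher distribution, $-Q$ has the same distribution as $Q$, so the identical bound applies to $\pr{-Q > t\|x\|_2}$. A union bound finally gives $\pr{|Q| > t \|x\|_2} \le 2 e^{-t^2/2}$, completing the proof.

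The only non-routine ingredient is the inequality $\cosh(y) \le e^{y^2/2}$, which I do not expect to be any obstacle; the rest is a mechanical Chernoff bound with an optimized $\lambda$. Since the paper cites Haagerup, one might worry about the sharp constants in front of $\|x\|_2$ in $L^p$ norms of $Q$, but the statement asked for is only the sub-Gaussian tail, for which this straightforward argument suffices.
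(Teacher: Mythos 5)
Your proof is correct and complete: the MGF computation, the elementary bound $\cosh(y) \le e^{y^2/2}$ (justified by the termwise comparison $(2k)! \ge 2^k k!$), the optimized Chernoff bound with $\lambda = t/\|x\|_2$, and the symmetry-plus-union-bound step are all standard and sound. The paper does not prove this fact at all — it is stated with a citation to Haagerup — so there is nothing to compare against; your argument is exactly the textbook Hoeffding-type derivation one would expect, and you are right that the sharp Haagerup constants are irrelevant to the sub-Gaussian tail bound actually claimed.
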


	\begin{fact}[McDiarmid's inequality \cite{mcdiarmid1989}]\label{fact:McDiarmid}
		Let $X_1,X_2,\dots,X_n$ be independent random variables, and let $\psi(x_1,\dots,x_n)$ by any function that satisfies 
		\[	\sup_{x_1,\dots,x_n,\hat{x}_i} \big|\psi(x_1,x_2,\dots,x_n) - \psi(x_1,\dots,x_{i-1}, \hat{x}_i,x_{i+1},\dots,x_n)	\big| \leq c_i \; \; \; \text{for } 1 \leq i \leq n	\]
		Then for any $\eps >0$, we have
					$\bpr{\Big|	\psi(X_1,\dots,X_n) - \bex{\psi(X_1,\dots,X_n)} \Big| \geq \eps} \leq 2\exp{\Big(\frac{-2\eps^2}{\sum_{i=1}^n c_i}	\Big)}.$
	\end{fact}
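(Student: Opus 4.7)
The plan is to prove McDiarmid's inequality by the classical martingale route: construct the Doob martingale associated with $\psi$, verify a bounded-range estimate for its increments using independence of the $X_i$, and then apply the Hoeffding-type MGF bound combined with Markov. I also remark that the denominator of the exponent as stated ought to be $\sum_i c_i^2$ (not $\sum_i c_i$), matching the classical bound; the argument below produces this corrected form.

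The first step is to set $Z_i = \ex{\psi(X_1,\dots,X_n) \mid X_1,\dots,X_i}$ for $i=0,1,\dots,n$, so that $Z_0 = \ex{\psi}$ is deterministic and $Z_n = \psi(X_1,\dots,X_n)$. Since $\psi$ is bounded and hence integrable, $(Z_i)$ is a martingale with respect to the filtration generated by the $X_i$'s (by iterated conditioning), and $Z_n - Z_0$ telescopes over the increments $D_i := Z_i - Z_{i-1}$.

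The second, and only nontrivial, step is the bounded-range estimate: conditionally on $X_1,\dots,X_{i-1}$, the increment $D_i$ is supported in an interval of length at most $c_i$. Using independence of $X_{i+1},\dots,X_n$ from $(X_1,\dots,X_i)$, one writes $Z_i = h_i(X_1,\dots,X_i)$ with $h_i(x_1,\dots,x_i) = \ex{\psi(x_1,\dots,x_i,X_{i+1},\dots,X_n)}$. Pulling the bounded-differences hypothesis for $\psi$ through the expectation yields $|h_i(x_1,\dots,x_{i-1},a) - h_i(x_1,\dots,x_{i-1},b)| \le c_i$ for all $a,b$. Hence, conditionally on $X_1,\dots,X_{i-1}$, $Z_i$ takes values in an interval of length $\le c_i$; since $\ex{D_i \mid X_1,\dots,X_{i-1}} = 0$, the increment lies in a length-$c_i$ interval containing $0$. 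This is the main obstacle in the proof: lifting the bounded-differences hypothesis from $\psi$ to the conditional expectations requires careful use of the independence of the tail $X_{i+1},\dots,X_n$.

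The third step applies Hoeffding's lemma to each conditional increment, giving $\ex{e^{\lambda D_i} \mid X_1,\dots,X_{i-1}} \le \exp(\lambda^2 c_i^2 / 8)$ for all $\lambda \in \R$. Iterating via the tower property yields $\ex{e^{\lambda(Z_n - Z_0)}} \le \exp\!\big(\lambda^2 \sum_i c_i^2 / 8\big)$. A Chernoff step with Markov's inequality, optimized at $\lambda = 4\eps / \sum_i c_i^2$, produces the one-sided tail $\pr{Z_n - Z_0 \ge \eps} \le \exp(-2\eps^2 / \sum_i c_i^2)$; applying the same argument to $-\psi$ and union-bounding gives the two-sided bound stated in the fact.
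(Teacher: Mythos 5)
Your proof is correct: this is the standard Doob-martingale argument (bounded increments via independence of the tail variables, Hoeffding's lemma on each conditional increment, then a Chernoff/Markov optimization), which is exactly the classical proof of \cite{mcdiarmid1989}. The paper itself offers no proof---it imports the inequality as a known fact---so there is nothing to compare against beyond noting that your route is the canonical one. You are also right to flag the typo in the statement: the exponent should read $-2\eps^2/\sum_{i=1}^n c_i^2$, not $-2\eps^2/\sum_{i=1}^n c_i$, and your derivation produces the correct form (the application in Lemma \ref{lem:Zconcentration} is in fact consistent with the squared version, since there the bound used is $2\exp(-2\eps^2 T^2 n^{-2c}\|F\|_p^{2p}/(n^c(2n^{-c+1}\|F\|_p^p)^2))$).
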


	Our analysis will use stability properties of Gaussian random variables. 
	\begin{definition}\label{def:stable}
		A distribution $\mathcal{D}_p$ is said to be $p$-stable if whenever $X_1,\dots,X_n \sim \mathcal{D}_p$ are drawn independently, we have\[	\sum_{i=1}^n a_i X_i = \|a\|_p X	\]
		for any fixed vector $a \in \R^n$, where $X \sim \mathcal{D}_p$ is again distributed as a $p$-stable. In particular, the Gaussian random variables $\mathcal{N}(0,1)$ are $p$-stable for $p=2$ (i.e., $\sum_i a_i g_i = \|a\|_g$, where $g,g_1,\dots,g_n$ are Gaussian). 
	\end{definition}

	\subsection{Count-Sketch and Count-Max} \label{sec:countmax}
	Our sampling algorithm will utilize a modification of the well-known data structure known as \textit{count-sketch} (see \cite{charikar2002finding} for further details). 	
	 We now introduce the description of count-sketch which we will use for the remainder of the paper. The count-sketch data structure is a table $A$ with $d$ rows and $k$ columns. When run on a stream $f \in \R^n$, for each row $i \in [d]$, count-sketch picks a uniform random mapping $h_i: [n] \to [k]$ and $g_i:[n] \to \{1,-1\}$. Generally, $h_i$ and $g_i$ need only be $4$-wise independent hash functions, but in this paper we will use fully-independent hash functions (and later relax this condition when derandomizing). 	
	 Whenever an update $\Delta$ to item $v \in [n]$ occurs, count-sketch performs the following updates:
	\[	A_{i,h_i(v)} \leftarrow A_{i,h_i(v)} + \Delta g_i(v) \; \; \text{ for } i = 1,2,\dots,d \]
	Note that while we will not implement the $h_i$'s as explicit hash functions, and instead generate i.i.d. random variables $h_i(1),\dots,h_i(n)$, we will still use the terminology of hash functions. In other words, by \textit{hashing} the update $(v,\Delta)$ into the row $A_i$ of count-sketch, we mean that we are updating $A_{i,h_i(v)}$ by $\Delta g_i(v)$. By hashing the coordinate $f_v$ into $A$, we mean updating $A_{i,h_i(v)} $ by $g_i(v)f_v$ for each $i=1,2,\dots,d$. 	
	Using this terminology, each row of count-sketch corresponds to randomly hashing the indices in $[n]$ into $k$ buckets, and then each bucket in the row is a sum of the frequencies $f_i$ of the items which hashed to it multiplied by random $\pm 1$ signs.
	In general, count-sketch is used to obtain an estimate vector $y \in \R^n$ such that $\|y - f\|_\infty$ is small. Here the estimate $y$ is given $y_j = \text{median}_{i \in [d]} A_{i,h_i(j)} (g_{i}(j))^{-1}$ for all $j \in [n]$. This vector $y$ satisfies the following guarantee. 
	\begin{theorem}\label{thm:count-sketch}
		If $d = \Theta(\log(1/\delta))$ and $k = 6/\eps^2$, then for a fixed $i \in [n]$ we have $|y_i - f_i| < \eps\|f_{\text{tail}(1/\eps^2)}\|_2$ with probability $1- \delta$. Moreover, if $d = \Theta(\log(n))$ and $c \geq 1$ is any constant, then we have $\|y - f\|_\infty < \eps\|f_{\text{tail}(1/\eps^2)}\|_2$ with probability $1 - n^{-c}$. Furthermore, if we instead set $y_j = \text{median}_{i \in [d]} |A_{i,h_i(j)}|$, then the same two bounds above hold replacing $f$ with $|f|$.  
	\end{theorem}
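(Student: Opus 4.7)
The plan is to follow the classical count-sketch analysis of Charikar, Chen, and Farach-Colton, isolating the ``noise'' in a single bucket and then boosting via the median. Fix $j \in [n]$ and a row $i \in [d]$. Writing $\tilde{y}_{i,j} := g_i(j) A_{i,h_i(j)} = f_j + X_{i,j}$ where
\[
X_{i,j} = \sum_{k \neq j} g_i(j)g_i(k) f_k \mathbbm{1}[h_i(k) = h_i(j)],
\]
the $4$-wise independence of $g_i$ and the pairwise independence of $h_i$ give $\mathbb{E}[X_{i,j}] = 0$ and $\mathrm{Var}(X_{i,j}) = \sum_{k \neq j} f_k^2 \Pr[h_i(k) = h_i(j)] = \|f_{-j}\|_2^2/k$. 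To get the stronger \emph{tail} guarantee, I would condition on the event $\mathcal{E}_{i,j}$ that none of the $1/\eps^2$ largest coordinates (other than $j$) collide with $j$ in row $i$. By a union bound over the top items and the choice $k = 6/\eps^2$, $\Pr[\mathcal{E}_{i,j}] \geq 5/6$, and conditioned on $\mathcal{E}_{i,j}$ we have $\mathrm{Var}(X_{i,j} \mid \mathcal{E}_{i,j}) \leq \|f_{\text{tail}(1/\eps^2)}\|_2^2 / k \leq \eps^2\|f_{\text{tail}(1/\eps^2)}\|_2^2 / 6$. Chebyshev then yields $\Pr[|X_{i,j}| > \eps\|f_{\text{tail}(1/\eps^2)}\|_2] \leq 1/3$ in each row.

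Next I would boost this per-row constant success probability via the median. Let $Z_i$ be the indicator that row $i$ produces a ``bad'' estimate of $f_j$, so $\Pr[Z_i = 1] \leq 1/3$. Since the rows use independent hash and sign functions, the $Z_i$'s are independent, and a standard Chernoff bound shows that with $d = \Theta(\log(1/\delta))$, the fraction of bad rows is below $1/2$ except with probability $\delta$; in that event the median $y_j = \mathrm{median}_i \tilde{y}_{i,j}$ must lie within $\eps\|f_{\text{tail}(1/\eps^2)}\|_2$ of $f_j$. This gives the first (pointwise) bound. For the $\ell_\infty$ bound one sets $\delta = n^{-(c+1)}$ in the above, which requires $d = \Theta(\log n)$, and then unions over all $j \in [n]$.

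For the final variant with $y_j = \mathrm{median}_{i \in [d]} |A_{i,h_i(j)}|$, I would use that $|A_{i,h_i(j)}| = |g_i(j) f_j + X_{i,j}'|$ where $X_{i,j}' = \sum_{k \neq j} g_i(k) f_k \mathbbm{1}[h_i(k) = h_i(j)]$; the sign $g_i(j)$ is irrelevant after taking absolute values, so by the reverse triangle inequality $\big| |A_{i,h_i(j)}| - |f_j| \big| \leq |X_{i,j}'|$. The noise $X_{i,j}'$ has the same variance analysis as $X_{i,j}$ (it does not depend on $g_i(j)$), so an identical Chebyshev-plus-median argument gives $\big| y_j - |f_j| \big| < \eps \|f_{\text{tail}(1/\eps^2)}\|_2$ with the stated probabilities. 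The bound in terms of $|f|$ rather than $f$ is immediate since $\|f_{\text{tail}(1/\eps^2)}\|_2 = \||f|_{\text{tail}(1/\eps^2)}\|_2$.

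I do not anticipate a serious obstacle; the only subtle point is the conditioning on $\mathcal{E}_{i,j}$, which is why the tail-error version requires tracking the top $1/\eps^2$ items separately rather than applying Chebyshev directly to $\|f_{-j}\|_2^2$. One must verify that this conditioning is preserved in the independence across rows used for the Chernoff step, which it is because $\mathcal{E}_{i,j}$ depends only on the $i$-th row's hash function.
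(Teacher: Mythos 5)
Your proof is correct and is exactly the standard Charikar--Chen--Farach-Colton analysis that the paper itself relies on: Theorem~\ref{thm:count-sketch} is stated without proof and cited to \cite{charikar2002finding}, so there is no competing argument in the paper to compare against. The one subtlety you flag—conditioning on the non-collision event $\mathcal{E}_{i,j}$ when bounding the tail variance—is handled cleanly here because the paper's count-sketch uses fully independent hash functions (and with only pairwise independence one would instead union-bound the heavy-collision event with an unconditional Chebyshev bound on the tail noise, giving the same $1/6+1/6$ per-row failure probability).
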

	In this work, however, we are only interested in determining the index of the heaviest item in $f$, that is $i^* = \arg \max_i |f_i|$. So we utilize a simpler estimation algorithm based on the count-sketch data structure that tests whether a fixed $j \in [n]$, if $j =  \arg \max_i |f_i|$. For analysis purposes, instead of having the $g_i$'s be random signs, we draw $g_i(v) \sim \mathcal{N}(0,1)$ as i.i.d. Gaussian variables. Then for a fixed $i$, set $\alpha_j = \big| \{ i \in [d] \; | \; |A_{i, h_i(j)}| = \max_{r \in [k]} |A_{i,r} |   \}\big|$, and we declare that $j = i^*$ to be the maximizer if  $\alpha_j > \frac{4}{5}d$. The algorithm computes $\alpha_j$ for all $j \in [n]$, and outputs the first index $j$ that satisfies $\alpha_j> \frac{4}{5}d$ (there will only be one with high probability).
		 To distinguish this modified querying protocol from the classic count-sketch, we refer to this algorithm as count-max. To refer to the data structure $A$ itself, we will use the terms count-sketch and count-max interchangeably. 
		 
		 We will prove our result for the guarantee of count-max in the presence of the following generalization. Before computing the values of $\alpha$ and reporting a maximizer as above, we will scale each bucket $A_{i,j}$ of count-max by a uniform random variable $\mu_{i,j} \sim \texttt{Uniform}(\frac{99}{100},\frac{101}{100})$. This generalization will be used for technical reasons in our analysis of Lemma \ref{lem:main}. Namely, we will need it to ensure that our failure threshold of our algorithm is randomized, which will allow us to handle small adversarial error.

	\begin{lemma}\label{lem:countmax}
	Let $c\geq 1$ be an arbitrarily large constant, set $d = \Theta(\log(n))$ and $k = 2$, and let $A$ be a $d \times k$ instance of count-max run on $f \in \R^n$ using fully independent hash functions $h_i$ and Gaussian random variables $g_i \sim \mathcal{N}(0,1)$. Then then with probability $1-n^{-c}$ the following holds: for every $i \in [n]$, if $|f_i| > 20 \|f_{-i}\|_2$ then count-max declares $i$ to be the maximum, and if $|f_i| \leq \max_{j \in [n]\setminus \{i\}} |f_j|$, then count-max does not declare $i$ to be the maximum. Thus if count-max declares $|f_i|$ to be the largest coordinate of $f$, it will be correct with high probability. Moreover, this result still holds if each bucket $A_{i,j}$ is scaled by a $\mu_{i,j} \sim \texttt{Uniform}(\frac{99}{100},\frac{101}{100})$ before reporting. 


	\end{lemma}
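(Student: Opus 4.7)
My approach is to analyze the success probability of the count-max query in a single row, then lift by a Chernoff bound plus a union bound. Fix a row $i \in [d]$ and condition on the hash function $h_i$. Since the $g_i(v)$ are i.i.d.\ $\mathcal{N}(0,1)$, each bucket $A_{i,1}, A_{i,2}$ is then a mean-zero Gaussian whose variance equals $\sum_{v : h_i(v)=b} f_v^2$ (with an extra $f_i^2$ term in the bucket containing $i$). Consequently $|A_{i,1}|/|A_{i,2}|$ has the law of $\sqrt{a/b}\, R$ with $R=|X|/|Y|$ for independent standard normals, i.e., a scaled folded Cauchy. My analysis repeatedly uses the identity $\pr{R>t} + \pr{R>1/t} = 1$, which follows from $\pr{R>t} = 1-(2/\pi)\arctan(t)$.

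\emph{First part.} Suppose $|f_i| > 20\|f_{-i}\|_2$. Let $\sigma_b^2 = \sum_{v \neq i,\; h_i(v)=b}f_v^2$, so the bucket containing $i$ has Gaussian variance $f_i^2 + \sigma_{h_i(i)}^2 \geq f_i^2$ while the other has variance $\sigma_{1-h_i(i)}^2 \leq \|f_{-i}\|_2^2 \leq f_i^2/400$. The per-row probability that $i$ lies in the bucket of larger magnitude is therefore at least $\pr{R>1/20} > 0.96$, a constant above $4/5$, uniformly over the hash. A Chernoff bound over $d = \Theta(\log n)$ rows gives $\alpha_i \geq 4d/5$ with probability $1-n^{-(c+1)}$, and at most one index can satisfy the hypothesis.

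\emph{Second part.} Fix $i$ with $|f_i| \leq |f_{i^*}|$ for some $i^* \neq i$ and condition on the hash values of $[n]\setminus\{i,i^*\}$, fixing bucket variances $\sigma_1^2, \sigma_2^2$. Split on whether $h_i(i) = h_i(i^*)$, each of probability $1/2$. In the equal-bucket case, bound the win probability of $i$ by $1$ trivially. In the unequal case, average over the two equiprobable assignments $(h_i(i), h_i(i^*)) \in \{(1,2),(2,1)\}$; the corresponding win probabilities are $\pr{R > \sqrt{(f_{i^*}^2+\sigma_2^2)/(f_i^2+\sigma_1^2)}}$ and $\pr{R > \sqrt{(f_{i^*}^2+\sigma_1^2)/(f_i^2+\sigma_2^2)}}$. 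The product of the two arguments of the square roots equals $(f_{i^*}^2+\sigma_1^2)(f_{i^*}^2+\sigma_2^2)/\big[(f_i^2+\sigma_1^2)(f_i^2+\sigma_2^2)\big]$, which is $\geq 1$ because $f_{i^*}^2 \geq f_i^2$. By the Cauchy identity above, the two probabilities therefore sum to at most $1$, so the conditional win probability of $i$ in the unequal case is $\leq 1/2$. Altogether $\pr{i \text{ wins a row}} \leq (1/2)(1/2)+(1/2)(1) = 3/4$, bounded away from $4/5$ by the constant $1/20$; Chernoff plus a union bound over $i \in [n]$ yields the required $n^{-c}$ failure bound.

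The rescaling by $\mu_{i,j} \in [99/100, 101/100]$ perturbs every comparison ratio $\sqrt{b/a}$ by a factor of at most $101/99$, absorbed into the constant Chernoff slack in both parts. The main technical delicacy is the second part: within a single realization the noise variances $\sigma_1^2,\sigma_2^2$ may be unbalanced, so one cannot argue symmetrically without pairing the two hash assignments of $(i,i^*)$ and invoking the Cauchy tail identity together with the algebraic product inequality.
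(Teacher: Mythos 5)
Your proof is correct, and it follows the same overall route as the paper's: analyze a single row via $2$-stability (each bucket is a mean-zero Gaussian with variance equal to the squared mass hashing there), get a per-row success probability above $4/5$ when $|f_i| > 20\|f_{-i}\|_2$ and at most $3/4$ (via the $1/2$--$1/2$ split on whether $i$ and $i^*$ collide) when $i$ is not the unique maximizer, then finish with Chernoff over the $\Theta(\log n)$ independent rows and a union bound. The one place you genuinely depart from the paper is the key ``different buckets $\Rightarrow$ win probability $\le 1/2$'' step: the paper disposes of it with a terse symmetry claim that the two buckets are $g_j(i^*)f_{i^*}+G$ and $g_j(i)f_i+G'$ with $G,G'$ identically distributed, whereas you condition on the noise split $(\sigma_1^2,\sigma_2^2)$, pair the two equiprobable assignments of $(i,i^*)$, and close the argument with the folded-Cauchy identity $\pr{R>t}+\pr{R>1/t}=1$ together with the observation that the product of the two comparison thresholds is at least $1$ since $f_{i^*}^2\ge f_i^2$. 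This is a more explicit and fully rigorous execution of exactly the averaging that the paper's exchangeability remark implicitly requires (a pointwise domination argument would fail, since for a fixed noise realization the bucket with the larger signal need not dominate), and your observation that the $\mu_{i,j}$ scalings only perturb each threshold by a factor $101/99$ (indeed, in the pairing the same $\mu$'s appear in both assignments, so the threshold product is essentially unchanged) correctly handles the last clause of the lemma.
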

	\begin{proof}
		
			First suppose $|f_i|^2 > 20 \|f_{-i}\|_2^2$, and consider a fixed row $j$ of $A$. WLOG $i$ hashes to $A_{j,1}$, thus $A_{j,1} = \mu_{j,1}\left(g_j(i)f_j + \sum_{t\: : \: h_j(t) = 1} g_j(t)f_t\right)$ and $A_{j,2} = \mu_{j,2}\left(   \sum_{t\: : \: h_j(t) = 2}g_j(t)f_t\right)$. By $2$-stability (Definition \ref{def:stable}),
			the probability that $|A_{j,2}| > |A_{j,1}|$ is less than probability that one $\mathcal{N}(0,1)$ Gaussian is $19$ times larger than another, which can be bounded by $15/100$ by direct computation. Thus $i$ hashes into the max bucket in a row of $A$ with probability at least $85/100$, so by Chernoff bounds, taking $d  = \Omega(c\log(n))$, with probability $1-n^{-2c}$ we have that $f_i$ is in the largest bucket at least a $4/5$ fraction of the time, which completes the first claim.
		
		
		Now suppose $i$ is not a unique max, and let $i^*$ be such that $|f_{i^*}|$ is maximal. Then conditioned on $i,i^*$ not hashing to the same bucket, the probability that $f_i$ hashes to a larger bucket than $f_{i^*}$ is at most $1/2$. To see this, note that conditioned on this, one bucket is distributed as $g_j(i^*) f_{i^*} +G$ and the other as  $g_j(i) f_{i} +G'$, where $G,G'$ are identically distributed random variables. 
		Thus the probability that $f_i$ is the in maximal bucket is at most $3/4$, and so by Chernoff bounds $f_i$ will hash to strictly less than $(4d/5)$ of the maximal buckets with probability $1-n^{-2c}$.  Union bounding over all $j \in [n]$ gives the desired result.
	\end{proof}
\begin{corollary}\label{cor:countmax}
	In the setting of Lemma \ref{lem:countmax}, with probability $1-O(n^{-c})$, count-max will never report an index $i \in [n]$ as being the maximum if $|f_i|< \frac{1}{100}\|f\|_2$. 
\end{corollary}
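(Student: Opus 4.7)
The plan is to split into two cases based on whether $|f_i|$ equals the maximum magnitude in $f$. In the first case $|f_i| \leq \max_{j \neq i} |f_j|$, Lemma~\ref{lem:countmax} already asserts that count-max does not declare $i$ as the maximum with probability $1 - O(n^{-c})$ (with the union bound over $i \in [n]$ already built into the lemma), so nothing more is needed. In the remaining case $|f_i| > \max_{j \neq i} |f_j|$, which determines at most a single index, combining with the hypothesis $|f_i| < \frac{1}{100}\|f\|_2$ forces $\|f\|_\infty/\|f\|_2 < 1/100$, so $f$'s mass is necessarily spread across many coordinates. I will show that count-max still declares $i$ with probability $O(n^{-c})$ in this case, completing the argument.

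Fix this $i$ and a single row $j$. By the $2$-stability of the Gaussian signs, conditional on the hash values $\{h_j(t)\}_{t \in [n]}$ the two bucket values can be written as $A_{j,r} = \mu_{j,r}\,\sigma_r\,\mathcal{N}(0,1)$ with the two normals independent and $\sigma_r^2 = \sum_{t:\, h_j(t) = r} f_t^2$. Thus the probability that the bucket containing $f_i$ is the larger one equals $(2/\pi)\arctan(r)$, where $r = (\mu_{j,h_j(i)}\sigma_{h_j(i)})/(\mu_{j,3-h_j(i)}\sigma_{3-h_j(i)})$. Writing $\sigma_{h_j(i)}^2 = f_i^2 + S^2$ and $\sigma_{3-h_j(i)}^2 = \|f_{-i}\|_2^2 - S^2$ with $S^2 = \sum_{t \neq i:\, h_j(t) = h_j(i)} f_t^2$, the variance of $S^2$ is at most $\tfrac{1}{4}\|f\|_\infty^2\|f\|_2^2$. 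Chebyshev combined with $\|f\|_\infty/\|f\|_2 < 1/100$ yields $|S^2 - \|f_{-i}\|_2^2/2| < 0.05\,\|f\|_2^2$ with probability at least $99/100$ over the hashing. On this good event the ratio $\sigma_{h_j(i)}/\sigma_{3-h_j(i)}$ is close to $1$, and accounting for $\mu_{j,r} \in [99/100, 101/100]$, a short computation gives $(2/\pi)\arctan(r) < 3/5$. Combined with the trivial bound of $1$ on the bad event, the per-row indicator $X_j$ that $f_i$'s bucket is maximal in row $j$ satisfies $\mathbb{E}[X_j] < 3/5 + 1/100$.

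Since the $d = \Theta(c\log n)$ rows use independent hash functions, signs, and $\mu$'s, the $X_j$'s are independent, and a standard Chernoff bound then gives $\Pr[\sum_j X_j \geq 4d/5] \leq \exp(-\Omega(d)) \leq n^{-c-1}$ for $d$ sufficiently large. Thus count-max fails to declare $i$ with probability $1 - O(n^{-c-1})$, which combined with the Case~1 bound yields the desired $O(n^{-c})$ total failure probability. The main obstacle is establishing the per-row bound $\mathbb{E}[X_j]$ strictly less than $4/5$ by an \emph{absolute} constant, so that Chernoff can amplify the gap across the $\Theta(\log n)$ rows; this is exactly what the hypothesis $|f_i| < \|f\|_2/100$ buys via the Chebyshev bound on the random partition $S^2$. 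The bucket randomizers $\mu_{j,r}$ lie in a narrow enough interval not to interfere with this slack.
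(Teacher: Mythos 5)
Your proof is correct, but it reaches the per-row estimate by a different route than the paper. Both arguments share the same skeleton: show that in each row the bucket containing $i$ is the larger one with probability bounded by a constant strictly below $4/5$, then apply a Chernoff bound over the $\Theta(\log n)$ independent rows. The paper does this uniformly for \emph{every} coordinate with $|f_i| < \frac{1}{100}\|f\|_2$, with no case analysis: it notes that the two restricted vectors $f^1,f^2$ (excluding $i$) are i.i.d., so with probability $1/2$ the opposite bucket has the larger norm, and conditioned on that the comparison reduces to one Gaussian exceeding another by a factor $71/70$ in magnitude, giving a per-row bound of $79/100$; it then union bounds over all $i\in[n]$. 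You instead split off the indices with $|f_i|\le\max_{j\ne i}|f_j|$ (which Lemma \ref{lem:countmax} already handles, with its built-in union bound) and analyze only the at-most-one strict maximizer, where you condition on the hash values, use the exact half-Cauchy formula $(2/\pi)\arctan(r)$ for the probability that one centered Gaussian dominates another in magnitude, and control the random split $S^2$ of the mass via Chebyshev using $\|f\|_\infty < \|f\|_2/100$. Your Chebyshev/arctan computation checks out (the variance bound $\tfrac14\|f\|_\infty^2\|f\|_2^2$, the $99/100$ good event, and the resulting per-row bound of roughly $0.61$ are all valid), and the case split is legitimate since the corollary is stated in the setting of the lemma. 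The paper's symmetry trick buys a single uniform argument that never distinguishes the maximizer and needs no appeal back to the lemma; your version buys a slightly sharper per-row constant and makes explicit exactly where the hypothesis $|f_i| < \|f\|_2/100$ enters (through anti-concentration of the random mass split), at the cost of the extra case analysis.
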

\begin{proof}
	 Suppose  $|f_i|< \frac{1}{100}\|f\|_2$, and in a given row WLOG $i$ hashes to $A_{j,1}$. Then we have $A_{j,1} = g_j(i)f_i + g^1\|f^1\|_2$ and $A_{j,2} = g^2\|f^2\|_2$, where $f^k$ is $f$ restricted to the coordinates that hash to bucket $k$, and $g^1,g^2 \sim \mathcal{N}(0,1)$. Since $f^1,f^2$ are i.i.d., with probability $1/2$ we have $\|f^2\|_2 > \|f^1\|_2$. Conditioned on this, we have $\|f^2\|_2 >\|f\|_2/\sqrt{2} > 70|f_i|$. So conditioned on $\|f^2\|_2 > \|f^1\|_2$, we have $|A_{j,1}| < |A_{j,2}|$ whenever one Gaussian is $(71/70)$ times larger than another in magnitude, which occurs with probability greater than $1/2 - 1/25$. So $i$ hashes into the max bucket with probability at most $79/100$, and thus by Chernoff bounds, taking $c$ sufficiently large and union bounding over all $i \in [n]$, $i$ will hash into the max bucket at most a $795/1000 < 4/5$ fraction of the time with probability $1-O(n^{-c})$, as needed.
\end{proof}

	\section{Exponential Order Statistics}\label{sec:expo}
	In this section, we discuss several useful properties of the order statistics of $n$ independent non-identically distributed exponential random variables. 
	Let $(t_1,\dots,t_n)$ be independent exponential random variables where $t_i$ has mean $1/\lambda_i$ (equivalently, $t_i$ has rate $\lambda_i$). Recall that $t_i$ is given by the cumulative distribution function $\pr{t_i < x} = 1-e^{-\lambda_i x}$. Our main $L_p$ sampling algorithm will require a careful analysis of the distribution of values $(t_1,\dots,t_n)$, which we will now describe. We begin by noting that constant factor scalings of an exponential variable result in another exponential variable.
	
	\begin{fact}[Scaling of exponentials]\label{fact:scale}
		Let $t$ be exponentially distributed with rate $\lambda$, and let $\alpha > 0$. Then $\alpha t$ is exponentially distributed with rate $\lambda/\alpha $ 
	\end{fact}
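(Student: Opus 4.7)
The plan is to verify the claim directly from the cumulative distribution function of $t$, which is exactly the definition used in the excerpt. Since $\alpha > 0$, the map $t \mapsto \alpha t$ is monotone increasing, so for any $x \geq 0$ we have the elementary equivalence $\{\alpha t < x\} = \{t < x/\alpha\}$.

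First I would write
\[
\Pr[\alpha t < x] \;=\; \Pr[t < x/\alpha] \;=\; 1 - e^{-\lambda (x/\alpha)} \;=\; 1 - e^{-(\lambda/\alpha) x},
\]
where the second equality invokes the stated CDF of $t$. The final expression is, by inspection, the CDF of an exponential random variable with rate $\lambda/\alpha$ (equivalently, mean $\alpha/\lambda$), and since a distribution on $[0,\infty)$ is determined by its CDF this completes the identification.

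There is essentially no obstacle here: the fact is a one-line change of variables in the CDF, and positivity of $\alpha$ is the only hypothesis needed (to preserve the direction of the inequality and to keep $x/\alpha \geq 0$ when $x \geq 0$). The only thing worth flagging for the reader is the convention being used for ``rate'': the paper takes rate $\lambda$ to mean mean $1/\lambda$, so scaling by $\alpha$ multiplies the mean by $\alpha$ and hence divides the rate by $\alpha$, consistent with the statement.
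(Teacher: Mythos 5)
Your proof is correct and is the same one-line CDF computation the paper gives: $\Pr[\alpha t < x] = \Pr[t < x/\alpha] = 1 - e^{-(\lambda/\alpha)x}$. Your added remarks about monotonicity of the scaling map and the rate-versus-mean convention are fine but not needed beyond what the paper already does.
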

\begin{proof}
	The cdf of $\alpha t$ is given by $ \pr{t < x/\alpha} =1-e^{-\lambda x/\alpha}$, which is the cdf of an exponential with rate $\lambda/\alpha$.
\end{proof}
	
	\noindent
	We would now like to study the order statistics of the variables $(t_1,\dots,t_n)$, where $t_i$ has rate $\lambda_i$. To do so, we introduce the \textit{anti-rank vector} $(D(1),D(2),\dots,D(n))$, where for $k\in [n]$, $D(k)\in[n]$ is a random variable which gives the index of the $t$-th smallest exponential.
	\begin{definition}
		Let $(t_1,\dots,t_n)$ be independent exponentials. For $k=1,2,\dots,n$, we define the \textit{$k$-th anti-rank} $D(k)\in [n]$ of $(t_1,\dots,t_n)$ to be the values $D(k)$ such that $t_{D(1)} \leq t_{D(2)} \leq \dots \leq t_{D(n)}$. 
	\end{definition}\noindent
	Using the structure of the anti-rank vector, it has been observed \cite{nagaraja2006order} that there is a simple form for describing the distribution of $t_{D(k)}$ as a function of $(\lambda_1,\dots,\lambda_n)$ and the anti-rank vector.
	
	\begin{fact}[\cite{nagaraja2006order}]\label{fact:order}
		Let $(t_1,\dots,t_n)$ be independently distributed exponentials, where $t_i$ has rate $\lambda_i> 0$. Then for any $k=1,2,\dots,n$, we have 
		\[	t_{D(k)} = \sum_{i=1}^k \frac{E_i}{\sum_{j=i}^n \lambda_{D(j)} }	\]
		Where the $E_1,E_2,\dots,E_n$'s are i.i.d. exponential variables with mean $1$, and are independent of the anti-rank vector $(D(1),D(2),\dots,D(n))$. 
	\end{fact}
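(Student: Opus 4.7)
The plan is to prove Fact \ref{fact:order} by induction on $n$, using two well-known properties of independent exponential variables: \textbf{(a)} if $t_1, \ldots, t_n$ are independent exponentials with rates $\lambda_1, \ldots, \lambda_n$, then $T := \min_i t_i$ is exponential with rate $\Lambda := \sum_i \lambda_i$, and the identity $I := \arg\min_i t_i$ is independent of $T$ with $\Pr[I = i] = \lambda_i/\Lambda$; and \textbf{(b)} (memorylessness) conditional on $I = d$ and $T = t$, the shifted residuals $\{t_i - t : i \neq d\}$ form an independent collection of exponentials with the original rates $\{\lambda_i : i \neq d\}$. Both are standard and follow directly from computing the joint density.

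For the base case $n = 1$, the result is immediate upon setting $E_1 := \lambda_1 t_1 \sim \mathrm{Exp}(1)$. For the inductive step, observe that by \textbf{(a)}, if we set $E_1 := \Lambda \cdot t_{D(1)}$, then $E_1 \sim \mathrm{Exp}(1)$ and is independent of $D(1)$, and moreover $t_{D(1)} = E_1 / \sum_{j=1}^n \lambda_{D(j)}$. By \textbf{(b)}, conditional on $D(1) = d$ and $t_{D(1)} = t$, the collection $\{t_i - t : i \neq d\}$ is a set of $n-1$ independent exponentials with rates $\{\lambda_i : i \neq d\}$ whose anti-ranks, viewed within the original problem, are exactly $D(2), \ldots, D(n)$. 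Applying the inductive hypothesis to this smaller collection produces i.i.d.\ $\mathrm{Exp}(1)$ variables $E_2, \ldots, E_n$, independent of $(D(2), \ldots, D(n))$, satisfying
\[
t_{D(k)} - t_{D(1)} \;=\; \sum_{i=2}^{k} \frac{E_i}{\sum_{j=i}^n \lambda_{D(j)}} \qquad \text{for all } k \geq 2.
\]
Adding $t_{D(1)} = E_1 / \sum_{j=1}^n \lambda_{D(j)}$ to both sides yields the formula of Fact \ref{fact:order}.

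The only subtle point, and what I would expect to write carefully, is the joint independence claim: $(E_1, \ldots, E_n)$ are i.i.d.\ $\mathrm{Exp}(1)$ and jointly independent of the entire anti-rank vector. This follows from three facts pieced together. First, $E_1 \perp D(1)$ by \textbf{(a)}. Second, by \textbf{(b)} the conditional law of the shifted residuals given $(t_{D(1)}, D(1))$ depends only on $D(1)$, so the variables $(E_2, \ldots, E_n)$ built from them are conditionally independent of $E_1 = \Lambda \cdot t_{D(1)}$ given $D(1)$. Third, the induction hypothesis applied to the shifted collection yields $(E_2, \ldots, E_n) \perp (D(2), \ldots, D(n))$ conditional on $D(1)$, with i.i.d.\ $\mathrm{Exp}(1)$ marginals. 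Combining these three conditional factorizations and marginalizing over $D(1)$ gives the desired unconditional joint independence.

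I do not expect any real obstacle; the entire argument is a textbook peeling-off argument for order statistics of exponentials, and the only care needed is in tracking the independence structure through the induction so that the resulting $E_i$'s are genuinely i.i.d.\ and independent of \emph{all} of $D(1), \ldots, D(n)$ rather than merely of $D(k)$ for a single $k$.
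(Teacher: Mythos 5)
Your proof is correct. Note that the paper does not actually prove this fact at all --- it is stated as an imported result with a citation to \cite{nagaraja2006order} --- so you have supplied a self-contained argument where the paper offers none. Your induction is the standard ``peeling'' argument for the R\'enyi-type representation of exponential order statistics, and the two ingredients you isolate (min-stability with $\arg\min$ independent of the minimum, and memorylessness of the residuals conditional on the identity of the minimizer) are exactly what is needed. The one place such arguments usually go wrong is the independence bookkeeping --- establishing that $(E_1,\dots,E_n)$ is jointly independent of the \emph{entire} anti-rank vector rather than of each $D(k)$ separately --- and you handle this correctly: $E_1\perp D(1)$ from min-stability, the conditional law of the residuals given $(t_{D(1)},D(1))$ depends only on $D(1)$ (so $(E_2,\dots,E_n,D(2),\dots,D(n))$ is conditionally independent of $E_1$ given $D(1)$), and the inductive hypothesis factorizes the remaining block; marginalizing over $D(1)$ then gives the full joint factorization. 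The index accounting in the denominators also checks out, since the rank-$(k-1)$ element of the residual collection is the rank-$k$ element of the original collection, so the tail sums $\sum_{j=i}^{n}\lambda_{D(j)}$ line up with the formula as stated.
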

	\begin{fact}[\cite{nagaraja2006order}]\label{fact:min}
		For any $i=1,2,\dots,n$, we have
		\[	\pr{D(1) = i} = \frac{\lambda_i}{\sum_{j=1}^n \lambda_j}	\]
	\end{fact}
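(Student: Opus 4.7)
The plan is to prove this by direct integration, exploiting the memoryless form of the exponential density and the independence of the $t_i$'s. The event $\{D(1) = i\}$ is (up to a measure-zero set where ties occur) exactly the event $\{t_i < t_j \text{ for all } j \neq i\}$, so it suffices to compute this probability and show it equals $\lambda_i / \sum_{j=1}^n \lambda_j$.

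First I would condition on the value of $t_i$ and integrate out. Using the density $f_{t_i}(x) = \lambda_i e^{-\lambda_i x}$ on $[0,\infty)$ together with independence, I would write
\[
\pr{t_i < t_j \text{ for all } j \neq i} \;=\; \int_0^\infty \lambda_i e^{-\lambda_i x} \prod_{j \neq i} \pr{t_j > x}\, dx.
\]
The survival function of an exponential with rate $\lambda_j$ is $\pr{t_j > x} = e^{-\lambda_j x}$, so the product inside the integral collapses to $\exp\!\bigl(-x \sum_{j \neq i} \lambda_j\bigr)$. Combining with the $e^{-\lambda_i x}$ factor in front gives an integrand $\lambda_i \exp\!\bigl(-x \sum_{j=1}^n \lambda_j\bigr)$, whose integral over $[0,\infty)$ is $\lambda_i / \sum_{j=1}^n \lambda_j$, as desired.

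The only subtlety is the measure-zero issue that ties between $t_i$ and some $t_j$ have probability zero (since the joint density is absolutely continuous), so strict inequality $t_i < t_j$ versus the anti-rank definition $t_{D(1)} \leq t_{D(2)}$ cause no discrepancy; a standard tie-breaking convention handles the null set. There is no genuine obstacle here — the statement is essentially a textbook fact about the minimum of independent exponentials, and the computation above is complete. Alternatively, one can cite Fact~\ref{fact:order} at $k=1$: it gives $t_{D(1)} = E_1 / \sum_{j=1}^n \lambda_{D(j)}$ with $E_1$ independent of the anti-rank vector, from which the marginal of $D(1)$ can be read off, but the direct calculation above is the cleanest self-contained derivation.
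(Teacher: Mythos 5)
Your proof is correct: the event $\{D(1)=i\}$ coincides (up to a null set, as you note) with $\{t_i < t_j \ \forall j \neq i\}$, and conditioning on $t_i = x$ and using the survival functions $e^{-\lambda_j x}$ gives exactly the integral $\int_0^\infty \lambda_i e^{-x\sum_j \lambda_j}\,dx = \lambda_i/\sum_{j=1}^n \lambda_j$. The paper itself offers no proof of this fact --- it is simply quoted from \cite{nagaraja2006order} --- so there is nothing to compare against; your direct calculation is the standard self-contained derivation (and your alternative remark, reading the marginal of $D(1)$ off Fact~\ref{fact:order} with $k=1$, is also consistent with how the cited reference organizes these results).
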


	We now describe how these properties will be useful to our sampler. 	
	Let $f \in \mathbb{R}^n$ be any vector presented in a general turnstile stream. We can generate i.i.d. exponentials $(t_1,\dots,t_n)$, each with rate $1$, and construct the random variable $z_i = f_i/t_i^{1/p}$, which can be obtained in a stream by scaling updates to $f_i$ by $1/t_i^{1/p}$ as they arrive. 
	By Fact \ref{fact:scale}, the variable $|z_i|^{-p} = t_i/|f_i|^p$ is exponentially distributed with rate $\lambda_i = |f_i|^p$. Now let $(D(1),\dots,D(n))$ be the anti-rank vector of the exponentials $(t_1/|f_n|^p, \dots, t_n/|f_n|^{1/p})$. By Fact \ref{fact:min}, we have $\pr{D(1) = i} = \pr{i = \arg \min\{|z_1|^{-p},\dots,|z_n|^{-p}	\}} =  \pr{i = \arg \max\{|z_1|,\dots,|z_n|	\} } = \frac{\lambda_i}{\sum_j \lambda_j} = \frac{|f_i|^p}{\|f\|_p^p}$. In other words, the probability that $|z_i| = \arg \max_j \{|z_j|\}$ is precisely $|f_i|^p / \|f\|_p^p$, so for a perfect $L_p$ sampler it suffices to return $i\in [n]$ with $|z_i|$ maximum.
		Now note $|z_{D(1)}|  \geq| z_{D(2)} |\geq \dots \geq |z_{D(n)}|$, and in this scenario the statement of Fact \ref{fact:order} becomes:
	\[	z_{D(k)}= \big(\sum_{i=1}^k \frac{E_i}{\sum_{j=i}^N \lambda_{D(j)} }\big)^{-1/p} =   \big(\sum_{i=1}^k \frac{E_i}{ \sum_{j=i}^{N} F_{D(j)}^p }\big)^{-1/p} 	\]
	Where $E_i$'s are i.i.d. exponential random variables with mean $1$, and are independent of the anti-rank vector $(D(1),\dots,D(n))$. We call the exponentials $E_i$ the \textit{hidden exponentials}, as they do not appear in the actual execution of the algorithm, and will be needed for analysis purposes only.

	\section{The Sampling Algorithm}\label{sec:main}
	 We now provide intuition for the workings of our main sampling algorithm. Our algorithm scales the input stream by inverse exponentials to obtain a new vector $z$. We have seen in the prior section that we can write the order statistics $z_{D(k)}$ as a function of the anti-rank vector $D$, where $D(k)$ gives the index of the $k$-th largest coordinate in $z$, and the hidden exponentials $E_i$, which describe the ``scale" of the order statistics. Importantly, the hidden exponentials are independent of the anti-ranks. We would like to determine the index $i$ for which $D(1) = i$, however this may not always be possible. This is the case when the largest element $|z_{D(1)}|$ is not sufficently larger than the remainig $L_2$ mass $\sum_{j > 1}\left( |z_{D(j)}|^2\right)^{1/2}$. In such a case, count-max will not declare any index to be the largest, and we would therefore like to output \texttt{FAIL}. Note that this event is more likely when there is another element $|z_{D(2)}|$ which is very close to $|z_{D(1)}|$ in size, as whenever the two elements do not collide in count-max, it is less likely that $|z_{D(1)}|$ will be in the max bucket.
	 
	 Now consider the trivial situation where $f_1 = f_2 = \dots = f_n$. Here the variables $z_{D(k)}$ have no dependence at all on the anti-rank vector $D$. In this case, the condition of failing is independent of $D(1)$, so we can safely fail whenever we cannot determine the maximum index. On the other hand, if the values $|f_i|$ vary wildly, the variables $z_{D(k)}$ will depend highly on the anti-ranks. In fact, if there exists $f_i$ with $|f_i|^p \geq \eps \|f\|_p^p$, then the probability that $|z_{D(1)}| - |z_{D(2)}|$ is above a certain threshold can change by a $(1 \pm \eps)$ factor conditioned on $D(1) = i$, as opposed to $D(1) = j$ for a smaller $|f_j|$. Given this, the probability that we fail can change by a multiplicative $(1 \pm \eps)$ conditioned on $D(1) = i$ as opposed to $D(1) = j$.  
	 In this case, we cannot output \texttt{FAIL} when count-max does not report a maximizer, lest we suffer a $(1 \pm \eps)$ error in outputting an index with the correct probability.
	 
	  To handle this, we must remove the heavy items from the stream to weaken the dependence of the values $z_{D(k)}$ on the anti-ranks, which we carry out by duplication of coordinates. For the purposes of efficiency, we carry out the duplication via a rounding scheme which will allow us to generate and quickly hash updates into our data-structures (Section \ref{sec:runtime}). We will show that, conditioned on the fixed values of the $E_i$'s, the variables $z_{D(k)}$ are highly concentrated, and therefore nearly independent of the anti-ranks ($z_{D(k)}$ depends only on $k$ and not $D(k)$). By randomizing the failure threshold to be anti-concentrated, the small adversarial dependence of $z_{D(k)}$ on $D(k)$ cannot non-trivially affect the conditional probabilities of failure, leading to small relative error in the resulting output distribution.

\paragraph{The $L_p$ Sampler.}
	We now describe our sampling algorithm, as shown in Figure \ref{fig:sampler}.  Let $f \in \R^n$ be the input vector of the stream. As the stream arrives, we duplicate updates to each coordinate $f_i$ a total of $n^{c-1}$ times to obtain a new vector $F \in \R^{n^c}$. More precisely, for $i \in [n]$ we set $i_j = (i-1)n^{c-1} + j$ for $j=1,2,\dots, n^{c-1}$, and then we will have $F_{i_j} = f_i$ for all $i \in [n]$ and $j \in [n^{c-1}]$.  We then call $F_{i_j}$ a duplicate of $f_i$. Whenever we use $i_j$ as a subscript in this way it will refer to a duplicate of $i$, whereas a single subscript $i$ will be used both to index into $[n]$ and $[n^c]$. Note that this duplication has the effect that $|F_i|^p \leq n^{-c+1} \|F\|_p^p$ for all $p > 0$ and $i \in [n^c]$.

	We then generate $i.i.d.$ exponential rate $1$ random variables $(t_1,\dots,t_n)$, and
	 define the vector $z \in \R^{n^c}$ by $z_i = F_i/t_i^{1/p}$. As shown in Section \ref{sec:expo}, we have $\pr{i_j = \arg \max_{i',j'} \{|z_{i_{j'}'}|\}} = |F_{i_j}|^p / \|F\|_p^p$. Since $\sum_{j \in [n^{c-1}]} |F_{i_j}|^p / \|F\|_p^p = |f_i|^p / \|f\|_p^p$, it will therefore suffice to find $i_j \in [n^c]$ for which $i_j = \arg \max_{i',j'} \{|z_{i_{j'}'}|\}$, and return the index $i \in [n]$. The assumption that the $t_i$'s are i.i.d. will later be relaxed in Section \ref{sec:runtime} while derandomizing the algorithm. In Section \ref{sec:runtime}, we also demonstrate that all relevant continuous distributions will be made discrete without effecting the perfect sampling guarantee. 
	   	 
	 Now fix any sufficiently large constant $c$, and fix $\nu > n^{-c}$. To speed up the update time, instead of explicitly scaling $F_i$ by $1/t_i^{1/p}$ to construct the stream $z$, our algorithm instead scales $F_i$ by $\texttt{rnd}_\nu(1/t_i^{1/p})$, where $\texttt{rnd}_\nu(x)$ rounds $x > 0$ down to the nearest value in $\{\dots, (1 + \nu)^{-1}, 1, (1+\nu), (1+\nu)^2,\dots \}$. In other words, $\texttt{rnd}_{\nu}(x)$ rounds $x$ down to the nearest power of $(1 + \nu)^j$ (for $j \in \Z$). This results in a separate stream $\zeta \in \R^{n^c}$ where $\zeta_i = F_i \cdot \texttt{rnd}_\nu(1/t_i^{1/p})$. Note $\zeta_i = (1 \pm O(\nu)) z_i$ for all $i \in [n^c]$. Importantly, note that this rounding is order preserving. Thus, if $\zeta$ has a unique largest coordinate $|\zeta_{i^*}|$, then $|z_{i^*}|$ will be the unique largest coordinate of $z$. 
	
			\begin{figure}
		\fbox{\parbox{\textwidth}{ \texttt{$L_p$ Sampler}
				\begin{enumerate}[topsep=0pt,itemsep=-1ex,partopsep=1ex,parsep=1ex] 
				\item Set $d = \Theta(\log(n))$,  instantiate a $d \times 2$ count-max table $A$, and set $\mu_{i,j} \sim \texttt{Uniform}[\frac{99}{100},\frac{101}{100}]$ for each $(i,j) \in [d] \times [2]$.  
					\item Duplicate updates to $f$ to obtain the vector $F \in \mathbb{R}^{n^c}$ so that $f_i = F_{i_j}$ for all $i \in [n]$ and $j=1,2,\dots,n^{c-1}$, for some fixed constant $c$.
					\item Choose i.i.d. exponential random variables $t = (t_1,t_2,\dots,t_{n^c})$, and construct the stream $\zeta_i = F_i\cdot \texttt{rnd}_\nu (1/t_i^{1/p})$.
					\item Run $A$ on the stream $\zeta$. Upon the end of the stream, set $A_{i,j}  \leftarrow \mu_{i,j} A_{i,j}$ for all $(i,j) \in [d] \times [2]$.
					
					\item If count-max declares that an index $i_j \in [n^c]$ is the max  for some $j \in [n^{c-1}]$ based on the data structure $A$, then output $i \in [n]$. If $A$ does not declare any index to be the max, output \texttt{FAIL}.
				\end{enumerate}
		}}\caption{Our main $L_p$ Sampling algorithm} \label{fig:sampler}
	\end{figure}

	Having constructed the transformed stream $\zeta$, we then run a $d \times 2$ instance $A \in \R^{d \times 2}$ of count-max (from Section \ref{sec:countmax}), with $d=\Theta(\log(n))$, on $\zeta$. At the end of the stream, we scale each bucket $A_{i,j}$ by a uniform random variable $\mu_{i,j}$ from the interval $[\frac{99}{100}, \frac{101}{100}]$. This step is ensures that the failure threshold is randomized, so that a small adversarial error can only effect the output of the algorithm with extremely low probability (see Lemma \ref{lem:main}). 
	Now recall that count-max will either declare an index $i_j \in [n^c]$ as being the maximum, or report nothing. If an index $i_j$ is returned, where $i_j$ is the $j$-th copy of index $i \in [n]$, then our algorithm outputs the index $i$. If count-max does not report an index, we return \texttt{FAIL}. Let $i^* = \arg \max_i |\zeta_{i}| = D(1)$ (where $D(1)$ is the first anti-rank as in Section \ref{sec:expo}). By the guarantee of Lemma \ref{lem:countmax}, we know that if $|\zeta_{i^*}| \geq 20 \|\zeta_{-i^*}\|_2$, then with probability $1-n^{-c}$ count-max will return the index $i^* \in [n^c]$. Moreover, with the same probability, count-max will never return an index which is not the unique maximizer.  To prove correctness, therefore, it suffices to analyze the conditional probability of failure given $D(1) = i$.  Let $N = |\{i \in [n^c] \: | \: F_i \neq 0 \}|$ ($N$ is the support size of $F$). We can assume that $N \neq 0$ (to check this one could run, for instance, the $O(\log^2(n))$-bit support sampler of \cite{Jowhari:2011}). Note that $n^{c-1} \leq N \leq n^c$. The following fact is straightforward.

	\begin{fact}\label{fact:avgweight}
		For $p \in (0,2]$, suppose that we choose the constant $c$ such that $mM \leq n^{c/20}$, where note we have $|F_i| \leq mM$ for all $i \in [N]$. Then if $S \subset \{i \in [n^c] \: | \: F_i \neq 0 \}$ is any subset, then $\sum_{i \in S}|F_i|^p \geq \frac{|S|}{N} n^{-c/10} \|F\|_p^p$
	\end{fact}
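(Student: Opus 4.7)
The plan is to exploit two simple facts: every non-zero coordinate of $F$ is at least $1$ in absolute value, and every coordinate is at most $mM \leq n^{c/20}$. The first gives a lower bound on $\sum_{i \in S}|F_i|^p$ and the second gives an upper bound on $\|F\|_p^p$, and comparing them yields the claim.

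First I would observe that because the stream updates take integer values in $\{-M,\dots,M\}$, every coordinate $f_i$ of $f$ is an integer, and hence so is every duplicate $F_{i_j}=f_i$. Therefore for any $i$ with $F_i\neq 0$ we have $|F_i|\geq 1$, so $|F_i|^p\geq 1$ for all $p>0$. Summing over the subset $S$ of non-zero coordinates gives the trivial but crucial bound
\[
\sum_{i\in S}|F_i|^p \;\geq\; |S|.
\]

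Next I would upper bound $\|F\|_p^p$. Since $|F_i|\leq mM\leq n^{c/20}$ for every $i\in[n^c]$, and there are exactly $N$ non-zero coordinates, we get
\[
\|F\|_p^p \;=\; \sum_{i\,:\,F_i\neq 0}|F_i|^p \;\leq\; N\cdot(mM)^p \;\leq\; N\cdot n^{cp/20} \;\leq\; N\cdot n^{c/10},
\]
where in the last step I use $p\leq 2$. Rearranging, $N\geq n^{-c/10}\|F\|_p^p$.

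Combining the two bounds yields
\[
\sum_{i\in S}|F_i|^p \;\geq\; |S| \;=\; \frac{|S|}{N}\cdot N \;\geq\; \frac{|S|}{N}\cdot n^{-c/10}\|F\|_p^p,
\]
which is exactly the claim. There is no real obstacle here beyond correctly invoking integrality of the stream; the rest is a one-line calculation that uses the quantitative choice $mM\leq n^{c/20}$ together with $p\leq 2$ to convert the $(mM)^p$ bound into $n^{c/10}$.
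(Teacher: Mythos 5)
Your proof is correct and is essentially the same argument as the paper's: both rest on the observation that every non-zero coordinate satisfies $1 \leq |F_i|^p \leq (mM)^p \leq n^{c/10}$ (integrality from the integer updates plus $p\leq 2$), so each non-zero entry is at least an $n^{-c/10}$ fraction of the average weight $\|F\|_p^p/N$. Your rearrangement (lower-bounding $\sum_{i\in S}|F_i|^p$ by $|S|$ and upper-bounding $\|F\|_p^p$ by $Nn^{c/10}$) is just a slightly more explicit way of writing the paper's per-item comparison, and it correctly spells out the integrality step the paper leaves implicit.
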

	\begin{proof}
		We know that $|F_i|^p \leq (mM)^p \leq n^{c/10}$ using $p \leq 2$. Then each non-zero value $|F_i|^p$ is at most an $n^{-c/10}$ fraction of any other item $|F_j|^p$, and in particular of the average item weight. It follows that $|F_i|^p \geq n^{-c/10}\frac{\|F\|_p^p}{N}$ for all $i \in [N]$, which results in the stated fact.
	\end{proof}
	As in Section \ref{sec:expo}, we now use the anti-rank vector $D(k)$ to denote the index of the $k$-th largest value of $z_i$ in absolute value. In other words, $D(k)$ is the index such that $|z_{D(k)}|$ is the $k$-th largest value in the set $\{|z_{1}|,|z_{2}|,\dots,|z_{n^c}|\}$. Note that the $D(k)$'s are also the anti-ranks of the vector $\zeta$, since rounding $z$ into $\zeta$ preserves partial ordering. For the following lemma, it suffices to consider only the exponentials $t_i$ with $F_i \neq 0$, and we thus consider only values of $k$ between $1$ and $N$.  Thus $|z_{D(1)}| \geq |z_{D(2)}| \geq \dots \geq |z_{D(N)}|$. Moreover, we have that $|z_{D(k)}|^{-p} = \frac{t_{D(k)}}{|F_{D(k)}|^p}$ is the $k$-th smallest of all the $ \frac{t_i}{|F_i|^p}$'s, and by the results of Section \ref{sec:expo} can be written as $|z_{D(k)}|^{-p}  = \sum_{\tau=1}^k \frac{E_\tau}{\sum_{j=\tau}^N |F_{D(j)}|^p }$ where the $E_\tau$ are i.i.d. exponentials and independent of the anti-rank vector $D$. We will make use of this in the following lemma.
	
	\begin{lemma}\label{lem:Zconcentration}
		For every $1 \leq k < N - n^{9c/10}$, we have 
		\[	|z_{D(k)}| = \Big[(1 \pm O(n^{-c/10}))\sum_{\tau=1}^{k} \frac{E_\tau}{\ex{\sum_{j=\tau}^{N} |F_{D(j)}|^p }} 	\Big]^{-1/p}	\]
		with probability $1 - O(e^{-n^{c/3}})$.
	\end{lemma}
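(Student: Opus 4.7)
The plan is to exploit the decomposition from Fact \ref{fact:order} stated just before the lemma: if we set $T_\tau := \sum_{j=\tau}^N |F_{D(j)}|^p$, then
\[
|z_{D(k)}|^{-p} \;=\; \sum_{\tau=1}^{k} \frac{E_\tau}{T_\tau},
\]
where the $E_\tau$ are i.i.d.\ standard exponentials independent of the anti-rank vector. The entire problem then reduces to showing the multiplicative concentration $T_\tau = (1 \pm O(n^{-c/10}))\,\ex{T_\tau}$ simultaneously for all $1 \leq \tau \leq k$: if this holds, each term $E_\tau/T_\tau$ is a $(1 \pm O(n^{-c/10}))$ approximation of $E_\tau/\ex{T_\tau}$, and summing preserves this (all terms are non-negative with the same multiplicative slack), so the sum itself is approximated by the same factor. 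This is precisely the inner-bracket form required by the statement.

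To establish concentration of $T_\tau$, I would apply McDiarmid's inequality (Fact \ref{fact:McDiarmid}) to the underlying independent exponentials $t_1,\dots,t_N$ restricted to the support of $F$. The key structural observation is that changing a single $t_i$ can only swap one index across the boundary between ranks $\tau-1$ and $\tau$ in the anti-rank order: at most one element enters $\{D(1),\dots,D(\tau-1)\}$ and at most one leaves. Since $T_\tau = \|F\|_p^p - \sum_{j<\tau}|F_{D(j)}|^p$, the value of $T_\tau$ therefore changes by at most $2\max_i|F_i|^p \leq 2n^{-c+1}\|F\|_p^p$, where the last bound is exactly what the duplication step buys us. Thus $c_i \leq 2 n^{-c+1}\|F\|_p^p$ and $\sum_i c_i^2 = O(n^{-c+2}\|F\|_p^{2p})$.

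To turn McDiarmid's additive tail into the desired multiplicative statement I need a lower bound on $\ex{T_\tau}$. This is where the hypothesis $\tau < N - n^{9c/10}$ enters: the sum $T_\tau$ contains more than $n^{9c/10}$ non-zero terms $|F_{D(j)}|^p$, so Fact \ref{fact:avgweight} gives $T_\tau \geq \tfrac{n^{9c/10}}{N}\,n^{-c/10}\|F\|_p^p \geq n^{-c/5}\|F\|_p^p$ deterministically, and hence the same lower bound on $\ex{T_\tau}$. Choosing the McDiarmid deviation parameter to be $n^{-c/10}\ex{T_\tau} \geq n^{-3c/10}\|F\|_p^p$ gives a failure probability at most $2\exp\bigl(-\Omega(n^{2c/5 - 2})\bigr) \leq \exp(-n^{c/3})$ once $c$ is taken sufficiently large. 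A union bound over the at most $n^c$ values of $\tau \leq k$ preserves a failure probability of $O(\exp(-n^{c/3}))$, matching the lemma.

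The main obstacle, and the reason the duplication step is introduced at all, is controlling the bounded-difference constants $c_i$: without duplication, a single heavy coordinate could shift $T_\tau$ by an $\Omega(1)$ fraction of $\|F\|_p^p$, rendering McDiarmid's inequality useless. A secondary subtlety is the cutoff $k < N - n^{9c/10}$, which exists precisely so that $T_\tau$ always contains enough summands for Fact \ref{fact:avgweight} to deliver the lower bound on $\ex{T_\tau}$ needed to convert the additive McDiarmid estimate into the multiplicative $(1 \pm O(n^{-c/10}))$ approximation required by the statement.
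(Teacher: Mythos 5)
Your proposal is correct and matches the paper's own argument essentially step for step: both use the anti-rank decomposition $|z_{D(k)}|^{-p}=\sum_{\tau=1}^{k}E_\tau/T_\tau$, apply McDiarmid's inequality to $T_\tau$ with bounded differences $2\max_i|F_i|^p\leq 2n^{-c+1}\|F\|_p^p$ (the payoff of duplication), lower bound $\ex{T_\tau}$ via Fact \ref{fact:avgweight} using $N-\tau+1>n^{9c/10}$ to convert the additive tail into a $(1\pm O(n^{-c/10}))$ multiplicative bound, and finish with a union bound over the at most $n^c$ values of $\tau$. No gaps.
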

	\begin{proof}
		Let $\tau <N - n^{9c/10}$. We can write $\sum_{j=\tau}^N |F_{D(j)}|^p$ as a deterministic function $\psi(t_1,\dots, t_N)$ of the random scaling exponentials $t_1,\dots,t_N$ corresponding to $F_i \neq 0$. We first argue that 
		\[|\psi(t_1,\dots,t_N) - \psi(t_1\dots,t_{i-1},t_i',t_{i+1}, \dots, t_N)| < 2 \max_j \{F_j^p\} < 2n^{-c+1}\|F\|_p^p\]
		This can be seen from the fact that changing a value of $t_i$ can only have the effect of adding (or removing) $|F_i|^p$ to the sum $\sum_{j=\tau}^N |F_{D(j)}|^p$ and removing (or adding) a different $|F_{l}|$ from the sum. The resulting change in the sum is at most $2 \max_j \{|F_j|^p\}$, which is at most $2n^{-c+1}\|F\|_p^p$ by duplication. Set $T = N - \tau + 1$. Since the $t_i$'s are independent, we apply McDiarmid's inequality (Fact \ref{fact:McDiarmid}) to obtain
		\[\bpr{|	\sum_{j=\tau}^N |F_{D(j)}|^p - \ex{\sum_{j=\tau}^N |F_{D(j)}|^p}| > \eps Tn^{-c} \|F\|_p^p} \leq 2\exp\big(\frac{-2\eps^2 T^2 n^{-2c} \|F\|_p^{2p}] }{n^c (2n^{-c+1}\|F\|_p^p)^2 }	\big)		\]
		\[	\leq 2 \exp \big(-\frac{1}{2}\eps^2 T^2 n^{-c-2}\big)	\]
		Setting $\eps = \Theta( n^{-c/5})$ and using $T > n^{9c/10}$, this is at most $2 \exp(-\frac{1}{2}n^{2c/5 -2}	)$. To show concentration up to a $(1 \pm O(n^{-c/10}))$ factor, it remains to show that $\ex{\sum_{j=\tau}^N |F_{D(j)}|^p} = \Omega(Tn^{-11c/10 }\|F\|_p^p)$. This follows from the Fact \ref{fact:avgweight}, which gives  $\sum_{j=0}^T |F_{D(-j)}|^p \geq n^{-c/10}(T n^{-c}  \|F\|_p^p)$ deterministically.
		Now recall that  $|z_{D(k)}| = [\sum_{\tau=1}^{k} \frac{E_\tau}{\sum_{j=\tau}^{N} |F_{D(-j)}|^p } 	]^{-1/p}$.
		We have just shown that $\sum_{j=\tau}^N |F_{D(j)}|^p = (1 \pm O( n^{-c/10}))\ex{\sum_{j=\tau}^N |F_{D(j)}|^p}$, so we can union bound over all $\tau = 1,2,\dots,N-n^{9c/10}$ to obtain  
		\[	|z_{D(k)}| = \Big[(1 \pm O(n^{-c/10}))\sum_{\tau=1}^{k} \frac{E_\tau}{\ex{\sum_{j=\tau}^{N} |F_{D(j)}|^p }} 	\Big]^{-1/p}	\]
	for all $k\leq N - n^{9c/10}$ with probability $1 - O(n^c e^{- n^{2c/5 - 2}}) = 1 - O(e^{n^{c/3}})$.
		
	\end{proof}
	
	We use this result to show that our failure condition is nearly-independent of the value $D(1)$. Let $\mathcal{E}_1$ be the event that Lemma \ref{lem:Zconcentration} holds. Let $\neg \ttx{FAIL}$ be the event that the algorithm \texttt{$L_p$ Sampler} does not output \ttx{FAIL}.

\begin{lemma}\label{lem:main}
	For $p \in (0,2]$ a constant bounded away from $0$ and any $\nu \geq n^{-c/60}$, $\pr{\neg \ttx{FAIL} \: | \: D(1) } =  \pr{\neg \ttx{FAIL} } \pm \tilde{O}( \nu )$ for every possible $D(1) \in [N]$.
\end{lemma}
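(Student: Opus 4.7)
The plan is to construct an \emph{ideal} count-max sketch $\tilde A$ whose failure event is manifestly independent of $D(1)$, show (via Lemma \ref{lem:Zconcentration} and the $(1\pm\nu)$ rounding) that the true sketch $A$ is a $O(\nu)$-relative perturbation of $\tilde A$ on $\mathcal E_1$, and then exploit the two-dimensional $\mu$-randomization installed by Lemma \ref{lem:countmax} to turn this perturbation into a $\tilde O(\nu)$ bound on the failure-event disagreement. First, reindex the buckets by anti-rank,
\[
A_{j,l}=\mu_{j,l}\sum_{k=1}^{N} g_j(D(k))\,\sgn(F_{D(k)})\,|\zeta_{D(k)}|\,\mathbb{1}[h_j(D(k))=l].
\]
Because the hash maps $h_j:[n^c]\to[2]$ and Gaussian signs $g_j(\cdot)$ are drawn independently of the exponentials $t_i$ (hence of $D$), for any realization of $D$ the sequences $(g_j(D(k)))_{j,k}$ and $(h_j(D(k)))_{j,k}$ are i.i.d.\ Gaussians $\tilde g_{j,k}$ and uniform bits $\tilde h_{j,k}$ (absorbing $\sgn(F_{D(k)})$ into the sign of $g_j(D(k))$). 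Set $\tilde z_k=\bigl(\sum_{\tau=1}^{k}E_\tau/\,\ex{\sum_{j=\tau}^{N}|F_{D(j)}|^p}\bigr)^{-1/p}$, a quantity depending only on $(k,E)$, and define the ideal sketch $\tilde A_{j,l}=\mu_{j,l}\sum_{k\le N-n^{9c/10}}\tilde g_{j,k}\tilde z_k\,\mathbb{1}[\tilde h_{j,k}=l]$. Letting $\neg\ttx{FAIL}^{\mathrm{id}}$ be the analogue of $\neg\ttx{FAIL}$ when count-max is run on $\tilde A$, by construction $\pr{\neg\ttx{FAIL}^{\mathrm{id}}\mid E,D(1)=i}=\pr{\neg\ttx{FAIL}^{\mathrm{id}}\mid E}$, independent of $i$.

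\textbf{Closeness of $A$ and $\tilde A$.} Lemma \ref{lem:Zconcentration} holds unconditionally with probability $1-e^{-n^{c/3}}$. By Fact \ref{fact:min} and Fact \ref{fact:avgweight}, $\pr{D(1)=i}\ge n^{-(11c/10)}$, so Bayes gives $\pr{\mathcal E_1\mid D(1)=i}=1-e^{-\Omega(n^{c/3})}$ for every $i$, and we work on $\mathcal E_1$. There $|z_{D(k)}|=(1\pm O(n^{-c/10}))\tilde z_k$ for $k<N-n^{9c/10}$; since $\nu\ge n^{-c/60}$, the $\ttx{rnd}_\nu$ rounding preserves this: $|\zeta_{D(k)}|=(1\pm O(\nu))\tilde z_k$. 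Writing $A-\tilde A=A^{\mathrm{bulk}}+A^{\mathrm{tail}}$ where the bulk consists of per-summand $O(\nu)\tilde z_k$ errors for $k\le N-n^{9c/10}$ and the tail consists of the contributions from $k>N-n^{9c/10}$, Khintchine's inequality (Fact \ref{fact:khintchine}) applied bucket-by-bucket yields $|A^{\mathrm{bulk}}_{j,l}|\le O(\sqrt{\log n}\,\nu)|\tilde A_{j,l}|$ for all $j,l$ with probability $1-n^{-c}$. For the tail, min-stability of exponentials and the geometric level-set decay for $p<2$ give $|A^{\mathrm{tail}}_{j,l}|\le n^{-\Omega(c)}\|F\|_p$, while $\tilde z_1=\|F\|_p/E_1^{1/p}\gtrsim\|F\|_p$ with constant probability so $\max_l|\tilde A_{j,l}|\gtrsim\|F\|_p$ in most rows.

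\textbf{Anti-concentration and conclusion.} The failure event for $A$ (resp.\ $\tilde A$) is determined entirely by the winning-bucket pattern $(\sgn(\mu_{j,1}|A_{j,1}|-\mu_{j,2}|A_{j,2}|))_{j\in[d]}$, since count-max declares index $v$ iff $v$ sits in the winning bucket of at least $4d/5$ rows. Hence $\neg\ttx{FAIL}$ and $\neg\ttx{FAIL}^{\mathrm{id}}$ can only differ if the pattern does, which in row $j$ requires the ideal gap $\mu_{j,1}|\tilde A_{j,1}|-\mu_{j,2}|\tilde A_{j,2}|$ to fall within the noise margin from the previous step. Conditioning on $(\tilde g,\tilde h)$ and using that $\mu_{j,1},\mu_{j,2}\sim\ttx{Uniform}[\tfrac{99}{100},\tfrac{101}{100}]$ are independent, a direct density computation gives that the density of $\mu_{j,1}X-\mu_{j,2}Y$ at $0$ is at most $50/\max(X,Y)$ for any fixed $X,Y>0$, so the per-row disagreement probability is bounded by $O(\sqrt{\log n}\,\nu)+n^{-\Omega(c)}=\tilde O(\nu)$. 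Union-bounding over the $d=O(\log n)$ rows gives $\pr{\neg\ttx{FAIL}\,\triangle\,\neg\ttx{FAIL}^{\mathrm{id}}\mid E,D(1)=i}=\tilde O(\nu)$. Combining with the $D$-independence of $\pr{\neg\ttx{FAIL}^{\mathrm{id}}\mid E}$ and integrating over $E$ (which is independent of $D$ by Fact \ref{fact:order}),
\[
\pr{\neg\ttx{FAIL}\mid D(1)=i}=\pr{\neg\ttx{FAIL}^{\mathrm{id}}}\pm\tilde O(\nu),
\]
and the right-hand side does not depend on $i$. Applying the same estimate without conditioning yields the claim.

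\textbf{Main obstacle.} The delicate step is the anti-concentration. To obtain the sharp $\tilde O(\nu)$ bound one must ensure the bulk noise in each row is proportional to $\max(|\tilde A_{j,1}|,|\tilde A_{j,2}|)$, since the density estimate for $\mu_{j,1}X-\mu_{j,2}Y$ scales as $1/\max(X,Y)$ rather than $1/\min$; it is only the two-sided uniform randomization of \emph{both} buckets that makes this work. Separately, the tail noise, on which Lemma \ref{lem:Zconcentration} gives no relative control, must be handled in absolute terms by showing it is $n^{-\Omega(c)}\|F\|_p$ and dominated by a typical $|\tilde A_{j,l}|=\Theta(\|F\|_p)$, which relies on the $p<2$ geometric decay of the level-set contributions to $\|z\|_2$ together with the min-stability distribution of $\tilde z_1$.
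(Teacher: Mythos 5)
Your overall strategy is the paper's own: split each bucket into a component determined only by $(k,E)$, the count-max randomness and $\mu$ (your ideal sketch $\tilde A$ plays the role of the paper's $U^*_{i,j}$), bound the remainder as $O(\nu)$-relative noise using Lemma~\ref{lem:Zconcentration}, and use the uniform scalings $\mu_{i,j}$ to argue the win/lose pattern is decided by the independent component except with probability $\tilde O(\nu)$ per row, finishing by averaging over $(E,\mu)$. So this is not a different route; the issue is one specific quantitative step.

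The genuine gap is the bulk-noise claim $|A^{\mathrm{bulk}}_{j,l}|\le O(\sqrt{\log n}\,\nu)\,|\tilde A_{j,l}|$ for all $j,l$ with probability $1-n^{-c}$. Khintchine (or, since the signs here are Gaussian, $2$-stability) controls $|A^{\mathrm{bulk}}_{j,l}|$ by $O(\sqrt{\log n}\,\nu)\bigl(\sum_{k\in B_{j,l}}\tilde z_k^2\bigr)^{1/2}$, i.e.\ by the $\ell_2$ mass hashed to the bucket — not by the realized signed value $|\tilde A_{j,l}|$, which is itself a centered Gaussian combination of the same coefficients and satisfies $|\tilde A_{j,l}|\le \eps\bigl(\sum_{k\in B_{j,l}}\tilde z_k^2\bigr)^{1/2}$ with probability $\Theta(\eps)$ for every $\eps$. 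Hence the stated relative bound fails with probability far exceeding $n^{-c}$, and the per-row disagreement estimate does not follow as written. This is exactly why the paper's Claim~\ref{claim:khine} asserts the noise bound only relative to $|A_{i,1}|+|A_{i,2}|$ (effectively the dominant bucket of the row) and only with probability $1-O(\log n\cdot n^{-c/60})$: it lower-bounds the bucket receiving at least half of $\|\zeta\|_2^2$ by $n^{-c/60}\|z\|_2$, excepting an event of probability $O(n^{-c/60})\le O(\nu)$ — this is where the hypothesis $\nu\ge n^{-c/60}$ is actually consumed, a role it never plays in your write-up. Your argument is repairable in the same way: bound the bulk noise in absolute terms by $O(\sqrt{\log n}\,\nu\,\|z\|_2)$, charge the event that the dominant ideal bucket of a row falls below $n^{-c/60}\|z\|_2$ to the $\tilde O(\nu)$ budget, and only then apply your $50/\max(X,Y)$ density bound for $\mu_{j,1}X-\mu_{j,2}Y$ (which is correct). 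A secondary point: the lemma covers $p\in(0,2]$, so the tail term should not be handled via the geometric level-set decay you invoke ``for $p<2$''; the paper's route — the tail consists of the $n^{9c/10}$ smallest coordinates and hence carries at most an $n^{-c/10}$ fraction of $\|z\|_2^2$ — works uniformly for all $p\le 2$, including $p=2$ where $\|z\|_2$ may be $\Theta(\sqrt{\log n}\,\|F\|_2)$.
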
 
\begin{proof}
By Lemma \ref{lem:Zconcentration}, conditioned on $\mathcal{E}_1$, for every $k < N-n^{9c/10}$ we have $|z_{D(k)}| =  U_{D(k)}^{1/p} (1 \pm O(n^{-c/10}))^{1/p} =  U_{D(k)}^{1/p} (1 \pm O(\frac{1}{p}n^{-c/10}))$ (using the identity $(1+x) \leq e^{x}$ and the Taylor expansion of $e^x$), where $U_{D(k)} = (\sum_{\tau=1}^{k} \frac{E_\tau}{\ex{\sum_{j=\tau}^{N} |F_{D(j)}|^p }})^{-1}$ is independent of the anti-rank vector $D$ (in fact, it is totally determined by $k$ and the hidden exponentials $E_i$). Then for $c$ sufficiently large, we have $|\zeta_{D(k)}| =U_{D(k)}^{1/p} (1 \pm O(\nu))$, and so for all $p \in (0,2]$ and $k < N - n^{9c/10}$
	
	\[	|\zeta_{D(k)}| = U_{D(k)}^{1/p} + U_{D(k)}^{1/p} V_{D(k)}\]
	Where $V_{D(k)}$ is some random variable that satisfies $|V_{D(k)}| = O(\nu)$.	
		Now consider a bucket $A_{i,j}$ for $(i,j) \in [d] \times [10]$.  Let $\sigma_k = \ttx{sign}(z_k) = \ttx{sign}(\zeta_k)$ for $k \in [n^c]$. Then we write $A_{i,j}/\mu_{i,j} = \sum_{k \in B_{ij}}\sigma_{D(k)} |\zeta_{D(k)}| g_i(D(k)) + \sum_{k \in S_{ij}}\sigma_{D(k)} |\zeta_{D(k)}| g_i(D(k))$ where $B_{ij} = \{k \leq N - n^{9c/10}   \:|\: h_i(D(k)) =j \}$ and  $S_{ij} = \{n^c\geq k > N - n^{9c/10} \:| \:h_i(D(k)) =j\}$. Here we define $\{D(N+1),\dots,D(n^c)\}$ to be the set of indices $i$ with $F_i=0$ (in any ordering, as they contribute nothing to the sum). Also recall that $g_i(D(k)) \sim \mathcal{N}(0,1)$ is the i.i.d. Gaussian coefficent associated to item $D(k)$ in row $i$ of $A$. So
		\[A_{i,j}/\mu_{i,j} =\sum_{k \in B_{ij}} g_i(D(k)) \sigma_{D(k)} U_{D(k)}^{1/p} + \sum_{k \in B_{ij}} g_i(D(k))\sigma_{D(k)} U_{D(k)}^{1/p} V_{D(k)} +  \sum_{k \in S_{ij}} g_i(D(k)) \zeta_{D(k)}  \]
		Importantly, observe that since the variables $h_i(D(k))$ are fully independent, the sets $B_{i,j},S_{i,j}$ are independent of the anti-rank vector $D$. In other words, the values $h_i(D(k))$ are independent of the values $D(k)$ (and of the entire anti-rank vector), since $\{h_i(1),\dots,h_i(n^c)\} = \{h_i(D(1)),\dots, h_i(D(n^c))\}$ are i.i.d. Note that this would not necessarily be the case if $\{h_i(1),\dots,h_i(n^c)\}$ were only $\ell$-wise independent for some $\ell = o(n^c)$. So we can condition on a fixed set of values  $\{h_i(D(1)),\dots, h_i(D(n^c))\}$ now, which fixes the sets $B_{i,j},S_{i,j}$. Now let $U_{i,j}^* = |\sum_{k \in B_{ij}} g_i(D(k))\sigma_{D(k)}U_{D(k)}^{1/p}|$. 
		
		\begin{claim}\label{claim:khine}
			For all $i,j \in [d] \times [2]$ and $p \in (0,2]$, we have $$\left|\sum_{k \in B_{ij}} g_i(D(k))\sigma_{D(k)}U_{D(k)}^{1/p} V_{D(k)} + \sum_{k \in S_{ij}} g_i(D(k)) \zeta_{D(k)}\right| = O\left(\nu (|A_{i,1}|+ |A_{i,2}| ) \right)$$ with probability $1-O(\log(n) n^{-c/60})$.
		\end{claim}
		\begin{proof}
			By the $2$-stability of Gaussians (Definition \ref{def:stable}), we have $|\sum_{k \in S_{ij}} g_i(D(k)) \zeta_{D(k)}| = O(\sqrt{\log(n)}\allowbreak (\sum_{k \in S_{i,j}} \allowbreak (2z_{D(k)})^2)^{1/2})$  with probability $1 - n^{-c}$.  This is a sum over a subset of the $n^{9c/10}$ smallest items $|z_i|$, and thus $\sum_{k \in S_{i,j}} z_{D(k)}^2 < \frac{n^{9c/10}}{N} \|z\|_2^2$, giving $|\sum_{k \in S_{ij}} g_i(D(k)) \zeta_{D(k)}| =  O(\sqrt{\log(n)} n^{-c/30}\|z\|_2)$. Now WLOG $A_{i,1}$ is such that $\sum_{k \in B_{i,1} \cup S_{i,1}} \zeta_{{D(k)}}^2 > \frac{1}{2}\|\zeta\|_2^2$. Then $|A_{i,1}| \geq |g| \|z\|_2^2/3$ where $g \sim \mathcal{N}(0,1)$. Via the pdf of a Gaussian, it follows with probability $1-O(n^{-c/60})$ that $|A_{i,1}| > n^{-c/60} \|z\|_2^2 = \Omega((n^{c/60}/\sqrt{\log(n)}) |\sum_{k \in S_{ij}} g_i(D(k)) \zeta_{D(k)}|)$. Scaling $\nu$ by a $\log(n)$ factor gives. $ |\sum_{k \in S_{ij}} \allowbreak g_i(D(k)) \zeta_{D(k)}| = O( \nu |A_{i,1}|)$. 			
			 Next, using that $|V_{{D(k)}}| = O(\nu)$, we have $|\sum_{k \in B_{ij}}\allowbreak  g_i(D(k))\sigma_{D(k)} U_{D(k)}^{1/p} V_{D(k)}| = O(\nu) |\sum_{k \in B_{ij}}\allowbreak g_i(D(k))\sigma_{D(k)} U_{D(k)}^{1/p}| = O(\nu U_{i,j}^*)$. Combined with the prior paragraph, we have $U_{i,j}^* = O((|A_{i,1}| + |A_{i,2}|))$ as needed. 	
			Note there are only $O(\log(n))$ terms $i,j$ to union bound over, and from which the claim follows.
		\end{proof}

		Call the event where the Claim \ref{claim:khine} holds $\mathcal{E}_2$. Conditioned on $\mathcal{E}_2$, we can decompose $|A_{i,j}|/\mu_{i,j}$ for all $i,j$ into $U_{i,j}^* + \mathcal{V}_{ij}$ where $\mathcal{V}_{ij}$ is some random variable satisfying $|\mathcal{V}_{ij}| = O(\nu (|A_{i,1}| + |A_{i,2}|))$ and $U_{i,j}^*$ is independent of the anti-rank vector $D$ (it depends only on the hidden exponentials $E_k$, and the uniformly random gaussians $g_i(D(k))$).		
		Now fix any realization of the count-max randomness, Let $E = (E_1,\dots,E_N)$ be the hidden exponential vector, $\mu = \{\mu_{i,1}, \mu_{i,2}\}_{i \in [d]}$, $D = (D(1),D(2),\dots,D(N))$, and observe:		
			 \begin{equation*}
		\begin{split}
	\bpr{ \neg \texttt{FAIL} \; | \; D(1) }& = 	\sum_{E,\mu }  \bpr{ \neg \texttt{FAIL} \; | \; D(1) , E, \mu} \bpr{E, \mu}  \\		
		\end{split}
		\end{equation*}
		Here we have used the fact that $E,\mu$ are independent of the anti-ranks $D$. Thus, it will suffice to bound the probability of obtaining $E,\mu$ such that the event of failure can be determined by the realization of $D$. So consider any row $i$, and consider the event $\mathcal{Q}_{i}$ that $|\mu_{i,1} U^*_{i,1} - \mu_{i,2} U^*_{i,2}| < 2(|\mathcal{V}^*_{i,1}| + |\mathcal{V}^*_{i,2}|) = O(\nu (|A_{i,1}| + |A_{i,2}|)$ (where here we have conditioned on the high probability event $\mathcal{E}_2$). WLOG, $U^*_{i,1} \geq U^*_{i,2}$, giving $U^*_{i,1} = \Theta( (|A_{i,1}| + |A_{i,2}|)$. Since the $\mu_{i,j}$'s are uniform, $\pr{\mathcal{Q}_i} = O(\nu  (|A_{i,1}| + |A_{i,2}|)/ U^*_{i,1}) = O(\nu)$, and by a union bound $\pr{\cup_{i \in [d]}\mathcal{Q}_i} =  O(\log(n) \nu)$. Thus conditioned on $ \mathcal{E}_1 \cap \mathcal{E_2}$ and $\neg(\cup_{i \in [d]}\mathcal{Q}_i)$, the event of failure is completely determined by the values $E,\mu$, and in particular is independent of the anti-rank vector $D$. Thus
	
		\[ \bpr{\neg \texttt{FAIL} \; | D(1) , E, \mu , \neg (\cup_{i \in [d]}\mathcal{Q}_i),  \mathcal{E}_1 \cap \mathcal{E_2}} = \bpr{\neg \texttt{FAIL} \; |E, \mu , \neg (\cup_{i \in [d]}\mathcal{Q}_i), \mathcal{E}_1 \cap \mathcal{E_2}}  \]
		
		So averaging over all $E,\mu$:
			\begin{equation*}
		\begin{split}
		\bpr{\neg \texttt{FAIL} \; | D(1) } & = \bpr{\neg \texttt{FAIL} \; | D(1) , \neg (\cup_{i \in [d]}\mathcal{Q}_i),  \mathcal{E}_1 \cap \mathcal{E_2}} + O(\log(n) \nu) \\
	 & = \bpr{\neg \texttt{FAIL} \; | \neg (\cup_{i \in [d]}\mathcal{Q}_i),  \mathcal{E}_1 \cap \mathcal{E_2}} + O(\log(n) \nu) \\	
	  & = \bpr{\neg \texttt{FAIL} } + O(\log(n) \nu) \\	
		\end{split}
		\end{equation*}
		As needed.

	\end{proof}

 In Lemma \ref{lem:main}, we demonstrated that the probability of failure can only change by an additive $\tilde{O}(\nu)$ term given that any one value of $i \in [N]$ achieved the maximum (i.e., $D(1) = i$). This property will translate into a $(1 \pm \tilde{O}(\nu))$-\textit{relative error} in our sampler, where the space complexity is independent of $\nu$. To complete the proof of correctness of our algorithm, we now need to bound the probability that we fail at all. To do so, we first prove the following fact about $\|z_{\text{tail(s)}}\|_2$, or the $L_2$ norm of $z$ with the top $s$ largest (in absolute value) elements removed.
	
	\begin{proposition}\label{prop:1}
		For any $s = 2^j \leq n^{c-2}$ for some $j \in \mathbb{N}$, we have $\sum_{i=4s}^{N} z_{D(i)}^2 = O( \|F\|_p^{2}/s^{2/p - 1})$ if $p \in (0,2)$ is a constant bounded below $2$, and $\sum_{i=4s}^{N} z_{(i)}^2 = O(\log(n)\|F\|_p^2)$ if $p=2$, with probability $1 - 3e^{-s}$.
	\end{proposition}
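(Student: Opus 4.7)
The plan is to partition the nonzero coordinates of $z$ into geometric level sets based on magnitude and bound each level set's contribution to the $L_2$ tail. Define, for $k \geq 0$,
\[
I_k \;=\; \Big\{ i \in [N] : |z_i| \in \big[\,\|F\|_p 2^{-(k+1)/p},\; \|F\|_p 2^{-k/p}\,\big) \Big\},
\qquad I_{\leq k} \;=\; \bigcup_{j \leq k} I_j,
\]
so that $I_{\leq k}=\{i : |z_i| \geq \|F\|_p 2^{-(k+1)/p}\}$. Since $|z_i|^{-p} = t_i/|F_i|^p$ where $t_i$ is a rate-$1$ exponential, the event $\{i \in I_{\leq k}\}$ is equivalent to $\{t_i \leq 2^{k+1}|F_i|^p/\|F\|_p^p\}$, which occurs with probability $1 - \exp(-2^{k+1}|F_i|^p/\|F\|_p^p)$. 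By Fact \ref{fact:avgweight}'s duplication bound $|F_i|^p \leq n^{-c+1}\|F\|_p^p$, whenever $2^k \leq n^{c-2}$ the argument of the exponential is $O(1/n)$, so linearization gives $\mathbb{E}[|I_{\leq k}|] = (1\pm o(1)) \cdot 2^{k+1}$.

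The key concentration step is a Chernoff bound on $|I_{\leq k}| = \sum_i \mathbf{1}[t_i \leq 2^{k+1}|F_i|^p/\|F\|_p^p]$, which is a sum of independent $0/1$ indicators (since the $t_i$ are independent). For each level $k$ with $2^k \geq s$, the Chernoff bound yields
\[
\Pr\big[\,|I_{\leq k}| > 3 \cdot 2^k\,\big] \;\leq\; \exp(-\Omega(2^k)).
\]
I would then union bound across all dyadic scales $k = \log_2 s, \log_2 s + 1, \dots, \lceil \log_2 n^c \rceil$; since the failure probabilities form a rapidly decaying geometric series dominated by its first term, the overall failure probability is $O(e^{-s})$, which fits within the claimed $3e^{-s}$ slack. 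Call the good event $\mathcal{G}$.

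Condition on $\mathcal{G}$. In particular, for $k_0 = \lceil \log_2 s \rceil$ we have $|I_{\leq k_0}| \leq 3 \cdot 2^{k_0} \leq 6s$, which handles the indexing (with the constant $4$ in $\sum_{i \geq 4s}$ absorbed by the constant in $O(\cdot)$; if I want to hit the exact constant $4$, I would take $k_0$ slightly smaller or tighten the Chernoff constant to $2$ instead of $3$, which is harmless). The key point is that for some $k_0 = \log_2 s + O(1)$, every index $i$ with rank $\geq 4s$ in $|z|$ lies in $\bigcup_{k \geq k_0} I_k$. Then
\[
\sum_{i \geq 4s} z_{D(i)}^2 \;\leq\; \sum_{k \geq k_0} |I_k|\, \|F\|_p^2\, 2^{-2k/p} \;\leq\; \sum_{k \geq k_0} 3\cdot 2^k\, \|F\|_p^2\, 2^{-2k/p}.
\]

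For $p \in (0,2)$, the exponent $k(1-2/p)$ is strictly negative, so the geometric series is dominated by the $k = k_0$ term, giving $O(\|F\|_p^2 \cdot s^{1-2/p}) = O(\|F\|_p^2/s^{2/p-1})$, as claimed. For $p = 2$, each summand equals $3\|F\|_2^2$ and the sum runs over at most $O(\log n)$ values of $k$ (since $|z_i|$ cannot be smaller than roughly $\|F\|_2/\mathrm{poly}(n)$ for nonzero $F_i$, and beyond $k = \log_2 N$ the sets are empty), producing the $O(\log n)\|F\|_2^2$ bound. The main obstacle is handling the level sets at scales $k \gtrsim \log n^c$ where the linearization $\mathbb{E}[|I_{\leq k}|] \approx 2^{k+1}$ breaks down; I would cut off the sum at the maximum meaningful scale $k = \lceil \log_2 N \rceil$ and observe that beyond this scale $I_k = \emptyset$ deterministically (all nonzero $z_i$ are bounded below by $\|F\|_p/N^{1/p}$ up to polylog factors), so no additional contribution arises.
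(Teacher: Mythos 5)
Your proposal is correct and follows essentially the same route as the paper: geometric level sets of $|z|$, expected sizes $\approx 2^k$, a Chernoff-plus-union bound over all levels $k \gtrsim \log_2 s$, a separate argument that the items of rank $\geq 4s$ all fall below the threshold $\|F\|_p/s^{1/p}$, and then a geometric sum for $p<2$ versus $O(\log n)$ equal-sized levels for $p=2$. The only slips are cosmetic: the claim that levels beyond $k \approx \log_2 N$ are \emph{deterministically} empty is not literally true (large $t_i$ can make $|z_i|$ arbitrarily small), but the trivial bound $|I_k|\leq N$ makes those levels contribute $O(\|F\|_p^2)$ anyway, and your Chernoff constants give $e^{-\Omega(s)}$ rather than exactly $3e^{-s}$, which you already note how to tighten.
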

	\begin{proof}
	 Let $I_k = \{i \in [N] \; |\;  z_i \in (\frac{\|F\|_p}{2^{(k+1)/p}}, \frac{\|F\|_p}{2^{k/p}})\}$ for $k=0,1,\dots, p \log(\|F\|_p)$ (where we have $\log(\|F\|_p^p) = O(\log(n))$). Note that $\pr{i \in I_k} = \pr{t_i \in (\frac{2^k1F_i^p }{\|F\|_p^p}, \frac{2^{k+1}F_i^p }{\|F\|_p^p})} < \frac{ 2^kF_i^p}{\|F\|_p^p}$, where the inequality follows from the fact that the pdf $e^{-x}$ of the exponential distribution is upper bounded by $1$. Thus $\ex{|I_k|} < 2^k$, so for every $k \geq \log(s) = j$, we have  $\pr{|I_k| > 4(2^{k})} < e^{-s 2^{k-j}}$. By a union bound, the probability that $|I_k| > 4(2^k)$ for any $k \geq \log(S)$ is at most $e^{-s} \sum_{i=0}^{O(\log(n))} e^{ 2^{i}} \leq 2e^{-s}$. 		
		Now observe $\pr{z_i > \|F\|_p/s^{1/p}} < \frac{sF_i^p}{\|F\|_p^p}$, so $\ex{| \{i | z_i > \|F\|_1/s^{1/p}\}|} < s$, and again by Chernoff bounds the number of such $i$ with $z_i > \|F\|_1/s^{1/p}$ is at most $4s$ with probability $1 -e^{-s}$. Conditioning on this, $\sum_{i=4s}^{N} z_{(i)}^2$ does not include the weight of any of these items, so
		\[	\sum_{i=4s}^{N} z_{(i)}^2 \leq \sum_{k = \log(s) }^{O(\log(n))} |I_k|(\frac{\|F\|_p}{2^{k/p}} )^2 		\leq 4\sum_{k = 0 }^{O(\log(n))}  \frac{\|F\|_p^2}{2^{(\log(s) + k)(2/p - 1)}}   \]
		First, if $p < 2$, the above sum is geometric and converges to at most $4\frac{\|F\|_p^2}{1 - 2^{-2/p + 1}} \frac{1}{s^{2/p-1}} = O(\|F\|_p^2 / s^{2/p - 1})$ for $p$ a constant bounded below by $2$. If $p=2$ or is arbitrarily close to $2$, then each term is at most $\|F\|_p^2$, and the sum is upper bounded by $O(\log(n)\|F\|_p^2)$ as stated. Altogether, the probability of failure is at most $1 - 3e^{-s}$ by a union bound. 
	\end{proof}
	\begin{lemma}\label{lem:failbound}
		For $0 < p < 2$ a constant bounded away from $0$ and $2$, the probability that \texttt{$L_p$ Sampler} outputs \texttt{FAIL} is at most $1 - \Omega(1)$, and for $p=2$ is is $1- \Omega(1/\log(n))$. 
	\end{lemma}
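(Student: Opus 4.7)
The plan is to invoke Lemma \ref{lem:countmax}: if $|\zeta_{D(1)}| > 20\|\zeta_{-D(1)}\|_2$, then count-max declares $D(1)$ to be the unique maximum with probability $1 - n^{-c}$, so the algorithm does not output \texttt{FAIL}. Since $\zeta_i = (1 \pm O(\nu))z_i$ with $\nu$ polynomially small, it suffices to lower bound $\pr{|z_{D(1)}| > 22\|z_{-D(1)}\|_2}$ by $\Omega(1)$ when $p < 2$ and by $\Omega(1/\log n)$ when $p = 2$, from which the claim follows after subtracting the $n^{-c}$ count-max failure probability.

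Next I would leverage the representation from Fact \ref{fact:order}. We have $|z_{D(1)}| = \|F\|_p E_1^{-1/p}$, and the duplication step guarantees $|F_i|^p \leq n^{-c+1}\|F\|_p^p$, so each denominator $\sum_{j=\tau}^N |F_{D(j)}|^p$ equals $(1 \pm o(1))\|F\|_p^p$ for every $\tau \leq n^{c/2}$; hence $|z_{D(k)}| \leq 2\|F\|_p(E_1 + E_2)^{-1/p}$ for every $k \geq 2$. I would split $\|z_{-D(1)}\|_2^2 = \sum_{k=2}^{4s-1} z_{D(k)}^2 + \sum_{k \geq 4s} z_{D(k)}^2$ for a parameter $s$, bounding the head by $4s \cdot z_{D(2)}^2$ and the tail by Proposition \ref{prop:1}. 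On the event $\{E_2 \geq 1\}$ (probability $1/e$, implying $E_1 + E_2 \geq 1$), the head contributes $O(s)\|F\|_p^2$, while Proposition \ref{prop:1} gives a tail contribution of $O(\|F\|_p^2)$ when $p < 2$ is bounded away from $2$, or $O(\log(n)\|F\|_p^2)$ when $p = 2$, with failure probability $3e^{-s}$.

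Finally I would force $|z_{D(1)}|$ to be large by conditioning on $E_1 \leq \eta$. When $p < 2$, take $s$ to be a sufficiently large constant so that $3e^{-s} \leq 0.01$; then $\|z_{-D(1)}\|_2 \leq K\|F\|_p$ for a constant $K$, and the requirement $E_1 \leq (22K)^{-p}$ has constant probability (since $p$ is bounded), yielding an $\Omega(1)$ lower bound via a union bound. When $p = 2$, the situation is more delicate: $\|z_{-D(1)}\|_2 = O(\sqrt{\log n}\|F\|_p)$ forces $E_1 = O(1/\log n)$, an event of probability only $\Omega(1/\log n)$, so I would take $s = \Theta(\log\log n)$ to drive the Proposition \ref{prop:1} failure probability down to $o(1/\log n)$. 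The main obstacle, and what dictates this choice, is that the Proposition \ref{prop:1} event depends on all the $t_i$'s and is therefore not independent of the hidden-exponential events $\{E_1 \leq \eta\}$ and $\{E_2 \geq 1\}$; intersecting them via a union bound is only productive when the Proposition event holds with probability $1 - o(\text{target})$, which is exactly what the $\Theta(\log\log n)$ choice of $s$ achieves for $p=2$.
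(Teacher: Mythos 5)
Your proposal is correct and follows essentially the same route as the paper's proof: decompose $\|z_{-D(1)}\|_2$ into a head controlled via the hidden-exponential representation (conditioning on $E_2$ not too small) and a tail controlled by Proposition~\ref{prop:1}, then lower-bound the probability that $E_1$ is small enough for the count-max guarantee of Lemma~\ref{lem:countmax} to trigger, giving $\Omega(1)$ for $p<2$ and $\Omega(1/\log n)$ for $p=2$. The one substantive difference is that you choose the Proposition~\ref{prop:1} parameter adaptively ($s=\Theta(\log\log n)$ for $p=2$) precisely so that its $3e^{-s}$ failure probability is negligible against the $\Omega(1/\log n)$ target when intersecting the non-independent events by a union bound; the paper fixes $s=4$ and does not address this dependence, so on this point your argument is the more careful of the two.
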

	\begin{proof}
		By Proposition \ref{prop:1}, with probability $1-3e^{-4} > .9$ we have $\|z_{tail(16)}\|_2 = O(|F\|_p)$ for $p<2$, and $\|z_{tail(16)}\| = O(\sqrt{\log(n)}\|F\|_p)$ when $p=2$. Observe that for $t = 2,3,\dots,16$, we have $|z_{D(t)}| < \|F\|_p (\frac{2}{\sum_{\tau=1}^t E_\tau})^{1/p}$, and with probability $99/100$ we have $E_t > 1/100$, which implies that $|z_{D(t)}| = O(\|F\|_p)$ for all $t \in [16]$.  Conditioned on this, we have $\|z_{\text{tail}(2)}\|_2 < q\|F\|_p$ where $q$ is a constant when $p<2$, and $q = \Theta(\sqrt{\log(n)})$ when $p=2$. Now $|z_{D(1)}| = \frac{\|F\|_p}{E_1^{1/p}}$, and using the fact that the pdf exponential random variables around $0$ is bounded above by a constant, we will have $|z_{D(1)}| > 20\|z_{-D(1)}\|_2$ with probability $\Omega(1)$ when $p<2$, and probability $\Omega(\frac{1}{\log(n)})$ when $p=2$. Conditioned on this, by Lemma \ref{lem:countmax}, count-max will return the index $D(1)$ with probability $1-n^{-c}$, and thus the Sampling algorithm will not fail.
	\end{proof}

Putting together the results of this section, we obtain the correctness of our algorithm as stated in Theorem \ref{thm:main}. In Section \ref{sec:runtime}, we will show that the algorithm can be implemented to have $\tilde{O}(\nu)$ update and $\tilde{O(1)}$ query time, and that the entire algorithm can be derandomized to use $O(\log^2(n))$ bits of space for $p \in (0,2)$ and $O(\log^3(n))$ bits for $p=2$. 

	\begin{theorem}\label{thm:main}
		Given any constant $c \geq 2$, $\nu \geq n^{-c}$, and $0 < p \leq 2$, there is a one-pass $L_p$ sampler which returns an index $i \in [n]$ such that $\pr{i = j} = \frac{|f_j|^p}{\|f\|_p^p}(1\pm \nu) \pm n^{-c}$ for all $j \in [n]$, and which fails with probability $\delta > 0$. The space required is $O(\log^2(n) \log(1/\delta) (\log \log n)^2)$ bits for $p < 2$,  and $O(\log^3(n) \log(1/\delta))$ bits for $p = 2$. For $p<2$ and $\delta =1/\poly(n)$, the space is $O(\log^3(n))$-bits. The update time is $\tilde{O}(\nu^{-1})$, and the query time is $\tilde{O}(1)$. 
	\end{theorem}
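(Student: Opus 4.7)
My plan is to prove Theorem~\ref{thm:main} in three movements: (i) turn the machinery of Lemmas~\ref{lem:countmax}--\ref{lem:failbound} into a single-instance sampling guarantee, (ii) amplify to success probability $1-\delta$ by parallel repetition, and (iii) bound the space by derandomizing the two sources of randomness separately. The runtime claims and the detailed derandomization I would defer to Section~\ref{sec:runtime}, where the rounding scheme and PRG constructions live.

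For per-instance correctness, I would first condition on the intersection of the high-probability events in Lemmas~\ref{lem:countmax} and~\ref{lem:Zconcentration}; these fail only with probability $n^{-\Omega(c)}$ and so get absorbed into the additive $n^{-c}$ slack in the definition of a perfect sampler. On this event, count-max only ever declares the true maximizer $D(1)$ of $|\zeta|$, so the output index in $[n]$ equals $i$ exactly when $D(1) \in \{i_1,\dots,i_{n^{c-1}}\}$ and the test does not fail. By Fact~\ref{fact:min} applied to the rates $\lambda_{i_j} = |F_{i_j}|^p$ and the duplication identity $\sum_{j} |F_{i_j}|^p = n^{c-1}|f_i|^p$, we have $\pr{D(1) \in \{i_1,\dots,i_{n^{c-1}}\}} = |f_i|^p/\|f\|_p^p$ exactly. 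Then
\[
\pr{\text{output } i \mid \neg\texttt{FAIL}} \;=\; \frac{1}{\pr{\neg\texttt{FAIL}}}\sum_{j=1}^{n^{c-1}} \pr{D(1)=i_j}\,\pr{\neg\texttt{FAIL}\mid D(1)=i_j}.
\]
Substituting Lemma~\ref{lem:main}, each inner conditional probability is $\pr{\neg\texttt{FAIL}} \pm \tilde{O}(\nu)$, and dividing through by $\pr{\neg\texttt{FAIL}} = \Omega(1)$ (for $p<2$, by Lemma~\ref{lem:failbound}) gives $\frac{|f_i|^p}{\|f\|_p^p}(1\pm \tilde{O}(\nu))$. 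Rescaling $\nu$ down by the polylogarithmic factor hidden in $\tilde{O}(\cdot)$ meets the definition with error $(1\pm\nu)$, while the $n^{-c}$ slack absorbs all lower-order bad events.

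For amplification I would run $R$ independent copies of the sampler in parallel, take the first non-failing one as the output, and note that because the choice of copy depends only on the failure pattern (and the copies are independent), the conditional distribution of the emitted sample is still the per-instance distribution. Lemma~\ref{lem:failbound} forces $R = \Theta(\log(1/\delta))$ for $p<2$ and $R = \Theta(\log(n)\log(1/\delta))$ for $p=2$ in order to drive the overall failure probability below $\delta$. The factor-$\log n$ gap between the two regimes is exactly what produces the extra $\log n$ in the $p=2$ space bound.

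The remaining and, in my view, hardest step is the space accounting, which requires derandomizing the fully-independent exponentials $(t_1,\dots,t_{n^c})$ and the count-max hashes/signs/$\mu$ variables, all of which Lemma~\ref{lem:main} used at full independence. I would invoke the two PRGs promised in the introduction: (a) Nisan's PRG, applied with a sketch-specific analysis that avoids the naive $\log n$ multiplicative blowup, to fool the count-max randomness within each copy in $O(\log^2 n)$ bits; and (b) an extension of the Gopalan--Kane--Meka PRG for Fourier transforms of bounded-bit-complexity halfspaces, used on the rounded exponentials with seed length $O(\log^2 n \,(\log\log n)^2)$, shared across all $R$ copies. Combined with $R$ independent count-max tables of width $2$ and depth $\Theta(\log n)$ with $O(\log n)$-bit counters, this yields total space $O(\log^2 n \log(1/\delta)(\log\log n)^2)$ for $p<2$ and $O(\log^3 n \log(1/\delta))$ for $p=2$; in the regime $\delta = 1/\poly(n)$ with $p<2$ the shared PRG seed is dominated by $R\cdot\log^2 n = O(\log^3 n)$. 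The $\tilde{O}(\nu^{-1})$ update time and $\tilde{O}(1)$ query time follow from the $\texttt{rnd}_\nu$ rounding scheme together with binomial-sampling updates into count-max and an auxiliary heavy-hitters structure, exactly as outlined in the introduction and formalized in Section~\ref{sec:runtime}.
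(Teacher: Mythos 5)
Your per-instance correctness argument and the amplification step follow essentially the same route as the paper: condition on the count-max and concentration events, use Fact~\ref{fact:min} plus duplication to get $\pr{D(1)\in\{i_1,\dots,i_{n^{c-1}}\}}=|f_i|^p/\|f\|_p^p$ exactly, feed in Lemma~\ref{lem:main} to show the failure event is nearly independent of $D(1)$, and rescale $\nu$ by the polylogarithmic factor (your normalization by $\pr{\neg\texttt{FAIL}}$ versus the paper's computation of the joint probability is only a cosmetic difference; just note that for $p=2$ the normalizer is $\Omega(1/\log n)$ rather than $\Omega(1)$, which is again absorbed by the rescaling of $\nu$). The general $p<2$ space bound $O(\log^2 n\log(1/\delta)(\log\log n)^2)$ also matches the paper's accounting, where each of the $\Theta(\log(1/\delta))$ copies carries its own $O(\log^2 n(\log\log n)^2)$-bit seed.

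The genuine gap is in how you reach the two bounds that have \emph{no} $(\log\log n)^2$ factor, namely $O(\log^3 n\log(1/\delta))$ for $p=2$ and $O(\log^3 n)$ for $p<2$ with $\delta=1/\poly(n)$. You claim the halfspace-PRG seed for the exponentials can be ``shared across all $R$ copies.'' As stated this does not work. If sharing means all copies use the same pseudorandom exponentials, then amplification breaks: Lemma~\ref{lem:failbound} only gives failure probability $1-\Omega(1)$ (resp.\ $1-\Omega(1/\log n)$) over the exponential randomness as well, so there is a constant-probability (resp.\ $1-\Omega(1/\log n)$-probability) event on the exponentials under which \emph{every} copy fails simultaneously, and repetition cannot push the failure probability below that, no matter how independent the count-max randomness is. If instead the copies draw their exponentials from disjoint chunks of a single PRG output, then the tester you must fool is the one that examines the output of the whole multi-copy algorithm, which is a function of $\Theta(R\log n)$ halfspaces; Lemma~\ref{lemma:halfspace} then demands seed length $O(R\log^2 n(\log\log n)^2)$, which destroys the claimed bounds. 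The paper resolves exactly this issue by \emph{switching} derandomization strategies in these two regimes (Corollary~\ref{cor:derandom}): it abandons the per-copy two-PRG scheme and instead derandomizes the entire multi-copy algorithm with a single Nisan seed of length $O(\log^3 n)$, using the fact that the whole algorithm is an $O(\log^2 n)$-space tester that can read its randomness in a stream after reordering, with different copies consuming disjoint portions of the pseudorandom output. With that substitution (and the $R\cdot O(\log^2 n)$ cost of the sketches themselves), the $p=2$ and $\delta=1/\poly(n)$ bounds follow; without it, your proposal only establishes the weaker bounds $O(\log^3 n\log(1/\delta)(\log\log n)^2)$ and $O(\log^3 n(\log\log n)^2)$ in those cases.
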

	
	\begin{proof}
Conditioned on not failing, by Lemma \ref{lem:countmax}, with probability $1-n^{-c}$ we have that the output $i_j\in [n^c]$ of count-max will in fact be equal to $\arg \max_i \{|\zeta_i|\}$. Recall that $\zeta_i = (1 \pm O(\nu))z_i$ for all $i \in [n^c]$ (and this rounding of $z$ to $\zeta$ is order preserving). By Lemma \ref{lem:countmax} count-max only outputs a coordinate which is the \textit{unique} maximizer of $\zeta$. Now if there was \textit{unique} maximizer of $\zeta$, there must also be a unique maximizer in $z$, from which it follows that $i_j =\arg \max_i \{|z_i|\}$ . 

			Now Lemma \ref{lem:main} states for any $i_j \in [n^c]$ that $\pr{\neg \text{FAIL} \:| \: i_j = \arg \max_{i',j'} \{|z_{i_{j'}'}|\}} = \pr{\neg \text{FAIL}} \pm \tilde{O}(\nu) = q \pm \tilde{O}(\nu)$, where $q =  \pr{\neg \text{FAIL}} = \Omega(1)$ for $p<2$, and $q = \Omega(\frac{1}{\log(n)})$ for $p=2$, both of which follow from Lemma \ref{lem:failbound}, which does not depend on any of the randomness in the algorithm. 
		Since conditioned on not failing, the output $i_j$ of count-max satisfies $i_j= \arg \max_i \{|z_i|\}$, the probability we output $i_j \in [n^c]$ is $\pr{\neg \text{FAIL} \cap i_j= \arg \max \{|z_i|\}}$, so the probability our final algorithm outputs $i \in [n]$ is 
		 \[\sum_{j \in [n^{c-1}]}\allowbreak \pr{\neg \text{FAIL} \:| \: i_j = \arg \max_{i',j'}\{|z_{i_{j'}'}|\}}\pr{ i_j = \arg \max_{i',j'}\{|z_{i_{j'}'}|\}}  = \sum_{j \in [n^{c-1}]} \frac{|f_i|^p}{\|F\|_p^p} ( q\pm \tilde{O}(\nu))\]\[ =  \frac{|f_i|^p}{\|f\|_p^p}(q\pm \tilde{O}(\nu))\]
		 Note that we can scale the $c$ value used in the algorithm by a factor of $60$, so that the statement of Lemma $\ref{lem:main}$ holds for any $\nu \geq n^{-c}$.
		  The potential of the failure of the various high probability events that we conditioned on only adds another additive $O(n^{-c})$ term to the error. Thus, conditioned on an index $i$ being returned, we have $\pr{i = j} = \frac{|f_j|^p}{\|f\|_p^p}(1\pm \tilde{O}(\nu))) \pm n^{-c}$ for all $j \in [n]$, which is the desired result after scaling $\nu$ by a $\poly(\log(n))$ term.
	  Running the algorithm $O(\log(\delta^{-1}))$ times in parallel for $p<2$ and $O(\log(n)\log(\delta^{-1}))$ for $p=2$, it follows that at least one index will be returned with probability $1-\delta$.
	  
	  For the complexity, the update time of count-max data structure $A$ follows from the routine \texttt{Fast-Update} of Lemma \ref{lem:runtime}, and the query time follows from Lemma \ref{lem:query}. 	  
	   Theorem \ref{thm:derandom} shows that the entire algorithm can be derandomized to use a random seed with $O(\log^2(n)(\log \log(n))^2)$-bits, so to complete the claim it suffices to note that using $O(\log(n))$-bit precision as required by \texttt{Fast-Update} (Lemma \ref{lem:runtime}), it follows that our whole data structure $A$ can be stored with $O(\log^2(n))$ bits, which is dominated by the cost of storing the random seed. This gives the stated space after taking $O(\log(\delta^{-1}))$ parallel repetitions for $p<2$. For $p=2$, we only need a random seed of length $O(\log^3(n))$ for all $O(\log(n)\log(\delta^{-1}))$ repetitions by Corollary \ref{cor:derandom}, which gives $O(\log^3(n)\log(\delta^{-1}) + \log^3(n)) = O(\log^3(n) \log(1/\delta))$ bits of space for $p=2$ as stated. Similarly for the case of $p<2$ and $\delta = 1/\poly(n)$, the stated space follows from Corollary \ref{cor:derandom}.
\end{proof}

In particular, it follows that perfect $L_p$ samplers exist using $O(\log^2(n)\log(1/\delta) (\log \log n)^2)$ and  $O(\log^3(n)\ab \log(1/\delta))$  bits of space for $p < 2$ and $p =2$ respectively. 

	\begin{theorem}\label{thm:main2}
	Given $0 < p \leq 2$, for any constant $c \geq 2$ there is a \textit{perfect} $L_p$ sampler which returns an index $i \in [n]$ such that $\pr{i = j} = \frac{|f_j|^p}{\|F\|_p^p} \pm O(n^{-c})$ for all $j \in [n]$, and which fails with probability $\delta > 0$. The space required is $O(\log^2(n) \log(1/\delta)(\log \log n)^2)$ bits for $p < 2$, and $O(\log^3(n) \log(1/\delta))$ bits for $p = 2$.  For $p<2$ and $\delta =1/\poly(n)$, the space is $O(\log^3(n))$-bits. 
\end{theorem}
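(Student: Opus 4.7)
The plan is to obtain Theorem \ref{thm:main2} as an immediate corollary of Theorem \ref{thm:main} by driving the relative-error parameter $\nu$ polynomially small in $n$. The critical observation is that the space bounds stated in Theorem \ref{thm:main} --- namely $O(\log^2(n)\log(1/\delta)(\log\log n)^2)$ for $p<2$, $O(\log^3(n)\log(1/\delta))$ for $p=2$, and $O(\log^3(n))$ in the high-probability regime $\delta=1/\poly(n)$ --- carry no dependence on $\nu$ whatsoever. The only quantities that scale with $\nu$ in Theorem \ref{thm:main} are the update time $\tilde{O}(\nu^{-1})$ and the range of validity $\nu \geq n^{-c}$, and since Theorem \ref{thm:main2} makes no claim about update time we are free to take $\nu$ as small as we wish.

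Concretely, to achieve the target additive error $O(n^{-c})$, I would instantiate Theorem \ref{thm:main} with a slightly larger constant $c' = c + 1$ and set $\nu = n^{-c'}$. Writing $p_j := |f_j|^p/\|f\|_p^p \leq 1$, Theorem \ref{thm:main} yields
\[
\pr{i = j} = p_j(1 \pm n^{-c'}) \pm n^{-c'} = p_j \pm \left(p_j n^{-c'} + n^{-c'}\right),
\]
and since $p_j \leq 1$ the total deviation is at most $2n^{-c'} = O(n^{-c})$, which is precisely the perfect-sampler guarantee in the statement. The failure probability $\delta$ passes through unchanged, and each of the three space regimes asserted in Theorem \ref{thm:main2} is inherited verbatim from the corresponding case of Theorem \ref{thm:main}.

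I do not anticipate any real obstacle here, as the argument is pure bookkeeping once the $\nu$-independence of the space in Theorem \ref{thm:main} has been noted. The only subtlety worth flagging is that collapsing the multiplicative $(1 \pm \nu)$ factor into an additive error uses the trivial inequality $p_j \leq 1$, and that driving $\nu$ down to $n^{-c'}$ does inflate the update time to $\tilde{O}(n^{c'})$; however, this is immaterial to the space-only claim being proved.
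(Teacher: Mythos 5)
Your proposal is correct and matches the paper's own (implicit) derivation: the paper states Theorem \ref{thm:main2} as an immediate consequence of Theorem \ref{thm:main}, precisely because the space bounds there are independent of $\nu$, so one may take $\nu = n^{-c}$ (or $n^{-c'}$) and absorb the multiplicative $(1\pm\nu)$ error into the additive $O(n^{-c})$ term using $|f_j|^p/\|f\|_p^p \le 1$, at the cost only of update time. Nothing further is needed.
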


Finally, we note that the cause of having to pay an extra $(\log \log n)^2$ factor in the space complexity for $p<2$ is only due to the derandomization. Thus, in the random oracle model where the algorithm has assess to a $\poly(n)$-length random tape which does not count against its space requirement, the space is an optimal $O(\log^2(n) \log(1/\delta))$. We remark that the $\Omega(\log^2(n) \log(1/\delta))$ of \cite{kapralov2017optimal} lower bound also holds in the random oracle model.

\begin{corollary}
	For $p \in (0,2)$, in the random oracle model,  there is a \textit{perfect} $L_p$ sampler which  fails with probability $\delta > 0$ and uses $O(\log^2(n) \log(1/\delta))$ bits of space. 
\end{corollary}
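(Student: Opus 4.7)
The plan is to simply invoke Theorem \ref{thm:main} (equivalently Theorem \ref{thm:main2}) and observe that in the random oracle model, the only source of the $(\log\log n)^2$ factor in the space usage disappears. First I would trace through the proof of Theorem \ref{thm:main} to identify where each factor in the space bound arises. The correctness analysis---namely Lemmas \ref{lem:Zconcentration}, \ref{lem:main}, and \ref{lem:failbound}---is entirely independent of the derandomization, and requires only that the exponentials $(t_1,\dots,t_{n^c})$ be fully independent together with fully independent count-max hash functions $h_i(\cdot)$ and Gaussian signs $g_i(\cdot)$. In the random oracle model, all of this randomness is provided for free by the infinite random tape and does not count toward the algorithm's space.

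Next I would account for the stored state. The algorithm maintains a single count-max table $A$ of dimensions $d \times 2$ with $d = \Theta(\log n)$, plus the $2d$ rescaling variables $\mu_{i,j}$. Using $O(\log n)$-bit precision (as justified by the runtime/rounding analysis of \texttt{Fast-Update} referenced at the end of the proof of Theorem \ref{thm:main}) for each bucket and each $\mu_{i,j}$, the entire data structure fits in $O(\log^2 n)$ bits. Since the exponentials $t_i$, the hashes $h_i$, the Gaussians $g_i$, and the $\mu_{i,j}$ are all read from the random oracle on the fly as updates arrive, no additional space is required for a random seed.

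Finally, to boost the success probability from the constant $\Omega(1)$ guaranteed by Lemma \ref{lem:failbound} (for $p<2$) to $1-\delta$, I would run $O(\log(1/\delta))$ independent copies of the sampler in parallel and output the result of the first copy that does not fail. Each copy uses fresh randomness drawn from the random oracle tape, and each uses $O(\log^2 n)$ bits of stored state, giving the claimed $O(\log^2(n)\log(1/\delta))$ total space. The correctness guarantee from Theorem \ref{thm:main2} (output distribution within $n^{-c}$ variation distance of the true $L_p$ distribution) carries over verbatim, since it does not rely on the derandomization step.

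There is no significant obstacle here: the corollary is essentially a bookkeeping observation. The only subtlety worth emphasizing is that the proofs of Lemma \ref{lem:Zconcentration} and Lemma \ref{lem:main} genuinely require full independence of the $t_i$'s and of the hashes $\{h_i(D(1)),\dots,h_i(D(n^c))\}$ (the latter is explicitly flagged in the proof of Lemma \ref{lem:main} as failing for $o(n^c)$-wise independence), and this is precisely the property that the random oracle provides without any space cost.
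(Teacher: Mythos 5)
Your proposal is correct and follows essentially the same route as the paper: Theorem \ref{thm:main} already isolates the $(\log\log n)^2$ factor as coming solely from the seed of the derandomization, with the count-max table itself occupying only $O(\log^2 n)$ bits at $O(\log n)$-bit precision, so in the random oracle model one simply drops the seed cost and runs $O(\log(1/\delta))$ parallel copies. Your added remark that Lemmas \ref{lem:Zconcentration} and \ref{lem:main} genuinely need full independence (supplied for free by the oracle) is exactly the point the paper relies on.
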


\begin{remark}
 Note that for $p$ arbitrarily close to $2$, the bound on $\|z\|_2$ of Proposition \ref{prop:1} as used in Lemma \ref{lem:failbound} degrades, as the sum of the $L_2$ norms of the level sets is no longer geometric, and must be bounded by $O(\sqrt{\log(n)} \|F\|_2)$. In this case, the failure probability from Lemma \ref{lem:failbound} goes to $\Theta(\frac{1}{\log(n)})$, and so we must use the upper bound for $p=2$. Similarly, for $p$ arbitrarily close to $0$, the bound also degrades since the values $V_{D(k)}$ in Lemma \ref{lem:main} blow-up. For such non-constant $p$ arbitrarily close to $0$, we direct the reader to the $O(\log^2(n))$-bit perfect $L_0$ sampler of \cite{Jowhari:2011}. 
\end{remark}

	\section{Time and Space Complexity} \label{sec:runtime}
	In this section, we will show that our algorithm can be implemented with the desired space and time complexity. First, in Section \ref{subsec:update}, we show how \texttt{$L_p$ Sampler} can be implemented with the update procedure \texttt{Fast-Update} to result in $\tilde{O}(\nu^{-1})$ update time. Next, in Section \ref{subsec:derandom}, we show that the algorithm \texttt{$L_p$ Sampler} with \texttt{Fast-Update} can be derandomized to use a random seed of length $O(\log^2(n) (\log\log n)^2)$-bits, which will give the desired space complexity. Finally, in Section \ref{subsec:query}, we show how using an additional heavy-hitters data structure as a subroutine, we can obtain $\tilde{O}(1)$ update time as well. This additional data structure will not increase the space or update time complexity of the entire algorithm, and does not need to be derandomized.   
	
	
	\subsection{Optimizing the Update Time}\label{subsec:update}
	In this section we prove Theorem \ref{lem:runtime}. Our algorithm utilizes a single data structure run on the stream $\zeta$, which is count-max matrix $A \in \R^{d \times 2}$ where $d = \Theta(\log(n))$. We will introduce an update procedure \texttt{Fast-Update} which updates the data structure $A$ of \texttt{$L_p$ Sampler} in $\tilde{O}(\nu^{-1})$ time. We assume the unit cost RAM model of computation, where a word of length $O(\log(n))$-bits can be operated on in $O(1)$ time (note that replacing $O(1)$ with $\poly (\log(n))$ time here would not effect our results, as the additional cost would be hidden in the $\tilde{O}$).
	 Throughout this section, we will refer to the \textit{original algorithm} as the algorithm which implements \texttt{$L_p$ sampler} by individually generating each scaling exponential $t_i$ for $i \in [n^c]$, and hashing them individually into $A$ (na\"ively taking $n^c$ update time). Our procedure will utilize the following result about efficiently sampling binomial random variables which can be found in \cite{bringmann2014internal}.

\begin{proposition}\label{prop:fastBin}
	For any constant $c>0$, there is an algorithm that can draw a sample $X \sim \texttt{Bin}(n,1/2)$ in expected $O(1)$ time in the unit cost RAM model. Moreover, it can be sampled in time $\tilde{O}(1)$ with probability $1 - n^{-c}$. The space required is $O(\log(n))$-bits. 
\end{proposition}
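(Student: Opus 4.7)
The strategy is rejection sampling against a normal envelope, together with a slow fallback for rare tail events. The key enabling fact about the unit-cost RAM model is that with word size $w = \Theta(\log n)$ one can (a) generate $w$ uniformly random bits in $O(1)$ time, (b) perform arithmetic on $w$-bit numbers in $O(1)$ time, and (c) do table lookups in any precomputed table of size $2^{O(w)} = \poly(n)$ in $O(1)$ time. Although the stated space budget is $O(\log n)$ bits of \emph{workspace}, one may still use small precomputed constant tables (e.g.\ for $\exp$, $\log$, and popcounts) that fit in a constant number of words when indexed at the appropriate resolution.

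The plan is three phases. First, sample a candidate $X$ by drawing $Y$ from a discretized $\mathcal{N}(n/2,\,n/4)$ via an $O(1)$-time normal sampler (e.g.\ the polar/Box--Muller method with $O(\log n)$-bit fixed-point arithmetic), and set $X = \mathrm{round}(Y)$. Second, if $|X - n/2| \leq c_1\sqrt{n \log n}$ for a suitable constant $c_1$, compute the acceptance ratio $\binom{n}{X}\,2^{-n}/q(X)$, where $q$ is the (discretized) normal envelope density, and accept with that probability. Standard local-central-limit estimates show that throughout this central region the ratio is bounded by a universal constant, so the number of rejection rounds is $O(1)$ in expectation and with high probability. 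Third, if $X$ falls outside the central region, fall back to the naive procedure of generating $\lceil n/w \rceil$ uniformly random words and summing their popcounts via a fixed $O(1)$-word popcount table, which takes $O(n/\log n)$ time. By Chernoff, $X$ lies outside the central region with probability at most $n^{-c}$, so this fallback contributes $O(n^{-c}) \cdot O(n/\log n) = o(1)$ to the expected running time and is avoided outright with probability $1-n^{-c}$.

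The main technical obstacle is computing the acceptance ratio $\binom{n}{X}\,2^{-n}/q(X)$ exactly enough, in $O(1)$ time and $O(\log n)$ bits. I would handle this by working in log-space: Stirling's series gives $\ln \binom{n}{X}$ as a closed-form elementary expression in $n$ and $X$ up to additive error $O(1/n)$, which is more than sufficient since the ratio itself is $\Theta(1)$ and we only need its value to additive precision $n^{-c}$. The log of the normal envelope density is similarly a constant-degree polynomial in $(X - n/2)/\sqrt{n}$. Evaluating the difference of these two logs and then exponentiating the result can be done in $O(1)$ operations with the help of a small precomputed table for $\exp$ over a bounded domain (or, equivalently, a constant number of Taylor-series terms), all within a constant number of $O(\log n)$-bit words.

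Putting this together gives expected time $O(1)$, time $\tilde{O}(1)$ with probability $1 - n^{-c}$, and workspace $O(\log n)$ bits, as claimed. The one item that must be checked carefully is that the $O(\log n)$-bit discretization of the normal sampler in phase one preserves the correctness of the rejection step up to total variation $n^{-c}$; this is standard and can be achieved by sampling $Y$ to $O(\log n)$ bits of precision past the decimal point, so that discretization errors are absorbed into the same additive $n^{-c}$ slack that we already tolerate in the acceptance probability.
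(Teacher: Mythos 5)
Your proposal is in the same general family as the paper's proof (rejection sampling against an explicit envelope, with Stirling-type evaluation of $\binom{n}{x}2^{-n}$), but as a proof of the proposition \emph{as stated} it has two genuine gaps. First, exactness: the proposition claims a sample distributed exactly as $\texttt{Bin}(n,1/2)$, while your scheme only achieves total variation distance $n^{-c}$. This is not just the acknowledged discretization slack: the hybrid design itself distorts the law. Per attempt, a value $x$ in the central region $C$ is output with probability $p(x)/M+\alpha p(x)$ (acceptance plus fallback), while $x\notin C$ is output only with probability $\alpha p(x)$, where $\alpha$ is the chance the normal proposal leaves $C$ and $M$ is the envelope constant; after normalizing, tail points are suppressed by a factor $\approx \alpha M$ relative to their true weight. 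The paper avoids both issues by using an envelope that dominates \emph{everywhere} (blocks of width $\approx\sqrt{n}$ with geometrically decaying weight, so there is no out-of-envelope fallback at all) and by comparing the acceptance probability $q=p(i)/f(i)$ against a uniform variate \emph{lazily}, bit by bit, refining $\tilde q$ to precision $2^{-L}$ only while $|\tilde R-\tilde q|\le 2^{-L}$; this terminates almost surely with no truncation error, so the output is exactly binomial. Second, the expected $O(1)$ time claim: in your scheme every trial evaluates $\exp$, $\log$, Stirling, and a Box--Muller normal to $\Theta(\log n)$ bits of precision. On a word RAM this is not $O(1)$ word operations unless you permit $\poly(n)$-size lookup tables, which would violate the $O(\log n)$-bit space bound you simultaneously claim (a "constant number of Taylor terms" does not give $n^{-c}$ precision; you need $\Theta(\log n/\log\log n)$ terms). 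The lazy-precision comparison is precisely what rescues expected $O(1)$ time in the paper: the number of bits ever needed is $O(1)$ in expectation, since refinement beyond $L$ bits happens with probability $O(2^{-L})$.

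Two smaller points: Stirling's formula "up to additive error $O(1/n)$" is \emph{not} sufficient for additive precision $n^{-c}$ when $c>1$ (the paper's expansion uses $L$ terms for $2^{-L}$ error, so $O(c)$ terms are needed); and raw acceptance with probability $p(x)/q(x)$ is ill-defined when the ratio exceeds $1$, so you need an explicit constant $M\ge\sup_{x\in C}p(x)/q(x)$ from the local CLT. To be fair, for the paper's application a $\tilde O(1)$-expected-time sampler with $n^{-c}$ total variation error would in fact suffice (such errors are absorbed elsewhere), but your argument does not establish the proposition's claims of exact sampling and expected $O(1)$ time.
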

\begin{proof}
	The proof of the running time bounds and correctness can be found in \cite{bringmann2014internal}. Since they do not analyze the space complexity of their routine, we do so here. Their algorithm is as follows. We can assume $n$ is even, otherwise we could sample \texttt{Bin}$(n,q) \sim \texttt{Bin}(n-1,q) + \texttt{Bin}(1,q)$, where the latter can be sampled in constant time (unit cost RAM model) and $O(\log(n))$-bits of space. The algorithm first computes $\Delta \in [\sqrt{n},\sqrt{n}+3]$, which can be done via any rough approximation of the function $\sqrt{x}$, and requires only $O(\log(n))$-bits. Define the block $\mathcal{B}_k = \{km, km+1, \dots, km + m-1\}$ for $k \in \Z$, and set 
	\[	f(i) = \frac{4}{2^{\max\{k,-k-1\}}m} \; \; \text{s.t. } i \in \mathcal{B}_k	\]
	\[	p(i) = 2^{-n}\binom{n}{n/2 + i} 	\]
 Note that given $i$, $f(i)$ can be computed in constant time and $O(\log(n))$ bits of space. The algorithm then performs the following loop:
 \begin{enumerate}
 	\item Sample $i$ via the normalized probability distribution $\bar{f} =f/16$. 
 	\item Return $n/2 + i$ with probability $p(i)/f(i)$
 	\item Else, reject $i$ and return to Step 1.
 \end{enumerate}
To compute the first step, the symmetry around $n/2$ of $f$ is utilized. We flip unbiased coins $C_1,C_2,\dots$ until we obtain $C_{t+1}$ which lands tails, and pick $i$ uniformly from block $\mathcal{B}_t$ or $\mathcal{B}_{-t}$ (where the choice is decided by a single coin flip). The procedure requires at most $O(\log(n))$-bits to store the index $t$. Next, to perform the second step, we obtain $2^{-L}$ additive error approximations $\tilde{q}$ of $q=(p(i)/f(i))$ for $L=1,2,\dots,$ which (using the fact that $0 \leq q\leq 1$) can be done by obtaining a $2^{-L}$-relative error approximation of $q$. Then we flip $L$ random bits to obtain a uniform $\tilde{R} \in [0,1]$, and check if $|\tilde{R} - \tilde{q}| > 2^{-L}$. If so, we can either accept or reject $i$ based on whether $\tilde{R} > \tilde{q} + 2^{-L}$ or not, otherwise we repeat with $L \leftarrow L+1$.

To obtain $\tilde{q}$, it suffices to obtain a $2^{-L-1}$ relative error approximation of the factorial function $x!$. To do so, the $2^{-L}$ approximation 
\[x! \approx (x+L)^{x+1/2}e^{-(x+L)}\big[	\sqrt{2\pi} + \sum_{k=1}^{L-1} \frac{c_k}{x+k}	\big] \]
is used, where $c_k = \frac{(-1)^{k-1}}{(k-1)!}(L-k)^{k-1/2}e^{L-k}$. This requires estimating the functions $e^x$, $\sqrt{x}$ and $\pi$, all of which, as well as each term in the sum, need only be estimated to $O(L)$-bits of accuracy (as demonstrated in \cite{bringmann2014internal}). Thus the entire procedure is completed in $O(L) = O(\log(n))$-bits of space ($L$ can never exceed $O(\log(n))$, as $q$ is specified with at most $O(\log(n))$ bits), which completes the proof.	
\end{proof}

We now utilize a straightforward reduction from the case of sampling from $\texttt{Bin}(n,q)$ for any $q \in [0,1]$ to sampling several times from $\texttt{Bin}(n',1/2)$ where $n' \leq n$. This reduction has been observed before \cite{farach2015exact}, however we will state it here to clearly demonstrate our desired space and time bounds. 

\begin{lemma} \label{lem:bin}
	For any constant $c>0$ and $q \in [0,1]$, there is an algorithm that can draw a sample $X \sim \texttt{Bin}(n,q)$ in expected $O(1)$ time in the unit cost RAM model. Moreover, it can be sampled in time $\tilde{O}(1)$ with probability $1 - n^{-c}$, and the space required is $O(\log(n))$-bits. 
\end{lemma}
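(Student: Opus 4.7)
The plan is to reduce sampling $X \sim \texttt{Bin}(n,q)$ for arbitrary $q \in [0,1]$ to $O(\log n)$ applications of the fair-coin binomial sampler from Proposition \ref{prop:fastBin}, exploiting the binary expansion of $q$. Since all quantities in the streaming model are specified with $O(\log n)$ bits, write $q = \sum_{i=1}^{L} q_i 2^{-i}$ with $L = O(\log n)$. Couple each of the $n$ Bernoulli$(q)$ trials to an infinite sequence of i.i.d.\ fair bits $b_1, b_2, \ldots$ by declaring the trial a success iff the dyadic number $0.b_1 b_2 \ldots$ is strictly less than $q$; this fires with probability exactly $q$.

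To avoid touching each trial individually, I would process all $n$ trials in parallel, bit-by-bit. Set $n_1 = n$, and at iteration $i$ let $n_i$ be the number of trials whose outcomes are still undetermined after comparing their first $i-1$ bits against $q_1,\ldots, q_{i-1}$. Use Proposition \ref{prop:fastBin} to draw $S_i \sim \texttt{Bin}(n_i, 1/2)$, the count of undetermined trials with $b_i = 0$. If $q_i = 1$ these $S_i$ trials satisfy $b_i < q_i$ and are recorded as successes, leaving $n_{i+1} = n_i - S_i$ trials undetermined; if $q_i = 0$ the $n_i - S_i$ trials with $b_i > q_i$ are failures and $n_{i+1} = S_i$. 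Halt when $n_i = 0$ or $i > L$, and output $X = \sum_{i:\, q_i = 1} S_i$. Correctness is immediate, since each trial independently succeeds with probability exactly $q$ and $X$ records the number of successes.

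For the complexity bounds: since $\mathbb{E}[n_{i+1}\mid n_i] = n_i/2$, the process terminates in $O(\log n)$ rounds with high probability; by Proposition \ref{prop:fastBin} each inner call runs in $\tilde{O}(1)$ time with probability $1 - n^{-c-1}$, and a union bound over rounds yields the stated $\tilde{O}(1)$ bound with probability $1 - n^{-c}$. The space is $O(\log n)$ bits, since between rounds we only need to retain $n_i$, the bit index $i$, the running sum $X$, and the $O(\log n)$-bit scratch space of Proposition \ref{prop:fastBin}; these all fit in a constant number of $O(\log n)$-bit words.

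The main obstacle is obtaining the expected $O(1)$ time bound, since the straightforward bit-by-bit analysis above only delivers expected $O(\log n)$ time: $O(\log n)$ expected rounds, each of $O(1)$ expected work. I would close this gap by invoking the more refined amortization developed in \cite{farach2015exact}: once $n_i$ drops below a constant the tail of trials is resolved in a single $O(1)$-time step rather than continuing the bit loop, and the inner sampler's rare $\omega(1)$-time events are charged against a high-probability good event so as not to inflate the expectation. This upgrades the bound to expected $O(1)$ in the unit-cost RAM model without affecting the high-probability $\tilde{O}(1)$ guarantee or the $O(\log n)$-bit space usage, matching the statement of the lemma.
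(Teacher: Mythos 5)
Your proposal follows essentially the same route as the paper's proof: the same binary-expansion reduction of $\texttt{Bin}(n,q)$ to $O(\log n)$ draws of $\texttt{Bin}(n_j,1/2)$ via Proposition \ref{prop:fastBin}, with the identical update rule for $n_{j+1}$ depending on the bit $q_j$, and the same high-probability $\tilde{O}(1)$ time and $O(\log n)$-bit space accounting. The expected-$O(1)$ subtlety you flag is handled no more explicitly in the paper, whose proof likewise only argues the $\tilde{O}(1)$ with-high-probability bound and defers the finer analysis to \cite{farach2015exact}, so your treatment matches the paper's.
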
 
\begin{proof}
	The reduction is as follows (for a more detailed proof of correctness, see \cite{farach2015exact}). We sample \texttt{Bin}$(n,q)$ by determining how many of the $n$ trials were successful. This can be done by generating variables $u_1,\dots,u_n$ uniform on $[0,1]$, and determining how many are less than $q$. We do this without generating all the variables $u_i$ explicitly as follows.  First write $q$ in binary as $q = (0.q_1 q_2 ,\dots )_2$. Set $b \leftarrow 0$, $j \leftarrow 1$, $n_j \leftarrow n$ and sample $b_j \sim \texttt{Bin}(n_j,1/2)$. If $q_j =1$, then set $b = b+b_j$, as these corresponding $b_j$ trials $u_i$ with the first bit set to $0$ will all be successful trials given that $q_j = 1$. Then set $n_{j+1} \leftarrow n_j - b_j$ and repeat with $j \leftarrow j +1$. Otherwise, if $q_j = 0$, then we set $n_{j+1} \leftarrow n_j - (n_j - b_j) = b_j$, since this represents the fact that $(n_j-b_j)$ of the variables $u_i$ will be larger than $q$. With probability $1 - n^{-100c}$, we reach the point where $n_j = 0$ within $O(\log(n))$ iterations, and we return the value stored in $b$ at this point. By Proposition \ref{prop:fastBin}, each iteration requires $\tilde{O}(1)$ time, and thus the entire procedure is $\tilde{O}(1)$. For space, note that we need only store $q$ to its first $O(\log(n))$ bits, since the procedure terminates with high probability within $O(\log(n))$ iterations. Then the entire procedure requires $O(\log(n))$ bits, since each sample of $\texttt{Bin}(n_j,1/2)$ requires only $O(\log(n))$ space by Proposition \ref{prop:fastBin}. 
	
\end{proof}

\paragraph{The \texttt{Fast-Update} procedure. } We are now ready to describe the implementation of our update-time algorithm. Recall that our algorithm utilizes just a single data structure on the stream $\zeta$: the $d \times 2 $ count-max matrix $A$ (where $d = \Theta(\log(n))$).  Upon receiving an update $(i,\Delta)$ to a coordinate $f_i$ for $i \in [n]$, we proceed as follows. Our goal is to compute the set $\{ \texttt{rnd}_\nu(1/t_{i_1}^{1/p}), \texttt{rnd}_\nu(1/t_{i_2}^{1/p}), \dots, \texttt{rnd}_\nu(1/t_{i_{n^{c-1}}}^{1/p}) \}$, and update each row of $A$ accordingly in $\tilde{O}(\nu^{-1})$ time. Na\"ively, this could be done by computing each value individually, and then updating each row of $A$ accordingly, however this would require $O(n^{c-1})$ time. To avoid this, we exploit the fact that the support size of $\texttt{rnd}_\nu(x)$ for $1/\poly(n) \leq x \leq \poly(n)$ is $\tilde{O}(\nu^{-1})$, so it will suffice to determine how many variables $\texttt{rnd}_\nu(1/t_{i_j}^{1/p})$ are equal to each value in the support of $\texttt{rnd}_\nu(x)$. 

Our update procedure is then as follows. Let $I_j = (1 + \nu)^j$ for $j = -\Pi, -\Pi+1,\dots, \Pi-1,\Pi$ where $\Pi = O(\log(n)\nu^{-1})$.  We utilize the c.d.f. $\psi(x) = 1-e^{-x^{-p}}$ of the $1/p$-th power of the inverse exponential distribution $t^{-1/p}$ (here $t$ is exponentially distributed). Then beginning with $j= -\Pi,-\Pi+1,\dots,\Pi$ we compute the probability  $q_j = \psi(I_{j+1}) - \psi(I_j)$ that $\texttt{rnd}_\nu(1/t^{1/p}) = I_j$, and then compute the number of values $Q_j$ in $\{ \texttt{rnd}_\nu(1/t_{i_1}^{1/p}), \texttt{rnd}_\nu(1/t_{i_2}^{1/p}), \dots, \texttt{rnd}_\nu(1/t_{i_{n^{c-1}}}^{1/p}) \}$ which are equal to  $I_j$. 
 With probability $1 - n^{100c}$, we know that $1/\poly(n) \leq t_i \leq \poly(n)$ for all $i \in [N]$, and thus conditioned on this, we will have completely determined the values of the items in $\{ \texttt{rnd}_\nu(1/t_{i_1}^{1/p}), \texttt{rnd}_\nu(1/t_{i_2}^{1/p}), \dots, \texttt{rnd}_\nu(1/t_{i_{n^{c-1}}}^{1/p}) \}$ by looking at the number equal to $I_j$ for $j =-\Pi,\dots,\Pi$. 

Now we know that there are $Q_j$ updates which we need to hash into $A$ (along with i.i.d. Gaussian scalings), each with the same value $\Delta I_j$. This is done by the procedure \texttt{Fast-Update-CS} (Figure \ref{fig:update-CS}), which computes the number $b_{k,\theta}$ that hash to each bucket $A_{k,\theta}$ by drawing binomial random variables.
Once this is done, we know that the value of $A_{k,\theta}$ should be updated by the value $\sum_{t=1}^{b_{k,\theta}} g_t \Delta I_j$, where each $g_t \sim \mathcal{N}(0,1)$.   Na\"ively, computing the value  $\sum_{t=1}^{b_{k,\theta}} g_t \Delta I_j$ would involve generating $b_{k,\theta}$ random Gaussians. To avoid this, we utilize the $2$-stability of Gaussians (Definition \ref{def:stable}), which asserts that $\sum_{t=1}^{b_{k,\theta}} g_t \Delta I_j \sim g \sqrt{b_{k,\theta}}\Delta  I_j$, where $g \sim \mathcal{N}(0,1)$. Thus we can simple generate and store the Gaussian $g$ associated with the item $i \in [n]$, rounding $I_j$, and bucket $A_{k,\theta}$, and on each update $\Delta$ to $f_i$ we can update $A_{k,\theta}$ by $g \sqrt{b_{k,\theta}}\Delta  I_j$. 


Finally, once the number of values in $\{ \texttt{rnd}_\nu(1/t_{i_1}^{1/p}), \texttt{rnd}_\nu(1/t_{i_2}^{1/p}), \dots, \texttt{rnd}_\nu(1/t_{i_{n^{c-1}}}^{1/p}) \}$ which are left to determine is less than $K$ for some $K = \Theta(\log(n))$, we simply generate and hash each of the remaining variables individually. The generation process is the same as before, except that for each of these at most $K$ remaining items we associate a fixed index $i_j$ for $j \in [n^{c-1}]$, and store the relevant random variables $h_\ell(i_j),g_{\ell}(i_j)$ for $\ell \in [d]$. Since the value of $j$ which is chosen for each of these coordinates does not affect the behavior of the algorithm -- in other words the index of the duplicate which is among the $K$ largest is irrelevant -- we can simply choose these indices to be $i_1,i_2,\dots,i_K \in [N]$ so that the first item hashed individually via step $3$ corresponds to $\zeta_{i_1}$, the second to $\zeta_{i_2}$, and so on. 

\begin{figure}[ht!]
	\fbox{\parbox{\textwidth}{ \texttt{Fast-Update} $(i,\Delta,A,B)$\\
			Set $L = n^{c-1}$, and fix $K = \Theta( \log(n))$ with a large enough constant. \\
			\texttt{For} $j =  -\Pi, -\Pi+1,\dots, \Pi-1,\Pi$:
			\begin{enumerate}[topsep=0pt,itemsep=-1ex,partopsep=1ex,parsep=1ex] 
				\item Compute $q_j = \psi(I_{j+1}) - \psi(I_j)$.		
				\item Draw $Q_j \sim \texttt{Bin}(L, q_j)$.
				\item \textbf{If} $L < K$, hash the $Q_j$ items individually into each row $A_{\ell}$ using explicitly stored uniform i.i.d. random variables $h_{\ell}:[n^c] \to [2]$ and Gaussians $g_\ell(j)$ for $\ell \in [d]$.
				\item \textbf{Else:} update count-max table $A$ by via \texttt{Fast-Update-CS}($A,Q_j, I_j,\Delta,i)$ 
				\item $L \leftarrow L - Q_j$. 
			\end{enumerate}
	}}\caption{Algorithm to Update count-max $A$} \label{fig:update}
\end{figure}
\begin{figure}[ht!]
	\fbox{\parbox{\textwidth}{\texttt{Fast-Update-CS} $(A,Q,I,\Delta,i)$\\ 
			Set $W_k =Q$ for $k = 1,\dots,d$ \\
			\texttt{For} $k =  1,\dots,d$, 
			\begin{enumerate}[topsep=0pt,itemsep=-1ex,partopsep=1ex,parsep=1ex]
				\item \texttt{For} $\theta = 1,2$:
				\begin{enumerate}[topsep=0pt,itemsep=-1ex,partopsep=1ex,parsep=1ex]
					\item Draw $b_{k,\theta} \sim \texttt{Bin}(W_k,\frac{1}{2 - \theta + 1})$.
					\item Draw and store $g_{k,\theta, I,i}  \sim \mathcal{N}(0,1)$. Reuse on every call to \texttt{Fast-Update-Cs} with the same parameters $(k,\theta, I,i)$.
					\item Set $A_{k,\theta} \leftarrow A_{k,\theta} + g_{k,\theta,I,i} \sqrt{b_{k,\theta}} \Delta I$
					\item $W_k \leftarrow W_k - b_{k,\theta}$.
				\end{enumerate}
				
			\end{enumerate}
	}}\caption{Update $A$ via updates to $Q$ coordinates, each with a value of $\Delta I$} \label{fig:update-CS}
\end{figure}


Note that the randomness used to process an update corresponding to a fixed $i\in[n]$ is stored so it can be reused to generate the same updates to $A$ whenever an update to $i$ is made. Thus, each time an update $+1$ is made to a coordinate $i \in [n]$, each bucket of count-max is updated by the same value. When an update of size $\Delta$ comes, this update to the count-max buckets is scaled by $\Delta$.
For each $i \in [n]$, let $K_i$ denote the size of $L$ when step $3$ of Figure \ref{fig:update} was first executed while processing an update to $i$. In other words, the coordinates $\zeta_{i_1},\dots,\zeta_{i_{K_i}}$ were hashed into each row $\ell\in [d]$ of $A$ using explicitly stored random variables $h_\ell(i_j),g_\ell(i_j)$. Let $\mathcal{K} = \cup_{i \in [n]} \cup_{j = 1}^{K_i} \{i_j\}$. Then on the termination of the algorithm, to find the maximizer of $\zeta$, the count-max algorithm checks for each $i \in \mathcal{K}$, whether $i$ hashed to the largest bucket (in absolute value) in a row at least a $\frac{4}{5}$ fraction of the time. Count-max then returns the first $i$ which satisfies this, or \texttt{FAIL}. In other words, the count-max algorithm decides to fail or output an index $i$ based on computing the fraction of rows for which $i$ hashes into the largest bucket, instead now it only computes these values for $i \in \mathcal{K}$ instead of $i \in [n^c]$, thus count-max can only return a value of $i \in \mathcal{K}$. We now argue that the distribution of our algorithm is not changed by using the update procedure \texttt{Fast-Update}. This will involving showing that $\arg \max \{|\zeta_i|\} \in \mathcal{K}$ if our algorithm was to return a coordinate originally.


\begin{lemma}\label{lem:runtime}
	 Running the \texttt{$L_p$ sampler} with the update procedure given by \texttt{Fast-Update} results in the same distribution over the count-max table $A$ and $L_2$ estimation vector $B$ as the original algorithm. Moreover, conditioned on a fixed realization of $A,B$, the output of the original algorithm will be the same as the output of the algorithm using \texttt{Fast-Update}. For a given $i \in [n]$, \texttt{Fast-Update} requires $\tilde{O}(\nu^{-1})$-random bits, and runs in time $\tilde{O}(\nu^{-1})$.
\end{lemma}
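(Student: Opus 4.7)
The plan is to verify three things: \textbf{(i)} the joint distribution of the count-max table $A$ produced by \texttt{Fast-Update} over the entire stream matches that of the original algorithm; \textbf{(ii)} with high probability, the set $\mathcal{K}$ of explicitly hashed duplicates contains every index count-max could declare as the max, so the output agrees; and \textbf{(iii)} the per-update time and random-bit budget are both $\tilde{O}(\nu^{-1})$.

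For \textbf{(i)}, fix a coordinate $i \in [n]$ and condition on the high-probability event that all of its $L = n^{c-1}$ scaling exponentials $t_{i_1},\dots,t_{i_L}$ lie in $[1/\poly(n),\poly(n)]$, so that $r_{i_\ell}:=\texttt{rnd}_\nu(1/t_{i_\ell}^{1/p})\in\{I_{-\Pi},\dots,I_{\Pi}\}$. Under this event, the counts $Q_j=|\{\ell:r_{i_\ell}=I_j\}|$ form a multinomial vector with $L$ trials and bin probabilities $q_j=\psi(I_{j+1})-\psi(I_j)$, and sampling $Q_j\sim\texttt{Bin}\!\bigl(L-\sum_{\ell<j}Q_\ell,\ q_j/(1-\sum_{\ell<j}q_\ell)\bigr)$ sequentially reproduces that joint distribution (the pseudocode in Figure \ref{fig:update} is an equivalent reparameterization). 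Conditioned on the $Q_j$, the $Q_j$ duplicates in value class $j$ are independently hashed into row $k$'s buckets, so $b_{k,1}\sim\texttt{Bin}(Q_j,1/2)$ and $b_{k,2}=Q_j-b_{k,1}$, which is exactly what \texttt{Fast-Update-CS} samples. The original algorithm's contribution to $A_{k,\theta}$ from those $b_{k,\theta}$ duplicates is $\Delta I_j\sum_{\ell=1}^{b_{k,\theta}}g_\ell$ for i.i.d.\ $g_\ell\sim\mathcal{N}(0,1)$; by the $2$-stability of Gaussians (Definition \ref{def:stable}), this equals in distribution $g\cdot\sqrt{b_{k,\theta}}\cdot\Delta I_j$ for a single standard Gaussian $g$, and storing $g_{k,\theta,I_j,i}$ for reuse ensures that the cumulative map from the total mass $\Delta$ at coordinate $i$ to the cumulative bucket change is the correct \emph{linear} function and matches the original in distribution. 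Independence across different coordinates $i$ is preserved because fresh randomness is drawn for each $i$ upon its first update. The residual ``individual hashing'' branch in step 3 of Figure \ref{fig:update}, reached when fewer than $K$ duplicates remain for a coordinate, literally runs the original algorithm on those items.

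For \textbf{(ii)}, Corollary \ref{cor:countmax} implies that count-max will only ever declare an index $i^*\in[n^c]$ satisfying $|\zeta_{i^*}|\geq\|\zeta\|_2/100$. By the level-set analysis in the proof of Proposition \ref{prop:1}, the number of duplicates in $[n^c]$ whose $|\zeta|$-value is this large is $O(\log n)$ with probability $1-n^{-\Omega(c)}$. Choosing the constant in $K=\Theta(\log n)$ sufficiently large, and recalling that the indices sent to the individual-hashing branch are identified with the $K$ largest duplicates (the ones surviving after the multinomial peeling), forces $\mathcal{K}$ to contain every candidate count-max could declare. Hence on this good event the output of the two versions is identical, and on the failure event both output \texttt{FAIL}, giving the conditional equality asserted by the lemma.

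For \textbf{(iii)}, the outer loop has $2\Pi+1=\tilde O(\nu^{-1})$ iterations; each computes $q_j$ in $O(1)$ arithmetic time, samples $Q_j\sim\texttt{Bin}(L,q_j)$ in $\tilde O(1)$ time with high probability by Lemma \ref{lem:bin}, and calls \texttt{Fast-Update-CS}, which runs $O(d)=O(\log n)$ binomial-plus-Gaussian updates, each in $\tilde O(1)$ time. The individual-hashing branch adds at most $O(\log n)$ time per coordinate. Each binomial sample consumes $O(\log n)$ random bits and each stored Gaussian $O(\log n)$ bits, totaling $\tilde O(\nu^{-1})$ random bits. The main obstacle is in \textbf{(i)}: rigorously verifying that the cascade of conditional binomial samples combined with the $2$-stability collapse produces exactly the same joint distribution over all $2d$ bucket entries (including signs) as a direct simulation of the $L$ exponentials and their per-row hashes, and that reuse of the stored randomness across multiple arriving updates to the same coordinate keeps the implicit linear operator $f\mapsto A(f)$ distributionally identical to the original sampler's.
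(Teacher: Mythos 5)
Your proposal follows essentially the same route as the paper: part (i) is the paper's argument (multinomial peeling via binomials, then the $2$-stability collapse $\sum_{t=1}^{b}g_t \Delta I_j \sim g\sqrt{b}\,\Delta I_j$ with the Gaussian stored for reuse), and part (iii) matches the paper's runtime accounting via Lemma \ref{lem:bin}. Your observation that the conditional binomial should have success probability $q_j/(1-\sum_{\ell<j}q_\ell)$ is a fair (and correct) reading of what Figure \ref{fig:update} must mean.

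There is one step in part (ii) that you assert but do not justify, and which the paper proves explicitly: a lower bound on $K_i$, the number of duplicates of $i$ that actually reach the individual-hashing branch. You write that the individually hashed indices are ``the $K$ largest duplicates,'' but the branch is entered only once the remaining count $L$ drops below $K$, and a priori the last binomial draw before crossing could leave $L$ arbitrarily small --- even $K_i=0$ --- in which case $\mathcal{K}$ need not contain the maximizer at all. The paper closes this with a Chernoff argument: since every $q_j<1/2$, whenever $L>K$ one has $L-\texttt{Bin}(L,q_j)>L/3$ with probability $1-n^{-100c}$, so the first time $L$ falls below $K$ it is still at least $K/3=\Omega(\log n)$. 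With that in hand your argument goes through (indeed, a trivial counting bound shows at most $100^2$ duplicates can satisfy $|\zeta_{i^*}|\geq\|\zeta\|_2/100$, so $K_i=\Omega(\log n)$ comfortably captures all candidates); note also that the paper's own route for (ii) is slightly more direct --- it only needs the single global maximizer of $|\zeta|$ to lie in $\mathcal{K}$, which requires merely $K_i\geq 1$, rather than containment of every possible candidate.
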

\begin{proof} To hash an update $\Delta$ to a coordinate $f_i$, the procedure \texttt{Fast-Update} computes the number $Q_j$ of variables in the set $\{ \texttt{rnd}_\nu(1/t_{i_1}^{1/p}),\allowbreak \texttt{rnd}_\nu(1/t_{i_2}^{1/p}),\allowbreak2 \dots, \texttt{rnd}_\nu(1/t_{i_{n^{c-1}}}^{1/p}) \}$ which are equal to $I_j$ for each $j \in \{-\Pi,\dots, \Pi\}$. Instead of computing $Q_j$ by individually generating the variables and rounding them, we utilize a binomial random variable to determine $Q_j$, which results in the same distribution over  $\{ \texttt{rnd}_\nu(1/t_{i_1}^{1/p}),\allowbreak \texttt{rnd}_\nu(1/t_{i_2}^{1/p}),\allowbreak2 \dots, \texttt{rnd}_\nu(1/t_{i_{n^{c-1}}}^{1/p}) \}$.  As noted, with probability $1 - n^{100c}$ none of the variables $\texttt{rnd}_\nu(1/t_{i_j}^{1/p})$ will be equal to $I_k$ for $|k| > \Pi$, which follows from the fact that $n^{-101c} < t_{i} < O(\log(n))$ with probability $1 - n^{-101c}$ and then union bounding over all $n^c$ exponentials variables $t_i$. So we can safely ignore this low probability event.

 Once computed, we can easily sample from the number of items of the $Q_j$ that go into each bucket $A_{k,\theta}$, which is the value $b_{k,\theta}$ in \texttt{Fast-Update-CS} (Figure \ref{fig:update-CS}). By 2-stability of Gaussians (Definition \ref{def:stable}), we can update each bucket $A_{k,\theta}$ by $g_{k,\theta, I_j,i}\sqrt{b_{k,\theta}} \Delta I_j$, which is distributed precisely the same as if we had individually generated each of the $b_{k,\theta}$ Gaussians, and taken their inner product with the vector $\Delta I_j \mathbf{1}$. Storing the explicit values $h_\ell(i_j)$ for the top $K$ largest values of $\texttt{rnd}_\nu(1/t_{i_j}^{1/p})$ does not effect the distribution, but only allows the algorithm to determine the induces of the largest coordinates $i_j$ corresponding to each $i \in [n]$ at the termination of the algorithm. Thus the distribution of updates to $A$ is unchanged by the \texttt{Fast-Update} Procedure.

 
 We now show that the output of the algorithm run with this update procedure is the same as it would have been had all the random variables been generated and hashed individually.
  First observe that for $\nu < 1/2$, no value $q_j = \psi(I_{j+1}) - \psi(I_j)$ is greater than $1/2$. Thus at any iteration, if $L > K$ then $L - \texttt{Bin}(L,q_j) > L/3 $ with probability $1 - n^{-100c}$ by Chernoff bounds (using that $K = \Omega(\log(n))$). Thus the first iteration at which $L$ drops below $K$, we will have $L > K/3$. So for each $i \in [n]$ the top $K/3$ values $\zeta{i_j}$ will be hashed into each row $A_\ell$ using stored random variables $h_\ell(i_j)$, so $K_i > K/3  = \Omega(\log(n))$ for all $i \in [n]$. In particular, $K_i > 0$ for all $i \in [n]$.
 
  Now the only difference between the output procedure of the original algorithm and that of the efficient-update time algorithm is that in the latter we only compute the values of $\alpha_{i_j} = \big| \{ t \in [d] \; | \; |A_{t, h_t(i_j)}| = \max_{r \in \{1,2\}} |A_{t,r} |   \}\big|$ for the $i_j \in [n^c]$ corresponding to the $K_i$ largest values $t_{i_j}^{-1/p}$ in the set $\{t_{i_1}^{-1/p},\dots,t_{i_{n^{c-1}}}^{-1/p}\}$, whereas in the former all values of $\alpha_{i_j}$ are computed to find a potential maximizer. In other words, count-max with \texttt{Fast-Update} only searches throguh the subset $\mathcal{K} \subset[n^c]$ for a maxmizer instead of searching through all of $[n^c]$ (here $\mathcal{K}$ is as defined earlier in this section). Since count-max never outputs a index $i_j$ that is not a unique maximizer with high probability, we know that the output of the original algorithm, if it does not fail, must be $i_j$ such that $j = \arg \max_{j'} \{t_{i_{j'}} \}$, and therefore $i_j \in \mathcal{K}$. Note the $n^{-c}$ failure probability can be safely absorbed into the additive $n^{-c}$ error of the perfect $L_p$ sampler. Thus the new algorithm will also output $i_j$. Since the new algorithm with \texttt{Fast-Update} searches over the subset $\mathcal{K} \subset [n^c]$ for a maximier, if the original algorithm fails then certainly so will \texttt{Fast-Update}. Thus the output of the algorithm using \texttt{Fast-Update} is distributed identically (up to $n^{-c}$ additive error) as the output of the original algorithm, which completes the proof.

\paragraph{Runtime \& Random Bits} For the last claim, first note that it suffices to generate all continuous random varaibles used up to $(nmM)^{-c} = 1/\poly(n)$ precision, which is $1/\poly(n)$ additive error after conditioning on the event that all random variables are all at most $\poly(n)$ (which occurs with probability $1-n^{-c})$, and recalling that the length of the stream $m$ satisfies $m < \poly(n)$ for a suitably smaller $\poly(n)$ then as in the additive error. More formally, we truncate the binary representation of every continuous random variable (both the exponentials and Gaussians) after $O(\log(n))$-bits with a sufficiently large constant. This will result in at most an additive $1/\poly(n)$ error for each bucket $A_{i,j}$ of $A$, which can be absorbed by the adversarial  error $\mathcal{V}_{i,j}$ with $|\mathcal{V}_{i,j}| = O(\nu( |A_{i,1}| + |A_{i,2}|))$ that we incur in each of these buckets already in Lemma \ref{lem:main}. Thus each random variable requires $O(\log(n))$ bits to specify. Similarly, a precision of at most $(nmM)^{-c}$ is needed in the computation of the $q_j$'s in Figure \ref{fig:update} by Lemma  \ref{lem:bin}, since the routine to compute \texttt{Bin}$(n,q_j)$ will terminate with probability $1-n^{-100c}$ after querying at most $O(\log(n))$ bits of $q_j$. 
Now there are at most $2\Pi = O(\nu^{-1}\log(n))$ iterations of the loop in \texttt{Fast-Update}. Within each, our call to sample a binomial random variable is carried out in $\tilde{O}(1)$ time with high probability by Lemma \ref{lem:bin} (and thus ue at most $\tilde{O}(1)$ random bits), and there are $\tilde{O}(1)$ entries in $A$ to update (which upper bounds the running time and randomness requirements of \texttt{Fast-Update-CS}).

 Note that since the stream has length $m = \poly(n)$, and there are at most $\tilde{O}(\nu)$ calls made to sample binomial random variables in each, we can union bound over each call to guarantee that each returns in $\tilde{O}(1)$ time with probability $1 - n^{-100c}$. Since $K = \tilde{O}(1)$, we must store an additional $\tilde{O}(1)$ random bits to store the individual random variables $h_\ell(i_j)$ for $i_j \in \{i_1,\dots,i_{K_i}\}$. Similarly, we must store $\tilde{O}(\nu)$ independent Gaussians for the procedure \texttt{Fast-Update-CS}, which also terminates in $\tilde{O}(1)$ time (noting that $r = O(\log(n))$), which completes the proof.
 
\end{proof}

\subsection{Derandomizing the Algorithm} \label{subsec:derandom}

We now show that our algorithm \texttt{$L_p$ Sampler} with \texttt{Fast-Update} can be derandomized without affecting the space or time complexity. To do this, we use a combination of  Nisan's pseudorandom generator (PRG) \cite{nisan1992pseudorandom}, and the  PRG of Goplan, Kane, and Meka \cite{gopalan2015pseudorandomness}.  We begin by introducing  Nisan's PRG, which is a deterministic map $G: \{0,1\}^\ell \to \{0,1\}^T$, where $T \gg \ell$ (here we think of $T = \poly(n)$ and $\ell = O(\log^2(n))$). Let $\sigma: \{0,1\}^T \to \{0,1\}$ be a efficiently computable tester. For the case of Nisan's PRG, $\sigma$ must be a tester which reads its random $T$-bit input in a stream, left to right, and outputs either $0$ or $1$ at the end.  Nisan's PRG can be used to fool \textit{any} such tester, which means:
\[\big|	\pr{\sigma(U_T) = 1 } - \pr{\sigma(G(U_\ell)) = 1}	\big| < \frac{1}{T^c} \]
Where $U_t$ indicates $t$ uniformly random bits for any $t$, and $c$ is a sufficiently large constant. Here the probability is taken over the choice of the random bits $U_T$ and $U_\ell$. In other words, the probability that $\sigma$ outputs $1$ is nearly the same when it is given random input as opposed to input from Nisan's generator. Nisan's theorem states if $\sigma$ has at most $\poly(T)$ states and uses a working memory tape of size at most $O(\log(T))$, then a seed length of $\ell = O(\log^2(T))$ suffices for the above result \cite{nisan1992pseudorandom}. 
Thus Nisan's PRG fools space bounded testers $\sigma$ that read their randomness in a stream.

\paragraph{Half Space Fooling PRG's.}
 Our derandomization crucially uses the PRG of Goplan, Kane, and Meka \cite{gopalan2015pseudorandomness}, which fools a certain class of fourier transforms.  Utilizing the results of \cite{gopalan2015pseudorandomness}, we will design a PRG that can fool \textit{arbitrary} functions of $\lambda = O(\log(n))$ halfspaces, using a seed of length $O(\log^2(n)(\log\log(n))^2)$. We remark that in \cite{gopalan2015pseudorandomness} it is shown how to obtain such a PRG for a function of a single half-space. Using extensions of the techniques in that paper, we demonstrate that the same PRG with a smaller precision $\eps$ can be used to fool functions of more half-spaces.   We now introduce the main result of  \cite{gopalan2015pseudorandomness}.  Let $\C_1 = \{c \in \C \; | \; |c| \leq 1\}$.
 \begin{definition}[Definition 1 \cite{gopalan2015pseudorandomness}]\label{def:fourier}
 	An $(m,n)$-Fourier shape $	f:[m]^n \to \C_1$ is a function of the form $f(x_1,\dots,x_n) = \prod_{j=1}^n f_j(x_j)$ where each $f_j:[m] \to \C_1$.
 \end{definition}
 
 \begin{theorem}[Theorem 1.1 \cite{gopalan2015pseudorandomness}]\label{thm:goplan}
 	There is a PRG $G:\{0,1\}^\ell \to [m]^n$ that fools all $(m,n)$-Fourier shapes $f$ with error $\eps$ using a seed of length $\ell = O(\log(mn/\eps) (\log\log(mn/\eps))^2)$,  meaning:
 	\[	\left|	\bex{f(x)} - \bex{f(G(y))}		\right| \leq \eps	\]
	where $x$ is uniformly chosen from $ [m]^n$ and $y$ from $\{0,1\}^\ell$.
 \end{theorem}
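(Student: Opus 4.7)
The plan is to view a Fourier shape $f(x) = \prod_{j=1}^n f_j(x_j)$ as the expected amplitude of a complex-valued ``random walk'' whose $j$-th step multiplies the current value by $f_j(x_j) \in \C_1$. Because each factor is a contraction on the unit disk, the whole object behaves like an $O(1)$-width computation, and the strategy is to construct a generator tailored to such contractive walks rather than treating $f$ as a black-box statistical test. A direct application of Nisan's or INW's generator would give seed length $O(\log(mn/\eps)\log(mn/\eps))$; the goal is to replace one of those $\log$ factors by $(\log\log)^2$.

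First I would set up a weight decomposition: for each coordinate $j$ assign a ``variance'' $w_j = \mathbb{E}_{x_j}[|f_j(x_j)|^2] - |\mathbb{E}_{x_j} f_j(x_j)|^2$ (or an analogous $L_2$ deviation from its mean). Coordinates with $w_j = 0$ are trivial, and the total $\sum_j w_j$ controls how far $f$ is from a constant. I would then bucket coordinates dyadically by $w_j$ into $O(\log(mn/\eps))$ classes of roughly equal weight. Within a single bucket, I would argue (via a Fourier-analytic concentration lemma for bounded product functions) that the relevant Fourier mass of the restricted product is supported on ``low-degree'' terms, so that an $\eps$-biased generator, or a $k$-wise independent generator with small $k$, fools the restriction using only $O(\log(mn/\eps))$ random bits plus lower-order terms.

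To combine the buckets I would use an INW-style recursive composition on a tree of depth $L$: at each internal node two independently-generated halves are recombined using a pseudorandom seed from an extractor or small-bias set. The key refinement, inspired by GKM, is that each level of the tree only needs to fool a ``contracted'' version of the walk whose effective complexity has decreased, so the extractor at level $i$ may use a seed of length $O(\log(mn/\eps)/2^i)$ rather than the naive $O(\log(mn/\eps))$. Summing over $L = O(\log\log(mn/\eps))$ levels and accounting for the additive per-level overhead of $O(\log\log(mn/\eps))$ bits needed to specify the bucket and a fresh auxiliary hash yields the claimed $O(\log(mn/\eps)(\log\log(mn/\eps))^2)$ seed length.

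The main obstacle I expect is the error analysis of the composition: showing that hybridizing across buckets and across recursion levels produces only additive $\eps$ error, not multiplicative blowup. This requires a hybrid argument where at each step the ``residual'' Fourier shape stays a genuine $(m,n)$-Fourier shape so that the inductive hypothesis applies, together with a tail bound showing that the high-Fourier-weight part of $f$ has total mass $O(\eps)$ and can be discarded. The clean statement of this Fourier tail bound for products of unit-disk functions, along with verifying that the derandomized hashes used in the INW-style recombination do not destroy the product structure bucket-by-bucket, is where the bulk of the technical work would go.
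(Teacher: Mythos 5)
You should note at the outset that Theorem \ref{thm:goplan} is not proved in this paper at all: it is imported verbatim as Theorem 1.1 of \cite{gopalan2015pseudorandomness}, and the only related material here is Proposition \ref{prop:PRGspacetime}, which recounts the structure of the GKM construction solely to bound the space and time needed to evaluate a coordinate of the generator. So there is no ``paper proof'' to match; the question is whether your sketch is a faithful reconstruction of the GKM argument, and it is not. Your variance decomposition of a Fourier shape is indeed the starting point of \cite{gopalan2015pseudorandomness}, but the architecture you build on it diverges from theirs, and your seed accounting does not even yield the stated bound: per-level seeds of length $O(\log(mn/\eps)/2^i)$ summing over $L = O(\log\log(mn/\eps))$ levels, plus $O(\log\log(mn/\eps))$ bits of per-level overhead, totals $O(\log(mn/\eps) + (\log\log(mn/\eps))^2)$, not $O(\log(mn/\eps)(\log\log(mn/\eps))^2)$. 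In the actual construction the $(\log\log)^2$ factor comes from the recursion depth: there are $O(\log\log)$ levels of dimension reduction (fooling $(\poly(n),\sqrt{n})$-shapes suffices to fool $(m,n)$-shapes with $m \leq n^4$), each invoking $O(\log\log)$ levels of alphabet reduction (from $(m,n)$ to $(m^2,n)$), and each level consumes a fresh seed of length roughly $\log(mn/\eps)$.

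The more substantive gaps are in the two regimes themselves. In the high-variance case, $|\ex{f(x)}|$ is itself tiny, and the content of the GKM argument is a CLT-type statement showing the pseudorandom input also drives the product close to zero; this is implemented with pairwise hashing into $\log n$ levels, $O(1)$-wise independence within levels, and the Nisan--Zuckerman generator---not by small-bias or $k$-wise fooling bucket-by-bucket followed by a generic INW recombination, and the ``Fourier tail bound for products of unit-disk functions'' you defer to future work is essentially the entire difficulty you would need to supply. In the low-variance case, GKM do not recombine buckets at all; they reduce $n$ to $\sqrt{n}$ by $k$-wise hashing and recurse, and the final generator is the coordinate-wise sum modulo $m$ of the high-variance and low-variance generators, a combination step absent from your plan. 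Since this theorem is cited rather than proved in the present paper, the correct move is simply to invoke \cite{gopalan2015pseudorandomness}, or, if a self-contained proof is desired, to follow their two-case construction rather than the INW-style composition you outline.
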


 For any $a^1,\dots, a^\lambda \in \Z^n$ and $\theta_1,\dots,\theta_\lambda \in \Z$, let $H_i: \R^n \to \{0,1\}$, be the function given by $H_i(X_1,\dots,X_n) = \mathbf{1}[a_1^i X_1 + a_2^i X_2 + \dots + a_n^i X_n > \theta_i ]$, where $\mathbf{1}$ is the indicator function. We now define the notion of a $\lambda$-halfspace tester, and what it means to fool one.

 \begin{definition}[$\lambda$-halfspace tester]
 	A $\lambda$-halfspace tester is any function $\sigma_H: \R^n \to \{0,1\}$ which, on input $X=(X_1,\dots,X_n)$, outputs $\sigma_H'(H_1(X),\dots,H_\lambda(X)) \in \{0,1\}$ where $\sigma_H'$ is any fixed function $\sigma_H': \{0,1\}^\lambda \to \{0,1\}$. In other words, the Boolean valued function $\sigma_H(X)$ only depends on the values $(H_1(X),\dots,H_\lambda(X))$. A $\lambda$-halfspace tester is said to be $M$ bounded if all the half-space coefficents $a_j^i$ and $\theta_i$ are integers of magnitude at most $M$, and each $X_i$ is drawn from a discrete distrubtion $\mathcal{D}$ with support contained in $\{-M,\dots,M\} \subset \Z$.  
 \end{definition}

\begin{definition}[Fooling a $\lambda$-halfspace tester]
	A PRG $G: \{0,1\}^\ell \to \R^n$ is said to $\eps$-fools the class of $\lambda$-halfspace testers under a distribution $\mathcal{D}$ over $\R^n$ if for every set of $\lambda$ halfspaces $H = (H_1,\dots,H_\lambda)$ and every $\lambda$-halfspace tester $\sigma_H: \R^n \to \{0,1\}$, we have:
	\[\big|	\mathbb{E}_{X \sim \mathcal{D}}\big[ \sigma_H(X) = 1 \big] - \mathbb{E}_{y \sim \{0,1\}^\ell} \big[ \sigma_H(G(y)) = 1 \big]\big| < \eps \]
	Here $\ell$ is the seed length of $G$. 
\end{definition}

  We will consider only product distributions $\mathcal{D}$. In other words, we assume that each coordiante $X_i$ is drawn i.i.d. from a fixed distribution $\mathcal{D}$ over $\{-M,\dots,M\} \subset \Z$. We consider PRG's $G: \{0,1\}^\ell \to \{-M,\dots,M\}^n$ which take in a random seed of length $\ell$ and output a $X' \in \{-M,\dots,M\}^n$ such that any $M$-bounded $\lambda$-halfspace tester will be unable to distinguish $X'$ from $X \sim \mathcal{D}^n$ (where $\mathcal{D}^n$ is the product distribution of $\mathcal{D}$, such that each $X_i \sim \mathcal{D}$ independently). The following Lemma demonstrates that the PRG of \cite{gopalan2015pseudorandomness} can be used to fool $M$-bounded $\lambda$-halfspace testers. The authors would like to thank Raghu Meka for providing us a proof of Lemma \ref{lemma:halfspace}.
 
 \begin{lemma}\label{lemma:halfspace}
Suppose $X_i \sim \mathcal{D}$ is a distribution on $\{-M,\dots,M\}$ that can be sampled from with $\log(M') = O(\log(M))$ random bits. Then, for any $\eps > 0$ and constant $c \geq 1$, there is a PRG $G:\{0,1\}^\ell \to \{-M,\dots,M\}^n$ which $\eps (nM)^{-c\lambda}$-fools the class of all $M$-bounded $\lambda$-halfspace testers on input $X \sim \mathcal{D}^n$ with a seed of length $\ell = O(\lambda \log(nM/\eps) (\log\log(nM/\eps))^2)$ (assuming $\lambda \leq n$). Moreover, if $G(y) = X' \in \{-M,\dots,M\}^n$ is the output $G$ on random seed $y \in \{0,1\}^\ell$, then each coordinate $X_i'$ can be computed in $O(\ell)$-space and in $\tilde{O}(1)$ time, where $\tilde{O}$ hides $\poly(\log(nM))$ factors.
 \end{lemma}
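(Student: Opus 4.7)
The plan is to reduce fooling an arbitrary $M$-bounded $\lambda$-halfspace tester to fooling a small collection of joint halfspace-intersection events, then expand each such event via discrete Fourier analysis on $\mathbb{Z}_N$ (for a cleverly chosen modulus $N = \Theta(nM^2)$) as a short $\ell_1$-combination of $(m,n)$-Fourier shapes in the sense of Definition~\ref{def:fourier}, and finally invoke Theorem~\ref{thm:goplan} with a sufficiently small error parameter together with a union bound. To handle the non-uniform input distribution $\mathcal{D}$, I would think of the PRG as feeding a GKM output $Y'\in[M']^n$ into the coordinate-wise sampler $\mathcal{D}:[M']\to\{-M,\dots,M\}$, so that all computations proceed inside the Fourier-shape framework with alphabet size $M'=\poly(M)$.

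First I would decompose $\mathbb{E}[\sigma_H(X)] = \sum_{b\in\{0,1\}^\lambda}\sigma'_H(b)\Pr[H(X)=b]$, and then use inclusion--exclusion to write each of the $2^\lambda$ probabilities $\Pr[H(X)=b]$ as a signed sum of at most $2^\lambda$ ``upward'' intersection probabilities $p_S := \Pr[\bigcap_{i\in S}\{Y_i\ge\theta_i+1\}]$ where $Y_i := \sum_j a_j^i X_j$ and $S\subseteq[\lambda]$. Since $|\sigma'_H|\le 1$, fooling every $p_S$ to additive error $\eta$ fools $\sigma_H$ to error at most $4^\lambda\eta$, so it suffices to take $\eta = \eps(nM)^{-c\lambda}\cdot 4^{-\lambda}$.

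For the Fourier expansion, since all $a_j^i$, $\theta_i$, and $X_j$ are integers of magnitude at most $M$, we have $|Y_i|\le L := nM^2$. Picking $N$ prime with $2L+1 < N = \Theta(nM^2)$ and setting $\omega = e^{2\pi i/N}$, the orthogonality identity $\mathbf{1}[y=v]=\tfrac{1}{N}\sum_{k=0}^{N-1}\omega^{k(y-v)}$ (valid for $|y|,|v|\le L$) yields, after expanding the indicator of each halfspace as $\sum_{v=\theta_i+1}^{L}\mathbf{1}[Y_i=v]$ and multiplying across $i\in S$,
\[
\prod_{i\in S}\mathbf{1}[Y_i\ge\theta_i+1] \;=\; \sum_{\vec v}\sum_{\vec k}\frac{\omega^{-\langle\vec k,\vec v\rangle}}{N^{|S|}}\prod_{j=1}^{n}\omega^{X_j\cdot c_j(\vec k)},
\]
where $c_j(\vec k):=\sum_{i\in S}k_i a_j^i$, $\vec v\in\prod_{i\in S}\{\theta_i+1,\dots,L\}$, and $\vec k\in\{0,\dots,N-1\}^{|S|}$. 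Composing $X_j=\mathcal{D}(Y_j')$ with the sampler, each inner product $\prod_j\omega^{\mathcal{D}(Y_j')\cdot c_j(\vec k)}$ is an $(M',n)$-Fourier shape, and the total coefficient $\ell_1$-norm is at most $L^{|S|}\le(nM^2)^\lambda$. Invoking Theorem~\ref{thm:goplan} with error parameter $\eps' = \eta\cdot(nM)^{-C\lambda}$ for a sufficiently large constant $C$ yields a seed of length $\ell = O\bigl(\log(M'n/\eps')(\log\log(M'n/\eps'))^2\bigr) = O\bigl(\lambda\log(nM/\eps)(\log\log(nM/\eps))^2\bigr)$, since $\log(1/\eps')=\log(1/\eps)+O(\lambda\log(nM))$. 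Triangle inequality over the expansion fools each $p_S$ to error $\eta$, and summing over all $S$ and $b$ gives the advertised total error $\eps(nM)^{-c\lambda}$. Per-coordinate computation in $O(\ell)$ space and $\tilde O(1)$ time follows by composing the efficient GKM evaluator with the $O(\log M)$-bit sampler for $\mathcal{D}$.

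The main obstacle, which drives all of the accounting above, is ensuring that the seed length is \emph{linear} in $\lambda$ rather than exponential. The key conceptual point is that $\log L^{\lambda}=O(\lambda\log(nM))$ enters the GKM seed length only inside a single logarithm, so the $(nM)^{O(\lambda)}$-factor inflation of the error budget translates merely into an $O(\lambda)$-factor inflation of $\ell$. The supporting technical point is that the entire reduction must stay over the integers (hence the prime modulus $N>2L+1$ together with the integrality of all coefficients); if one instead used a Fourier \emph{integral} representation of the halfspace indicator, one would incur extra truncation and quadrature losses that would be much harder to charge cleanly to the Fourier-shape guarantee of Theorem~\ref{thm:goplan}.
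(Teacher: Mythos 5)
Your proposal is correct, and it reaches the stated seed length by a genuinely different route from the paper. The paper avoids inclusion--exclusion and explicit Fourier expansions altogether: it packs all $\lambda$ inner products into a single integer-valued statistic $f(x)=\sum_{j=1}^{\lambda}(2M^2n)^{j-1}\langle a^j,x\rangle$ (a base-$(2M^2n)$ encoding, so that the joint event $(Y_1,\dots,Y_\lambda)=\alpha$ becomes a single point event for $f$), bounds the Kolmogorov distance between $f(Q(X))$ and $f(Q(X'))$ via GKM's Lemma 9.2 (which converts closeness of Fourier transforms $\mathbb{E}[\exp(2\pi i\beta Z)]$ into closeness of CDFs at a cost of $O(\lambda\log(Mn))$), observes that $\exp(2\pi i\beta f(Q(X)))$ is itself a single $(M',n)$-Fourier shape, and then sums the pointwise error over the $(nM)^{O(\lambda)}$ accepting atoms. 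Your argument instead decomposes the tester by inclusion--exclusion into $O(4^\lambda)$ intersection probabilities and writes each joint indicator exactly as an $\ell_1$-bounded (norm $\le L^{|S|}\le(nM^2)^\lambda$) combination of Fourier shapes via roots of unity modulo $N>2L$; the triangle inequality then replaces Lemma 9.2. Both routes land on the same accounting --- an $(nM)^{O(\lambda)}$ blowup of the error budget that enters the GKM seed length only inside a logarithm --- so the parameters match. Your version is more self-contained (no appeal to the Kolmogorov-vs-Fourier-distance lemma) at the cost of tracking the $\ell_1$ norm of the expansion; the paper's version is shorter but leans on GKM internals. Two cosmetic remarks: primality of $N$ is unnecessary (orthogonality of characters mod $N$ needs only $N>2L$ so that $Y_i\equiv v\pmod N$ forces $Y_i=v$), and the final claim about $O(\ell)$-space, $\tilde O(1)$-time evaluation of each output coordinate is not something either argument derives from the fooling analysis --- the paper isolates it as a separate structural fact about the GKM generator (Proposition~\ref{prop:PRGspacetime}), and your one-line deferral to "the efficient GKM evaluator" is implicitly citing exactly that.
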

\begin{proof}
	 Let  $X = (X_1,\dots,X_n)$ be uniformly chosen from $[M']^n$ for some $M' = \poly(M)$, and let $Q:[M'] \to \{-M,\dots,M\}$ be such that $Q(X_i) \sim \mathcal{D}^n$ for each $i \in [n]$. Let $a^1,\dots,a^\lambda \in \Z^n$, $\theta_1,\dots,\theta_\lambda \in \Z$ be $\log(M)$-bit integers, where $H_i(x) = \mathbf{1}[\langle a^i, x \rangle > \theta_i]$.
Let $Y_i = \langle Q(X), a^i\rangle - \theta_i$. Note that $Y_i \in [-2M^2n,2M^2n ]$. So fix any $\alpha_i \in [-2M^2n,2M^2n ]$ for each $i \in [\lambda]$, and let $\alpha = (\alpha_1,\dots,\alpha_\lambda)$. Let $h_\alpha(x) = \textbf{1}(Y_1 = \alpha_1) \cdot \textbf{1}(Y_2 = \alpha_2) \cdots \textbf{1}(Y_\lambda = \alpha_\lambda)$, where $\textbf{1}(\cdot)$ is the indicator function. Now define $f(x) = \sum_{j=1}^\lambda (2M^2n)^{j-1} \langle a^i, x\rangle$ for any $x \in \Z^n$. Note that $f(Q(X)) \in \{-(Mn)^{O(\lambda)}, \dots, (Mn)^{O(\lambda)}\}$. We define the  Kolmogorov distance between two integer valued random variables $Z,Z'$ by $d_K(Z,Z')= \max_{k \in \Z}(\left|\pr{Z \leq k} - \pr{Z' \leq k}	\right|)$. Let $X' \in [M']^n$ be generated via the $(M',n)$-fourier shape PRG of \cite{gopalan2015pseudorandomness} with error $\eps'$ (Theorem 1.1 \cite{gopalan2015pseudorandomness}). 
  Observe $\ex{h_\alpha(Q(X))} = \pr{f(Q(X)) = \sum_{j=1}^\lambda (Mn)^{j-1} \alpha_j}$, so 
 \[ \left|	\ex{h_\alpha(Q(X))} - \ex{h_\alpha(Q(X')}	\right| \leq d_K(f(Q(X)), f(Q(X')))\]
 
 Now by Lemma 9.2 of \cite{gopalan2015pseudorandomness}, $d_K(f(Q(X)), f(Q(X'))) = O\big(\lambda \log(Mn)d_{FT}\big(f(Q(X)), f(Q(X'))\big) \big)$, where for integer valued $Z,Z'$, we define $d_{FT}(Z, Z') = \max_{\beta \in [0,1]} \left|	\ex{\exp(2\pi i \beta Z)} - \ex{\exp(2\pi i \beta Z')} 	\right|$. Now $\exp(2\pi i \beta f(Q(X))) = \prod_{i=1}^n ((\sum_{j=1}^\lambda  (2M^2n)^{j-1}  a^j_i) Q(X_i)) $, which is a $(M',n)$-Fourier shape as in Definition \ref{def:fourier}. Thus by Theorem \ref{thm:goplan} (Theorem 1.1 of \cite{gopalan2015pseudorandomness}), we have $d_{FT}(f(Q(X)), f(Q(X'))) \leq \eps'$. Thus 
 \[ \left|	\ex{h_\alpha(Q(X))} - \ex{h_\alpha(Q(X')}	\right| = O(\lambda \log(Mn) \eps') \]
 Now let $\sigma_H(x) = \sigma_H'(H_1(x),\dots,H_\lambda(x))$ be any $M$-bounded $\lambda$-halfspace tester on $x \sim \mathcal{D}^n$. Since the inputs to the halfspaces $H_i$ of $\sigma_H'$ are all integers in $\{-2M^2 n,2M^2n\}$, let $A \subset \{-2M^2 n,2M^2n\}$ be the set of $\alpha \in A$ such that $Y = (Y_1,\dots, Y_\lambda) = \alpha$ implies that $\sigma_H(Q(X)) = 1$, where $Q(X) \sim \mathcal{D}^n$ as above. Recall here that  $Y_i = \langle Q(X), a^i\rangle - \theta_i$. Then we can think of a $\sigma_H(X) = \sigma_H''(Y_1,\dots,Y_\lambda)$ for some function $\sigma_H'':\{-2M^2 n,\dots,2M^2n\}^\lambda \to \{0,1\}$, and in this case we have $A = \{\alpha \in \{-2M^2 n,2M^2n\} \; | \; \sigma_H''(\alpha) = 1\}$. Then 
 
\begin{equation*}
 \begin{split}
\left|	\ex{\sigma_H(Q(X))} - \ex{\sigma_H(Q(X'))}	\right|  & \leq	\sum_{\alpha \in A} \left|	\ex{h_\alpha(Q(X))} - \ex{h_\alpha(Q(X')}	\right|  \\
 & \leq	\sum_{\alpha \in A} O(\lambda \log(Mn) \eps')  \\
 \end{split}
 \end{equation*}
 Now note that $|A| = (nM)^{O(\lambda)}$, so setting $\eps' = \eps(nM)^{-O(\lambda)}$ with a suitably large constant, we obtain $\left|	\ex{\sigma_H(Q(X))} - \ex{\sigma_H(Q(X'))}	\right| \leq \eps (nM)^{-c\lambda}$ as needed.  By Theorem \ref{thm:goplan}, the seed required is $\ell = O(\lambda \log(nM/\eps) (\log\log(nM/\eps))^2)$ as needed. The space and time required to compute each coordinate follows from Proposition \ref{prop:PRGspacetime} below.
  
\end{proof}

\begin{proposition}\label{prop:PRGspacetime}
	In the setting of Lemma \ref{lemma:halfspace}, if $G(y) = X' \in \{-M,\dots,M\}^n$ is the output $G$ on random seed $y \in \{0,1\}^\ell$, then each coordinate $X_i'$ can be computed in $O(\ell)$-space and in $\tilde{O}(1)$ time, where $\tilde{O}$ hides $\poly(\log(nM))$ factors.
\end{proposition}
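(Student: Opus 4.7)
The plan is to verify the two parts of the claim separately, corresponding to the two stages of the construction in Lemma \ref{lemma:halfspace}: the Fourier-shape PRG $G_{FS}:\{0,1\}^\ell \to [M']^n$ of Gopalan--Kane--Meka, and the coordinate-wise sampling map $Q:[M']\to\{-M,\dots,M\}$ used to realize the target distribution $\mathcal{D}$. The output $X'_i$ is just $Q\bigl(G_{FS}(y)_i\bigr)$, so it suffices to show that both $G_{FS}(y)_i$ and $Q(\cdot)$ admit efficient local evaluation. The key conceptual point is that local computability of $G_{FS}$ does \emph{not} follow from the mere existence of the generator; one must open up its construction and check that the $i$-th block of its output depends on an efficiently identifiable portion of the seed.

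First I would handle $Q$. By hypothesis $\mathcal{D}$ can be sampled using $\log M'=O(\log M)$ uniform bits, so without loss of generality $Q$ is the standard CDF-inversion map: given $z\in[M']$ interpreted as a uniform $\log M'$-bit string, $Q(z)$ is the unique value in $\{-M,\dots,M\}$ whose cumulative probability interval contains $z/M'$. Since $M'=\mathrm{poly}(M)$, this inversion can be implemented by binary search against the precomputed cumulative distribution, using $O(\log M)$ working space and $\tilde{O}(1)$ time in the unit-cost RAM model, comfortably within our budget.

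Next I would argue the local computability of the GKM generator. The construction of \cite{gopalan2015pseudorandomness} composes (i) a bounded-independence generator, (ii) an INW/Nisan-style pseudorandom generator applied along a short recursion tree, and (iii) a small number of hashing/bucketing primitives; each of these primitives is itself locally computable, in the sense that the $i$-th output block can be produced by reading $O(\ell)$ bits of the seed and performing $\mathrm{poly}(\log(nM/\varepsilon))$ arithmetic operations, without materializing the preceding blocks. Concretely, the plan is to traverse the recursion tree of the GKM construction along the root-to-leaf path determined by $i$, and at each of the $O(\log n)$ levels invoke the local evaluation routine of the component PRG (which in turn is local because Nisan-type generators allow addressing any output block with $O(\ell)$ space). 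Summing over the $O(\log n)$ levels gives $O(\ell)$ total space and $\tilde{O}(1)$ time to produce $G_{FS}(y)_i$.

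The main obstacle is the bookkeeping in the second step: the GKM paper states the generator as a single object and does not emphasize addressability, so I would need to cite (or re-derive) the routine observation that every primitive used in their construction admits $O(\ell)$-space random access to its output blocks, and compose these observations along the recursion. Once that is in hand, combining with the $Q$-inversion step above immediately gives $O(\ell)$-space and $\tilde{O}(1)$-time computation of each $X'_i$, completing the proof.
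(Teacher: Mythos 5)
Your overall strategy is the same as the paper's: decompose $X_i' = Q(G_{FS}(y)_i)$, dispose of the CDF-inversion map $Q$ in $O(\log M)$ space, and then open up the Gopalan--Kane--Meka construction to argue that the $i$-th output block is locally addressable by walking the recursion. You also correctly identify the one conceptual point that makes this a proposition rather than a triviality, namely that local computability is not implied by the mere existence of the generator. The paper's proof is exactly this plan carried out: it goes through each of the three GKM components (the high-variance generator of their Section 5, the $m$-reduction of Section 6, and the $n$-reduction of Section 7), checks that each is built from $k$-wise independent and $\delta$-biased hash families plus Nisan/Nisan--Zuckerman generators --- all of which admit $O(\ell)$-space random access to any output block --- and then composes.

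There are two issues with your write-up as it stands. First, a quantitative one: you assert the recursion tree has $O(\log n)$ levels, but the GKM composition squares $n$ (and polynomially reduces $m$) at each step, so the outer recursion has depth $O(\log\log n)$ and the total nesting is $O((\log\log(nm))^2)$. This is not cosmetic: with $O(\log n)$ levels each carrying $O(\log(nm))$ bits of bookkeeping on the stack, the space would be $O(\log^2 n)$, which exceeds $O(\ell)$ in the single-halfspace regime where $\ell = O(\log(nM/\eps)(\log\log(nM/\eps))^2)$. The correct depth is what makes the stack fit in $O(\ell)$ bits. Second, and related, the paper's proof rests on the observation that the recursion is \emph{linear} --- computing coordinate $j$ of $G_i$ requires querying only a single coordinate $j'$ of $G_{i-1}$, not two --- so the computation path has $L$ nodes rather than $2^L$; your proposal implicitly assumes this but never states it, and it is precisely the ``routine observation'' you defer. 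Since that deferred verification (primitive by primitive, including that the Nisan--Zuckerman generator is online and composable) is the entire content of the paper's proof, your proposal should be read as a correct outline with the central bookkeeping still to be done and one depth parameter to be corrected.
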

\begin{proof}
 In order to analyze the space complexity and runtime needed to compute a coordinate $X_i'$, we must describe the PRG of Theorem \ref{thm:goplan}. 
 The Goplan-Kane-Meka PRG has 3 main components, which themselves use other PRGs such as Nisan's PRG as sub-routines.
 Recall that the PRG generates a psuedo-uniform element from $X \sim [m]^n$ that fools a class of Fourier shapes $f: [m]^n \to \mathbb{C}$ on truly uniform input in $[m]^n$. Note that because of the definition of a Fourier shape, if we wish to sample from a distribution $X \sim \mathcal{D}$ over $\{-m,\dots,m\}^n$ that is not uniform, but such that $X_i$ can be sampled with $\log(m')$-bits, we can first fool Fourier shapes $f:[m']^n \to \mathbb{C}$, and then use a function $Q:[m'] \to \{-m,\dots,m\}$ which samples $X_i \sim \mathcal{D}$ given $\log(m')$ uniformly random bits. We then fool Fourier shapes $F = \prod_{i=1}^{n} f_j(x) = \prod_{i=1}^{n} f_j(Q(x))$ where $x$ is uniform, and thus $Q(x) \sim \mathcal{D}$. Thus it will suffice to fool $(m',n)$-Fourier shapes on uniform distributions.
 For simplicity, for the most part we will omit the parameter $\eps$ in this discussion.
 
 The three components of the PRG appear in Sections 5,6, and 7 of \cite{gopalan2015pseudorandomness} respectively. In this proof, when we write Section $x$ we are referring to the corresponding Section of \cite{gopalan2015pseudorandomness}. They consider two main cases: one where the function $f$ has \textit{high variance} (for some notion of variance), and one where it has low variance. The PRGs use two main pseudo-random primitives, $\delta$-biased and $k$-wise independent hash function families. Formally, a family $\mathcal{H} = \{h : [n] \to [m]\}$ is said to be $\delta$-biased if for all $r \leq n$ distinct indices $i_1,\dots,i_r \in [n]$ and $j_1,\dots,j_r \in [m]$ we have 
 \[		\textbf{Pr}_{h \sim \mathcal{H}}\left[h(i_1) = j_1 \wedge \dots \wedge h(i_r) = j_r		\right] = \frac{1}{m^r} \pm \delta	\] 
 The function is said to be $k$-wise independent if it holds with $\delta = 0$ for all $r \leq k$. It is standard that $k$-wise independent families can be generated by taking a polynomial of degree $k$ over a suitably large finite field (requiring space $O(k \log(mn))$). Furthermore, a value $h(i)$ from a $\delta$-biased family can be generated by taking products of two $O(\log(n/\delta))$-bit integers over a suitable finite field \cite{SwastikLecture} (requiring space $O(\log(n/\delta))$). So in both cases, computing a value $h(i)$ can be done in space and time that is linear in the space required to store the hash functions (or $O(\log(n/\delta))$-bit integers). Thus, any nested sequence of such hash functions used to compute a given coordinate $X_i'$ can be carried out in space linear in the size required to store all the hash functions. 
 
Now the first PRG (Section 5 \cite{gopalan2015pseudorandomness}) handles the high variance case. The PRG first subsamples the $n$ coordinates at $\log(n)$ levels using a pair-wise hash function (note that a $2$-wise permutation is used in \cite{gopalan2015pseudorandomness}, which reduces to computation of a $2$-wise hash function). In each level $S_j$ of sub-sampling,
 it uses $O(1)$-wise independent hash functions to generate the coordinates $X_i \in S_j$. So if we want to compute a value $X_i$, we can carry out one hash function computation $h(i)$ to determine $j$ such that $X_i \in S_j$, and then carry out another hash function computation $h_j(i) = X_i$.  
 Instead of using $\log(n)$ independent hash functions $h_j$, each of size $O(\log(nm))$, for each of the buckets $S_j$, they derandomize this with the PRG of Nisan and Zuckerman \cite{nisan1996randomness} to use a single seed of length $O(\log(n))$. Now the PRG of Nisan and Zuckerman can be evaluated online, in the sense that it reads its random bits in a stream and writes its pseudo-random output on a one-way tape, and runs in space linear in the seed required to store the generator itself (see Definition 4 of \cite{nisan1996randomness}). Such generators are composed to yield the final PRG of Theorem $2$ \cite{nisan1996randomness}, however by Lemma 4 of the paper, such online generators are composable. Thus the entire generator of \cite{nisan1996randomness} is online, and so any substring of the pseudorandom output can be computed in space linear in the seed of the generator by a single pass over the random input. Moreover, by Theorem $1$ of \cite{nisan1996randomness} in the setting of \cite{gopalan2015pseudorandomness}, such a substring can be computed in $\tilde{O}(1)$ time, since it is only generating $\tilde{O}(1)$ random bits to begin with.
 
On top of this, the PRG of Section $5$ \cite{gopalan2015pseudorandomness} first splits the coordinates $[n]$ via a limited independence hash function into $\poly( \log(1/\eps))$ buckets, and applies the algorithm described above on each. 
 To do this second layer of bucketing and not need fresh randomness for each bucket, they use Nisan's PRG \cite{nisan1992pseudorandom} with a seed of length $\log(n) \log\log(n)$. Now any bit of Nisan's PRG can be computed by several nested hash function computations, carried out in space linear in the seed required to store the PRG. Thus any substring of Nisan's can be computed in space linear in the seed and time $\tilde{O}(1)$. Thus to compute $X_i'$, we first determine which bucket it hashes to, which involves computing random bits from Nisan's PRG. Then we determine a second partitioning, which is done via a $2$-wise hash fucntion, and finally we compute the value of $X_i'$ via an $O(1)$-wise hash function, where the randomness for this hash function is stored in a substring output by the PRG of \cite{nisan1996randomness}.  Altogether, we conclude that the PRG of Section $5$ \cite{gopalan2015pseudorandomness} is such that value $X_i'$ can be computed in space linear in the seed length and $\tilde{O}(1)$ time.
 
 Next, in Section 6 of \cite{gopalan2015pseudorandomness}, another PRG is introduced which 
   reduces the problem to the case of $m \leq poly(n)$. Assuming a PRG $G_1$ is given which fools $(m,n)$-Fourier shapes, they design a PRG $G_2$ using $G_1$ which fools $(m^2,n)$-Fourier shapes. Applying this $O(\log \log(m))$ times reduces to the case of $m \leq n^4$.  The PRG is as follows. Let $G_1,\dots,G_t$ be the iteratively composed generators, where $t = O(\log \log(m))$. To compute the value of $(G_i)_j \in [m]$, where $(G_i)_j$ is the $j$-th coordinate of $G_i \in [m]^n$, the algorithm first implicitly generates a matrix $Z \in [m]^{\sqrt{m} \times m}$. An entry $Z_{p,q}$ is generated as follows. First one applies a $k$-wise hash function $h(q)$ (for some $k$), and uses the $O(\log(m))$-bit value of $h(q)$ as a seed for a second $2$-wise indepedent hash function $h_{h(q)}'$. Then $Z_{p,q} = h_{h(q)}'(p)$. Thus within a column $q$ of $Z$, the entries are $2$-wise independent, and separate columns of $Z$ are $k$-wise independent. This requires $O(k \log(m))$-space to store, and the nested hash functions can be computed in $O(k \log(m))$-space. Thus computing $Z_{i,j}$ is done in $\tilde{O}(1)$ time and space linear in the seed length. Then we set $(G_i)_j = Z_{(G_{i-1})_j, j}$ for each $j \in [n]$. Thus $(G_i)_j $ only depends on $(G_{i-1})_j$, and the random seeds stored for two hash functions to evaluate entries of $Z$. So altogether, the final output coordinate $(G_t)_j$ can be computed in space linear in the seed length required to store all required hash functions, and in time $\tilde{O}(1)$. Note importantly that the recursion is linear, in the sense that computing $(G_{i})_j$ involves only one query to compute $(G_i)_{j'}$ for some $j'$.
   
 Next, in Section 7 of \cite{gopalan2015pseudorandomness}, another PRG is introduced for the \textit{low-variance case}, which reduces the size of $n$ to $\sqrt{n}$, but blows up $m$ polynomially in the process. Formally, it shows given a PRG $G_1'$ that fools $(\poly(n),\sqrt{n})$ Fourier shapes, one can design a PRG $G_2'$ that fools $O(m,n)$-Fourier shapes with $m < n^4$ (here the $\poly(n)$ can be much larger than $n^4$).  To do so, the PRG first hashes the $n$ coordinates into $\sqrt{n}$ buckets $k$-wise independently, and then in each bucket uses $k$-wise independence to generate the value of the coordinate. A priori, this requires $\sqrt{n}$ independent seeds for the hash function in each of the buckets. To remove this requirement, it uses $G_1'$ to generate the $\sqrt{n}$ seeds required from a smaller seed. Thus to compute a coordinate $i$ of $G_2'$, simply evaluate a $k$-wise independent hash function on $i$ to determine which bucket $j \in [\sqrt{n}]$ a the item $i$ is hashed into. Then evaluate $G_1'(j)$ to obtain the seed required for the $k$-wise hash function $h_j$, and the final result is given by $h_j(i)$. Note that this procedure only requires one query to the prior generator $G_1'$. The space required to do so is linear in the space required to store the hash functions, and the space required to evaluate a coordinate of the output of $G_1'$, which will be linear in the size used to store $G_1'$ by induction. 
 
 Finally, the overall PRG composes the PRG from Section $6$ and $7$ to fool larger $n,m$ in the case of low variance. Suppose we are given a PRG $G_0$ which fools $(m'',\sqrt{n'})$-Fourier shapes for some $m'' < (n')^2$. We show how to construct a PRG $G_1$ which fools $(m',n')$-Fourier shapes for any $m' \leq (n')^4$. 
  Let $G^{6+7}$ be the PRG obtained by first applying the PRG from Section $6$ on $G_0$ as an initial point, which gives a PRG that fools $(\poly(n'), \sqrt{n'})$-Fourier shapes, and then applying the PRG from section $7$ on top which now fools $(m', n')$-Fourier shapes (with low variance).  
   Let $G^5$ be the generator from Section $5$ which fools $(m',n')$-Fourier shapes with high variance. The final algorithm for fooling the class of all $(m',n')$-Fourier shapes given $G_0$ computes a generator $G_1$ such that the $i$-th coordinate is $(G_1)_i = (G^{6+7})_i \oplus (G^5)_i$, where $\oplus$ is addition mod $m'$. This allows one to simultaneously fool high and low variance Fourier shapes of the desired $m',n'$.  If $m > (n')^4$, one can apply the PRG for Section $6$ one last time on top of $G_1$ to fool arbitrary $m$.  Thus if for any $i$, the $i$-th coordinate of $G_{6+7}$ and $G_5$ can be composed in $\tilde{O}(1)$ time and space linear in the size required to store the random seed, then so can $G_i$. Thus going from $G_0$ to $G_1$ takes a generator that fools $(m'',\sqrt{n'})$ to $(m',n')$-Fourier shapes, and similarly we can compose this to design a $G_2$ that fools $(m',(n')^2)$-Fourier shapes. Composing this $t = O(\log \log n)$-times, we obtain $G_t$ which fools $O(m,n)$ Fourier shapes for any $m,n$. As a base case (to define the PRG $G_0$), the PRG of \cite{nisan1996randomness} is used, which we have already discussed can be evaluated on-line in space linear in the seed required to store it and time polynomial in the length of the seed.
   
    Now we observe an important property of this recursion. At every step of the recursion, one is tasked with computing the $j$-th coordinate output by some PRG for some $j$, and the result will depend \textit{only} on a query for the $j'$-th coordinate of another PRG for some $j'$ (as well as some additional values which are computed using the portion of the random seed dedicated to this step in the recursion). Thus at every step of the recursion, only one query is made for a coordinate to a PRG at a lower level of the recursion. Thus the recursion is linear, in the sense that the computation path has only $L$ nodes instead of $2^L$ (which would occur if two queries to coordinate $j',j''$ were made to a PRG in a lower level). Since at each level of recursion, computing $G^{6+7}$ itself uses $O(\log \log(nm))$ levels of recursion, and also has the property that each level queries the lower level at only one point, it follows that the total depth of the recursion is $O(( \log \log (nm))^2)$. At each point, to store the information required for this recursion on the stack requires only $O(\log(nm))$-bits of space to store the relevant information identifying the instance of the PRG in the recursion, along with its associated portion of the random seed. Thus the total space required to compute a coordinate via these $O(\log \log (nm))^2)$ recursions is $O(\log(nm) (\log \log nm)^2)$, which is linear in the seed length. Moreover, the total time $\tilde{O}(1)$, since each step of the recursion requires $\tilde{O}(1)$.
\end{proof}

We use the prior technique to derandomize a wide class of linear sketches $A\cdot f$ such that the entries of $A$ are independent, and can be sampled using $O(\log(n))$-bit, and such that the behavior of the algorithm only depends on the sketch $Af$. It is well known that there are strong connections between turnstile streaming algorithms and linear sketches are, insofar as practically all turnstile streaming algorithms are in fact linear sketches. The equivalence of turnstile algorithms and linear sketches has even been formalized \cite{li2014turnstile}, with some restrictions. Our results show that all such sketches that use independent, efficiently sampled entries in their sketching matrix $A$ can be derandomized with our techniques. As an application, we derandomize the count-sketch variant of Minton and Price \cite{minton2014improved}, a problem which to the best of the authors knowledge was hitherto open.

\begin{lemma}\label{lemma:derandomGeneral}
	Let \texttt{ALG} be any streaming algorithm which, on stream vector $f \in \{-M,\dots,M\}^n$ for some $M = \poly(n)$, stores only a linear sketch $A\cdot f$ such that the entries of the random matrix $A \in \R^{k \times n}$ are i.i.d., and can be sampled using $O(\log(n))$-bits. Fix any constant $c \geq 1$. Then \texttt{ALG} can be implemented using a random matrix $A'$ using $O(k \log(n) (\log \log n)^2)$ bits of space, such that for every vector $y \in \R^k$ with entry-wise bit-complexity of $O(\log(n))$, 
	\[\left|	\bpr{Af = y}	- \bpr{A' f = y}		\right| < n^{-c k} \]
\end{lemma}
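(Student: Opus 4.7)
The plan is to express the event $\{A f = y\}$ as the output of a single $2k$-halfspace tester applied to the $kn$ entries of $A$, and then invoke Lemma \ref{lemma:halfspace} with $\lambda = 2k$. Concretely, view the entries of $A$ as a random vector $X \in \{-M',\dots,M'\}^{kn}$ where $M' = \poly(n)$, with the $kn$ coordinates drawn i.i.d. from the distribution $\mathcal{D}$ that $A$'s entries follow (by assumption this distribution is supported on a $\poly(n)$-size set and can be sampled from $O(\log n)$ bits). For each row $i \in [k]$, the scalar $(Af)_i = \sum_{j=1}^n A_{i,j} f_j$ is a linear function $\langle a^i, X\rangle$ of $X$, where $a^i \in \Z^{kn}$ is obtained by placing the vector $f$ in the $n$ coordinates of $X$ corresponding to row $i$ and zero elsewhere. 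Since all $|f_j| \leq M$ and $|y_i|$ is $O(\log n)$-bits, all coefficients are integers of magnitude $\poly(n)$.

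Next, the equality $(Af)_i = y_i$ is the conjunction of the two halfspace conditions $\langle a^i, X\rangle \geq y_i$ and $\langle a^i, X\rangle \leq y_i$, so the event $\{Af = y\}$ is the AND of $2k$ halfspaces, and the indicator $\mathbf{1}[Af = y]$ is a $(\poly(n))$-bounded $2k$-halfspace tester $\sigma_H$. Apply Lemma \ref{lemma:halfspace} with parameters $\lambda = 2k$, dimension $kn$ in place of $n$, bit bound $M'' = \poly(n)$, and target error $\eps = 1$ (and a sufficiently large constant $c'$ in the exponent of the lemma's guarantee): we obtain a PRG $G : \{0,1\}^\ell \to \{-M',\dots,M'\}^{kn}$ with seed length
\[ \ell = O\!\left(2k \cdot \log(knM'') \cdot (\log\log(knM''))^2\right) = O\!\left(k \log n \, (\log\log n)^2\right), \]
such that for every $y$,
\[ \left|\, \bpr{\sigma_H(X) = 1} - \bpr{\sigma_H(G(U_\ell)) = 1}\,\right| \leq (knM'')^{-c'\cdot 2k} \leq n^{-ck}, \]
where the last inequality holds by choosing $c'$ large enough relative to $c$. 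Reinterpreting $G(U_\ell)$ as the $k \times n$ matrix $A'$ whose entries are the $kn$ output coordinates of $G$ gives the desired sketching matrix; by Proposition \ref{prop:PRGspacetime}, any entry of $A'$ can be computed in $O(\ell)$ space and $\tilde O(1)$ time, so the algorithm can hash any update on the fly.

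There is no serious obstacle; the main thing to double-check is the parameter bookkeeping. One needs $\lambda = 2k \leq kn$ (which holds for $n \geq 2$) for the hypothesis of Lemma \ref{lemma:halfspace}, and the $(\log n)$-bit bound on coordinates of $y$ is needed so that the halfspace thresholds $\theta_i = y_i$ remain within the $\poly(n)$ bound. The only subtlety is ensuring that the base error $\eps(knM'')^{-c'\lambda}$ swallows the factor $k$ and the $\poly(n)$ coefficient bounds; this is handled by taking $c'$ in Lemma \ref{lemma:halfspace} to be a sufficiently large multiple of $c$, which only changes the hidden constant in the seed length. Everything else—discretization of the entries of $A$, the representation of $Af = y$ as an AND of halfspaces, and the $\tilde O(1)$-time online evaluation of individual coordinates—follows from the statements already established in the excerpt.
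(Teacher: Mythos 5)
Your proposal is correct and follows essentially the same route as the paper: discretize the i.i.d. entries of $A$, view $\ttx{vec}(A)$ as the random input, write each $(Af)_i = \langle a^i, \ttx{vec}(A)\rangle$ with coefficients built from $f$, express $\{Af = y\}$ as a function of $O(k)$ bounded-coefficient halfspaces, and invoke Lemma \ref{lemma:halfspace} (plus Proposition \ref{prop:PRGspacetime} for per-entry evaluation). The only cosmetic difference is that the paper phrases the equality test as the strict inequalities $(A'f)_i < y_i + 1$ and $(A'f)_i > y_i - 1$ over integers, which matches your two-halfspace conjunction.
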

\begin{proof}
	We can first scale all entries of the algorithm by the bit complexity so that each entry in $A$ is a $O(\log(n))$-bit integer. Then by Lemma \ref{lemma:halfspace}, we can store the randomness needed to compute each entry of $A'$ with $O(k \log(n) (\log \log n)^2)$-bits of space, such that $A'$ $n^{-c k}$-fools the class of all $O(k)$-halfspace testers, in particular the one which checks, for each coordinate $i \in [k]$, whether both $(A'f)_i < y + 1$ and $(A'f)_i > y_i - 1$, and accepts only if both hold of all $i \in [k]$. By Proposition \ref{prop:PRGspacetime}, the entries of $A'$ can be computed in space linear in the size of the random seed required to store $A'$. Since we have scaled all values to be integers, $n^{-ck}$ fooling this tester is equivalent to the theorem statement.  Note that the test $(A'f)_i < y + 1$ can be made into a half-space test as follows. Let $X^i \in \R^{n k}$ be the vector such that $X^i_{j + (i-1)n} = f_j$ for all $j \in [n]$ and $X^i_j=0$ otherwise. Let $\ttx{vec}(A) \in \R^{n k}$ be the vectorization of $A$. Then $(Af)_i = \langle \texttt{vec}(A), X^i \rangle$, and all the entries of $\ttx{vec}(A)$ are i.i.d., which allows us to make the stated constraints into the desired half-space constraints.
\end{proof}

Observe that the above Lemma derandomized the linear sketch $Af$ by writing each coordinate $(Af)_i$ as a linear combination of the random entries of $\ttx{vec}(A)$. Note, however, that the above proof would hold if we added the values of any $O(k)$ additional linear combinations $\langle X_j, \ttx{vec}(A)$ to the Lemma, where each $X_j \in \{-M,\dots,M\}^{kn}$. This will be useful, since the behavior of some algorithms, for instance count-sketch, may depend not only on the sketch $Af$ but also on certain values or linear combinations of values within the sketch $A$. This is formalized in the following Corollary.

\begin{corollary}\label{cor:derandomGeneral}
	Let the entries of $A \in \R^{k \times n}$ be drawn i.i.d. from a distribution which can be sampled using $O(\log n)$-bits, and let $\ttx{vec}(A) \in \R^{nk}$ be the vectorization of $A$.  Let $X \in \R^{t \times nk}$ be any fixed matrix with entries contained within $\{-M,\dots,M\}$, where $M = \poly(n)$. Then there is a distribution over random matrices $A' \in \R^{k \times n}$ which can be generated and stored using $O(t \log(n) (\log \log n)^2)$ bits of space, such that for every vector $y \in \R^t$ with entry-wise bit-complexity of $O(\log(n))$, 
	\[\left|	\bpr{X\cdot \ttx{vec}(A) = y}	- \bpr{X\cdot \ttx{vec}(A') = y}		\right| < n^{-c t} \]
\end{corollary}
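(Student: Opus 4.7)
The plan is to apply Lemma~\ref{lemma:halfspace} directly, with the $t$ rows of $X$ serving as the coefficient vectors of halfspaces, exactly paralleling the template from the proof of Lemma~\ref{lemma:derandomGeneral}. The object to be generated pseudorandomly is $\ttx{vec}(A) \in \{-M,\ldots,M\}^{nk}$, and since each entry of $A$ is drawn i.i.d.\ from a distribution on $\{-M,\ldots,M\}$ sampleable with $O(\log n)$ bits, $\ttx{vec}(A)$ is a product distribution of precisely the form Lemma~\ref{lemma:halfspace} is designed to fool.

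First, by scaling through bit-complexity, I may assume every entry of $X$, every $y_i$, and every entry of $A$ is an integer of magnitude at most $M' = \poly(n)$. The event $\{X \cdot \ttx{vec}(A) = y\}$ can then be written as the intersection of $2t$ halfspace conditions: for each $i \in [t]$, both $\langle X_i, \ttx{vec}(A)\rangle < y_i + 1$ and $\langle X_i, \ttx{vec}(A)\rangle > y_i - 1$ hold, where $X_i$ denotes the $i$-th row of $X$. Thus the indicator of this event is an $M'$-bounded $(2t)$-halfspace tester $\sigma_H$ on input $\ttx{vec}(A)$, in the sense of Lemma~\ref{lemma:halfspace}.

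Next, apply Lemma~\ref{lemma:halfspace} with ambient dimension $n' = nk$, number of halfspaces $\lambda = 2t$, and error parameter $\eps = O(1)$. This yields a PRG $G : \{0,1\}^\ell \to \{-M,\ldots,M\}^{nk}$ with seed length $\ell = O(t \log(nkM)(\log\log(nkM))^2) = O(t \log n (\log\log n)^2)$, using $k, M = \poly(n)$. Define $A'$ by $\ttx{vec}(A') := G(U_\ell)$; then $A'$ is stored by its $\ell$-bit seed, and each entry can be queried in linear space and $\tilde{O}(1)$ time by Proposition~\ref{prop:PRGspacetime}. Applying the fooling guarantee $\eps(n'M)^{-c\lambda} = (nkM)^{-\Omega(ct)}$ to the specific tester $\sigma_H$ built above gives exactly the claimed bound, after choosing the constant inside the $O(\cdot)$ in Lemma~\ref{lemma:halfspace} large enough so that $(nkM)^{-\Omega(ct)} \leq n^{-ct}$.

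I expect no real obstacle beyond routine bookkeeping: the rescaling to integers is straightforward since $y$ has $O(\log n)$-bit entries and $M = \poly(n)$, keeping all quantities inside the polynomial range required by Lemma~\ref{lemma:halfspace}; and reducing equality of $t$ integer linear combinations to $2t$ halfspace tests is immediate. The whole argument is thus essentially a direct transcription of the proof of Lemma~\ref{lemma:derandomGeneral}, with the arbitrary linear tests $X_i$ replacing the coordinate-extraction tests $X^i$ used there, and with $2t$ halfspaces replacing the $2k$ halfspaces (two per coordinate of $Af$) implicit in that proof.
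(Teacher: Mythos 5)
Your proposal is correct and follows essentially the same route as the paper: the paper's own proof simply says it is nearly identical to Lemma~\ref{lemma:derandomGeneral}, scaling all quantities to $O(\log n)$-bit integers and applying two half-space tests per coordinate of $X\cdot \ttx{vec}(A')$, which is exactly your reduction of the event $\{X\cdot\ttx{vec}(A)=y\}$ to a $2t$-halfspace tester fooled by the PRG of Lemma~\ref{lemma:halfspace}. Your bookkeeping of the seed length and error parameter (with $\lambda = 2t$, $nk, M = \poly(n)$) matches the intended instantiation.
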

\begin{proof}
The proof is nearly identical to Lemma \ref{lemma:derandomGeneral}, where we first scale entries to be $O(\log(n))$-bit integers, and then apply two half-space tests to each coordinate of $X\cdot \ttx{vec}(A')$. 
\end{proof}

\begin{theorem}\label{thm:derandomGeneral}
		Let \texttt{ALG} be any streaming algorithm which, on stream vector $f \in \{-M,\dots,M\}^n$ and fixed matrix $X \in \R^{t \times nk}$ with entries contained within $\{-M,\dots,M\}$, for some $M = \poly(n)$, outputs a value that only depends on the sketches $A\cdot f$ and $X \cdot \ttx{vec}(A)$. Assume that the entries of the random matrix $A \in \R^{k \times n}$ are i.i.d. and can be sampled using $O(\log(n))$-bits.  Let $\sigma: \R^{k } \times \R^{t} \to \{0,1\}$ be any tester which measures the success of  \texttt{ALG}, namely $\sigma(A f, X \cdot \ttx{vec}(A)) = 1$ whenever \texttt{ALG} succeeds.  Fix any constant $c \geq 1$. Then \texttt{ALG} can be implemented using a random matrix $A'$ using a random seed of length $O((k+t) \log(n) (\log \log n)^2)$, such that:
	\[\left|	\bpr{\sigma(Af,X \cdot \ttx{vec}(A) ) = 1}	- \bpr{\sigma(A'f, X \cdot \ttx{vec}(A')) = 1}		\right| < n^{-c (k+t)} \]
	and such that each entry of $A'$ can be computed in time $\tilde{O}(1)$ and using working space linear in the seed length.
\end{theorem}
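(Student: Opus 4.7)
The plan is to reduce the theorem to a single application of Lemma \ref{lemma:halfspace}, by treating the $k$ coordinates of $Af$ and the $t$ coordinates of $X \cdot \ttx{vec}(A)$ uniformly as linear functionals of the single random vector $\ttx{vec}(A)$. Concretely, for each $i \in [k]$, the coordinate $(Af)_i$ equals $\langle a^i, \ttx{vec}(A)\rangle$, where $a^i \in \R^{nk}$ has $f$ in the block corresponding to the $i$-th row of $A$ (in the vectorization) and zeros elsewhere, exactly as in the proof of Lemma \ref{lemma:derandomGeneral}. For each $j \in [t]$, the coordinate $(X \cdot \ttx{vec}(A))_j$ equals $\langle x^j, \ttx{vec}(A)\rangle$, where $x^j$ is simply the $j$-th row of $X$. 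After scaling by $O(\log n)$-bit precision so that entries of $A$ and $f$ are integers of magnitude $\poly(n)$, all $k+t$ functionals take integer values of magnitude $\poly(n)$.

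Next, I would invoke Lemma \ref{lemma:halfspace} with $\lambda = 2(k+t)$, using pairs of halfspace tests of the form $\langle a^i, \ttx{vec}(A)\rangle > \theta - 1$ and $-\langle a^i, \ttx{vec}(A)\rangle > -\theta - 1$ to pin down each functional's integer value. The tester $\sigma$ can then be written as a $\lambda$-halfspace tester $\sigma_H'$ whose auxiliary Boolean function computes $\sigma$ applied to the decoded integer values. Lemma \ref{lemma:halfspace} yields a PRG $G':\{0,1\}^\ell \to \R^{nk}$ that $\eps (nM)^{-c'(k+t)}$-fools this class with $\ell = O((k+t)\log(nM/\eps)(\log\log(nM/\eps))^2) = O((k+t)\log n (\log\log n)^2)$ after setting $\eps$ to an appropriate inverse polynomial in $n$ and absorbing constants. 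Defining $A'$ by interpreting the output of $G'$ as the sampled entries of the matrix gives
\[\big|\bpr{\sigma(Af, X\cdot \ttx{vec}(A)) = 1} - \bpr{\sigma(A'f, X\cdot \ttx{vec}(A'))=1}\big| < n^{-c(k+t)},\]
as required, after rescaling $c$.

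The main subtlety, and the place I would spend the most care, is the reduction of $\sigma$ to a $\lambda$-halfspace tester when $\sigma$ may a priori be an arbitrary function of real-valued inputs. This is resolved by following the same strategy as the proof of Lemma \ref{lemma:halfspace}: because the $k+t$ functionals take integer values in a range of size $\poly(n)^{k+t}$, and because we are fooling the \emph{joint} integer distribution (not merely one halfspace), we can write $\ex{\sigma}$ as a sum over all integer tuples $\alpha \in \Z^{k+t}$ of $\ex{h_\alpha} \cdot \sigma(\alpha)$, where $h_\alpha$ is the indicator that the $k+t$ functionals jointly equal $\alpha$. Each $h_\alpha$ is handled by the Kolmogorov-distance bound of Lemma 9.2 of \cite{gopalan2015pseudorandomness} exactly as in the proof of Lemma \ref{lemma:halfspace}; summing over the $(nM)^{O(k+t)}$ relevant values of $\alpha$ and choosing $\eps'$ inside Theorem \ref{thm:goplan} sufficiently small absorbs this union bound, giving the claimed $n^{-c(k+t)}$ error while keeping the seed length at $O((k+t)\log n(\log\log n)^2)$.

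Finally, the per-entry computational guarantees transfer directly from Proposition \ref{prop:PRGspacetime}: each coordinate of the pseudorandom output, and hence each entry of $A'$, can be computed in $\tilde{O}(1)$ time and in working space linear in the seed length $O((k+t)\log n(\log\log n)^2)$, completing the proof.
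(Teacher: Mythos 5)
Your proposal is correct and follows essentially the same route as the paper: the paper factors the argument through Corollary \ref{cor:derandomGeneral} (two halfspaces per coordinate of the combined sketch $X_0\cdot\ttx{vec}(A)$, via Lemma \ref{lemma:halfspace}, to fool each point probability) and then sums over the accepting set of size $n^{O(k+t)}$, which is exactly your sum over integer tuples $\alpha$ weighted by $\sigma(\alpha)$, with the same seed-length and error bookkeeping and the same appeal to Proposition \ref{prop:PRGspacetime}. The only caveat is that your first paragraph's claim that $\sigma$ itself is a single $2(k+t)$-halfspace tester with fixed thresholds that ``decode'' the values is not literally accurate, but your second paragraph's decomposition into the point indicators $h_\alpha$ (each of which genuinely is such a tester, with thresholds depending on $\alpha$) is precisely the repair the paper's argument uses.
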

\begin{proof}
As in the Lemma \ref{lemma:derandomGeneral}, we first scale all entries of the algorithm by the bit complexity so that each entry in $A$ is $O(\log(n))$-bit integer. Then there is a $M' = \poly(n)$ such that each entry of $A\cdot f$ and $X \cdot \ttx{vec}(A)$ will be a integer of magnitude at most $M'$. First note that the sketch $A\cdot f$ and $X \cdot \ttx{vec}(A)$ can be written as one linear sketch $X_0 \cdot \ttx{vec}(A)$ where $X_0 \in \R^{k + t \times kn}$. Then $\sigma$ can be written as a function $\sigma:\R^{k + t} \to \{0,1\}$ evaluated on $\sigma(X_0 \cdot \ttx{vec}(A))$. Let  $S  = \{y \in \{-M',\dots,M'\}^{k+t}  \; | \; \sigma(y ) = 1 \}$. 
Then by Corollary \ref{cor:derandomGeneral},  we have $$\left|	\pr{ X_0\cdot \ttx{vec}(A)=y }	- \pr{X_0 \cdot \ttx{vec}(A')=y }		\right| < n^{-c (k+t)}$$ for all $y \in S$. Taking $c$ sufficiently large, and noting $|S| = n^{-O(k+t)}$, we have $\pr{\sigma(X_0\cdot \ttx{vec}(A)) = 1} = \sum_{y \in S} \bpr{X_0\cdot \ttx{vec}(A) = y} = \sum_{y \in S} (\pr{X_0\cdot \ttx{vec}(A') = y} \pm  n^{-c (k+t)}) = \pr{\sigma(X_0\cdot \ttx{vec}(A')) = 1}	+ n^{-O(k+t)}$ as desired. The final claim follows from Proposition \ref{prop:PRGspacetime}.
\end{proof}

We know show how this general derandomization procedure can be used to derandomize the count-sketch variant of Minton and Price \cite{minton2014improved}. Minton and Price's analysis shows improved concentration bounds for count-sketch when the random signs $g_{i}(k) \in \{1,-1\}$ are fully independent. They demonstrate that in this setting, if $y \in \R^n$ is the count-sketch estimate of a stream vector $f$ with $k$ columns and $d$ rows, then for any $t \leq d$ and index $i \in [n]$ we have:
\[ \bpr{(f_i - y_i)^2 > \frac{t  }{d} \frac{\|f_{\text{tail}(k)  )} \|_2^2 }{k}} \leq 2e^{-\Omega(t)}\]
In order to apply this algorithm in $o(n)$ space, however, one must first derandomize it from using fully independent random signs.
To the best of the authors knowledge, the best known derandomization procedure known before was a black-box application of Nisan's which results in $O(\eps^{-2}\log^3(n))$-bits of space. For the purposes of the theorem, we replace the notation $1/\eps^2$ with $k$ (the number of columns of count-sketch up to a constant).

\begin{theorem}
	The count-sketch variant of \cite{minton2014improved} can be implemented so that if $A \in \R^{d \times k}$ is a count-sketch table, then for any $t \leq d$ and index $i \in [n]$ we have:
	\[ \bpr{(f_i - y_i)^2 > \frac{t  }{d} \frac{\|f_{\text{tail}(k)  )} \|_2^2 }{k}} \leq 2e^{-\Omega(t)}\]
	and such that the total space required is $O(k d \log(n) (\log \log n )^2)$.
\end{theorem}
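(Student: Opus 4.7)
The plan is to combine the generalized halfspace PRG from Lemma \ref{lemma:halfspace} with a standard limited-independence construction for the hashes in order to replace the fully independent signs required by Minton-Price. The randomness decomposes naturally into two parts: the $d$ hash functions $h_i \colon [n] \to [k]$, for which $O(\log n)$-wise independence is more than enough to drive the Minton-Price moment computation (since the bound on $\mathbb{E}[(f_i - y_i)^{2t}]$ for $t \le d$ only involves products of at most $2t = O(\log n)$ collision indicators), and the $dn$ sign bits $g_i(\ell) \in \{1,-1\}$, for which the tightened bound exploits full independence. The hashes are sampled from an $O(\log n)$-wise independent family, costing $O(d \log^2 n)$ bits total, and the heavy derandomization is applied only to the signs.

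Fix any query $i \in [n]$ and set $T_j = \sum_{\ell \neq i,\, h_j(\ell) = h_j(i)} f_\ell\, g_j(\ell)$. Then $A_{j, h_j(i)} g_j(i) - f_i = g_j(i)\, T_j$, so the error event $(f_i - y_i)^2 > \tau$ is equivalent to the statement that at least $d/2$ of the $d$ values $g_j(i) T_j$ exceed $\sqrt{\tau}$, or at least $d/2$ lie below $-\sqrt{\tau}$. Each test $g_j(i) T_j > \sqrt{\tau}$ is not itself a halfspace (it is quadratic in the signs), but it splits as $(g_j(i) = 1 \wedge T_j > \sqrt{\tau}) \vee (g_j(i) = -1 \wedge T_j < -\sqrt{\tau})$, and the inner predicates $T_j > \sqrt{\tau}$, $T_j < -\sqrt{\tau}$, and $g_j(i) > 0$ are all linear halfspaces in the sign variables $\{g_j(\ell)\}$ with $O(\log n)$-bit integer coefficients (the coefficient of $g_j(\ell)$ in $T_j$ is $f_\ell \cdot \mathbf{1}[h_j(\ell) = h_j(i)]$). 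Hence the Minton-Price predicate for query $i$ is a Boolean function of $O(d)$ linear halfspaces in the signs.

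By Lemma \ref{lemma:halfspace}, there is a PRG of seed length $O(d \log(n)(\log\log n)^2)$ that $n^{-c}$-fools every $\poly(n)$-bounded $O(d)$-halfspace tester for any fixed constant $c$. Since a single realization of the seed fools \emph{every} such tester simultaneously, it fools the $n$ query-specific Minton-Price predicates for every realization of the $O(\log n)$-wise hashes and every choice of $i$ (the dependence of the halfspace coefficients on the hashes and the index is absorbed into the universal quantifier over testers). Consequently the pseudorandom signs preserve the Minton-Price bound up to an additive $n^{-c}$, which is negligible next to the $2e^{-\Omega(t)}$ target. Totalling the costs, $O(dk \log n)$ bits store the count-sketch table, $O(d \log^2 n)$ store the hashes, and $O(d \log(n)(\log\log n)^2)$ store the seed, giving overall $O(dk \log(n)(\log\log n)^2)$ bits as claimed.

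The main obstacle is the quadratic dependence on the signs inherent in $A_{j, h_j(i)} g_j(i)$: the squared-error predicate is not directly a Boolean combination of linear halfspaces, so the case-split on $g_j(i)$ is essential to rewrite it in a form Lemma \ref{lemma:halfspace} can handle. A secondary subtlety is verifying that $O(\log n)$-wise independence of the hashes genuinely suffices for the Minton-Price moment expansion (which only involves products of at most $2d$ collision indicators and their expectations over the hashes); once both reductions are in place, the theorem follows by union-bounding the PRG's $n^{-c}$ error over the $n$ query indices and invoking the standard Minton-Price moment argument verbatim.
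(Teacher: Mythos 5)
Your route differs from the paper's in an essential way, and it has two genuine gaps. The paper does not keep the standard ``one bucket per row'' hashing at all: it first \emph{modifies} the data structure so that each item is hashed into each bucket of each row independently with probability $1/k$ (so an item may land in several buckets of a row, or none), making every entry of the $dk\times n$ sketching matrix $B$ i.i.d.\ and samplable with $O(\log n)$ bits. It then argues the Minton--Price analysis is unaffected (the estimate is still a median of $\Theta(d)$ independent, symmetric error terms with the same marginals) and invokes the general linear-sketch derandomization (Theorem~\ref{thm:derandomGeneral}) applied to the pair $(Bf, B_i)$, where $B_i$ is the $i$-th column needed to decode $y_i$. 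Your plan instead keeps the classical structure and splits the randomness: $O(\log n)$-wise independent hashes plus PRG-generated signs. The first gap is that the claim ``$O(\log n)$-wise independence of the hashes suffices for the Minton--Price moment expansion'' is exactly the hard part and you leave it unverified. The Minton--Price improvement is not a generic $2t$-th moment computation over collision indicators; it is a per-row small-ball bound of the form $\pr{|T_j|>\sqrt{t}\,\sigma}\le e^{-\Omega(t)}$ whose proof conditions on the collision pattern and uses properties of the colliding set's distribution. Whether that survives replacing fully random $h_j$ by $O(\log n)$-wise independent $h_j$ is precisely the kind of statement that needs a proof, not an assertion; the paper's i.i.d.-entry reformulation exists specifically to avoid having to prove it.

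The second gap is the sentence ``a single realization of the seed fools every such tester simultaneously.'' That is not what a PRG guarantees: fooling is a statement about distributions ($|\pr{\sigma(X)=1}-\pr{\sigma(G(y))=1}|<\eps$ for each fixed tester $\sigma$), and no individual seed ``fools'' anything. The conclusion you want is recoverable, but only via the correct argument: for each fixed index $i$ and each fixed realization of the hash functions, the bad-event indicator is a fixed $O(d)$-halfspace tester in the signs; the PRG fools each such tester to within $n^{-c}$; and since the hash randomness is independent of the sign seed, averaging over hash realizations preserves the $n^{-c}$ error. As written, your justification conflates a per-tester distributional guarantee with a pointwise-in-seed guarantee. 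Two smaller issues: the thresholds $\pm\sqrt{\tau}$ must be rounded to $O(\log n)$-bit integers before Lemma~\ref{lemma:halfspace} applies (the paper sidesteps this by fooling the exact value of the integer sketch rather than a threshold event), and your $O(d\log^2 n)$ bits for the limited-independence hashes is not dominated by the claimed $O(kd\log(n)(\log\log n)^2)$ bound when $k=o(\log n/(\log\log n)^2)$.
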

\begin{proof}
	We first remark that the following modifcation to the count-sketch procedure does not effect the analysis of \cite{minton2014improved}. Let $A \in \R^{d \times k}$ be a $d \times k$ count-sketch matrix. The modification is as follows: instead of each variable $h_{i}(k)$ being uniformly distributed in $\{1,2,\dots,k\}$, we replace them with variables $h_{i,j,k} \in \{0,1\}$ for $(i,j,k) \in [d] \times [k] \times [n]$, such that $h_{i,j,k}$ are all i.i.d. and equal to $1$ with probability $1/k$. We also let $h_{i,h,k} \in \{1,-1\}$ be i.i.d. Rademacher variables ($1$ with probability $1/2$). Then $A_{i,j} = \sum_{k=1}^n f_k g_{i,j,k} h_{i,j,k}$, and the estimate $y_t$ of $f_t$ for $t \in [n]$ is given by:
	\[ y_t = \text{median} \{	g_{i,j,t} A_{i,j} \; | \; h_{i,j,t = 1}	\}\]
	Thus the element $f_t$ can be hashed into multiple buckets in the same row of $A$, or even be hashed into none of the buckets in a given row. By Chernoff bounds, $|\{	g_{i,j,t} A_{i,j} \; | \; h_{i,j,t = 1}	\}| = \Theta(d)$ with high probability for all $t \in [n]$. Observe that the marginal distribution of each bucket is the same as the count-sketch used in \cite{minton2014improved}, and moreover seperate buckets are fully independent. The key property used in the analysis of \cite{minton2014improved} is that the final estimator is a median over estimators whose error is independent and symmetric, and therefore the bounds stated in the theorem still hold after this modification \cite{Pricepersonal}.
	
	Given this, the entire sketch stored by the streaming algorithm is $B \cdot f$, where 
$$B_{j} = \begin{cases}
	1 & \text{ with prob } \frac{1}{2k}\\
		-1 & \text{ with prob } \frac{1}{2k}\\
		0 & \text{ otherwise}
	\end{cases}$$
	Thus the entries of $B$ are i.i.d., and can be sampled with $O(\log(k)) \leq O(\log(n))$ bits, and $\texttt{vec}(A) = B \cdot f$, where $\texttt{vec}(A)$ is the vectorization of the count-sketch table $A$. Here $B \in \R^{dk \times n}$. 
	
	Now note that for a fixed $i$, to test the statement that $(f_i - y_i)^2 > \frac{t  }{d} \frac{\|f_{\text{tail}(k)  )} \|_2^2 }{k}$, one needs to know both the value of the sketch $Bf$, in addition to the value of the $i$-th column of $B$, since the estimate can be written as $y_i = \text{median}_{j \in [kd], B_{j,i} \neq 0} \{B_{j,i} \cdot (Bf)_j	\}$. Note that the $i$-th column of $B$ (which has $kd$ entries) can simply be written as a sketch of the form $X \cdot \ttx{vec}(B)$, where $X \in \R^{kd \times dkn}$ is a fixed matrix such that $X \cdot \ttx{vec}(B) = B_i$, so we also need to store $X \cdot \ttx{vec}(B)$ 
	Thus by Theorem \ref{thm:derandomGeneral}, the algorithm can be derandomized to use $O(k d \log(n) (\log \log n )^2)$ bits of space, and such that for any $t \leq d$ and any $i \in [n]$ we have $ \pr{(f_i - y_i)^2 > \frac{t  }{d} \frac{\|f_{\text{tail}(k)  )} \|_2^2 }{k}} \leq 2e^{-\Omega(t)} \pm n^{-\Omega(dk)}$. 
	
\end{proof}

\paragraph{The Derandomization.} 
 We now introduce the notation which will be used in our derandomization. Our $L_p$ sampler uses two sources of randomness which we must construct PRG's for. The first, $r_e$, is the randomness needed to construct the in exponential random variables $t_i$, and the second, $r_c$, is the randomness needed for the fully random hash functions and signs used in count-max. Note that $r_e,r_c$ both require $\poly(n)$ bits by Lemma \ref{lem:runtime}. From here on, we will fix any index $i \in [n]$. Our $L_p$ sampler can then be thought of as a tester $\mathcal{A}(r_e,r_c) \in  \{0,1\}$, which tests on inputs $r_e,r_c$, whether the algorithm will output $i \in [n]$. Let $G_1(x)$ be Nisan's PRG, and let $G_2(y)$ be the half-space PRG. For two values $b,c \in \R$, we write $a \sim_\eps b$ to denote $|a-b| < \eps$. Our goal is to show that 
\[	\mathbf{Pr}_{r_e,r_c}\big[ \mathcal{A}(r_e,r_c)  \big] \sim_{n^{-c}} \mathbf{Pr}_{x,y} \big[  \mathcal{A}(G_2(y),G_1(x)) \big] \]
where $x,y$ are seeds of length $O(\log^2(n))$, and $c$ is an arbitrarily large constant.  

\begin{theorem}\label{thm:derandom}
	A single instance of the algorithm \texttt{$L_p$ Sampler} using \texttt{Fast-Update} as its update procedure can be derandomized using a random seed of length $O(\log^2(n)(\log\log n)^2)$, and thus can be implemented in this space. Moreover, this does not affect the time complexity as stated in Lemma \ref{lem:runtime}.
\end{theorem}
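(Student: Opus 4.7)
The plan is to fool the two independent sources of randomness in \texttt{$L_p$ Sampler} by separate PRGs and combine them via a hybrid argument. Writing $r_c$ for the count-max randomness (hash targets $h_k$, Gaussian coefficients $g_k(\cdot)$, and uniforms $\mu_{k,\theta}$) and $r_e$ for the randomness specifying the vector $X=(X_1,\ldots,X_{n^c})$ with $X_i=\texttt{rnd}_\nu(1/t_i^{1/p})$, I would fool $r_c$ by Nisan's PRG $G_1$ with seed of length $O(\log^2 n)$ and fool $r_e$ by the halfspace PRG $G_2$ of Lemma~\ref{lemma:halfspace} with seed of length $O(\log^2 n(\log\log n)^2)$, establishing both
\[
\bigl|\Pr\nolimits_{r_e,r_c}[\mathcal{A}(r_e,r_c)=1]-\Pr\nolimits_{r_e,x}[\mathcal{A}(r_e,G_1(x))=1]\bigr|\le n^{-c}
\]
and
\[
\bigl|\Pr\nolimits_{r_e,x}[\mathcal{A}(r_e,G_1(x))=1]-\Pr\nolimits_{y,x}[\mathcal{A}(G_2(y),G_1(x))=1]\bigr|\le n^{-c},
\]
from which the theorem follows by the triangle inequality with total seed length $O(\log^2 n\,(\log\log n)^2)$.

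For the second hop, I would fix any realization of $r_c$ and any $i\in[n]$ and show that the event ``$\mathcal{A}$ outputs $i$'' is a Boolean function of only $\lambda=O(\log n)$ halfspaces in $X$. After truncating the Gaussians and uniforms to $O(\log n)$ bits of precision, an $n^{-\Omega(c)}$ perturbation that is absorbed into the adversarial noise already handled by Lemma~\ref{lem:main}, each bucket value $A_{k,\theta}=\mu_{k,\theta}\sum_i g_k(i)\mathbf{1}[h_k(i)=\theta]F_i X_i$ is a fixed linear form in $X$ with $O(\log n)$-bit integer coefficients. Because $h_k$ has range only $2$, the outcome of the per-row comparison $|A_{k,1}|$ vs.\ $|A_{k,2}|$ is decided by $O(1)$ halfspaces in $X$ (enumerating the four sign patterns of $A_{k,1}$ and $A_{k,2}$), so the $d$ per-row decisions are determined by $\lambda=O(d)=O(\log n)$ halfspaces total. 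Conditioned on the high-probability event of Lemma~\ref{lem:countmax}, the output of the algorithm is a deterministic function of those $d$ per-row winners together with the (fixed, since $r_c$ is fixed) hash patterns $\{h_k(i_j)\}_k$ of the $n^{c-1}$ duplicates $i_j$ of $i$: each duplicate's counter $\alpha_{i_j}$ is obtained by comparing its fixed pattern to the winning pattern row by row. Invoking Lemma~\ref{lemma:halfspace} with $n\leftarrow n^c$, this $\lambda$, $M=\poly(n)$, and $\eps=n^{-\Theta(c)}$ delivers the required seed.

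For the first hop, I would fix any realization of $r_e$ and view the remaining algorithm as a streaming tester of $r_c$ whose working memory is the $2d=O(\log n)$ buckets of $A$ (each $O(\log n)$ bits) together with $O(\log n)$ bits of scratch state for the current update, giving $O(\log^2 n)$ bits in total. Since the tester consumes its random tape left-to-right as it generates $\mu_{k,\theta}$, $h_k$, and $g_k$ on demand for each incoming update, Nisan's theorem \cite{nisan1992pseudorandom} yields a seed of length $O(\log^2 n)$ that fools it to within $n^{-c}$. The running time is unaffected: Proposition~\ref{prop:PRGspacetime} guarantees $\tilde O(1)$-time random access to any coordinate of $G_2$'s output in space linear in the seed, and Nisan's generator admits $\tilde O(1)$-time coordinate access in space linear in its seed as well, both of which interface with \texttt{Fast-Update} without asymptotic overhead by Lemma~\ref{lem:runtime}. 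The main obstacle is the second hop: showing that $\mathcal{A}$ really depends on $X$ through only $O(\log n)$ halfspaces despite having to consider $n^{c-1}$ duplicates of each original index. The key realization is that range-$2$ hashing collapses all the per-bucket comparisons into $d$ bits of information about $X$, after which everything about the duplicates is deterministic in the already-fixed $r_c$; without this observation, a naive reduction would require $\poly(n)$ halfspaces and blow up the seed length.
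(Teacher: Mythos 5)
Your overall architecture (two PRGs, one for the count-max randomness $r_c$ and one for the exponential randomness $r_e$, combined by fixing one source and averaging) is the same as the paper's, and your second hop is essentially the paper's argument: after discretizing, each row comparison $|A_{k,1}|$ vs.\ $|A_{k,2}|$ is decided by $O(1)$ halfspaces in the rounded inverse exponentials with $\poly(n)$-bounded integer coefficients, so the acceptance event is a function of $O(\log n)$ halfspaces and Lemma \ref{lemma:halfspace} applies. (Your variant of testing the aggregated event ``output $=i\in[n]$'' directly via the $d$ per-row winner bits, rather than testing each duplicate $i_j\in[n^c]$ and summing as the paper does, is fine and even slightly cleaner.)

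The genuine gap is in your first hop. You define the streaming tester of $r_c$ to keep the entire count-max table in memory, i.e.\ $2d$ buckets of $O(\log n)$ bits each, for $S=O(\log^2 n)$ bits of working space, and then claim Nisan's PRG fools it with a seed of length $O(\log^2 n)$. Nisan's theorem gives seed length $O(S\log R)$ for space-$S$ testers reading $R=\poly(n)$ random bits, so with $S=O(\log^2 n)$ you only get an $O(\log^3 n)$ seed -- this is exactly the black-box application the paper explicitly sets out to avoid (and is what its Corollary \ref{cor:derandom} uses for the $p=2$ / high-probability bounds). To get the $O(\log^2 n)$ seed you need the paper's finer observation: for a fixed candidate output index, the tester only needs to count in how many rows that index lands in the maximal bucket, so it can block the random tape into $d$ per-row pieces, reconstruct one row of $A$ at a time, compare, increment an $O(\log\log n)$-bit counter, and discard the row -- an $O(\log n)$-space tester. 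This in turn requires reordering the stream so that all updates to a given coordinate are contiguous (legitimate because the sketch is linear, as in \cite{indyk2006stable}); your tester, which generates $h_k(i),g_k(i)$ ``on demand for each incoming update'' while reading the tape left to right, cannot reuse the same $h_k(i),g_k(i)$ when a later update to $i$ arrives without either storing them (too much space) or re-reading the tape (not allowed for a one-way reader). Without the per-row, reordered, $O(\log n)$-space tester, your first hop only yields an $O(\log^3 n(\log\log n)^2)$ total seed, not the claimed $O(\log^2 n(\log\log n)^2)$.
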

\begin{proof}
	First note that by Lemma \ref{lem:runtime}, we require $\tilde{O}(\nu^{-1})$ random bits for each $i \in [n]$, and thus we require a total of $\tilde{O}(n\nu^{-1}) = \poly(n)$ random bits to be generated. 
	Since Nisan's PRG requires the tester to read its random input in a stream, we can use a standard reordering trick of the elements of the stream, so that all the updates to a given coordinate $i \in [n]$ occur at the same time (see \cite{indyk2006stable}). This does not effect the output distribution of our algorithm, since linear sketches do not depend on the ordering of the stream. Now let $c'$ be the constant such that the algorithm \texttt{$L_p$ Sampler} duplicates coordinates $n^{c'}$ times. In other words, the count-max is run on the stream vector $F \in \R^{n^{c'}}$, and let $N = n^{c'}$. Now, as above, we fix any index $i \in [N]$, and attempt to fool the tester which checks if, on a given random string, our algorithm would output $i$. For any fixed randomness $r_e$ for the exponentials, let $\mathcal{A}_{r_e}(r_c)$ be the tester which tests if our $L_p$ sampler would output the index $i$, where now the bits $r_e$ are hard-coded into the tester, and the random bits $r_c$ are taken as input and read in a stream. We first claim that this tester can be implemented in $O(\log(n))$-space. 
	
	To see this, note that $\mathcal{A}_{r_e}(r_c)$ must simply count the number of rows of count-max such that item $i$ is hashed into the largest bucket (in absolute value) of that row, and output $1$ if this number is at least $\frac{4d}{5}$, where $d$ is the number of rows in count-max. To do this, $\mathcal{A}_{r_e}(r_c)$ can break $r_c$ into $d$ blocks of randomness, where the $j$-th block is used only for the $j$-th row of count-max. It can then fully construct the values of the counters in a row, one row at a time, reading the bits of $r_c$ in a stream. To build a bucket, it looks at the first element of the stream, uses $r_c$ to find the bucket it hashes to and the Gaussian scaling it gets, then adds this value to that bucket, and then continues with the next element. Note that since $r_e$ is hardcoded into the tester, we can assume the entire stream vector $\zeta$ is hardcoded into the tester. Once it constructs a row of count-max, it checks if $i$ is in the largest bucket by absolute value, and increments a $O(\log(d))$-bit counter if so. Note that it can determine which bucket $i$ hashes to in this row while reading off the block of randomness corresponding to that row. 	
	Then, it throws out the values of this row and the index of the bucket $i$ hashed to in this row, and builds the next row. Since each row has $O(1)$ buckets, $\mathcal{A}_{r_e}(r_c)$ only uses $O(\log(n))$-bits of space at a time. Then using $G_1(x)$ as Nisan's generator with a random seed $x$ of length $O(\log^2(n))$-bits, we have $\pr{\mathcal{A}_{r_e}(r_c)} \sim_{n^{-c_0}} \pr{\mathcal{A}_{r_e}(G_1(x))}$, where the constant $c_0$ is chosen to be sufficiently larger than the constant $c_1$ in the $n^{-c_1}$ additive error of our perfect sampler, as well as the constant $c'$  Moreover:
	
	 \begin{equation*}
	\begin{split}
		\bpr{\mathcal{A}(r_e,r_c)}  & = 	\sum_{r_e}  \bpr{\mathcal{A}_{r_e}(r_c)}  \bpr{r_e}  \\
		& =\sum_{r_e} ( (\bpr{\mathcal{A}_{r_e}(G_1(x))} \pm n^{-c_0}) \bpr{r_e} \\
		& =\sum_{r_e}  (\bpr{\mathcal{A}_{r_e}(G_1(x))}\bpr{r_e} \pm \sum_{r_e} n^{-c_0}  \bpr{r_e}  \\ 
		& \sim_{n^{-c_0}} \bpr{\mathcal{A}(r_e,G_1(x))} \\		
		\end{split}
	\end{equation*}
	
	Now fix any Nisan seed $x$, and consider the tester $\mathcal{A}_{G_1(x)}(r_e)$, which on fixed count-max randomness $G_1(x)$, tests if the algorithm will output $i \in [n]$ on the random input $r_e$ for the exponential variables. We first observe that it seems unlikely that $\mathcal{A}_{G_1(x)}(r_e)$ can be implemented in $\log(n)$ space while reading its random bits $r_e$ in a stream. This is because each row of count-max depends on the same random bits in $r_e$ used to construct the exponentials $t_i$, thus it seems $\mathcal{A}_{G_1(x)}(r_e)$ would need to store all $\log^2(n)$ bits of count-max at once. However, we will now demonstrate that $\mathcal{A}_{G_1(x)}(r_e)$ is in fact a $\poly(n)$ bounded $O(d)$-halfspace tester (as defined earlier in this section) where $d$ is the number of rows of count-max, and therefore can be derandomized with the PRG of \cite{gopalan2015pseudorandomness}.   By the Runtime \& Random bits analysis in Lemma \ref{lem:runtime}, it suffices to take all random variables in the algorithm to be $O(\log(n))$-bit rational numbers. Scaling by a sufficiently large $\poly(n)$, we can assume that $1/t_j^{1/p}$ is a discrete distribution supported on $\{-T,\dots,T\}$ where $T \leq \poly(n)$ for a sufficiently large $\poly(n)$. We can then remove all values in the support which occur with probability less than $\poly(n)$, which only adds an a $n^{-c_0}$ additive error to our sampler. After this, the distribution can be sampled from with $\poly(T) = \poly(n)$ random bits, which is as needed for the setting of Lemma \ref{lemma:halfspace}. 	
	 Note that we can also apply this scaling the Gaussians in count-max, so that they too are integers of magnitude at most $\poly(n)$.

	 Given this, the distribution of the variables $1/t_j^{1/p}$ satisfy the conditions of Lemma $\ref{lemma:halfspace}$, in particular being $\poly(n)$-bounded, thus we must now show that 
	$\mathcal{A}_{G_1(x)}(r_e)$ is indeed a $O(d)$-halfspace tester, with integer valued half-spaces bounded by $\poly(n)$. First consider a given row of count-max, and let the buckets be $B_1,\dots,B_{10}$. WLOG $i$ hashs into $B_1$, and we must check if $|B_1| > |B_t|$ for $t =2,3,\dots,10$. Let $g_j$ be the random count-max signs (as specified by $G_1(x)$), and let $S,S_t$ be the set of indices which hash to $B_1$ and $B_t$ respectively. We can run the following $6$ half-space tests to test if $|B_1| > |B_t|$:
	\begin{equation}\label{test1}
	\sum_{j \in S} g_j f_j (\frac{1}{t_j^{1/p}})  > 0 
	\end{equation}
		\begin{equation}\label{test2}
	\sum_{j \in S_t} g_j f_j (\frac{1}{t_j^{1/p}})  > 0 
	\end{equation}	
	\begin{equation}\label{test3}
	a_1 \sum_{j \in S} g_j f_j (\frac{1}{t_j^{1/p}}) + a_2 \sum_{j \in S_t} g_j f_j (\frac{1}{t_j^{1/p}}) > 0
	\end{equation}	
 	where $a_1,a_2$ range over all values in $\{1,-1\}^2$. The tester can decide whether $|B_1| > |B_t|$ by letting $a_1$ be the truth value (where $-1$ is taken as fail) of the first test \ref{test1} and $a_2$ the truth value of \ref{test2}. It then lets $b_t \in \{0,1\}$ be the truth value of \ref{test3} on the resulting $a_1,a_2$ values, and it can correctly declare $|B_1| > |B_t|$ iff $b_t = 1$. Thus for each of the $9$ pairs $B,B_t$, the tester uses $6$ halfspace testers to determine if $|B_1| > |B_t|$, and so can determine if $i$ hashed to the max bucket with $O(1)$ halfspace tests. So $\mathcal{A}_{G_1(x)}(r_e)$ can test if the algorithm will output $i$ by testing if $i$ hashed to the max bucket in a $4/5$ fraction of the $d$ rows of count-max, using $O(d) = O(\log(n))$ halfspace tests. Note that by the scaling performed in the prior paragraphs, all coefficents of these half-spaces are integers of magnitude at most $\poly(n)$. So by Lemma \ref{lemma:halfspace}, the PRG $G_2(y)$ of \cite{gopalan2015pseudorandomness} fools $\mathcal{A}_{G_1(x)}(r_e)$ with a seed $y$ of $O(\log^2(n) (\log\log n)^2)$-bits. So $\pr{\mathcal{A}_{G_1(x)}(r_e)} \sim_{n^{-c_0}}\pr{ \mathcal{A}_{G_1(x)}(G_2(y))}$, and so by the same averaging argument as used in for the Nisan PRG above, we have $\pr{\mathcal{A}(r_e,G_1(x)) }\sim_{n^{-c_0}} \pr{ \mathcal{A}(G_2(y),G_1(x))}$, and so $\pr{\mathcal{A}(r_e, r_c) }\sim_{n^{-c_0}} \pr{ \mathcal{A}(G_2(y),G_1(x))}$ as desired. Now fixing any $i \in [n]$, 
 	let $\mathcal{A}'( r_e,r_c)$ be the event that the overall algorithm outputs the index $i$. In other words, $\mathcal{A}_i'( r_e,r_c) = 1$ if $\mathcal{A}_{i_j}( r_e,r_c) = 1$ for some $j \in [n^{c'-1}]$, where $\mathcal{A}_{i_j}( r_e,r_c) = 1$ is the event that count-max declares that $i_j$ is the maximum in Algorithm \texttt{$L_p$ Sampler}. Thus, 
 	the probability that the algorithm outputs a non-duplicated coordinate $i \in [n]$ is given by:
 	\begin{equation}
 	\begin{split}
 	\bpr{\mathcal{A}_i'( r_e,r_c) } &= \sum_{j=1}^{n^{c'}} \bpr{\mathcal{A}_{i_j}( r_e,r_c)}\\
 	&= \sum_{j=1}^{n^{c'}} \bpr{\mathcal{A}_{i_j}(G_2(y),G_1(x))} \pm n^{-c_0}\\
 		&=  \bpr{\mathcal{A}_{i}'(G_2(y),G_1(x))} \pm n^{-c_1}\\
 	\end{split}
 	\end{equation}
 	where in the last line we set $c_0 > c' + c_1$, where recall $c_1$ is the desired additive error in our main sampler. In conclusion, replacing the count-max randomness with Nisan's PRG and the exponential random variable randomness with the half-space PRG $G_2(y)$, we can fool the algorithm which tests the output of our algorithm with a total seed length of $O(\log^2(n)(\log\log n)^2)$.
 	
 	 To show that the stated update time of Lemma \ref{lem:runtime} is not affected, we first remark that Nisan's PRG simply involves performing $O(\log(n))$ nested hash computations on a string of length $O(\log(n))$ in order to obtain any arbitrary substring of $O(\log(n))$ bits. Thus the runtime of such a procedure is $\tilde{O}(1)$ to obtain the randomness needed in each update of a coordinate $i \in [n^c]$. By Lemma \ref{lemma:halfspace}, the PRG of \cite{gopalan2015pseudorandomness} requires $\tilde{O}(1)$ time to sample the $O(\log(n))$-bit string needed to generate an exponential, and moreover can be computed with working space linear in the size of the random seed (note that this is also true of Nisan's PRG, which just involves $O(\log(n))$-nested hash function computations). Thus the update time is only blown up by a $\tilde{O}(1)$ factor, which completes the proof. 
\end{proof}

\begin{corollary}\label{cor:derandom}
	For $p=2$, the entire algorithm can be derandomized to run using $O(\log^3(n) \log(1/\delta))$-bits of space with failure probability of $\delta$. For $p < 2$, the algorithm can be derandomized to run using $O(\log^3(n))$-bits of space with $\delta = 1/\poly(n)$.
\end{corollary}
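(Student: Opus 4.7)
The plan is to run $R$ parallel instances of \texttt{$L_p$ Sampler} and derandomize them jointly, extending Theorem \ref{thm:derandom}. Here $R = O(\log n \log(1/\delta))$ for $p=2$, since the per-instance success probability is $\Omega(1/\log n)$ by Lemma \ref{lem:failbound}, and $R = O(\log(1/\delta)) = O(\log n)$ for $p<2$ with $\delta = 1/\poly(n)$. The storage of the $R$ count-max tables alone is $O(R \log^2 n)$ bits, which equals $O(\log^3 n \log(1/\delta))$ for $p=2$ and $O(\log^3 n)$ for $p<2,\delta = 1/\poly n$ -- exactly matching the claimed bounds. Thus my goal reduces to fitting the total random seed within the same $O(R \log^2 n)$ budget so that the counter storage remains the dominant term.

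For the count-max randomness across all $R$ reps, I would apply Nisan's PRG to each rep independently. By the analysis in Theorem \ref{thm:derandom}, each rep's count-max tester $\mathcal{A}_{r_e^{(r)}}(r_c^{(r)})$ runs in $O(\log n)$ online space after the stream-reordering trick, and hence is fooled by a Nisan seed of length $O(\log^2 n)$; taking $R$ independent seeds uses $O(R \log^2 n)$ bits total. For the exponential randomness, I would invoke Lemma \ref{lemma:halfspace} with $\lambda = O(R \log n)$ halfspaces -- a constant number per row per rep -- obtaining a single shared half-space PRG seed that fools the joint $R$-fold halfspace tester and produces $R \cdot n^c$ pseudo-independent exponentials partitioned into $R$ blocks, one per rep, so that the reps remain sufficiently decorrelated to amplify the success probability as required by Theorem \ref{thm:main}. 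The averaging argument from Theorem \ref{thm:derandom} then carries over: fixing the Nisan seeds, the half-space PRG fools the exponential tester up to additive $n^{-c_0}$ error, and vice versa; a union bound over $R = \poly\log(n)$ reps contributes at most $R \cdot n^{-c_0}$ additive error, absorbed into the perfect-sampler's $n^{-c_1}$ error budget for large enough $c_0$.

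The main obstacle I expect is the $(\log\log n)^2$ factor inherent to Lemma \ref{lemma:halfspace}, which would naively give a half-space seed of length $O(R \log^2 n (\log\log n)^2)$ and exceed the counter budget by this factor. To address this, I would exploit that the joint $R$-fold halfspace tester decomposes as a product of $R$ identical single-rep halfspace testers acting on disjoint exponential blocks, so that a product-structured application of the Fourier-shape PRG of Theorem \ref{thm:goplan} can share the Nisan--Zuckerman and recursive components of \cite{gopalan2015pseudorandomness} across reps, yielding a total seed of $O(\log^2 n (\log\log n)^2) + O(R\log^2 n) = O(R\log^2 n)$ (since $(\log\log n)^2 = O(\log n)$, the non-amortized piece is absorbed when $R \geq 1$). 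Combined with the $O(R \log^2 n)$ Nisan seed for count-max and the $O(R \log^2 n)$ counter storage, this yields the claimed $O(\log^3 n \log(1/\delta))$ bound for $p=2$ and $O(\log^3 n)$ bound for $p<2, \delta = 1/\poly n$. For the $p=2$ case, the same shared half-space PRG seed of $O(\log^3 n)$ bits plus the $R = O(\log n \log(1/\delta))$ independent Nisan seeds (each $O(\log^2 n)$) total $O(\log^3 n \log(1/\delta))$, matching the counter storage and giving the stated space.
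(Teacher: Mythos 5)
There is a genuine gap at the crux of your argument: the claim that a ``product-structured'' application of the Fourier-shape PRG lets you fool the joint $R$-fold halfspace tester with a seed of $O(\log^2 n(\log\log n)^2)+O(R\log^2 n)$. Nothing in Lemma \ref{lemma:halfspace} (or in \cite{gopalan2015pseudorandomness}) supports an additive decomposition of this kind. The seed length there is $O(\lambda\log(nM/\eps)(\log\log(nM/\eps))^2)$, and the linear dependence on $\lambda$ does not come from running $\lambda$ separate copies of the generator whose components could be ``shared''; it comes from the fact that fooling a function of $\lambda$ halfspaces requires driving the Fourier-shape error down to $(nM)^{-O(\lambda)}$ (the union bound over the $(nM)^{O(\lambda)}$ values $\alpha$ in the proof of Lemma \ref{lemma:halfspace}), and the $(\log\log)^2$ factor multiplies that entire $\lambda$-dependent seed. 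So with $\lambda=O(R\log n)$ your exponential seed is $O(R\log^2 n(\log\log n)^2)$, and using independent per-rep halfspace seeds gives the same bound; either way you overshoot the claimed space by exactly the $(\log\log n)^2$ factor this corollary is meant to avoid. Making your amortization claim rigorous would require a new structural argument about the internals of the GKM generator, which you do not provide.

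The paper's proof sidesteps the halfspace PRG entirely. It derandomizes \emph{all} the randomness (exponentials and count-max together) of the algorithm with one black-box application of Nisan's PRG: a single instance can be simulated by a tester with $O(\log^2 n)$ working space that reads its random bits in a stream (after the reordering trick of \cite{indyk2006stable}), so Nisan's seed has length $O(\log^2 n\cdot\log n)=O(\log^3 n)$. The key additional observation is that, since the parallel repetitions use disjoint and independent randomness and the overall output is determined by examining the repetitions one at a time and stopping at the first that does not fail, the \emph{same} $O(\log^2 n)$-space streaming tester — and hence the same single $O(\log^3 n)$-bit seed — fools the entire multi-instance algorithm. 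The dominant cost is then the $O(\log^2 n)$ bits of sketch per instance times the number of repetitions ($O(\log n)$ for $p<2$ with $\delta=1/\poly(n)$, and $O(\log n\log(1/\delta))$ for $p=2$), plus the one $O(\log^3 n)$-bit seed, giving the stated bounds without any $(\log\log n)^2$ loss. If you want to repair your write-up, the cleanest fix is to adopt this single-seed Nisan argument rather than trying to extend the hybrid Nisan-plus-halfspace derandomization of Theorem \ref{thm:derandom} to many repetitions.
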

\begin{proof}
	We can simply derandomize a single instance of our sampling algorithm using Nisan's PRG as in Theorem \ref{thm:derandom}, except that we derandomize all the randomness in the algorithm at once. Since such an instance requires $O(\log^2(n))$-bits of space, using Nisan's blows up the complexity to $O(\log^3(n))$ (the tester can simply simulate our entire algorithm in $O(\log^2(n))$-bits of space, reading the randomness in a stream by the reordering trick of \cite{indyk2006stable}). Since the randomness for separate parallel instances of the main sampling algorithm is disjoint and independent, this same $O(\log^2(n))$-bit tester can test the entire output of the algorithm by testing each parallel instance one by one, and terminating on the first instance that returns an index $i \in [n]$. Thus the same $O(\log^3(n))$-bit random seed can be used to randomize all parallel instances of our algorithm. For $p<2$, we can run $O(\log(n))$ parallel instances to get $1/\poly(n)$ failure probability in $O(\log^3(n))$-bits of space as stated. For $p=2$, we can run $O(\log(n)\log(1/\delta))$ parallel repetitions needed to get $\delta$ failure probability using the same random string, for a total space of $O(\log^3(n) \log(1/\delta) + \log^3(n)) = O(\log^3(n) \log(1/\delta))$ as stated. As noted in the proof of Theorem \ref{thm:derandom}, computing a substring of $O(\log(n))$-bits from Nisan's PRG can be done in $\tilde{O}(1)$ time and using space linear in the seed length, which completes the proof. 
\end{proof}

\subsection{Query Time} \label{subsec:query}
We will now show the modifications to our algorithm necessary to obtain $\tilde{O}(1)$ query time. Recall that our algorithm maintains a count-max matrix $A$. Our algorithm then searches over all indices $i \in \mathcal{K}$ to check if $i$ hashed into the maximum bucket in a row of $A$ at least a $4/5$ fraction of the time. Since $|\mathcal{K}| = \tilde{O}(n)$, running this procedure requires $ \tilde{O}(n)$ time to produce an output on a given query. To avoid this and obtain $\tilde{O}(1)$ running time, we will utilize the heavy hitters algorithm of \cite{larsen2016heavy}, which has $\tilde{O}(1)$ update and query time, and which does not increase the complexity of our algorithm.  

\begin{theorem}[\cite{larsen2016heavy}]\label{thm:expander}
	For any precision parameter $0< \eps < 1/2$, given a general turnstile stream $x \in \R^n$ there is an algorithm, \texttt{ExpanderSketch}, which with probability $1 - n^{-c}$ for any constant $c$, returns a set $S \subset [n]$ of size $S = O(\eps^{-2})$ which contains all indices $i$ such that $|x_i| \geq \eps \|x\|_2$. The update time is $O(\log(n))$, the query time is $\tilde{O}(\eps^{-2})$, and the space required is $O(\eps^{-2}\log^2(n))$-bits.
\end{theorem}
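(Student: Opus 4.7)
The plan is to construct a turnstile $L_2$ heavy hitters algorithm by combining a standard CountSketch-style outer sketch with an expander-graph-based identity recovery scheme, so that both updates and queries run in nearly linear time in the output size rather than in $n$. Since this is stated as an external cited result, I am really proposing the structure of the argument in \cite{larsen2016heavy}.

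First I would perform the standard outer reduction: pick a pairwise-independent hash function $h \colon [n] \to [B]$ with $B = \Theta(\eps^{-2})$, which with constant probability isolates any fixed $\eps$-heavy hitter as the unique heavy element inside its bucket (since the squared mass from non-heavy items hashing in is at most $O(\|x\|_2^2 / B) = O(\eps^2 \|x\|_2^2)$ in expectation by a Markov-type bound). Running $O(\log n)$ independent copies in parallel and taking the union of candidates boosts the isolation probability to $1 - n^{-c}$, and gives the final set $S$ of size $O(\eps^{-2})$ after discarding duplicates. This is analogous in spirit to the count-max analysis of Lemma \ref{lem:countmax} and Corollary \ref{cor:countmax} used earlier in this paper.

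The technical core is recovering the $\log n$-bit identity of the (possibly unique) heavy coordinate inside each bucket in $\tilde{O}(1)$ time. A naive approach of maintaining one CountSketch per identity bit has the correct space but forces $\tilde{O}(n)$ query time because every bucket must be probed bit-by-bit. Instead I would split each identity into $O(\log n / \log\log n)$ chunks of $O(\log\log n)$ bits, and maintain a small CountSketch of size $\mathrm{poly}(\log n)$ on each edge of a constant-degree bipartite expander whose left vertices correspond to chunks. For each isolated heavy coordinate, the chunks then form a codeword of an expander code; by the expansion property, a local decoding algorithm can recover the entire identity from any constant fraction of correctly decoded chunks in $\tilde{O}(1)$ time per bucket, yielding $\tilde{O}(\eps^{-2})$ total query time. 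Each update touches only the degree-many edge sketches incident to the corresponding left vertex, which together with the outer repetition gives $O(\log n)$ update time, and the total space works out to $O(\eps^{-2} \log^2 n)$ bits.

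The main obstacle is calibrating the expander parameters so that all three quantities balance simultaneously: the degree must be small enough to keep updates at $O(\log n)$, the number of chunks and the size of each inner CountSketch must fit inside the $O(\eps^{-2}\log^2 n)$ space budget, and the expansion must be strong enough that the fraction of chunks which the inner sketches corrupt (governed by the tail bound of Theorem \ref{thm:count-sketch}) falls below the expander code's decoding radius. A secondary subtlety is that the stream is turnstile, so the inner sub-sketches must remain linear and tolerant of deletions; this is automatic because both CountSketch and the expander encoding are linear, and the decoding queries only linear combinations of the stored values. The detailed parameter tuning and correctness analysis are the content of \cite{larsen2016heavy}.
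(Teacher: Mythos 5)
The paper never proves Theorem \ref{thm:expander}; it is invoked purely as a black box from \cite{larsen2016heavy}, so there is no internal proof to compare against, and your sketch has to be judged as a reconstruction of the argument in \cite{larsen2016heavy}. On that reading you do capture several genuine ingredients: splitting each identity into $\Theta(\log n/\log\log n)$ chunks of $O(\log\log n)$ bits, linking chunks via a constant-degree expander, tolerating a constant fraction of corrupted chunks via an error-correcting code, and using linearity of all sub-sketches to handle the turnstile setting.

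However, two of your steps hide real gaps. First, your amplification mechanism conflicts with the claimed update time. You obtain the $1-n^{-c}$ guarantee by running $O(\log n)$ independent copies of the outer bucketing, each of which contains a chunk-based identity-recovery structure; an update must then touch, in every repetition, sketches for all $\Theta(\log n/\log\log n)$ chunks (or $\Omega(\log n)$ counters if one row is kept per identity bit), giving update time $\Omega(\log^2 n/\log\log n)$ rather than $O(\log n)$. In \cite{larsen2016heavy} there is no outer $\log n$-fold repetition: there are only $\Theta(\log n/\log\log n)$ chunk substructures in total, each recovering its chunk with only constant (or $1/\mathrm{poly}\log$) failure probability, and the $1-n^{-c}$ bound is extracted from the redundancy across chunks, which is exactly what keeps the update cost at $O(\log n)$. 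Second, because each chunk is recovered only with constant probability and the chunks of one heavy hitter live in \emph{different} substructures with independent bucketings, the core difficulty is deciding which recovered chunks belong to the same heavy hitter before any decoding can take place; naive linking or expander-walk decoding fails because a few corrupted chunks can sever a cluster or spuriously merge two clusters. The central contribution of \cite{larsen2016heavy} is precisely the ``cluster-preserving clustering'' step that recovers the near-expander clusters corresponding to heavy hitters from a graph contaminated by adversarial edges. Your claim that ``a local decoding algorithm can recover the entire identity from any constant fraction of correctly decoded chunks in $\tilde{O}(1)$ time per bucket'' assumes this grouping problem away, and also mislocates it inside a single bucket, whereas the grouping must be done across the chunk substructures.
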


\paragraph{Using \texttt{ExpanderSketch} to speed up query time.} The modifications to our main algorithm \texttt{$L_p$ Sampler} with \texttt{Fast-Update} are as follows. We run our main algorithm as before, maintaining the same count-max data structures $A$. Upon initialization of our algorithm, we also initialize an instance \texttt{ExSk} of \texttt{ExpanderSketch} as in Theorem \ref{thm:expander}, with the precision parameter $\eps = 1/100$.

Now recall in our \texttt{Fast-Update} procedure, for each $i \in [n]$ we hash the top $K_i = O(\log(n))$ largest duplicates $\zeta_{i_j}$ corresponding to $f_i$ individually, and store the random variables $h_\ell(i_j)$ that determine which buckets in $A$ they hash to. While processing updates to our algorithm at this point, we make the modification of \textit{additionally} sending these top $K_i$ items to \texttt{ExSk} to be sketched. More formally, we run \texttt{ExSk} on the stream $\zeta_{\mathcal{K}}$, where $\zeta_{\mathcal{K}}$ is the vector $\zeta$ projected onto the coordinates of $\mathcal{K}$. Since $K_i = \tilde{O}(1)$, this requires making $\tilde{O}(1)$ calls to update \texttt{ExSk} on different coordinates, which only increases our update time by an  $\tilde{O}(1)$ additive term.

On termination, we obtain the set $S$ containing all items $\zeta_i$ such that $i \in \mathcal{K}$ and $\zeta_i \geq (1/100) \|\zeta_{\mathcal{K}}\|_2$. Instead of searching through all coordinates of $\mathcal{K}$ to find a maximizer, we simply search through the coordinates in $S$, which takes $\tilde{O}(|S|) = \tilde{O}(1)$ time. 
 We now argue that the output of our algorithm does not change with these new modifications.  We refer collectively to the new algorithm with these modifications as \texttt{$L_p$ Sampler} with \texttt{Fast-Update} and \texttt{ExSk}, and the algorithm of Section \ref{subsec:update} as simply \texttt{$L_p$ Sampler} with \texttt{Fast-Update}.

\begin{lemma}\label{lem:query}
For any constant $c > 0$, with probability $1 - n^{-100c}$ the algorithm \texttt{$L_p$ Sampler} with \texttt{Fast-Update} and \texttt{ExSk} as described in this section returns the same output (an index $i \in [n]$ or \ttx{FAIL}) as \texttt{$L_p$ Sampler} using \texttt{Fast-Update} but without \texttt{ExSk}. The space and update time are not increased by using \texttt{ExSk}, and the query time is now $\tilde{O}(1)$. 
\end{lemma}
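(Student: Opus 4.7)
The plan is to verify the three claims (output agreement, unchanged space/update time, and $\tilde{O}(1)$ query time) by a direct reduction to the heavy-hitter guarantee of Theorem \ref{thm:expander} together with Corollary \ref{cor:countmax}. First I would argue output agreement: it suffices to show that whenever the original algorithm without \texttt{ExSk} declares some $i_j \in \mathcal{K}$ to be the maximizer of count-max, this same $i_j$ lies in the set $S$ returned by \texttt{ExSk}. Applying Corollary \ref{cor:countmax} to the count-max data structure $A$ run on the stream $\zeta$, with probability $1 - n^{-\Omega(c)}$ count-max only reports $i_j$ when $|\zeta_{i_j}| \geq \tfrac{1}{100}\|\zeta\|_2$. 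Since $\zeta_{\mathcal{K}}$ is $\zeta$ restricted to a subset of coordinates, $\|\zeta_{\mathcal{K}}\|_2 \leq \|\zeta\|_2$, so this forces $|\zeta_{i_j}| \geq \tfrac{1}{100}\|\zeta_{\mathcal{K}}\|_2$, and by Theorem \ref{thm:expander} with $\eps = 1/100$ the set $S$ returned by \texttt{ExSk} contains $i_j$ with probability $1 - n^{-\Omega(c)}$. The reverse containment is trivial since $S \subset \mathcal{K}$, so both algorithms declare the same (unique, by Lemma \ref{lem:countmax}) $i_j$ or both fail, and a union bound over these events gives the $1 - n^{-100c}$ agreement bound.

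For the resources, Theorem \ref{thm:expander} with $\eps = 1/100$ uses $O(\log^2 n)$ bits, which is dominated by the rest of the sampler. On each stream update $(i,\Delta)$, the \texttt{Fast-Update} procedure already singles out the top $K_i = \tilde{O}(1)$ duplicates $\zeta_{i_j}$ and processes them individually with stored randomness; feeding the same $\tilde{O}(1)$ updates into \texttt{ExSk}, each costing $O(\log n)$, adds only an additive $\tilde{O}(1)$ cost per stream update. For query time, $|S| = O(1)$, and for each candidate $i_j \in S$ I can verify the $4d/5$ max-bucket condition in $O(d) = O(\log n)$ time using the explicitly stored hashes $h_\ell(i_j)$ for $\ell \in [d]$, giving $\tilde{O}(1)$ total query time.

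The main obstacle is essentially cosmetic: everything reduces to invoking Corollary \ref{cor:countmax} and Theorem \ref{thm:expander}. The only subtlety worth noting is the mismatch that Corollary \ref{cor:countmax} is stated relative to $\|\zeta\|_2$ while \texttt{ExSk} operates on the strictly smaller vector $\zeta_{\mathcal{K}}$; this only works in our favor, since a $\tfrac{1}{100}\|\zeta\|_2$-heavy coordinate of $\zeta$ lying in $\mathcal{K}$ is automatically $\tfrac{1}{100}\|\zeta_{\mathcal{K}}\|_2$-heavy in $\zeta_{\mathcal{K}}$ and thus captured by \texttt{ExSk}. After a union bound over the high-probability events invoked (Corollary \ref{cor:countmax}, Theorem \ref{thm:expander}, and uniqueness from Lemma \ref{lem:countmax}), the stated $1 - n^{-100c}$ bound follows.
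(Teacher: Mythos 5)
Your proposal is correct and follows essentially the same route as the paper's proof: both arguments invoke Corollary \ref{cor:countmax} to show any index reported by count-max is $\tfrac{1}{100}\|\zeta\|_2$-heavy, hence $\tfrac{1}{100}\|\zeta_{\mathcal{K}}\|_2$-heavy and therefore captured by \texttt{ExSk} via Theorem \ref{thm:expander}, with the reverse direction trivial since $S$ is a subset of the original search space, and the same resource accounting. No gaps.
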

\begin{proof}
	We condition on the event that $S$ contains all items $i$ such that $i \in \mathcal{K}$ and $|\zeta_i| \geq 1/100 \|\zeta_{\mathcal{K}}\|_2$, which occurs with probability $1 - n^{-100c}$ by Theorem \ref{thm:expander}. Since \texttt{$L_p$ Sampler} already uses at least $O(\log^2(n))$ bits of space, the additional $O(\log^2(n))$ bits of overhead required to run an instance \texttt{ExSk} of \texttt{ExpanderSketch} with sensitivity parameter $\eps = 1/100$ does not increase the space complexity. Furthermore, as mentioned above, the update time is blown-up by a factor of $\tilde{O}(1)$, since we make $K_i = \tilde{O}(1)$ calls to update \texttt{ExSk}, which has an update time of $\tilde{O}(1)$ by Theorem \ref{thm:expander}. Furthermore, our algorithm does not require any more random bits, as it only uses \texttt{ExpanderSketch} as a subroutine, and thus no further derandomization is required. Thus the complexity guarantees of Lemma \ref{lem:runtime} are unchanged. For the query time, we note that obtaining $S$ requires $\tilde{O}(1)$ time (again by Theorem \ref{thm:expander}), and querying each of the $|S| = O(1)$ items in our count-max $A$ requires $\tilde{O}(1)$ time. To complete the proof, we now consider the output of our algorithm. Since we are searching through a strict subset $S \subset [n^c]$, it suffices to show that if the original algorithm output an $i_j \in [n^c]$, then so will we. As argued in Lemma \ref{lem:runtime}, such a coordinate must be contained in $\mathcal{K}$. By Corollary \ref{cor:countmax}, we must have $|\zeta_{i_j}| > \frac{1}{100} \|\zeta\|_2 \geq  \frac{1}{100} \|\zeta_{\mathcal{K}}\|_2$ with probability $1-n^{-100c}$ (scaling $c$ by $100$ here), thus $i_j \in S$, which completes the proof.

\end{proof}

\section{Estimating the Frequency of the Coordinate Output}\label{sec:estimate}	
In this section, we will show how, conditioned on our algorithm \texttt{$L_p$ Sampler} returning an index $i \in [n]$, we can obtain an estimate $\tilde{f}_i = (1\pm \eps)f_i$ with probability $1-\delta_2$. We now describe how to do this. Our algorithm, in addition to the count-max matrix $A$ used by \texttt{$L_p$ Sampler}, stores a count-sketch matrix $A'$ with, $d' = O(\log(1/\delta_2))$ rows and $O(\gamma) = O(\min\big\{\eps^{-2}, \:  \eps^{-p}\log\big(\frac{1}{\delta_2}\big) \} \big\} )$ columns. Recall in our \texttt{Fast-Update} procedure, for each $i \in [n]$ we hash the top $K_i = O(\log(n))$ largest duplicates $\zeta_{i_j}$ corresponding to $f_i$ individually into $A$, and store the random variables $h_\ell(i_j)$ that determine which buckets in $A$ they hash to. Thus if count-max outputs an $i_j \in [n^c]$ we know that $i_j \in \mathcal{K}$, where  $\mathcal{K} = \cup_{i \in [n]} \cup_{j = 1}^{K_i} \{i_j\}$  as in Section \ref{sec:runtime} (since our algorithm only searches through $\mathcal{K}$ to find a maximizer). Thus it suffices to run the count-sketch instance $A'$ on the stream $\zeta_{\mathcal{K}}$, where $\zeta_{\mathcal{K}}$ is the vector $\zeta$ with the coordinates not in $\mathcal{K}$ set to $0$. Since $K_i = \tilde{O}(1)$, we perform at most $\tilde{O}(1)$ updates to count-sketch at every step in the stream. This requires making $\tilde{O}(1)$ calls to update count-sketch on each stream update, which only increases our update time by an $\tilde{O}(1)$ additive term.

Now if \texttt{$L_p$ Sampler} returns $i_j \in [n^c]$ (corresponding to some duplicate $i_j$ of $i$), then we must have $i_j \in \mathcal{K}$. Thus we can query $A'$ for a value $\tilde{y}_{i_j}$ such that $|\tilde{y}_{i_j} - \zeta_{i_j}| < \sqrt{1/\gamma} \|\zeta_{\text{tail}(1/\gamma) } \|_2$ with probability $1 - \delta_2$ by Theorem \ref{thm:count-sketch}. Furthermore, since  $i_j \in \mathcal{K}$, we can compute the value $I_k$ such that $I_k = (\texttt{rnd}_\nu(1/t_{i_j}))$ by simulating the \texttt{Fast-Update} procedure on an update to $i$.  We will argue that the estimate  $\tilde{f}= \tilde{y}_{i_j} (\texttt{rnd}_\nu(1/t_{i_j}))^{-1}$ satisfies $\tilde{f}=  (1 \pm \eps) f_{i}$.
Putting this together with Theorem \ref{thm:main}, we will obtain the following result.
\begin{theorem}
	There is an algorithm $\mathcal{A}$ which, on a general turnstile stream $f$, outputs $i \in [n]$ with probability $|f_i|^p/\|f\|_p^p(1 \pm \nu) +O(n^{-c})$, and outputs \ttx{FAIL} with probability at most $\delta_1$. Conditioned on outputting some $i \in [n]$, $\mathcal{A}$ will then output $\tilde{f}$ such that $\tilde{f} = (1 \pm \eps)f_i$ with probability $1 - \delta_2$. The space required is $O\big(\big(\log^2(n)(\log\log n)^2 + \beta \log(n) \log(1/\delta_2)\big) \log(1/\delta_1) \big)$ for $p \in (0,2)$, and $O\big(\big(\log^3(n) + \eps^{-2}\log^2(n) \log(1/\delta_2)\big) \log(1/\delta_1) \big)$ for $p=2$, where $\beta =\min\big\{\eps^{-2}, \: \eps^{-p}\log\big(\frac{1}{\delta_2} \big) \} \big\}$ . The update time is $\tilde{O}(\nu^{-1})$ and the query time is $\tilde{O}(1)$
\end{theorem}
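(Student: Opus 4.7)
The plan is to run the $L_p$ sampler of Theorem~\ref{thm:main} in parallel with a count-sketch $A'$ having $\Theta(\log(1/\delta_2))$ rows and $\Theta(\beta)$ columns (replaced by $\Theta(\eps^{-2}\log n)$ in the $p=2$ case), using $4$-wise independent hash functions and Rademacher signs independent of the sampler's count-max $A$. I would feed $A'$ the same transformed stream $\zeta_{\mathcal{K}}$ that the sampler processes, so that $A'$ is a linear sketch of $\zeta_{\mathcal{K}}$; then when the sampler returns an index $i_j\in[n^c]$ corresponding to $i\in[n]$, I would output $\tilde f = \tilde y_{i_j}\cdot\bigl(\ttx{rnd}_\nu(1/t_{i_j}^{1/p})\bigr)^{-1}$, where $\tilde y_{i_j}$ is the count-sketch estimate of $\zeta_{i_j}$. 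Because $\zeta_{i_j} = f_i\cdot\ttx{rnd}_\nu(1/t_{i_j}^{1/p})$, the relative error of $\tilde f$ as an estimate of $f_i$ equals the relative error of $\tilde y_{i_j}$ as an estimate of $\zeta_{i_j}$, so the task reduces to showing $|\tilde y_{i_j}-\zeta_{i_j}| \leq O(\eps)|\zeta_{i_j}|$ with probability $1-\delta_2$, conditional on the sampler outputting $i_j$.

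By Theorem~\ref{thm:count-sketch}, over the independent randomness of $A'$ alone, $|\tilde y_{i_j}-\zeta_{i_j}|\leq O(1/\sqrt{\beta})\|\zeta_{\text{tail}(\beta)}\|_2$ with probability $1-\delta_2/2$, so the remaining task is to bound $\|\zeta_{\text{tail}(\beta)}\|_2$ relative to $|\zeta_{i_j}|$ in the two regimes for $\beta$. For $\beta=\eps^{-2}$ I would use the crude bound $\|\zeta_{\text{tail}(\beta)}\|_2\leq\|\zeta\|_2$: conditional on the sampler returning $i_j$, Corollary~\ref{cor:countmax} gives $|\zeta_{i_j}|\geq\|\zeta\|_2/100$ with probability $1-n^{-c}$, yielding count-sketch error at most $O(\eps)|\zeta_{i_j}|$. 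For $\beta=\eps^{-p}\log(1/\delta_2)$ and $p\in(0,2)$ I would instead invoke Proposition~\ref{prop:1} with $s=\Theta(\beta)$: since $\beta\geq\log(1/\delta_2)$, the tail bound $\|\zeta_{\text{tail}(\beta)}\|_2\leq O(\|F\|_p\beta^{1/2-1/p})$ holds with probability $1-\delta_2/3$ over the exponentials, so the count-sketch error becomes $O(\|F\|_p/\beta^{1/p}) = O(\eps\|F\|_p/\log(1/\delta_2)^{1/p})$. Using the representation $|z_{D(1)}|=\|F\|_p/E_1^{1/p}$ from Fact~\ref{fact:order} together with $\pr{E_1>\log(1/\delta_2)}=\delta_2$, I would then conclude $|\zeta_{i_j}|=\Omega(\|F\|_p/\log(1/\delta_2)^{1/p})$ with probability $1-\delta_2/3$, so the error is $O(\eps)|\zeta_{i_j}|$. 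For $p=2$, Proposition~\ref{prop:1} only gives $\|\zeta_{\text{tail}(\beta)}\|_2 = O(\sqrt{\log n}\,\|F\|_2)$, so I would use $\Theta(\eps^{-2}\log n)$ columns to absorb this extra $\sqrt{\log n}$ factor; the sampler's (rare) success event already implies $|\zeta_{i_j}|=\Omega(\|F\|_2)$.

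The main obstacle is handling the conditioning on the sampler's output when bounding $\|\zeta_{\text{tail}(\beta)}\|_2$, since the sampler and the count-sketch share the stream $\zeta$ and hence the exponential randomness. My resolution is to condition on a realization of the exponentials: given them, the count-sketch event and the sampler's output event become independent because their hashing randomness is independent; the tail bound from Proposition~\ref{prop:1} is a function of the exponentials alone, so it holds with probability $1-\delta_2/3$ unconditionally; and by the near-independence argument underlying Lemma~\ref{lem:main}, conditioning on $D(1)=i_j$ perturbs the distribution of the exponentials by a multiplicative $1\pm\tilde O(\nu)$, which is absorbed into the additive $O(n^{-c})$ error already permitted. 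For space, $A'$ needs only $4$-wise independent hash functions and signs ($O(\log n)$ bits per row of seed), and its buckets contribute $O(\beta\log(n)\log(1/\delta_2))$ bits (resp.\ $O(\eps^{-2}\log^2(n)\log(1/\delta_2))$ for $p=2$), so no additional derandomization machinery beyond that of Theorem~\ref{thm:main} is needed for $A'$. Running $O(\log(1/\delta_1))$ parallel repetitions of the combined algorithm boosts the overall success probability to $1-\delta_1$, and the update and query times remain $\tilde O(\nu^{-1})$ and $\tilde O(1)$ since each update and query to $A'$ costs $\tilde O(1)$.
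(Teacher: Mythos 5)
Your proposal follows essentially the same route as the paper's own proof: an auxiliary count-sketch $A'$ with $O(\log(1/\delta_2))$ rows run on $\zeta_{\mathcal{K}}$, the estimator $\tilde f = \tilde y_{i_j}\cdot(\ttx{rnd}_\nu(1/t_{i_j}^{1/p}))^{-1}$, the crude bound $\|\zeta_{\text{tail}}\|_2 \le \|\zeta\|_2$ combined with $|\zeta_{i_j}| \ge \frac{1}{100}\|\zeta\|_2$ (Corollary \ref{cor:countmax}) for the $\eps^{-2}$ regime, and Proposition \ref{prop:1} plus $|z_{D(1)}| = \|F\|_p/E_1^{1/p}$ with $E_1 \le \log(1/\delta_2)$ for the $\eps^{-p}\log(1/\delta_2)$ regime; your explicit treatment of the conditioning (independent hashing randomness for $A'$, and events determined essentially by the hidden exponentials being nearly unaffected by conditioning on the anti-rank, as in Lemma \ref{lem:main}) is if anything more careful than the paper's brief union bound, and your choice of $\gamma = \Theta(\eps^{-p}\log(1/\delta_2))$ is the parameterization consistent with the stated space.

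The one discrepancy is your $p=2$ case: by enlarging $A'$ to $\Theta(\eps^{-2}\log n)$ columns to absorb the $\sqrt{\log n}$ in the tail bound, each of the $O(\log(n)\log(1/\delta_1))$ parallel repetitions needed for $p=2$ stores $O(\eps^{-2}\log^2(n)\log(1/\delta_2))$ bits, giving $O(\eps^{-2}\log^3(n)\log(1/\delta_2)\log(1/\delta_1))$ in total, a $\log n$ factor above the claimed bound. The fix is simply to reuse your own crude argument for $p=2$: Corollary \ref{cor:countmax} is independent of $p$, so conditioned on the sampler returning $i_j$ one has $|\zeta_{i_j}| \ge \frac{1}{100}\|\zeta\|_2 \ge \frac{1}{100}\|\zeta_{\text{tail}(1/\gamma)}\|_2$, and $\gamma = \Theta(\eps^{-2})$ columns already give error $O(\eps)|\zeta_{i_j}|$ without appealing to Proposition \ref{prop:1} at all; this is what the paper does and it recovers the stated $O(\eps^{-2}\log^2(n)\log(1/\delta_2)\log(1/\delta_1))$ term.
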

\begin{proof}
We first consider the complexity. The first term in each of the upper bounds follows from Theorem \ref{thm:main}, as well as the $\log(1/\delta_1)$ term which comes from repeating the entire algorithm $\log(1/\delta_1)$ times for $p<2$, and $\log(n)\log(1/\delta_1)$ times for $p=2$. The second term in the space bound results from storing the $d' \times \gamma$ count-sketch table $A'$, which is $O(\gamma \log(n) \log(1/\delta_2))$ as stated. Moreover, the update time for the new data structure is at most $\tilde{O}(1)$, since the only additional work we do on each update is to hash $K_i = O(\log(n))$ items into $d' = O(\log(n))$ rows of $A'$. Furthermore, the query time just requires computing a median of $O(\log(n))$ entries of $A'$. Each of these actions is $\tilde{O}(1)$ time in the unit cost RAM model, so the additional update and query time is $\tilde{O}(1)$. The remaining $\tilde{O}(\nu)$ update time follows from Lemma \ref{lem:runtime}. 
	
	 For correctness, note that if \texttt{$L_p$ Sampler} does not fail and instead outputs $i_j \in [n^c]$, we know that $|\zeta_{i_j}| > 1/100 \|\zeta\|_2$. Furthermore, we have $|\tilde{y}_{i_j} - \zeta_{i_j}| < \sqrt{1/\gamma} \|\zeta_{\text{tail}(1/\gamma)} \|_2 \leq\sqrt{1/\gamma} \|\zeta\|_2$ with probability $1-\delta_2$, so setting $\gamma = \Theta(1/\eps^2)$ sufficiently large, it follows that $\tilde{y}_{i_j} = (1 \pm O(\eps)) \zeta_{i_j}$.  Then $\tilde{y}_{i_j} (\texttt{rnd}_\nu(1/t_{i_j}))^{-1} =(1 \pm \eps) f_{i}$ follows immediately from the fact that $f_i = \zeta_{i_j} (\texttt{rnd}_\nu(1/t_{i_j}))^{-1}$ (and a rescaling of $\eps$ by a constant). This shows that $O(\eps^{-2})$ bits is always an upper bound for the value of $\gamma = \Theta(\beta)$ needed for $p \in (0,2]$.

	To show the other upper bound in the definition of $\beta$ (for cases when $p<2$), 
	first define $T_\gamma \subset [n^c]$ as the set of $n^c - \gamma$ smallest coordinates (in absolute value) of $z$. In other words $z_{T_\gamma} = z_{\text{tail}(\gamma)}$, where for any set $S \subseteq [n^c]$ $z_S$ denotes $z$ projected onto the coordinates of $S$. Note that if $S$ is any set of size $n^c - s$ and $v \in \R^{n^c}$ any vector, we have $\|v_{\text{tail}(s)} \|_2 \leq \|v_{S}\|_2 $. 
	Then by Proposition \ref{prop:1}, using the fact that $\zeta_i = (1 \pm O(\nu))z_i$ for all $i \in [n^c]$, we have $\|\zeta_{\text{tail}(\gamma)} \|_2 \leq \|\zeta_{T_\gamma}\|_2 \leq  2 \|z_{\text{tail}(\gamma)}\|_2 = O(\|F\|_p (\gamma)^{-1/p + 1/2})$ for $p < 2$ with probability $1 - O(e^{-\gamma}) > 1 - \delta_2$, where now we are setting $\gamma = \Theta(\max\{\eps^{-p}, \log( 1/\delta_2)	\})$. Condition on this now. Then we obtain error error $|\tilde{y}_{i_j} - \zeta_{i_j}| < \sqrt{1/\gamma} \|\zeta_{\text{tail}(1/\gamma)} \|_2  = O(\|F\|_p \gamma^{-1/p}) = O(\eps (\log(1/\delta_2))^{-1/p} \|F\|_p )$ from our second count-sketch $A'$. Now $z_{D(1)} = \|F\|_p/E_1^{1/p}$, which is at least $\Omega(\|F\|_p/(\log(1/\delta_2))^{1/p})$ with probability greater than $1-\delta_2$ using the pdf of an exponential. Conditioned on this, the error from our second count-sketch $A'$ gives, in fact, a $(1 \pm O(\eps))$ relative error approximation  of $\zeta_{i_j}$, which is the desired result. Note that we conditioned only on our count-sketch giving the desired $|\tilde{y}_{i_j} - \zeta_{i_j}| < \sqrt{1/\gamma} \|\zeta_{\text{tail}(1/\gamma)} \|_2 $ error, on 
	$\|z_{\text{tail}(\gamma)}\|_2 = O(\|F\|_p (\gamma)^{-1/p + 1/2})$, and on $E_1 = O(\log(1/\delta))$, each of which holds with probability at least $1-O(\delta_2)$, so the Theorem follows after a union bound.

\end{proof}

	\section{Lower bounds}
	In this section, we obtain a lower bound for providing relative error approximations of the frequency of a sampled item. 
		Our lower bound is derived from one-way two-party communication complexity. Let $\mathcal{X},\mathcal{Y}$ be input domains to a two party communication complexity problem. Alice is given $x \in \mathcal{X}$ and Bob is given $y \in \mathcal{Y}$. Their goal is to solve some relational problem $Q \subseteq \mathcal{X} \times \mathcal{Y} \times \mathcal{O}$, where for each $(x,y) \in \mathcal{X}\times \mathcal{Y}$ the set $Q_{xy} = \{z | (x,y,z ) \in Q	\}$ represents the set of \textit{correct} solutions to the communication problem. 
	
	In the one-way \textit{communication protocol} $\mathcal{P}$, Alice must send a single message $M$ to Bob (depending on her input $X$), from which Bob must output an answer in $o \in \mathcal{O}$ depending on his input $Y$ and the message $M$. The maximum possible length (in bits) of $M$ over all inputs $(x,y) \in \mathcal{X} \times \mathcal{Y}$ is the \textit{communication cost} of the protocol $\mathcal{P}$. Communication protocols are allowed to be randomized, where each player has \textit{private} access to an unlimited supply of random bits. The protocol $\mathcal{P}$ is said to \textit{solve} the communication problem $Q$ if Bob's output $o$ belongs to $Q_{xy}$ with failure probability at most $\delta < 1/2$. The \textit{one-way communication complexity} of $Q$, denoted $R^{\to}_\delta(Q)$, is the minimum communication cost of a protocol which solves the protocol $Q$ with failure probability $\delta$.

	Now a similar measure of complexity is the \textit{distributional complexity} $D^\to_{\mu ,\delta}(Q)$, where $\mu$ is a distribution over $\mathcal{X} \times \mathcal{Y}$, which denotes the minimum communication cost of the best deterministic protocol of $Q$ with failure probability at most $\delta$ when the inputs $(x,y) \sim \mu$. By Yao's Lemma, we have that $R^{\to}_\delta(Q) = \max_\mu D^\to_{\mu,\delta}(Q)$. We first review some basic facts about entropy and mutual information (see Chapter 2 of \cite{cover2012elements} for proofs of these facts). 
	\begin{proposition} \label{prop:info}$ $\\
		\begin{enumerate}			
			\item Entropy Span: if $X$ takes on at most $s$ values, then $0 \leq H(X) \leq \log s$
			\item $I(X:Y) := H(X) - H(X|Y) \geq 0$, that is $H(X|Y) \leq H(X)$
			\item Chain Rule: $I(X_1,X_2,\dots,X_n :Y|Z) = \sum_{i=1}^n I(X_i:Y|X_1,\dots,X_{i-1},Z)$
			\item Subadditivity: $H(X,Y|Z) \leq H(X|Z) + H(Y |Z)$ and equality holds if and only if $X$ and $Y$ are independent conditioned on $Z$
			\item Fano's Inequality: Let $M$ be a predictor of $X$. In other words, there exists a function $g$ such that $\pr{g(M) = X} > 1-\delta$ where $\delta < 1/2$. Let $\mathcal{U}$ denote the support of $X$, where $\mathcal{U} \geq 2$. Then $H(X|M) \leq \delta \log(|\mathcal{U}| - 1) + h_2(\delta)$, where $h_2(\delta) :=\delta \log(\delta^{-1}) + (1-\delta)\log(\frac{1}{1-\delta})$ is the binary entropy function.
		\end{enumerate}
	\end{proposition}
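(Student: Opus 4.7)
The final statement is a compendium of five standard information-theoretic facts, so the plan is simply to sketch the textbook derivation of each, drawing on the definitions $H(X) = -\sum_x p(x) \log p(x)$, $H(X|Y) = -\sum_{x,y} p(x,y) \log p(x|y)$, and $I(X:Y) = H(X) - H(X|Y)$. Items (1)--(4) are purely calculational, while item (5) (Fano's inequality) is the one with real content and will be the main obstacle.

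For item (1), I would note that $H(X) = \sum_x p(x) \log(1/p(x))$ is a sum of nonnegative terms since $p(x) \in [0,1]$, giving $H(X) \geq 0$. The upper bound $H(X) \leq \log s$ follows from Jensen's inequality applied to the concave function $\log$: writing $H(X) = \mathbb{E}[\log(1/p(X))] \leq \log \mathbb{E}[1/p(X)] = \log(\sum_x \mathbf{1}[p(x)>0]) \leq \log s$. For item (2), nonnegativity of mutual information follows from Jensen applied to the convex function $-\log$: $I(X:Y) = \sum_{x,y} p(x,y) \log \tfrac{p(x,y)}{p(x)p(y)} = \mathrm{KL}(p_{XY} \| p_X p_Y) \geq 0$, since $\mathrm{KL}(p \| q) \geq 0$ for all probability distributions $p,q$. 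Equivalently, $-\mathbb{E}\log(p(X)p(Y)/p(X,Y)) \geq -\log \mathbb{E}[p(X)p(Y)/p(X,Y)] = -\log 1 = 0$.

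For item (3), the chain rule, I would expand both sides using definitions. The right-hand side telescopes because $I(X_i : Y \mid X_1,\ldots,X_{i-1}, Z) = H(X_i \mid X_1,\ldots,X_{i-1}, Z) - H(X_i \mid X_1,\ldots,X_{i-1}, Y, Z)$, and summing over $i$ collapses to $H(X_1,\ldots,X_n \mid Z) - H(X_1,\ldots,X_n \mid Y,Z) = I(X_1,\ldots,X_n : Y \mid Z)$. For item (4), subadditivity, apply item (3) with $Y$ in the role of a single random variable to write $I(X_1, X_2 : Y_{\text{dummy}} \mid Z)$; alternatively and more directly, use $H(X,Y \mid Z) = H(X \mid Z) + H(Y \mid X, Z)$ together with item (2) applied conditionally to get $H(Y \mid X, Z) \leq H(Y \mid Z)$. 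Equality requires $I(X : Y \mid Z) = 0$, i.e., conditional independence.

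The one nontrivial item is Fano's inequality (5). The plan is the standard error-indicator argument. Let $\hat X = g(M)$ be the predictor, and define the error indicator $E = \mathbf{1}[\hat X \neq X]$, so $\Pr[E=1] \leq \delta$. Expand $H(E, X \mid M)$ in two ways using the chain rule:
\begin{align*}
H(E, X \mid M) &= H(X \mid M) + H(E \mid X, M) = H(X \mid M),
\end{align*}
since $E$ is a deterministic function of $X$ and $M$. On the other hand,
\begin{align*}
H(E, X \mid M) &= H(E \mid M) + H(X \mid E, M) \\
&\leq h_2(\delta) + \Pr[E=0] \cdot 0 + \Pr[E=1] \cdot \log(|\mathcal{U}| - 1),
\end{align*}
where $H(E \mid M) \leq H(E) \leq h_2(\delta)$ (using item (2) and the fact that $h_2$ is increasing on $[0, 1/2]$), and where $H(X \mid E=0, M) = 0$ since $X = g(M)$ conditioned on no error, while $H(X \mid E=1, M) \leq \log(|\mathcal{U}|-1)$ by item (1) since $X$ then takes at most $|\mathcal{U}|-1$ values. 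Combining gives $H(X \mid M) \leq \delta \log(|\mathcal{U}|-1) + h_2(\delta)$. The main obstacle, really, is bookkeeping: ensuring that the right conditioning is used and that $H(E \mid M)$ is correctly bounded by $h_2(\delta)$ despite $\Pr[E=1]$ being an upper bound rather than an equality (which is handled by monotonicity of $h_2$ on $[0,1/2]$). Since all five items appear in Cover--Thomas, one could alternatively cite them directly; the proposal above indicates the short self-contained route.
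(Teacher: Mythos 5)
Your proposal is correct; the paper itself gives no proof of Proposition~\ref{prop:info}, deferring entirely to Chapter 2 of the cited textbook, and your derivations (Jensen for items (1)--(2), telescoping for (3), the conditioning-reduces-entropy decomposition for (4), and the error-indicator double expansion of $H(E,X\mid M)$ for Fano) are precisely the standard arguments that citation points to. The one subtlety you flag -- that $\Pr[E=1]$ is only bounded above by $\delta$, handled via monotonicity of $h_2$ on $[0,1/2]$ -- is indeed the only bookkeeping point worth checking, and you handle it correctly.
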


We now define the information cost of a protocol $\mathcal{P}$:
	
	\begin{definition}
		Let $\mu$ be a distribution of the input domain $\mathcal{X} \times \mathcal{Y}$ to a communication problem $Q$. Suppose the unputs $(X,Y)$ are choosen according to $\mu$, and let $M$ be Alice's message to Bob, interpreted as a random variable which is a function of $X$ and Alice's private coins. Then the \textit{information cost }of a protocol $\mathcal{P}$ for $Q$ is defined as $I(X:M)$. 
		
		The \textit{one-way information complexity} of $Q$ with respect to $\mu$ and $\delta$, denoted by $\text{IC}^\to_{\mu,\delta}(Q)$, is the minimum information cost of a one-way protocol under $\mu$ that solves $Q$ with failure probability at most $\delta$.
	\end{definition}

	Note that by Proposition \ref{prop:info}, we have
	\[	I(X:M) = H(M) - H(M|X)\leq H(M) \leq |M| 	\]
	where $|M|$ is the length of the message $M$ in bits. This results in the following proposition.
	\begin{proposition}
		For every probability distribution $\mu$ on inputs,
		\[	R^\to_\delta(Q) \geq IC^\to_{\mu, \delta}(Q)	\]
	\end{proposition}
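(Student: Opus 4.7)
The plan is to exhibit, for any distribution $\mu$, a single protocol whose information cost is at most $R^\to_\delta(Q)$, which immediately bounds the infimum defining $\text{IC}^\to_{\mu,\delta}(Q)$. The protocol will simply be the optimal randomized one-way protocol achieving the communication bound $R^\to_\delta(Q)$, interpreted as a distributional protocol.

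First I would fix any distribution $\mu$ on $\mathcal{X} \times \mathcal{Y}$, and let $\mathcal{P}$ be an optimal randomized one-way protocol for $Q$ with worst-case communication cost $c := R^\to_\delta(Q)$ and worst-case failure probability at most $\delta$. The key point here is that since $\mathcal{P}$ fails with probability at most $\delta$ on every input, it also fails with probability at most $\delta$ when the inputs $(X,Y)$ are drawn from $\mu$; hence $\mathcal{P}$ is a valid protocol under $\mu$ with failure probability at most $\delta$. Therefore its information cost $I(X : M)$, where $M$ is Alice's message as a random variable over the choice of $(X,Y) \sim \mu$ together with her private coins, is an upper bound for $\text{IC}^\to_{\mu,\delta}(Q)$ by definition.

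Next I would apply the chain of inequalities already noted just above the proposition: using items (1) and (2) of Proposition \ref{prop:info}, we have
\[
I(X : M) \;=\; H(M) - H(M \mid X) \;\leq\; H(M) \;\leq\; \log |\text{supp}(M)| \;\leq\; |M|,
\]
where $|M|$ denotes the maximum length in bits of the message. Since $\mathcal{P}$ has worst-case communication cost $c$, we have $|M| \leq c = R^\to_\delta(Q)$ deterministically, hence $I(X : M) \leq R^\to_\delta(Q)$.

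Combining the two steps gives $\text{IC}^\to_{\mu,\delta}(Q) \leq I(X : M) \leq R^\to_\delta(Q)$, which is the desired inequality. There is essentially no obstacle here: the statement is a direct consequence of the basic entropy bounds already recalled, together with the simple observation that a worst-case randomized protocol is also a distributional protocol under any input distribution. The only subtlety worth flagging is to make sure the message length $|M|$ is measured as the worst-case number of bits sent (so that $H(M) \leq |M|$ holds regardless of $\mu$), which matches the definition of communication cost used throughout the section.
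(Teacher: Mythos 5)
Your proposal is correct and follows essentially the same route as the paper: the paper also derives the bound directly from $I(X:M) = H(M) - H(M\mid X) \leq H(M) \leq |M|$, applied to the optimal worst-case randomized protocol viewed as a distributional protocol under $\mu$. Your extra remark that worst-case correctness implies correctness under any $\mu$ is exactly the (implicit) observation the paper relies on.
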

	
	\subsection{Augmented Indexing on Large Domains}

	We now introduce the following communication problem, known as Augmented Index on Large Domains. Our communication problem is derived from the communication problem (of the same name) introduced in \cite{jayram2013optimal}, but we modify the guarantee of the output required so that constant probability of error is allowed. The problem is as follows.
	
	\begin{definition}
			  Let $\mathcal{U}$ be an alphabet with $|\mathcal{U}| = k \geq 2$. Alice is given a string $x \in \mathcal{U}^d$, and Bob is given $i\in [d]$ along with the values $x_{i+1}, x_{i+2},\dots, x_d$. Alice must send a message $M$ to Bob, and then Bob must output the value $x_{i} \in \mathcal{U}$ with probability $3/4$.  
			  We refer to this problem as the \textit{augmented-index problem on large domains}, and denote it by $\textsc{ind}_\mathcal{U}^d$.
	\end{definition}
	Note that in \cite{jayram2013optimal}, a correct protocol is only required to determine whether $x_{i} =  a$ for some fixed input $a \in \mathcal{U}$ given only to Bob, however such a protocol must succeed with probability $1-\delta$. For the purposes of both problems, it is taken that $|\mathcal{U}| = \Theta(1/\delta)$. In this scenario, we note that the guarantee of our communication problem is strictly weaker, since if one had a protocol that determined whether $x_{i} =  a$ for a given $a \in \mathcal{U}$ with probability $1-\delta$, one could run it on all $a \in \mathcal{U}$ and union bound over all $|\mathcal{U}|$ trails, from which the exact value of $x_i$ could be determined with probability $3/4$, thereby solving the form of the communication problem we have described. We show, nevertheless, that the same lower bound on the communication cost of our protocol holds as the lower bound in \cite{jayram2013optimal}.

	Let $\mathcal{X}$ be the set of all  $x \in \mathcal{U}^d$, let $\mathcal{Y} = [d]$, and define $\mu$ to be the uniform distribution over $\mathcal{X} \times \mathcal{Y}$.
	
	\begin{lemma}
	Suppose $|\mathcal{U}| \geq c$ for some sufficiently large constant $c$.	We have $IC^\to_{\mu,3/4}(\textsc{ind}_{\mathcal{U}}^d) \geq d \log(|\mathcal{U})/2$. 
	\end{lemma}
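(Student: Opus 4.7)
The plan is to lower bound $I(X:M)$ directly via the chain rule of mutual information, Fano's inequality, and the fact that Alice's input $X$ is uniform on $\mathcal{U}^d$. Let $X=(X_1,\dots,X_d) \sim \mathcal{U}^d$ uniformly, let $Y \in [d]$ be Bob's index (independent of $X$ and uniform), and let $M$ be Alice's message, which is a function of $X$ and her private coins. Since Alice's message does not depend on $Y$, the protocol's correctness translates into the statement that there is a function $g$ such that for each $i \in [d]$, $g(M, i, X_{i+1},\dots,X_d) = X_i$ with probability $1-\delta_i$, where $\mathbb{E}_{i \sim [d]}[\delta_i] \leq 1/4$ by the $3/4$-correctness assumption under $\mu$.

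First I would apply the chain rule of mutual information in the ``reverse'' order that matches Bob's side information:
\[
I(X : M) \;=\; \sum_{i=1}^{d} I\!\left(X_i : M \,\middle|\, X_{i+1},\dots,X_d\right).
\]
Because the coordinates of $X$ are independent, $H(X_i \mid X_{i+1},\dots,X_d) = H(X_i) = \log|\mathcal{U}|$. Next, for each fixed $i$, the map $(M, X_{i+1},\dots,X_d) \mapsto g(M,i,X_{i+1},\dots,X_d)$ is a predictor of $X_i$ with error at most $\delta_i$, so Fano's inequality (Proposition~\ref{prop:info}, item 5) gives
\[
H\!\left(X_i \,\middle|\, M, X_{i+1},\dots,X_d\right) \;\leq\; \delta_i \log(|\mathcal{U}|-1) + h_2(\delta_i) \;\leq\; \delta_i \log|\mathcal{U}| + 1.
\]
Combining these two displays yields
\[
I\!\left(X_i : M \,\middle|\, X_{i+1},\dots,X_d\right) \;\geq\; (1-\delta_i)\log|\mathcal{U}| - 1.
\]

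Summing over $i$ and using $\sum_i \delta_i \leq d/4$, I obtain
\[
I(X:M) \;\geq\; \left(d - \sum_{i=1}^d \delta_i\right)\log|\mathcal{U}| - d \;\geq\; \tfrac{3d}{4}\log|\mathcal{U}| - d.
\]
Choosing the constant $c$ in $|\mathcal{U}| \geq c$ large enough that $\log|\mathcal{U}| \geq 4$ (e.g.\ $c = 16$) gives $d \leq (d/4)\log|\mathcal{U}|$, so $I(X:M) \geq (d/2)\log|\mathcal{U}|$ as desired. Since any protocol for $\textsc{ind}_{\mathcal{U}}^d$ under $\mu$ with failure probability $1/4$ has information cost at least this quantity, the stated bound on $IC^\to_{\mu,3/4}$ follows.

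The main conceptual point, and the only real ``obstacle,'' is handling that the $3/4$ success guarantee holds on average over $Y$, not pointwise: the clean resolution is to apply Fano coordinate-by-coordinate with $\delta_i$ and exploit linearity when summing, rather than trying to absorb correctness into a single global $\delta$. The constant-$c$ assumption on $|\mathcal{U}|$ is precisely what lets the additive $h_2(\delta_i)$ slack from Fano be absorbed into the leading $\log|\mathcal{U}|$ term.
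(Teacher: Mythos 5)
Your proof is correct and follows essentially the same route as the paper: decompose $I(X:M)$ via the chain rule, use independence of the coordinates to get $H(X_i\mid \cdot)=\log|\mathcal{U}|$, and apply Fano's inequality to the conditional entropy given the message and Bob's side information. If anything, your version is slightly more careful than the paper's—you condition on the suffix $X_{i+1},\dots,X_d$ that Bob actually holds (the paper conditions on the prefix, which mismatches the problem definition though it is harmless by symmetry), and you handle the success guarantee as an average over $i$ via the per-coordinate errors $\delta_i$ rather than assuming it pointwise.
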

\begin{proof}
	Fix any protocol $\mathcal{P}$ for $\textsc{ind}_{\mathcal{U}}^d$ which fails with probability at most $1/4$.
	Let $X = (X_1,X_2,\dots,X_d)$ denote Alice's input as chosen via $\mu$, and let $M$ be Alice's message to Bob given $X$. By Proposition \ref{prop:info}
	\[	I(X:M) = \sum_{i=1}^d I(X_i : M | X_1,\dots,X_{i-1})	\]
	\[=  \sum_{i=1}^d \Big(H(X_i | X_1,\dots,X_{i-1})   -  H(X_i |M, X_1,\dots,X_{i-1})\Big)	\]
	First note that since $X_i$ is independent of $X_j$ for all $j \neq i$, we have $H(X_i | X_1,\dots,X_{i-1})   = H(X_i) = \log(|\mathcal{U}|)$. 
	Now since the protocol $\mathcal{P}$ is correct on $\textsc{ind}_{\mathcal{U}}^d$, then the variables $M, X_1,\dots,X_{i-1}$ must be a predictor for $X_i$ with failure probability $1/4$ (since Bob outputs $X_i$ with probability $3/4$ given only $M, X_1,\dots,X_{i-1}$ and his private, independent randomness). So by Fano's inequality (Proposition \ref{prop:info}), we have 
	\[	H(X_i |M, X_1,\dots,X_{i-1}) \leq \frac{1}{4}\log(|\mathcal{U}| - 1) + h_2(\frac{1}{4})	\]
	\[ \leq \frac{1}{2}\log(|\mathcal{U}|) \]
	which holds when $|\mathcal{U}|$ is sufficiently large. Putting this together, we obtain 
	\[	I(X:M) \geq \frac{d \log(|\mathcal{U}|)}{2}	\]
\end{proof}
\begin{corollary}\label{cor:ind}
	We have $R^\to_{3/4}(\textsc{ind}_\mathcal{U}^d) = \Omega(d\log(|\mathcal{U}|))$.
\end{corollary}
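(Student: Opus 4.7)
The plan is to derive Corollary \ref{cor:ind} as an immediate consequence of the lemma just proved together with the general inequality $R^\to_\delta(Q) \geq IC^\to_{\mu, \delta}(Q)$ established earlier in this section. Specifically, I would instantiate that inequality with $Q = \textsc{ind}_\mathcal{U}^d$, with the failure parameter matching the subscript $3/4$ used in the lemma statement, and with $\mu$ taken to be the uniform distribution on $\mathcal{X} \times \mathcal{Y}$ used throughout the lemma. This gives
\[
R^\to_{3/4}(\textsc{ind}_\mathcal{U}^d) \;\geq\; IC^\to_{\mu, 3/4}(\textsc{ind}_\mathcal{U}^d) \;\geq\; \tfrac{d \log(|\mathcal{U}|)}{2} \;=\; \Omega(d\log(|\mathcal{U}|)),
\]
where the first inequality is the information-vs-communication proposition and the second is the content of the preceding lemma.

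There is essentially no obstacle here: the lemma has already done all the real work of lower-bounding information cost via a chain-rule plus Fano's inequality argument, and the proposition's proof via $I(X:M)\leq |M|$ was already recorded. The only minor sanity check worth stating is that the hypothesis $|\mathcal{U}| \geq c$ for a sufficiently large constant $c$ (as required by the lemma so that $h_2(1/4)$ can be absorbed into $\tfrac{1}{2}\log|\mathcal{U}|$) is harmless for the asymptotic $\Omega$ statement in the corollary, since for constant-sized $\mathcal{U}$ the bound $\Omega(d\log|\mathcal{U}|) = \Omega(d)$ already follows from the standard $\Omega(d)$ lower bound for plain augmented indexing over a binary alphabet by a padding/reduction argument. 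Thus no additional lemmas are needed, and the corollary is immediate from the preceding two results.
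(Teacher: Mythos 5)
Your proposal is correct and matches the paper's intended argument exactly: the corollary follows immediately by combining the proposition $R^\to_\delta(Q) \geq IC^\to_{\mu,\delta}(Q)$ with the preceding lemma's bound $IC^\to_{\mu,3/4}(\textsc{ind}_\mathcal{U}^d) \geq d\log(|\mathcal{U}|)/2$. The extra remark about small $|\mathcal{U}|$ is harmless but unnecessary, since the lemma's hypothesis is the regime in which the bound is applied.
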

	We now use this lower bound on $\textsc{ind}_\mathcal{U}^d$ to show that, even when the index output is from a distribution with constant \textit{additive} error from the true $L_p$ distribution, returning an estimate with probability $1-\delta_2$ still requires $\Omega(\eps^{-p} \log(n) \log(1/\delta_2))$ bits of space.
\begin{theorem}\label{thm:lb}
Fix any $p > 0$ constant bounded away from $0$, and let $\eps < 1/3$ with $\eps^{-p} = o(n)$. Then any $L_p$ sampling algorithm that outputs FAIL with probability at most $1/100$, and otherwise returns an item $\ell \in [n]$ such that $\pr{\ell = l} = |f_l|^p/\|f\|_p^p \pm 1/50$ for all $l \in [n]$, along with an estimate $\tilde{f}_\ell$ such that $\tilde{f}_\ell=  (1 \pm \eps)f_\ell$ with probability $1 - \delta_2$, requires $\Omega(\eps^{-p} \log(n) \log(1/\delta_2))$ bits of space.
\end{theorem}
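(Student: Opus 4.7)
My plan is to prove Theorem \ref{thm:lb} by reduction from $\textsc{ind}_\mathcal{U}^d$, whose one-way communication complexity is $\Omega(d\log|\mathcal{U}|)$ by Corollary \ref{cor:ind}. I will choose parameters so that $d\log|\mathcal{U}| = \Omega(\eps^{-p}\log n\log(1/\delta_2))$; a natural target is $|\mathcal{U}|$ polynomial in $1/\eps$ together with $d$ large enough to absorb the remaining $\log n\log(1/\delta_2)$ factor, subject to the $\poly(n)$ bound on stream frequencies.

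Given $x \in \mathcal{U}^d$, Alice encodes a stream $f$ using $d$ ``levels'': at level $j$, a designated coordinate has frequency $C^j (1+3\eps)^{x_j}$, with the remaining level-$j$ coordinates either absent or carrying the baseline weight $C^j$. The key quantitative feature is that two distinct symbols in $\mathcal{U}$ at the same level produce frequencies separated by at least a factor of $(1+3\eps)$; since $\eps < 1/3$ implies $(1-\eps)(1+3\eps) > (1+\eps)$, the $(1\pm\eps)$ relative-error estimator distinguishes them. Alice sends the sampler+estimator state to Bob, who holds $i$ and $x_{i+1},\dots,x_d$. Bob reverses Alice's level-$j$ insertions for all $j>i$ (reconstructable from $x_{>i}$), leaving a stream dominated by level $i$ provided the inter-level ratio $C^p$ exceeds the total within-level weight variation. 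He queries the sampler+estimator once, obtains $(\ell,\tilde f_\ell)$, uses the identity of $\ell$ to certify that the sample lies in level $i$, and inverts the multiplicative encoding on $\tilde f_\ell$ to recover $x_i$. A suitable choice of $C$ makes the sampler return a level-$i$ coordinate with probability $\ge 1 - O(C^{-p}) - 1/50 - 1/100$, and conditional on this event the estimator succeeds with probability $\ge 1 - \delta_2$, giving overall protocol success $\ge 3/4$. Corollary \ref{cor:ind} then forces the communicated state to have size $\Omega(d\log|\mathcal{U}|) = \Omega(\eps^{-p}\log n\log(1/\delta_2))$.

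The main obstacle is balancing parameters so that all three factors $\eps^{-p}$, $\log n$, and $\log(1/\delta_2)$ emerge together within the $\poly(n)$ frequency bound, which caps $C^d \cdot (1+3\eps)^{|\mathcal{U}|}$. A large $|\mathcal{U}|$, needed so that the estimator's $\eps$-precision purchases an $\eps^{-p}$ factor, inflates the within-level weight range, and the inter-level ratio $C$ must in turn dominate that range, tightly restricting $d$. I anticipate needing either a multi-coordinate per-symbol encoding at each level, spreading each $x_j$ across $\Theta(\log|\mathcal{U}|)$ coordinates each perturbed by a single $(1+3\eps)$ factor, or a direct-sum embedding of $\Theta(\log(1/\delta_2))$ independent augmented-indexing instances in which the $\log(1/\delta_2)$ factor surfaces from the estimator's $(1-\delta_2)$ reliability pooled across copies rather than being packed into a single stream.
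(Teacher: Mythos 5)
There is a genuine gap: as written, your reduction does not produce either the $\eps^{-p}$ factor or the $\log(1/\delta_2)$ factor, and the two remedies you sketch at the end are exactly the unresolved part of the argument. With a single designated coordinate per level carrying $C^j(1+3\eps)^{x_j}$, the $\poly(n)$ frequency cap forces $(1+3\eps)^{|\mathcal{U}|}\cdot C^d \leq \poly(n)$, and the requirement that level $i$ dominate after Bob's deletions forces $C$ to exceed the within-level spread $(1+3\eps)^{|\mathcal{U}|}$; together these give $d\log|\mathcal{U}| = O\big(\log n \cdot \log|\mathcal{U}|/(\eps|\mathcal{U}|)\big)$, which is maximized near $|\mathcal{U}|=O(1)$ at roughly $\eps^{-1}\log n$ -- short of $\eps^{-p}$ for $p>1$ and with no $\delta_2$ dependence at all. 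The $\log(1/\delta_2)$ factor cannot come from a single query whose overall success only needs to be $3/4$: the estimator's $(1-\delta_2)$ reliability is only worth something if the protocol makes many queries whose estimate errors must all be avoided, and the ``direct-sum over $\Theta(\log(1/\delta_2))$ instances'' you mention is not a routine step for one-way communication with a single shared sketch; it is precisely what needs to be constructed.

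The paper's proof resolves both issues with different mechanisms than your multiplicative encoding. It sets $|\mathcal{U}| = \Theta(1/\delta_2)$ and $d = rs$ with $r = \Theta(\eps^{-p})$, $s = \log n$, encoding each symbol as a one-hot vector of weight $B^{i}$ (with $B=10^{1/p}$ giving the geometric level structure). Bob, holding block $i'$ and slot $j'$, makes $k=|\mathcal{U}|$ \emph{separate} queries: for each candidate symbol $\sigma_j$ he plants a spike of magnitude $B^{i'}/(10\eps)$ at the corresponding coordinate, which (a) owns a $\geq 9/10$ fraction of the $L_p$ mass, so the sampler returns it despite the additive $1/50$ distributional slack, and (b) makes the $(1\pm\eps)$ relative estimate resolve, additively, whether Alice's weight-$B^{i'}$ contribution sits under the spike. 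The $\Theta(\eps^{-p})$ slots per block are what let the spike be simultaneously mass-dominant and $\eps$-detectable, yielding the $\eps^{-p}$ factor, and the union bound $k\delta_2 \leq 1/10$ over the $k$ estimate queries is what converts the per-query reliability $1-\delta_2$ into the $\log|\mathcal{U}| = \Theta(\log(1/\delta_2))$ factor. Your plan uses the estimate to read a multiplicatively encoded symbol off one coordinate and uses the sampler only as a certificate, which under-exploits both the sampler (no per-slot targeting) and the estimator (no repeated queries), and this is why the parameter balancing you flag as ``the main obstacle'' cannot be made to work without essentially switching to the spike-and-repeat construction.
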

\begin{proof}
	We reduce via $\textsc{ind}_\mathcal{U}^d$. Suppose we have a streaming algorithm $\mathcal{A}$ which satisfies all the properties stated in the theorem.
	Set $|\mathcal{U}| = 1/(10\delta_2)$, and let $X \in \mathcal{U}^d$ be Alice's input, where $d = rs$ where $r =  \frac{1}{10^{p+1} \eps^p}$ and $s = \log(n)$. Alice conceptually divides $X$ into $s$ blocks $X^1,\dots,X^s$, each containing $r$ items $X^i = X^i_{1},X^i_{2},\dots,X^i_{r} \in \mathcal{U}$. Fix some labeling $\mathcal{U} = \{\sigma_1,\dots,\sigma_k\}$, and let $\pi(X^i_j) \in [k]$ be such that $X^i_j = \sigma_{\pi(X^i_j)}$. Then each $X^i_j$ can be thought of naturally as a binary vector in $\R^{rsk}$ with support $1$, where $(X^i_j)_t = 1$ when $t = (i-1)r + (j-1)k + \pi(X^i_j)$, and $(X^i_j)_t = 0$ otherwise. Set $n' = rsk < n$ for $\eps^{-p} = o(n)$. Using this interpretation of $X^i_j \in \R^{rsk}$, we define the vector $f \in \R^{rsk}$ by
	\[	f = \sum_{i=1}^s \sum_{j=1}^{r}B^i X^i_j 	\]
	Where $B = 10^{1/p}$. 
	Alice can construct a stream with the frequency vector $f$ by making the necessary insertions, and then send the state of the streaming algorithm $\mathcal{A}$ to Bob. Now Bob has some index $i^* \in [d] = [rs]$, and his goal is to output the value of $X^{i'}_{j'}= X_{i^*}$ such that $i^* = (i'-1)r + j'$. Since Bob knows $X_i^j$ for all $(i,j)$ with $i > i'$, he can delete off the corresponding values of $B^i X_j^i$ from the stream, leaving the vector $f$ with the value 
		\[	f = \sum_{i=1}^{i'} \sum_{j=1}^{r}B^i X^i_j 	\]
	For $j \in [k]$, let $\gamma^j \in \R^{rsk}$ be the binary vector  with $\gamma^j_{(i'-1)r + (j'-1)k + j} = B^{i'}/(10\eps)$ and $\gamma^j_t = 0$ at all other coordinates $t \neq (i'-1)r + (j'-1)k + j$. Bob then constructs the streams $f^j = f + \gamma^j$ for $j=1,\dots,k$ sequentially. After he constructs $f^j$, he runs $\mathcal{A}$ on $f^j$ to obtain an output $(\ell_j,\tilde{f}_{\ell_j}^j) \in ([n'] \times \R)  \cup (\{\ttx{FAIL}\} \times \{\ttx{FAIL}\} )$ from the streaming algorithm, where if the algorithm did not fail we have that $\ell_j \in [n']$ is the index output and $\tilde{f}_{\ell_j}^j$ is the estimate of $f_{\ell_j}^j$. By union bounding over the guarantee of $\mathcal{A}$ we have that if $\ell_j \neq \ttx{FAIL}$ then $\tilde{f}_{\ell_j} = (1 \pm \eps)f_{\ell_j}^j$ for all $j=1,2,\dots,k$ with probability $1 - k \delta_2 > 9/10$. Call this event $\mathcal{E}_1$. Conditioned on $\mathcal{E}_1$, it follows that if for each $\ell_j$ with $\ell_j = (i'-1)r + (j'-1)k + j$, if $X^{i'}_{j'} = \sigma_j$ then 
	\[ \tilde{f}_{\ell_j}^j > B^{i'}( 1 + \frac{1}{10\eps})(1 - \eps) > \frac{B^{i'}}{10\eps} + \frac{9}{10}B^{i'}  - \eps B^{i'}  \]
	On the other hand, if $X^{i'}_{j'} \neq \sigma_j$, then we will have 
	\[\tilde{f}_{\ell_j}^j < (B^{i'}/(10\eps))(1 + \eps) = \frac{B^{i'}}{10\eps}  + \frac{B^{i'}}{10} \]\[<  \frac{B^{i'}}{10\eps} + \frac{9}{10}B^{i'}  - \eps B^{i'} \]
	 using that $\eps < 1/3$. Thus if $\ell_j = (i'-1)r + (j'-1)k + j$, Bob can correctly determine whether or not $X^{i'}_{j'} = \sigma_j$. Now suppose that, in actuality, Alice's item was $X^{i'}_{j'} = \sigma_{\tau} \in \mathcal{U}$ for some $\tau \in [k]$. Set $\lambda = (i'-1)r + (j'-1)k + \tau$. To complete the proof, it suffices to lower bound the probability that $\ell_{\tau} \neq\lambda$. 
	
	Thus we consider only the event of running $\mathcal{A}$ on $f^\tau$. We know that with probability $99/100$, $\ell_\tau \neq \ttx{FAIL}$. We write $\mathcal{E}_2$ to denote the event that $\ell_\tau \neq \ttx{FAIL}$. Let $f_{-\lambda}$ be equal to $f$ everywhere except with the coordinate $\lambda$ set equal to $0$. Then
	\[ \|f^\tau_{-\lambda}\|_p^p <	 \sum_{i=1}^{i'} \sum_{j=1}^{r}(B^p)^i 	\]
	\[	\leq   r \sum_{i=1}^{i'}10^i 	\leq (\frac{1}{10^{p+1}\eps^p})\frac{10^{i'+1}}{9} 	\]
	So
	\[	\frac{|f_{\lambda}^\tau|^p }{\|f^\tau_{-\lambda}\|_p^p} \geq \frac{10^{i'}( \frac{1}{10\eps})^{p}}{  (\frac{1}{10^{p+1}\eps^p})\frac{10^{i'+1}}{9}}		\geq \frac{ 9( \frac{1}{10\eps})^{p}}{ (\frac{1}{10\eps})^p} \geq 9\]
	Since $\mathcal{A}$ has $1/50$-additive error, we conclude $\pr{\ell_\tau = \lambda } > 9/10 - 1/50 = 22/25$, and call the event that this occurs $\mathcal{E}_3$. Then conditioned on $\mathcal{E} = \mathcal{E}_1\cap \mathcal{E}_2 \cap \mathcal{E}_3$ Bob sucsessfully recovers the value of $X_{j'}^{i'} = X_{i^*}$, and thus solves the communication problem. Note that the probability of success is $\pr{\mathcal{E}} > 1 - (1/10 + 1/100 + 3/25) > 3/4$, and thus this protocol solves $\textsc{ind}_{\mathcal{U}}^d$.
	So by Corollary \ref{cor:ind}, it follows that any such streaming algorithm $\mathcal{A}$ requires $\Omega(rs \log(|\mathcal{U}|)) = \Omega(\eps^{-p} \log(n) \log(1/\delta_2))$ bits of space. Note that the stream $f$ in question had length $n' < n$ for $p$ constant bounded from $0$, and no coordinate in the stream ever had a value greater than $\poly(n)$, and thus the stream in question is valid in the given streaming model.
\end{proof}

		\section{Conclusion}
	This work demonstrates the existence of perfect $L_p$ samplers for $p \in (0,2)$ using $O(\log^2(n)\log(1/\delta))$ bits of space in the random oracle model. This bound is tight in terms of both $n$ and $\delta$. However, to derandomize our algorithm for $p<2$, our space increases by a $O((\log \log n)^2)$-factor, which is perhaps unnecessary. There are also several other open problems for $L_p$ samplers which this work does not close. Notably, there is still a $\log(n)$ factor gap between the upper and lower bounds for $L_2$ samplers, as the best known lower bound for any $p \geq 0$ is $\Omega(\log^2 n)$, compared to our upper bound of $O(\log^3 n)$. While perfect $L_2$ samplers using polylogarithmic space were not known before this work, our upper bound matches the best upper bounds of prior approximate $L_2$ samplers with constant $\nu = \Omega(1)$. It is therefore an open question whether this additional factor of $\log n$ is required in the space complexity of an $L_2$ sampler, perfect or otherwise.  
	
	Secondly, one notable shortcoming of the perfect sampler presented in this paper is the large update time. To obtain a perfect sampler as defined in the introduction, the algorithm in this paper takes polynomial (in $n$) time to update its data structures after each entry in the stream. This is clearly non-ideal, since most streaming applications demand constant or polylogarithmic update time. Using our rounding procedure, we can obtain a $(1 \pm 1/\poly(\log n))$ relative error sampler with polylogarithmic update time (and the same space as the perfect sampler), but it is still an open problem to design a perfect $L_p$ sampler with optimal space dependency as well as polylogarithmic update time.

	Finally, there are several gaps in the dependency on $\eps,\delta_2$ in our procedure which, in addition to outputting an index $i \in [n]$, also outputs a $(1 \pm \eps)$ estimate of the frequency $f_i$. Taking Theorem \ref{thm:lb} along with the known lower bounds for $L_p$ sampling, our best lower bound for the problem is $\Omega(\log^2(n)\log(1/\delta_1) + \eps^{-p} \log(n) \log(1/\delta_2))$, where $\delta_1$ is the probability that the sampler fails to output an index $i$. On the other hand, our best upper bound is $O\big(\big(\log^2(n) + \beta \log(n) \log(1/\delta_2)\big) \log(1/\delta_1) \big)$ for $p \in (0,2)$, and $O\big(\big(\log^3(n) + \eps^{-2}\log^2(n) \log(1/\delta_2)\big) \log(1/\delta_1) \big)$ for $p=2$, where $\beta =\min\big\{\eps^{-2}, \eps^{-p} \log\big(\frac{1}{\delta_2}\big)  \big\}$. Notably, the $\log(1/\delta_1)$ multiplies the $\log(1/\delta_2)$ term in the upper bound but not in the lower bound. We leave it as an open problem to determine precisely the right dependencies of such an algorithm on $\eps,\delta_1,\delta_2$.

	\section*{Acknowledgments}
	The authors would like to thank Raghu Meka for a helpful explanation of the \cite{gopalan2015pseudorandomness} PRG, and for pointing out how the arguments could be extended to fooling functions of multiple half-spaces (Lemma \ref{lemma:halfspace}). The authors would also like to thank Ryan O'Donnell for a useful discussion on pseudo-random generators in general.
	
	\bibliography{cluster}

\newcommand{\etalchar}[1]{$^{#1}$}
\begin{thebibliography}{CCFC02b}

\bibitem[AKO10]{andoni2010streaming}
Alexandr Andoni, Robert Krauthgamer, and Krzysztof Onak.
\newblock Streaming algorithms from precision sampling.
\newblock {\em arXiv preprint arXiv:1011.1263}, 2010.

\bibitem[BBD{\etalchar{+}}02]{babcock2002models}
Brian Babcock, Shivnath Babu, Mayur Datar, Rajeev Motwani, and Jennifer Widom.
\newblock Models and issues in data stream systems.
\newblock In {\em Proceedings of the twenty-first ACM SIGMOD-SIGACT-SIGART
  symposium on Principles of database systems}, pages 1--16. ACM, 2002.

\bibitem[BCIW16]{braverman2016beating}
Vladimir Braverman, Stephen~R Chestnut, Nikita Ivkin, and David~P Woodruff.
\newblock Beating countsketch for heavy hitters in insertion streams.
\newblock In {\em Proceedings of the forty-eighth annual ACM symposium on
  Theory of Computing}, pages 740--753. ACM, 2016.

\bibitem[BDM02]{babcock2002sampling}
Brian Babcock, Mayur Datar, and Rajeev Motwani.
\newblock Sampling from a moving window over streaming data.
\newblock In {\em Proceedings of the thirteenth annual ACM-SIAM symposium on
  Discrete algorithms}, pages 633--634. Society for Industrial and Applied
  Mathematics, 2002.

\bibitem[BKP{\etalchar{+}}14]{bringmann2014internal}
Karl Bringmann, Fabian Kuhn, Konstantinos Panagiotou, Ueli Peter, and Henning
  Thomas.
\newblock Internal dla: Efficient simulation of a physical growth model.
\newblock In {\em International Colloquium on Automata, Languages, and
  Programming}, pages 247--258. Springer, 2014.

\bibitem[BOZ12]{braverman2012optimal}
Vladimir Braverman, Rafail Ostrovsky, and Carlo Zaniolo.
\newblock Optimal sampling from sliding windows.
\newblock {\em Journal of Computer and System Sciences}, 78(1):260--272, 2012.

\bibitem[CCD11]{DBLP:journals/pvldb/CohenCD11}
Edith Cohen, Graham Cormode, and Nick~G. Duffield.
\newblock Structure-aware sampling: Flexible and accurate summarization.
\newblock {\em {PVLDB}}, 4(11):819--830, 2011.

\bibitem[CCD12]{cohen2012don}
Edith Cohen, Graham Cormode, and Nick Duffield.
\newblock Don't let the negatives bring you down: sampling from streams of
  signed updates.
\newblock {\em ACM SIGMETRICS Performance Evaluation Review}, 40(1):343--354,
  2012.

\bibitem[CCFC02a]{Charikar:2002}
Moses Charikar, Kevin Chen, and Martin Farach-Colton.
\newblock Finding frequent items in data streams.
\newblock In {\em Proceedings of the 29th International Colloquium on Automata,
  Languages and Programming}, ICALP '02, pages 693--703, London, UK, UK, 2002.
  Springer-Verlag.

\bibitem[CCFC02b]{charikar2002finding}
Moses Charikar, Kevin Chen, and Martin Farach-Colton.
\newblock Finding frequent items in data streams.
\newblock {\em Automata, languages and programming}, pages 784--784, 2002.

\bibitem[CDK{\etalchar{+}}09]{cohen2009stream}
Edith Cohen, Nick Duffield, Haim Kaplan, Carsten Lund, and Mikkel Thorup.
\newblock Stream sampling for variance-optimal estimation of subset sums.
\newblock In {\em Proceedings of the twentieth Annual ACM-SIAM Symposium on
  Discrete Algorithms}, pages 1255--1264. Society for Industrial and Applied
  Mathematics, 2009.

\bibitem[CDK{\etalchar{+}}14]{DBLP:journals/jcss/CohenDKLT14}
Edith Cohen, Nick~G. Duffield, Haim Kaplan, Carsten Lund, and Mikkel Thorup.
\newblock Algorithms and estimators for summarization of unaggregated data
  streams.
\newblock {\em J. Comput. Syst. Sci.}, 80(7):1214--1244, 2014.

\bibitem[CMR05]{cormode2005summarizing}
Graham Cormode, S~Muthukrishnan, and Irina Rozenbaum.
\newblock Summarizing and mining inverse distributions on data streams via
  dynamic inverse sampling.
\newblock In {\em Proceedings of the 31st international conference on Very
  large data bases}, pages 25--36. VLDB Endowment, 2005.

\bibitem[CMYZ10]{cormode2010optimal}
Graham Cormode, S~Muthukrishnan, Ke~Yi, and Qin Zhang.
\newblock Optimal sampling from distributed streams.
\newblock In {\em Proceedings of the twenty-ninth ACM SIGMOD-SIGACT-SIGART
  symposium on Principles of database systems}, pages 77--86. ACM, 2010.

\bibitem[CMYZ12]{cormode2012continuous}
Graham Cormode, S~Muthukrishnan, Ke~Yi, and Qin Zhang.
\newblock Continuous sampling from distributed streams.
\newblock {\em Journal of the ACM (JACM)}, 59(2):10, 2012.

\bibitem[Coh15]{cohen2015stream}
Edith Cohen.
\newblock Stream sampling for frequency cap statistics.
\newblock In {\em Proceedings of the 21th ACM SIGKDD International Conference
  on Knowledge Discovery and Data Mining}, pages 159--168. ACM, 2015.

\bibitem[CT12]{cover2012elements}
Thomas~M Cover and Joy~A Thomas.
\newblock {\em Elements of information theory}.
\newblock John Wiley \& Sons, 2012.

\bibitem[Duf04]{duffield2004sampling}
Nick Duffield.
\newblock Sampling for passive internet measurement: A review.
\newblock {\em Statistical Science}, pages 472--498, 2004.

\bibitem[DV06]{deshpande2006adaptive}
Amit Deshpande and Santosh Vempala.
\newblock Adaptive sampling and fast low-rank matrix approximation.
\newblock In {\em Approximation, Randomization, and Combinatorial Optimization.
  Algorithms and Techniques}, pages 292--303. Springer, 2006.

\bibitem[EV03]{DBLP:journals/tocs/EstanV03}
Cristian Estan and George Varghese.
\newblock New directions in traffic measurement and accounting: Focusing on the
  elephants, ignoring the mice.
\newblock {\em {ACM} Trans. Comput. Syst.}, 21(3):270--313, 2003.

\bibitem[FCT15]{farach2015exact}
Mart{\'\i}n Farach-Colton and Meng-Tsung Tsai.
\newblock Exact sublinear binomial sampling.
\newblock {\em Algorithmica}, 73(4):637--651, 2015.

\bibitem[FIS08]{frahling2008sampling}
Gereon Frahling, Piotr Indyk, and Christian Sohler.
\newblock Sampling in dynamic data streams and applications.
\newblock {\em International Journal of Computational Geometry \&
  Applications}, 18(01n02):3--28, 2008.

\bibitem[FKV04]{frieze2004fast}
Alan Frieze, Ravi Kannan, and Santosh Vempala.
\newblock Fast monte-carlo algorithms for finding low-rank approximations.
\newblock {\em Journal of the ACM (JACM)}, 51(6):1025--1041, 2004.

\bibitem[GKM15]{gopalan2015pseudorandomness}
Parikshit Gopalan, Daniek Kane, and Raghu Meka.
\newblock Pseudorandomness via the discrete fourier transform.
\newblock In {\em Foundations of Computer Science (FOCS), 2015 IEEE 56th Annual
  Symposium on}, pages 903--922. IEEE, 2015.

\bibitem[GKMS01]{gilbert2001quicksand}
Anna~C Gilbert, Yannis Kotidis, S~Muthukrishnan, and Martin Strauss.
\newblock Quicksand: Quick summary and analysis of network data.
\newblock Technical report, 2001.

\bibitem[GKMS02]{gilbert2002summarize}
Anna~C Gilbert, Yannis Kotidis, S~Muthukrishnan, and Martin~J Strauss.
\newblock How to summarize the universe: Dynamic maintenance of quantiles.
\newblock In {\em VLDB'02: Proceedings of the 28th International Conference on
  Very Large Databases}, pages 454--465. Elsevier, 2002.

\bibitem[GLH06]{DBLP:conf/vldb/GemullaLH06}
Rainer Gemulla, Wolfgang Lehner, and Peter~J. Haas.
\newblock A dip in the reservoir: Maintaining sample synopses of evolving
  datasets.
\newblock In {\em Proceedings of the 32nd International Conference on Very
  Large Data Bases, Seoul, Korea, September 12-15, 2006}, pages 595--606, 2006.

\bibitem[GLH08]{DBLP:journals/vldb/GemullaLH08}
Rainer Gemulla, Wolfgang Lehner, and Peter~J. Haas.
\newblock Maintaining bounded-size sample synopses of evolving datasets.
\newblock {\em {VLDB} J.}, 17(2):173--202, 2008.

\bibitem[GM98a]{DBLP:conf/sigmod/GibbonsM98}
Phillip~B. Gibbons and Yossi Matias.
\newblock New sampling-based summary statistics for improving approximate query
  answers.
\newblock In {\em {SIGMOD} 1998, Proceedings {ACM} {SIGMOD} International
  Conference on Management of Data, June 2-4, 1998, Seattle, Washington,
  {USA.}}, pages 331--342, 1998.

\bibitem[GM98b]{gibbons1998new}
Phillip~B Gibbons and Yossi Matias.
\newblock New sampling-based summary statistics for improving approximate query
  answers.
\newblock In {\em ACM SIGMOD Record}, volume~27, pages 331--342. ACM, 1998.

\bibitem[GMP]{gibbons1997fast}
Phillip~B Gibbons, Yossi Matias, and Viswanath Poosala.
\newblock Fast incremental maintenance of approximate histograms.

\bibitem[Haa81]{haagerup1981best}
Uffe Haagerup.
\newblock The best constants in the khintchine inequality.
\newblock {\em Studia Mathematica}, 70(3):231--283, 1981.

\bibitem[Haa16]{DBLP:books/sp/16/Haas16}
Peter~J. Haas.
\newblock Data-stream sampling: Basic techniques and results.
\newblock In {\em Data Stream Management - Processing High-Speed Data Streams},
  pages 13--44. 2016.

\bibitem[HNG{\etalchar{+}}07]{huang2007communication}
Ling Huang, XuanLong Nguyen, Minos Garofalakis, Joseph~M Hellerstein, Michael~I
  Jordan, Anthony~D Joseph, and Nina Taft.
\newblock Communication-efficient online detection of network-wide anomalies.
\newblock In {\em INFOCOM 2007. 26th IEEE International Conference on Computer
  Communications. IEEE}, pages 134--142. IEEE, 2007.

\bibitem[HNSS96]{haas1996selectivity}
Peter~J Haas, Jeffrey~F Naughton, S~Seshadri, and Arun~N Swami.
\newblock Selectivity and cost estimation for joins based on random sampling.
\newblock {\em Journal of Computer and System Sciences}, 52(3):550--569, 1996.

\bibitem[HS92]{haas1992sequential}
Peter~J Haas and Arun~N Swami.
\newblock {\em Sequential sampling procedures for query size estimation},
  volume~21.
\newblock ACM, 1992.

\bibitem[Ind06]{indyk2006stable}
Piotr Indyk.
\newblock Stable distributions, pseudorandom generators, embeddings, and data
  stream computation.
\newblock {\em Journal of the ACM (JACM)}, 53(3):307--323, 2006.

\bibitem[JST11]{Jowhari:2011}
Hossein Jowhari, Mert Sa\u{g}lam, and G\'{a}bor Tardos.
\newblock Tight bounds for lp samplers, finding duplicates in streams, and
  related problems.
\newblock In {\em Proceedings of the Thirtieth ACM SIGMOD-SIGACT-SIGART
  Symposium on Principles of Database Systems}, PODS '11, pages 49--58, New
  York, NY, USA, 2011. ACM.

\bibitem[JW13]{jayram2013optimal}
Thathachar~S Jayram and David~P Woodruff.
\newblock Optimal bounds for johnson-lindenstrauss transforms and streaming
  problems with subconstant error.
\newblock {\em ACM Transactions on Algorithms (TALG)}, 9(3):26, 2013.

\bibitem[KNP{\etalchar{+}}17]{kapralov2017optimal}
Michael Kapralov, Jelani Nelson, Jakub Pachocki, Zhengyu Wang, David~P
  Woodruff, and Mobin Yahyazadeh.
\newblock Optimal lower bounds for universal relation, and for samplers and
  finding duplicates in streams.
\newblock {\em arXiv preprint arXiv:1704.00633}, 2017.

\bibitem[Knu98]{DBLP:books/lib/Knuth98}
Donald~Ervin Knuth.
\newblock {\em The art of computer programming, Volume {II:} Seminumerical
  Algorithms, 3rd Edition}.
\newblock Addison-Wesley, 1998.

\bibitem[Kop13]{SwastikLecture}
Swastik Kopparty.
\newblock Lecture 7: eps-biased and almost k-wise independent spaces.
\newblock
  \url{http://sites.math.rutgers.edu/~sk1233/courses/topics-S13/lec7.pdf},
  2013.

\bibitem[LN95]{lipton1995query}
Richard~J Lipton and Jeffrey~F Naughton.
\newblock Query size estimation by adaptive sampling.
\newblock {\em Journal of Computer and System Sciences}, 51(1):18--25, 1995.

\bibitem[LNNT16]{larsen2016heavy}
Kasper~Green Larsen, Jelani Nelson, Huy~L Nguy{\^e}n, and Mikkel Thorup.
\newblock Heavy hitters via cluster-preserving clustering.
\newblock In {\em Foundations of Computer Science (FOCS), 2016 IEEE 57th Annual
  Symposium on}, pages 61--70. IEEE, 2016.

\bibitem[LNS90]{lipton1990practical}
Richard~J Lipton, Jeffrey~F Naughton, and Donovan~A Schneider.
\newblock {\em Practical selectivity estimation through adaptive sampling},
  volume~19.
\newblock ACM, 1990.

\bibitem[LNW14]{li2014turnstile}
Yi~Li, Huy~L Nguyen, and David~P Woodruff.
\newblock Turnstile streaming algorithms might as well be linear sketches.
\newblock In {\em Proceedings of the forty-sixth annual ACM symposium on Theory
  of computing}, pages 174--183. ACM, 2014.

\bibitem[M{\etalchar{+}}05]{muthukrishnan2005data}
Shanmugavelayutham Muthukrishnan et~al.
\newblock Data streams: Algorithms and applications.
\newblock {\em Foundations and Trends{\textregistered} in Theoretical Computer
  Science}, 1(2):117--236, 2005.

\bibitem[McD89]{mcdiarmid1989}
Colin McDiarmid.
\newblock {\em On the method of bounded differences}, page 148–188.
\newblock London Mathematical Society Lecture Note Series. Cambridge University
  Press, 1989.

\bibitem[MCS{\etalchar{+}}06]{mai2006sampled}
Jianning Mai, Chen-Nee Chuah, Ashwin Sridharan, Tao Ye, and Hui Zang.
\newblock Is sampled data sufficient for anomaly detection?
\newblock In {\em Proceedings of the 6th ACM SIGCOMM conference on Internet
  measurement}, pages 165--176. ACM, 2006.

\bibitem[MM12]{DBLP:journals/pvldb/MankuM12}
Gurmeet~Singh Manku and Rajeev Motwani.
\newblock Approximate frequency counts over data streams.
\newblock {\em {PVLDB}}, 5(12):1699, 2012.

\bibitem[MP14]{minton2014improved}
Gregory~T Minton and Eric Price.
\newblock Improved concentration bounds for count-sketch.
\newblock In {\em Proceedings of the twenty-fifth annual ACM-SIAM symposium on
  Discrete algorithms}, pages 669--686. Society for Industrial and Applied
  Mathematics, 2014.

\bibitem[MW10]{monemizadeh20101}
Morteza Monemizadeh and David~P Woodruff.
\newblock 1-pass relative-error lp-sampling with applications.
\newblock In {\em Proceedings of the twenty-first annual ACM-SIAM symposium on
  Discrete Algorithms}, pages 1143--1160. SIAM, 2010.

\bibitem[Nag06]{nagaraja2006order}
HN~Nagaraja.
\newblock Order statistics from independent exponential random variables and
  the sum of the top order statistics.
\newblock {\em Advances in Distribution Theory, Order Statistics, and
  Inference}, pages 173--185, 2006.

\bibitem[Nis92]{nisan1992pseudorandom}
Noam Nisan.
\newblock Pseudorandom generators for space-bounded computation.
\newblock {\em Combinatorica}, 12(4):449--461, 1992.

\bibitem[NZ96]{nisan1996randomness}
Noam Nisan and David Zuckerman.
\newblock Randomness is linear in space.
\newblock {\em Journal of Computer and System Sciences}, 52(1):43--52, 1996.

\bibitem[Olk93]{olken1993random}
Frank Olken.
\newblock {\em Random sampling from databases}.
\newblock PhD thesis, University of California, Berkeley, 1993.

\bibitem[Pri18]{Pricepersonal}
Eric Price.
\newblock Personal communication.
\newblock November, 2018.

\bibitem[TLJ10]{thottan2010anomaly}
Marina Thottan, Guanglei Liu, and Chuanyi Ji.
\newblock Anomaly detection approaches for communication networks.
\newblock In {\em Algorithms for Next Generation Networks}, pages 239--261.
  Springer, 2010.

\bibitem[TW11]{tirthapura2011optimal}
Srikanta Tirthapura and David~P Woodruff.
\newblock Optimal random sampling from distributed streams revisited.
\newblock In {\em International Symposium on Distributed Computing}, pages
  283--297. Springer, 2011.

\bibitem[Vit85a]{vitter1985random}
Jeffrey~S Vitter.
\newblock Random sampling with a reservoir.
\newblock {\em ACM Transactions on Mathematical Software (TOMS)}, 11(1):37--57,
  1985.

\bibitem[Vit85b]{DBLP:journals/toms/Vitter85}
Jeffrey~Scott Vitter.
\newblock Random sampling with a reservoir.
\newblock {\em {ACM} Trans. Math. Softw.}, 11(1):37--57, 1985.

\bibitem[WZ13]{DBLP:journals/corr/abs-1305-5580}
David~P. Woodruff and Qin Zhang.
\newblock Subspace embeddings and {\(l\)}\({}_{\mbox{p}}\)-regression using
  exponential random variables.
\newblock {\em CoRR}, abs/1305.5580, 2013.

\bibitem[WZ16]{woodruff2016distributed}
David~P Woodruff and Peilin Zhong.
\newblock Distributed low rank approximation of implicit functions of a matrix.
\newblock In {\em Data Engineering (ICDE), 2016 IEEE 32nd International
  Conference on}, pages 847--858. IEEE, 2016.

\end{thebibliography}
	\appendix

	\section{Original $L_p$ Sampling via Count-Sketch}\label{sec:appendix}

	In a previous version of this work, we used a slightly different testing Algorithm for the $L_p$ Sampler. Namely, we used the classic count-sketch estimation procedure of Theorem \ref{thm:count-sketch} to obtain a $y$ such that $\|y - \zeta\|_\infty$ is small. We then take the largest coordinate of $y$ as our guess of the maximizer in $\zeta$. The algorithm presented in the current version has the advantage of being slightly simpler, and does not incur the $(\log \log n)^2$ blow-up in space for $p=2$ from the derandomization. In this section, we show how the algorithm in the original version can be derandomized using the general derandomization results for linear sketches of Theorem \ref{thm:derandomGeneral}.  First, we introduce a few preliminary tools that we will need.

	\subsection{Preliminaries}
We first introduce the $L_2$ estimation algorithm of \cite{indyk2006stable}.
	To estimate $\|f\|_2$ for $f \in \R^n$, we generate i.i.d. Gaussians $\varphi_{i,j} \sim \mathcal{N}(0,1)$ for $i \in [n]$ and $j \in [r]$ where $r = \Theta(\log(n))$. We will later derandomize this assumption. We then store the vector $B \in \R^r$ where $B_j = \sum_{i=1}^n f_i \varphi_{i,j}$ for $j=1,\dots,r$,
	which can be computed update by update throughout the stream. We return the estimate $R = \text{median}_j \frac{5|B_j|}{4}$. 
	\begin{lemma}\label{lem:L2}
		For any constant $c > 0$, The value of $R$ as computed in the above algorithm satisfies $\frac{1}{2}\|f\|_2 \leq R \leq 2\|f\|_2$ with probability $1-n^{-c}$.
	\end{lemma}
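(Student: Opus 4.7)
The plan is to use the $2$-stability of Gaussians to reduce the claim to a concentration statement about a median of i.i.d. folded normals, and then apply Chernoff. By Definition \ref{def:stable}, each coordinate $B_j = \sum_{i=1}^n f_i \varphi_{i,j}$ has exactly the distribution of $\|f\|_2 \cdot g_j$ where $g_j \sim \mathcal{N}(0,1)$, and since the $\varphi_{i,j}$ are independent across $j$, the $g_j$'s are i.i.d. Thus $\frac{5|B_j|}{4} = \frac{5\|f\|_2}{4}|g_j|$, and the event $R \in [\tfrac{1}{2}\|f\|_2, 2\|f\|_2]$ becomes an event about the median of the $|g_j|$'s.

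Next I would verify that a single trial lies in the desired window with probability strictly bounded above $1/2$. Concretely, the event $\frac{5|g_j|}{4} \geq 1/2$ is equivalent to $|g_j| \geq 2/5$, which by the standard Gaussian CDF holds with probability $2(1 - \Phi(2/5)) \approx 0.69 > 1/2 + \Omega(1)$; similarly $\frac{5|g_j|}{4} \leq 2$ is equivalent to $|g_j| \leq 8/5$, holding with probability $2\Phi(8/5) - 1 \approx 0.89 > 1/2 + \Omega(1)$. Each of these is a Bernoulli event with bias bounded away from $1/2$ by a constant.

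Finally, I would invoke Chernoff bounds to the $r = \Theta(\log n)$ independent trials $g_1,\dots,g_r$ to argue that, with probability $1 - n^{-c}$ (after choosing the hidden constant in $r$ sufficiently large depending on $c$), strictly more than $r/2$ of the $\frac{5|B_j|}{4}$ values exceed $\frac{1}{2}\|f\|_2$, and strictly more than $r/2$ are bounded by $2\|f\|_2$. A union bound across the two tail events then forces the median $R$ into the interval $[\tfrac{1}{2}\|f\|_2, 2\|f\|_2]$. No step here is an obstacle; the only thing to be slightly careful about is verifying numerically that the per-trial success probability is a constant strictly exceeding $1/2$ on both sides of the window, so that Chernoff applies, and this is immediate from the Gaussian CDF computation above.
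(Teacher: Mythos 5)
Your proposal is correct and follows essentially the same route as the paper: reduce via $2$-stability to a median of i.i.d.\ folded Gaussians, check a per-trial success probability that is a constant strictly above $1/2$, and finish with Chernoff over the $\Theta(\log n)$ repetitions. The only cosmetic difference is that you bound the two one-sided events separately and union bound, while the paper bounds the single two-sided event $|g_j|\in[2/5,8/5]$ (probability $>0.55$); both yield the stated $1-n^{-c}$ guarantee.
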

	\begin{proof}
		Each coordinate $B_j$ is distributed as $|B_j| =| g_j| \|f\|_2$, where $g_j$ are i.i.d. Gaussian random variables. A simple computation shows that $\pr{|g_j| \in [2/5,8/5]} > .55$, and thus $\pr{(5/4)|B_j| \in [1/2 \|f\|_2, 2 \|f\|_2]} > .55$. Then by Chernoff-Hoeffding bounds, the median of $O(\log(n))$ repetitions satisfies this bound with probability $1- n^{-c}$ as stated. 
	\end{proof}
	
	

Finally, we remark that making a simple modification to the classic count-sketch algorithm (see Theorem \ref{thm:count-sketch}), still results in the same error guarantee. Let $A \in \R^{d \times k}$ be a $d \times k$ count-sketch matrix. The modification is as follows: instead of each variable $h_{i}(k)$ being uniformly distributed in $\{1,2,\dots,k\}$, we replace them with variables $h_{i,j,k} \in \{0,1\}$ for $(i,j,k) \in [d] \times [k] \times [n]$, such that $h_{i,j,k}$ are all i.i.d. and equal to $1$ with probability $1/k$. We also let $h_{i,h,k} \in \{1,-1\}$ be i.i.d. Rademacher variables ($1$ with probability $1/2$). Then $A_{i,j} = \sum_{k=1}^n f_k g_{i,j,k} h_{i,j,k}$, and the estimate $y_k$ of $f_k$ is given by:
\[ y_k = \text{median} \{	g_{i,j,k} A_{i,j} \; | \; h_{i,j,k = 1}	\}\]
Thus the element $f_k$ can be hashed into multiple buckets in the same row of $A$, or even be hashed into none of the buckets in a given row. By Chernoff bounds, $|\{	g_{i,j,k} A_{i,j} \; | \; h_{i,j,k = 1}	\}| = \Theta(d)$ with high probability for all $k \in [n]$. Observe, the marginal distribution of each bucket is the same as before, and the thus the original analysis of count-sketch  (\cite{charikar2002finding}) is unchanged, as it only relies on taking the median of $\Theta(d)$ buckets, each of which independently succeed in giving a good estimate with probability at least $2/3$, as is the case here. Thus the bounds of Theorem \ref{thm:count-sketch} apply as usual.

\begin{theorem}\label{thm:count-sketch2}
	Let $A \in \R^{d \times k}$ be the modified count-sketch as described above. 
	If $d = \Theta(\log(n))$, $k = 6/\eps^2$, and $c \geq 1$ is any constant, then we have $\|y - f\|_\infty < \eps\|f_{\text{tail}(1/\eps^2)}\|_2$ with probability $1 - n^{-c}$.
\end{theorem}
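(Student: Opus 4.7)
The plan is to reduce the guarantee to the standard count-sketch median-of-estimators analysis, exploiting the key structural observation that distinct buckets $(i,j)$ use disjoint sets of random variables $\{h_{i,j,\ell}, g_{i,j,\ell}\}_\ell$ and are therefore fully mutually independent (not merely independent across rows as in classical count-sketch). First I would fix any coordinate $k \in [n]$ and define $T_k = \{(i,j) \in [d] \times [6/\eps^2] : h_{i,j,k} = 1\}$, the set of buckets into which $f_k$ is hashed. Since $|T_k| \sim \mathrm{Bin}(6d/\eps^2, 1/k) = \mathrm{Bin}(d, \eps^2 \cdot 6/\eps^2 \cdot 1/6)$, wait more carefully $|T_k| \sim \mathrm{Bin}(6d/\eps^2, \eps^2/6)$ with mean $d$, a Chernoff bound gives $|T_k| \in [d/2, 2d]$ with probability $1 - n^{-c-2}$ for $d = \Theta(\log n)$ with a sufficiently large constant.

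Next, I would analyze a single bucket $(i,j) \in T_k$. Writing $\hat f_k^{(i,j)} := g_{i,j,k} A_{i,j} = f_k + g_{i,j,k} \sum_{\ell \neq k} h_{i,j,\ell} g_{i,j,\ell} f_\ell$, the error decomposes into a contribution from the heavy hitters $H$ (the top $1/\eps^2$ coordinates) and from the tail. Each $\ell \in H \setminus \{k\}$ collides with the bucket with probability $\eps^2/6$; by a union bound over the at most $1/\eps^2$ such $\ell$, with probability at least $5/6$ no heavy hitter collides. Conditioned on this and on the choices of $\{h_{i,j,\ell}\}_{\ell \notin H}$, the remaining error is a Rademacher sum $g_{i,j,k}\sum_\ell g_{i,j,\ell} (h_{i,j,\ell} f_\ell)$ with variance $\sum_{\ell \notin H} h_{i,j,\ell} f_\ell^2$, whose expectation over the $h$'s is $(\eps^2/6)\|f_{\mathrm{tail}(1/\eps^2)}\|_2^2$. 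By Markov and Chebyshev (or just Chebyshev over the whole joint distribution), the single-bucket error exceeds $\eps \|f_{\mathrm{tail}(1/\eps^2)}\|_2$ with probability at most $1/3$.

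Finally, since the random variables indexing different buckets are completely disjoint, the estimators $\{\hat f_k^{(i,j)}\}_{(i,j) \in T_k}$ are mutually independent conditioned on $T_k$. Hence the median over $|T_k| = \Theta(d)$ independent estimators, each correct with probability at least $2/3$, exceeds the error threshold with probability at most $e^{-\Omega(d)} \leq n^{-c-1}$ by a standard Chernoff argument on the number of successful estimators. Union bounding over the low-probability event that $|T_k| \notin [d/2, 2d]$, then over all $k \in [n]$, yields $\|y - f\|_\infty < \eps \|f_{\mathrm{tail}(1/\eps^2)}\|_2$ with probability $1 - n^{-c}$.

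There is really no serious obstacle here: the whole point is that because $h_{i,j,\ell}$ and $g_{i,j,\ell}$ are indexed independently across $(i,j)$, we obtain a strictly stronger independence structure than in the classical construction, and the classical argument goes through per-bucket rather than per-row. The only mild care needed is to condition on the high-probability event $|T_k| = \Theta(d)$ before invoking the median concentration, which is handled by the Chernoff bound on the binomial $|T_k|$ above.
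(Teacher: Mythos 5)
Your proposal is correct and takes essentially the same route as the paper, which argues exactly this way: the number of buckets containing a fixed coordinate is $\mathrm{Bin}(6d/\eps^2,\eps^2/6)$ and hence $\Theta(d)$ with high probability by Chernoff, each such bucket gives an estimate within $\eps\|f_{\text{tail}(1/\eps^2)}\|_2$ with probability at least $2/3$ (heavy-hitter collision plus Chebyshev on the tail), the buckets use disjoint randomness and are therefore independent, and the median concentrates, followed by a union bound over all $n$ coordinates. The only difference is that the paper states this as a brief remark deferring the per-bucket and median steps to the classical count-sketch analysis of \cite{charikar2002finding}, whereas you spell those standard details out explicitly.
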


	\begin{figure}
		\fbox{\parbox{\textwidth}{ \texttt{$L_p$ Sampler}
				\begin{enumerate}[topsep=0pt,itemsep=-1ex,partopsep=1ex,parsep=1ex] 
					\item For $0 < p < 2$, set $\eps = \Theta(1)$, and for $p=2$ set $\eps = \Theta(\sqrt{1/\log(n)})$. Let $d = \Theta(\log(n))$, and instantiate a $d \times 6/\eps^2$ count-sketch table $A$, and set $\mu \sim \text{Uniform}[\frac{1}{2},\frac{3}{2}]$.  
					\item Duplicate updates to $f$ to obtain the vector $F \in \mathbb{R}^{n^c}$ so that $f_i = F_{i_j}$ for all $i \in [n]$ and $j=1,2,\dots,n^{c-1}$, for some fixed constant $c$.
					\item Choose i.i.d. exponential random variables $t = (t_1,t_2,\dots,t_{n^c})$, and construct the stream $\zeta_i = F_i\cdot \texttt{rnd}_\nu (1/t_i^{1/p})$.
					\item Run $A$ on $\zeta$ to obtain estimate $y$ with $|y - |\zeta|\|_\infty < \eps \|\zeta_{\text{tail}(1/\eps^2)}\|_2$ as in Theorem \ref{thm:count-sketch2}.
					\item Run $L_2$ estimator on $\zeta$ to obtain $R \in [\frac{1}{2}\|\zeta\|_2, 2 \|\zeta\|_2 ]$ with high probability.
					\item If $y_{(1)} - y_{(2)} < 100 \mu \eps R$ or if $y_{(2)} < 50 \eps \mu R$,  report \ttx{FAIL}, else return $i \in [n]$ such that $y_{i_j} = y_{(1)}$ for some $j \in [n^{c-1}]$.
				\end{enumerate}
		}}\caption{Our main $L_p$ Sampling algorithm} \label{fig:samplerOld}
	\end{figure}
	\subsection{The $L_p$ Sampler} \label{sec:samplerOld}
	
	We begin by describing the original sampling algorithm, as shown in Figure \ref{fig:samplerOld}. The algorithm duplicates coordinates just as the Sampler of Figure \ref{fig:sampler}, and scales it by inverse $1/p$-th powers of i.i.d. exponentials $1/t_i^{1/p}$. We also perform the same rounding procedure, turning $z$ into $\zeta$. 
Having constructed the transformed stream $\zeta$, we then run a $\Theta(\log(n)) \times 6/\eps^2$ instance $A$ of count-sketch on $\zeta$ to obtain an estimate vector $y$ with $\|y - |\zeta|\|_\infty < \eps \|\zeta_{\text{tail}(1/\eps^2)}\|_2$ with probability $1-n^{-c}$ (as in Theorem \ref{thm:count-sketch2}). Here, for a vector $v \in \R^n$, $|v| \in \R^n$ is the vector such that $(|v|)_i = |v_i|$ for all $i \in [n]$. Thus $y_j$ is an estimate of the \textit{absolute value} $\zeta_j$, and is always positive. This is simply accomplished by taking the absolute value of the usual estimate $y$ obtained from count-sketch.

Then for $0 < p < 2$, we set $\eps = \Theta(1)$, and for $p=2$ we set $\eps = \Theta(1/\sqrt{\log(n)})$. Next, we obtain estimates $R \in [\frac{1}{2}\|\zeta\|_2, 2\|\zeta\|_2]$ via the algorithm of Lemma \ref{lem:L2} with high probability. The algorithm then finds $y_{(1)},y_{(2)}$ (the two largest coordinates of $y$), and samples $\mu \sim \text{Uniform}[1/2,3/2]$.  It then checks if $y_{(1)}-y_{(2)} < 100 \mu \eps R$ or if  $y_{(2)} < 50 \eps \mu R$, and reports \ttx{FAIL} if either occur, otherwise it returns $i \in [n]$ with $y_{i_j} = y_{(1)}$ for some $j \in [n^{c-1}]$.

	Let $i^* \in [n^c]$ be the index of the maximizer in $y$, so $y_{i^*} = y_{(1)}$. By checking that $y_{(1)}-y_{(2)} >100 \mu \eps R$, noting that $100 \mu \eps R \geq 25 \|y - |\zeta|\|_\infty$ and $z_k = (1 \pm \nu)\zeta_k$ for all $k \in [n^{c}]$, for $\nu < \eps$ sufficiently small we ensure that $|z_{i^*}|$ is also the maximum element in $z$. The necessity for the test $y_{(2)} \geq 50 \eps \mu R$ is less straightforward (see Remark \ref{rem:runtimeold} for justification).
	To prove correctness, we need to analyze the conditional probability of failure given $D(1) = i$.  Let $N = |\{i \in [n^c] \: | \: F_i \neq 0 \}|$ ($N$ is the support size of $F$). We can assume that $N \neq 0$ (to check this one could run, for instance, the $O(\log^2(n))$-bit support sampler of \cite{Jowhari:2011}). Note that $n^{c-1} \leq N \leq n^c$. We now will prove the propositions and lemmas needed to demosntrate correctness of this sampler. Lemmas \ref{lem:main2} and \ref{lem:failbound2} are the analogous results to Lemmas \ref{lem:main} and \ref{lem:failbound} in Section \ref{sec:main}, and will follow nearly the same proofs.

	\begin{proposition}\label{prop:inde}
	Let $X,Y \in \R^d$ be random variables where $Z = X + Y$. Suppose $X$ is independent of some event $E$, and let $M > 0$ be such that for every $i \in [d]$ and every $a < b$ we have $\pr{a \leq X_i \leq b} \leq M(b-a)$.  Suppose further that $|Y|_\infty \leq \eps$.  Then if $I = I_1 \times I_2 \times \dots \times I_d \subset \R^n$, where each $I_j = [a_j,b_j] \subset \R$, $- \infty \leq a_j < b_j \leq \infty$ is a (possibly unbounded) interval, then
	\[\pr{Z \in I | E} = \pr{Z \in I } + O(\eps d M)\]
\end{proposition}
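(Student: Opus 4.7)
The plan is to use a sandwiching argument that exploits two things: (i) $|Y|_\infty \leq \eps$ deterministically controls how far $Z$ can wander from $X$, and (ii) the coordinate-wise anti-concentration of $X$ controls the probability that $X$ lies in a thin shell around the boundary of $I$. Conditioning on $E$ will be harmless because the sandwich uses only events depending on $X$, which is independent of $E$.

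Concretely, I would introduce the shrunk and expanded boxes
\[
I^- = \prod_{j=1}^d [a_j + \eps,\, b_j - \eps], \qquad I^+ = \prod_{j=1}^d [a_j - \eps,\, b_j + \eps],
\]
with the obvious modifications when $a_j = -\infty$ or $b_j = +\infty$. Because $\|Y\|_\infty \leq \eps$, we get the pointwise inclusions
\[
\{X \in I^-\} \ \subseteq\ \{Z \in I\} \ \subseteq\ \{X \in I^+\},
\]
and these inclusions of course continue to hold after conditioning on $E$.

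The key quantitative step is to bound $\pr{X \in I^+ \setminus I^-}$. A point $x$ fails to lie in $I^-$ but lies in $I^+$ iff some coordinate $x_j$ falls in the two-sided shell $[a_j - \eps,\, a_j + \eps] \cup [b_j - \eps,\, b_j + \eps]$, a set of Lebesgue measure at most $4\eps$ (less if $I_j$ is unbounded on one side). Applying the hypothesis $\pr{X_j \in [a,b]} \leq M(b-a)$ to each such shell and union-bounding over the $d$ coordinates gives
\[
\pr{X \in I^+ \setminus I^-} \ \leq\ \sum_{j=1}^d 4 \eps M \ =\ 4\eps d M.
\]

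Now I finish by using independence of $X$ and $E$: $\pr{X \in I^-\mid E} = \pr{X \in I^-}$ and likewise for $I^+$. Thus both $\pr{Z \in I}$ and $\pr{Z \in I \mid E}$ lie in the common interval $[\pr{X \in I^-},\, \pr{X \in I^+}]$, whose width is at most $4\eps d M$. The two quantities therefore differ by $O(\eps d M)$, as claimed. There is no real obstacle here; the only thing to be careful about is handling unbounded sides of $I_j$ (the shell is just $2\eps$ instead of $4\eps$ there, which only improves the bound) and making sure the almost-sure bound on $\|Y\|_\infty$ is what the proposition is asserting.
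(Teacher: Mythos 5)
Your proof is correct and follows essentially the same route as the paper's: both sandwich $\{Z \in I\}$ between $\{X \in \underline{I}\}$ and $\{X \in \overline{I}\}$ using $\|Y\|_\infty \leq \eps$, bound the boundary shells by the anti-concentration hypothesis and a union bound over coordinates to get $4\eps d M$, and invoke independence of $X$ from $E$ to equate the conditional and unconditional probabilities of the sandwiching events.
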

\begin{proof}
	For $j \in [d]$, let $\overline{I}_j = [a_j- \eps ,b_j + \eps],\underline{I}_j = [a_j+ \eps ,b_j - \eps]$, and let $\overline{I} = \overline{I}_1 \times \dots \times \overline{I}_d$, and $\underline{I} = \underline{I}_1 \times \dots \times \underline{I}_d$. If one of the endpoints is unbounded we simply use the convention $\infty \pm c = \infty$, $-\infty \pm c = -\infty$ for any real $c$. Then 
	\[\pr{Z \in I | E} \leq \pr{X \in \overline{I} | E} =  \pr{ 
		X \in \overline{I} } \]\[
	\leq  \pr{ X \in \underline{I} } + \pr{\bigcup_{i=1}^d X_i \in \overline{I}_j \setminus \underline{I}_j  }\]
	By the union bound, this is at most $\pr{ X \in \underline{I} } + 4d\eps M \leq \pr{Z \in I} + 4d\eps M$. 	
	Similarly, $\pr{Z \in I | E } \geq \pr{X \in \underline{I}} \geq \pr{X \in \overline{I}} - 4 d \eps M \geq \pr{Z \in I} - 4 d \eps M$.
\end{proof}

\begin{lemma}\label{lem:main2}
	For $p \in (0,2]$ a constant bounded away from $0$ and any $\nu \geq n^{-c}$, $\pr{\neg \ttx{FAIL} \: | \: D(1) } =  \pr{\neg \ttx{FAIL} } \pm O(\log(n) \nu )$ for every possible $D(1) \in [N]$.
\end{lemma}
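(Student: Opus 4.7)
The plan is to mirror the structure of Lemma \ref{lem:main}, with the count-sketch estimator $y$ and the $L_2$ estimator $R$ playing the role that the count-max buckets played there. The starting point is again Lemma \ref{lem:Zconcentration}: on a high-probability event $\mathcal{E}_1$ we have, for every $k < N - n^{9c/10}$,
\[
|\zeta_{D(k)}| \;=\; U_{D(k)}^{1/p} \,+\, U_{D(k)}^{1/p} V_{D(k)}, \qquad |V_{D(k)}| = O(\nu),
\]
where $U_{D(k)}$ depends only on $k$ and the hidden exponentials $E_1,\dots,E_N$ and so is independent of the anti-rank vector $D$. The remaining ``tail'' indices $k > N - n^{9c/10}$ contribute only $O(n^{-c/30}\|\zeta\|_2)$ in an $L_2$ sense (by Proposition \ref{prop:1} and the duplication bound), which will be absorbed into the adversarial noise.

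Next, I would decompose each count-sketch bucket $A_{i,j}$ and each $L_2$-estimator entry $B_j$ as $A_{i,j} = A^*_{i,j} + \mathcal{V}^A_{i,j}$ and $B_j = B^*_j + \mathcal{V}^B_j$, where the starred quantities depend only on the hidden exponentials $E_\tau$ together with the independent count-sketch/$L_2$-sketch randomness (hashes, signs, Gaussians), and are therefore independent of the anti-rank vector $D$ once the count-sketch/$L_2$-sketch randomness is fixed. Repeating the Khintchine/Gaussian-$2$-stability argument of Claim \ref{claim:khine} row by row, and union-bounding over the $O(\log(n)/\eps^2)$ count-sketch buckets and $O(\log n)$ $L_2$-sketch Gaussians, I would show that with probability $1 - \tilde{O}(n^{-c/60})$ on an event $\mathcal{E}_2$ the noise satisfies $|\mathcal{V}^A_{i,j}| = O(\nu \sum_{j'} |A_{i,j'}|)$ and $|\mathcal{V}^B_j| = O(\nu |B_j|)$ \emph{simultaneously} for all relevant $i,j$. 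Since $y_{(1)}, y_{(2)}, R$ are medians/maxima of such buckets, each of them therefore admits a decomposition $q = q^* + \mathcal{V}^q$ with $q^*$ independent of $D$ and $|\mathcal{V}^q| = O(\nu |q|)$.

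At this point the failure event
\[
\texttt{FAIL} \;=\; \{\,y_{(1)}-y_{(2)} < 100\mu\eps R\,\} \,\cup\, \{\,y_{(2)} < 50\eps\mu R\,\}
\]
can be rewritten (after dividing through by $R$, using that on $\mathcal{E}_1$ we have $R = \Theta(\|\zeta\|_2) > 0$) as the event that $\mu$ lies in a union of two intervals whose endpoints are of the form $\tau_1^*, \tau_2^*$ perturbed by $O(\nu)$, where $\tau_1^*, \tau_2^*$ depend only on $(E_\tau)$ and the sketching randomness (hence are independent of $D$). Since $\mu \sim \texttt{Uniform}[\tfrac12,\tfrac32]$ has density bounded by $1$, Proposition \ref{prop:inde} applied with $d=2$, $M=O(1)$, and perturbation $O(\nu)$ gives
\[
\bpr{\neg\texttt{FAIL} \mid D(1)=i,\, E,\, \text{sketch rand.},\, \mathcal{E}_1\cap\mathcal{E}_2} \;=\; \bpr{\neg\texttt{FAIL} \mid E,\, \text{sketch rand.},\, \mathcal{E}_1\cap\mathcal{E}_2} \,\pm\, O(\nu).
\]
Averaging over $E$ and the sketching randomness (which are independent of $D$) as in the final step of the proof of Lemma \ref{lem:main}, and then absorbing the $\tilde{O}(n^{-c/60})$ failure probabilities of $\mathcal{E}_1, \mathcal{E}_2$ into the additive error, yields $\pr{\neg\texttt{FAIL} \mid D(1)} = \pr{\neg\texttt{FAIL}} \pm O(\log(n)\nu)$.

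The main technical obstacle is the simultaneous control of the adversarial noise across all $O(\log(n)/\eps^2)$ count-sketch buckets and $O(\log n)$ $L_2$-sketch entries: one must show, as in Claim \ref{claim:khine}, that each bucket itself is not too small relative to the tail $L_2$ contribution it receives, so that the relative noise is $O(\nu)$ rather than $O(\nu \cdot \text{poly}\log n)$. The $O(\log n)$ factor in the final error bound is exactly the price of this union bound. A secondary subtlety is that $R$, unlike the count-max buckets, depends on the \emph{entire} $\|\zeta\|_2$; verifying that its decomposition $R = R^* + \mathcal{V}^R$ is valid requires using the $2$-stability of Gaussians together with the tail bound of Proposition \ref{prop:1} to argue that the contribution of the $k > N - n^{9c/10}$ indices to each $B_j$ is a negligible fraction of $|B_j|$ itself.
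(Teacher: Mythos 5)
Your overall architecture is the paper's: decompose the sketch into a component determined by the hidden exponentials and the sketching randomness (hence independent of the anti-ranks) plus a small adversarial term, then use the randomized threshold $\mu$ and Proposition \ref{prop:inde} and average over $E$ and the sketch randomness. The genuine gap is in your noise bookkeeping. The bounds you claim, $|\mathcal{V}^A_{i,j}| = O(\nu \sum_{j'}|A_{i,j'}|)$ and especially ``each of $y_{(1)},y_{(2)},R$ admits $q = q^* + \mathcal{V}^q$ with $|\mathcal{V}^q| = O(\nu |q|)$,'' are not what the Khintchine/$2$-stability argument gives and are false in general: a bucket $A_{i,j}$ is a signed sum and can be arbitrarily close to $0$ while the noise it receives (the $V_{D(k)}$-terms plus the $n^{9c/10}$ tail items) has a fixed absolute size of order $\sqrt{\log n}\,\nu\,\|z\|_2$; the same applies to the medians $y_{(1)},y_{(2)}$ and, crucially, to the gap $y_{(1)}-y_{(2)}$, which near the failure boundary has size only $\Theta(\eps\|z\|_2)$. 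The count-max trick you are importing (compare the noise to the \emph{larger} of two buckets) works there because the failure test itself is a comparison of those two buckets scaled by per-bucket uniforms $\mu_{i,j}$; here the test compares $y_{(1)}-y_{(2)}$ and $y_{(2)}$ against $\eps\mu R$, so a bound relative to row sums or to $|q|$ does not control the quantity that actually sits next to the threshold. The correct statement (the paper's Claim on the count-sketch buckets and the Gaussian sketch $B$) is the \emph{absolute} bound: all buckets and all entries of $B$ carry noise $O(\sqrt{\log n}\,\nu\,\|z\|_2)$ simultaneously with probability $1-O(\log^2(n)n^{-c})$, so $\Lambda(y,\mu R) = \Lambda(\vec U^*,\mu U_R^*) + \overline V$ with $\|\overline V\|_\infty = O(\sqrt{\log n}\,\nu\,\|z\|_2)$.

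This also breaks your final accounting. With the correct absolute noise bound, dividing by $R$ perturbs the endpoints of the $\mu$-intervals by $O(\sqrt{\log n}\,\nu\,\|z\|_2)/(\eps R) = O(\sqrt{\log n}\,\nu/\eps)$, not $O(\nu)$, so the conditional error in Proposition \ref{prop:inde} (with density $M = O(1/(\eps U_R^*))$ and $U_R^* = \Omega(\|z\|_2)$) is $O(\sqrt{\log n}\,\nu/\eps)$. The $O(\log(n)\nu)$ in the lemma statement comes precisely from this: the Khintchine factor $\sqrt{\log n}$ times $1/\eps = \Theta(\sqrt{\log n})$ when $p=2$ (and only $O(\sqrt{\log n}\,\nu)$ when $p<2$, where $\eps=\Theta(1)$). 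It does not come from ``the price of the union bound'' over the $O(\log(n)/\eps^2)$ buckets — that union bound only affects the probability of the conditioning event $\mathcal{E}_2$, which contributes an additive $n^{-c}$-type term, not a factor multiplying $\nu$. So your final bound happens to match, but the step asserting a conditional error of $\pm O(\nu)$ does not hold as written and needs to be replaced by the absolute-noise/anti-concentration computation above.
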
 
\begin{proof}
By Lemma \ref{lem:Zconcentration}, conditioned on $\mathcal{E}_1$, for every $k < N-n^{9c/10}$ we have $|z_{D(k)}| =  U_{D(k)}^{1/p} (1 \pm O(n^{-c/10}))^{1/p} =  U_{D(k)}^{1/p} (1 \pm O(\frac{1}{p}n^{-c/10}))$ (using the identity $(1+x) \leq e^{x}$ and the Taylor expansion of $e^x$), where $U_{D(k)} = (\sum_{\tau=1}^{k} \frac{E_\tau}{\ex{\sum_{j=\tau}^{N} |F_{D(j)}|^p }})^{-1}$ is independent of the anti-rank vector $D$ (in fact, it is totally determined by $k$ and the hidden exponentials $E_i$). Then for $c$ sufficiently large, we have $|\zeta_{D(k)}| =U_{D(k)}^{1/p} (1 \pm O(\nu))$, and so for all $p \in (0,2]$ and $k < N - n^{9c/10}$
	
	\[	|\zeta_{D(k)}| = U_{D(k)}^{1/p} + U_{D(k)}^{1/p} V_{D(k)}\]
	Where $V_{D(k)}$ is some random variable that satisfies $|V_{D(k)}| = O(\nu)$.	
		Now consider a bucket $A_{i,j}$ for $(i,j) \in [d] \times [6/\eps^2]$.  Let $\sigma_k = \ttx{sign}(z_k) = \ttx{sign}(\zeta_k)$ for $k \in [n^c]$. Then we write $A_{i,j} =\sum_{k \in B_{ij}}\sigma_{D(k)} |\zeta_{D(k)}| g_{i,j,D(k)} + \sum_{k \in S_{ij}}\sigma_{D(k)} |\zeta_{D(k)}| g_{i,j,D(k)}$ where $B_{ij} = \{k \leq N - n^{9c/10}   \:|\: h_{i,j,D(k)} =1 \}$ and  $S_{ij} = \{n^c\geq k > N - n^{9c/10} \:| h_{i,j,D(k)} =1 \}$ (see notation above Theorem \ref{thm:count-sketch2}). Here we define $\{D(N+1),\dots,D(n^c)\}$ to be the set of indices $i$ with $F_i=0$ (in any ordering, as they contribute nothing to the sum).  So 
		\[A_{i,j} =\sum_{k \in B_{ij}} g_{i,j,D(k)} \sigma_{D(k)} U_{D(k)}^{1/p} + \sum_{k \in B_{ij}} g_{i,j,D(k)}\sigma_{D(k)} U_{D(k)}^{1/p} V_{D(k)} +  \sum_{k \in S_{ij}} g_{i,j,D(k)} \zeta_{D(k)}  \]
		Importantly, observe that since the variables $h_{i,j,D(k)}$ are fully independent, the sets $B_{i,j},S_{i,j}$ are independent of the anti-rank vector $D$. In other words, the values $h_{i,j,D(k)}$ are independent of the values $D(k)$ (and of the entire anti-rank vector). Note that this would not necessarily be the case if these were only $\ell$-wise independent for some $\ell = o(n^c)$. So we can condition on a fixed set of values  $\{h_{i,j,D(1)},\dots, h_{i,j,D(n^c)}\}$ now, which fixes the sets $B_{i,j},S_{i,j}$.
		
		\begin{claim}\label{claim:khine2}
			For all $i,j$ and $p \in (0,2]$, we have $|\sum_{k \in B_{ij}} g_{i,j,D(k)}\sigma_{D(k)}U_{D(k)}^{1/p} V_{D(k)}| +|\sum_{k \in S_{ij}} g_{i,j,D(k)} \zeta_{D(k)}| = O(\sqrt{\log(n)}\nu \|z\|_2 )$ with probability $1-O(\log^2(n) n^{-c})$.
		\end{claim}
		\begin{proof}
			By Khintchines's inequality (Fact \ref{fact:khintchine}), we have $|\sum_{k \in S_{ij}} g_{i,j,D(k)} \zeta_{D(k)}| = O(\sqrt{\log(n)}) (\sum_{k \in S_{i,j}} \allowbreak (2z_{D(k)})^2)^{1/2}$  with probability $1 - n^{-c}$.  This is a sum over a subset of the $n^{9c/10}$ smallest items $|z_i|$, and thus $\sum_{k \in S_{i,j}} z_{D(k)}^2 < \frac{n^{9c/10}}{N} \|z\|_2^2$, giving $|\sum_{k \in S_{ij}} g_{i,j,D(k)} \zeta_{D(k)}| =  O(\sqrt{\log(n)} n^{-c/30}\|z\|_2)$. Furthermore, using the fact that for $k \leq N - n^{9c/10}$ we have $|\zeta_{{D(k)}}| < 2U_{{D(k)}}^{1/p}$ and $|V_{{D(k)}}| = O(\nu)$, we have $|\sum_{k \in B_{ij}} g_{i,j,D(k)}\sigma_{D(k)} U_{D(k)}^{1/p} V_{D(k)}| = O(\sqrt{\log(n)} \nu \|z\|_2)$ with probability $1 - n^{-c}$ again by Khintchine's inequality, as needed.				
			Note there are only $O(\eps^{-2}\log(n)) = O(\log^2(n))$ (for $p< 2$ this is $O(\log(n))$) terms $|\sum_{k \in B_{ij}} g_{i,j,D(k)} \sigma_k U_{D(k)}^{1/p} V_{D(k)}| + |\sum_{k \in S_{ij}} g_{i,j,D(k)} \zeta_{D(k)}|$ which ever occur in all of the $A_{i,j}$'s, since the count-sketch has size $O(\eps^{-2} \log(n))$. Union bounding over these buckets, and taking $c$ sufficiently large, the claim follows.					
		\end{proof}

		Call the event where the Claim \ref{claim:khine2} holds $\mathcal{E}_2$. Conditioned on $\mathcal{E}_2$, we can decompose $|A_{i,j}|$ for all $i,j$ into $|\sum_{k \in B_{ij}} g_{i,j,D(k)}\sigma_{D(k)}U_{D(k)}^{1/p}| + \mathcal{V}_{ij}$ where $\mathcal{V}_{ij}$ is some random variable satisfying $|\mathcal{V}_{ij}| = O(\sqrt{\log(n)}\nu \|z\|_2)$ and $\sum_{k \in B_{ij}} g_{i,j,D(k)}\sigma_{D(k)}U_{D(k)}^{1/p}$ is independent of the anti-rank vector $D$ (it depends only on the hidden exponentials $E_k$, and the uniformly random signs $g_{i,j,D(k)}\sigma_{D(k)}$). Let $U_{ij}^* = |\sum_{k \in B_{ij}} g_{i,j,D(k)}\sigma_{D(k)}U_{D(k)}^{1/p}|$. Let $\Gamma(k) = \{(i,j) \in  [d] \times [k] \; | \; h_{i,j,D(k)} =1\}$. Then our estimate for $|\zeta_{D(k)}|$ is $y_{D(k)} = \text{median}_{(i,j) \in \Gamma(k)}\{ U_{i,j}^* + \mathcal{V}_{i,j} \} = \text{median}_{(i,j) \in \Gamma(k)}\{U_{i,j}^*\} + \mathcal{V}_{D(k)}^*$ where $|\mathcal{V}_{D(k)}^*| = O(\sqrt{\log(n)}\nu \|z\|_2)$ for all $k \in [n^c]$.
				
		We now consider our $L_2$ estimate, which is given by $R = \frac{5}{4}\text{median}_j\{|\sum_{k \in [n^c]} \varphi_{kj} \zeta_k|\}$ where the $\varphi_{kj}$'s are i.i.d. normal Gaussians. We can write this as 
		\[R = \frac{5}{4}\text{median}_j\Big\{\Big|\sum_{k \in B} \varphi_{D(k)j} \sigma_{D(k)} U_{D(k)}^{1/p} + (\sum_{k \in B} \varphi_{{D(k)}j} \sigma_{D(k)} U_{D(k)}^{1/p} V_{D(k)} + \sum_{k \in S} \varphi_{{D(k)}j} \zeta_{D(k)})\Big|\Big\}\]
		 where $B = \cup_{ij} B_{ij}$ and $S = [n^c] \setminus B$. Now the $\varphi_{{D(k)}j}$'s are not $\pm 1$ random variables, so we cannot apply Khintchine's inequality. However, by the $2$-stability of Gaussians (Definition \ref{def:stable}), if $\varphi_1,\dots,\varphi_n$ are i.i.d. Gaussian, then $\pr{|\sum_{i} \varphi_i a_i| > O(\sqrt{\log(n)}) \|a\|_2} =\pr{|\varphi| \|a\|_2> O(\sqrt{\log(n)}) \|a\|_2}$, where $\varphi$ is again Gaussian. This latter probability can be bounded by $n^{-c}$ via the pdf of a Gaussian, which is the same bound as Khintchine's inequality. So applying the same argument as in Claim \ref{claim:khine2}, we have $R = \frac{5}{4}\text{median}_j \{(|\sum_{k \in B} \varphi_{{D(k)}j} \sigma_{D(k)}\allowbreak U_{D(k)}^{1/p}|\} + \mathcal{V}_R$ with probability $1 - O(n^{-c})$ where $|\mathcal{V}_R| = O(\sqrt{\log(n)}\nu\|z\|_2)$. Call this event $\mathcal{E}_3$. By the symmetry of Gaussians, the value $\varphi_{{D(k)}j} \sigma_{D(k)}$ is just another i.i.d. Gaussian, so $|\sum_{k \in B} \varphi_{{D(k)}j} \sigma_{D(k)} U_{D(k)}^{1/p}|$ is independent of the anti-rank vector.
		
		Let $U_{D(k)}^* = \text{median}_{(i,j) \in \Gamma(k)}\{U_{i,h_i(D(k))}^* \}$ for $k \in [n^c]$, and $U_R^* =\frac{5}{4}\text{median}_j(|\sum_{k \in B} \varphi_{{D(k)}j}\sigma_{D(k)} U_{D(k)}^{1/p}|)$. Then both $U_{D(k)}^*,U_R^*$ are independent of the anti-ranks $D(k)$ (the former does, however, depend on $k$), and $y_{D(k)} = U_{D(k)}^* + V_{D(k)}^*$. 
		Now to analyze our failure condition, we define a deterministic function $\Lambda(x,v) \in \R^2$. For vector $x$ and a scalar $v$, set $\Lambda(x,v)_1 = x_{(1)} - x_{(2)} - 100 \eps v $, and $\Lambda(x,v)_2 = x_{(2)} - 50\eps v$. Note $ \Lambda(y,  \mu R) \geq 0$ (coordinate-wise) if and only if $\neg$\ttx{FAIL}.
		
		\begin{claim}
			Conditioned on $\mathcal{E}_1 \cap \mathcal{E}_2 \cap\mathcal{E}_3$, we have the decomposition $\Lambda(y, \mu R) = \Lambda(\vec{U}^*,  \mu U_R^*) + \overline{V}$ where the former term is independent of the max index and $\|\overline{V}\|_\infty = O(\sqrt{\log(n)}\nu\|z\|_2)$.
		\end{claim}
		\begin{proof}
			We have shown that $ |\mathcal{V_R}|$ and $|\mathcal{V}_{D(k)}^*|$ are both  $O(\sqrt{\log(n)}\nu\|z\|_2)$ for all $k \in [n^c]$ conditioned on $\mathcal{E}_1 \cap \mathcal{E}_2 \cap \mathcal{E}_3$. We have $y = \vec{U}^* + \vec{\mathcal{V}}^*$, where $\vec{U}^*_{D(k)} = U^*_{D(k)}$ and $\vec{V}^*_{D(k)} = V^*_{D(k)}$, so $\vec{\mathcal{V}}^*$ can change the value of the two largest coordinates in $y$ by at most $\|\vec{\mathcal{V}}^*\|_\infty =O(\sqrt{\log(n)}\nu\|z\|_2)$. Similarly $|\mathcal{V_R}|$ can change the value of $R$ by at most $O(\sqrt{\log(n)}\nu\|z\|_2)$, which completes the proof of the decomposition. To see the claim of independence, note that $\Lambda(\vec{U}^*,  \mu U_R^* ) $ is a deterministic function of the hidden exponentials $E_1,\dots,E_{N}$, the random signs $g$, and the uniform random variable $\mu$, the joint distribution of all of which is marginally independent of the anti-rank vector $D$, which completes the claim.  
		\end{proof}
		
		To complete the proof of the Lemma, it suffices to show the anti-concentration of $\Lambda(\vec{U}^*, \mu U_R^*)$. Now for any interval $I$
		\[	\pr{\Lambda(\vec{U}^*,  \mu U_R^* )_1 \in I} = \pr{ \mu \in I' / (100 \eps U_R^*)}	\]
		\[	= O(|I|/ (\eps U_R^*))	\]
		and 
		\[	\pr{\Lambda(\vec{U}^*, \mu U_R^*)_2 \in I} = \pr{ \mu \in I'' / (50 \eps U_R^*)}	\]
	\[	= O(|I|/ (\eps U_R^*))	\]
		where $I'$ and $I''$ are the result of shifting the interval $I$ by a term which is independent of $\mu$. Here $|I| \in [0,\infty]$ denotes the size of the interval $I$. Thus it suffices to lower bound $U_R^* $. We have $ 2U_R^*>R>\frac{1}{2}\|z\|_2$ after conditioning on the success of our $L_2$ estimator, an event we call $\mathcal{E}_4$, which holds with probability $1 - n^{-c}$ by Lemma \ref{lem:L2}. Thus $	\pr{\Lambda(\vec{U}^*,  \mu U_R^*)_1 \in I} = O(\eps^{-1}|I|/\|z\|_2)$ and  $	\pr{\Lambda(\vec{U}^*,  \mu U_R^*)_2 \in I} = O(\eps^{-1}|I|/\|z\|_2)$ for any interval $I$.
	So by Proposition \ref{prop:inde}, conditioned on $\mathcal{E}_1 \cap \mathcal{E}_2 \cap \mathcal{E}_3\cap \mathcal{E}_4$ we have
	
		\begin{equation}\label{eqn:prob}
		\bpr{\Lambda(y, \mu R) \geq  \vec{0} \in \R^2 \:\big|\: D(1)  } = \bpr{\Lambda(y,  \mu R) \geq  \vec{0} } \pm  O(\log(n)\nu)
		\end{equation}		
		Note that $\mathcal{E}_1 \cap \mathcal{E}_2 \cap \mathcal{E}_3\cap \mathcal{E}_4$ holds with probability $1 - O(n^{-c+1})$, so choosing $c$ such that $n^{-c} < \log(n) \nu$, Equation \ref{eqn:prob} holds without conditioning on $\mathcal{E}_1 \cap \mathcal{E}_2 \cap \mathcal{E}_3\cap \mathcal{E}_4$, which completes the proof of the lemma. 
	\end{proof}

	
	\begin{lemma}\label{lem:failbound2}
		If $y$ is the vector obtained via count-sketch as in the algorithm \texttt{$L_p$ Sampler}, and $0 < p \leq 2$ a constant, then we have $\pr{y_{(1)} - y_{(2)} > 100 \eps \mu R , \; y_{(2)} > 50 \eps \mu R } \geq 1/2$, where $\eps = \Theta(1)$ when $p < 2$, and $\eps = \Theta(1/\sqrt{\log(n)})$ when $p = 2$.
	\end{lemma}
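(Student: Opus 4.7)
The plan is to show that on a good event $\mathcal{G}$ of probability at least $1/2$, both $y_{(1)}-y_{(2)} > 100\eps\mu R$ and $y_{(2)} > 50\eps\mu R$ hold. The event $\mathcal{G}$ will consist of: (a) the top two order statistics of $z$ being well separated and not too small; (b) the tail bound of Proposition~\ref{prop:1} with $s=4$; and (c) the $L_2$ estimator of Lemma~\ref{lem:L2} and the count-sketch of Theorem~\ref{thm:count-sketch2} both succeeding. Together these will force the desired inequalities once $\eps$ is taken to be a sufficiently small constant (for $p<2$) or a sufficiently small constant times $1/\sqrt{\log n}$ (for $p=2$).

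I will work from the exact distributional formula $|z_{D(k)}|^{-p} = \sum_{\tau=1}^k E_\tau / \sum_{j=\tau}^N |F_{D(j)}|^p$ from Section~\ref{sec:expo}. By Fact~\ref{fact:avgweight} and the duplication step, $\sum_{j=\tau}^N |F_{D(j)}|^p = (1\pm n^{-\Omega(c)})\|F\|_p^p$ for every $\tau \leq 16$, so $|z_{D(k)}|^p = (1\pm o(1))\|F\|_p^p/(E_1+\cdots+E_k)$ in that range. Setting $S = E_1+E_2$, and conditioning on $S$, the variable $E_1$ is uniform on $[0,S]$, so for any $\alpha > 1$ the event $\{E_1 \leq S/\alpha\}$ has conditional probability $1/\alpha$. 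Taking $\alpha$ just barely larger than $1$ and $\beta$ a sufficiently large constant, the event $\{E_1 \leq S/\alpha\} \cap \{S \leq \beta\} \cap \{E_3,\ldots,E_{16} \geq 1/100\}$ has probability arbitrarily close to $1$. On this event, $|z_{D(2)}| \geq (1-o(1))\beta^{-1/p}\|F\|_p$ and the gap $|z_{D(1)}| - |z_{D(2)}| \geq (1-o(1))\beta^{-1/p}(\alpha^{1/p}-1)\|F\|_p$ are both $\Omega(\|F\|_p)$ with constants independent of $\eps$, while the top $16$ order statistics contribute at most $O(\|F\|_p)$ to $\|z\|_2$.

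Combining with Proposition~\ref{prop:1}, $\|z\|_2 = O(\|F\|_p)$ for $p<2$ and $O(\sqrt{\log n}\|F\|_p)$ for $p=2$, hence $R \leq 2\|\zeta\|_2$ is of the same order, and the count-sketch error $\|y-|\zeta|\|_\infty \leq \eps\|\zeta\|_2$ is a small constant multiple of $\|F\|_p$. The failure thresholds $100\eps\mu R$ and $50\eps\mu R$ are thus at most $300\eps\|\zeta\|_2$ and $150\eps\|\zeta\|_2$ respectively. Because $\zeta_i = (1\pm O(\nu))z_i$, we have $y_{(1)}-y_{(2)} \geq |\zeta_{D(1)}|-|\zeta_{D(2)}| - 2\eps\|\zeta\|_2$ and $y_{(2)} \geq |\zeta_{D(2)}|-\eps\|\zeta\|_2$. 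For $\eps$ a small enough constant when $p<2$, or $\eps = c_0/\sqrt{\log n}$ with $c_0$ small enough when $p=2$, the right-hand sides strictly exceed the failure thresholds. A union bound over the defining events of $\mathcal{G}$, each of which holds with probability at least $1/2+\Omega(1)$ or $1-n^{-\Omega(1)}$, gives $\pr{\mathcal{G}} \geq 1/2$, as desired.

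The main technical nuisance is the constant-tuning: one must verify that $\alpha$, $\beta$, and the tail constants in Proposition~\ref{prop:1} can be picked so that $\pr{\mathcal{G}} \geq 1/2$ while the gap and second-largest bounds simultaneously exceed the failure thresholds $100\eps\mu R$ and $50\eps\mu R$. The crucial observation that makes this compatible is that the gap and second-largest bounds on $\mathcal{G}$ do not depend on $\eps$, so shrinking $\eps$ provides unlimited slack in the threshold comparison without diminishing $\pr{\mathcal{G}}$. Note that, unlike in Lemma~\ref{lem:main2}, no independence-of-anti-ranks argument is needed here, since we bound only the unconditional non-failure probability.
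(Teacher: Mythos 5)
Your argument follows essentially the same route as the paper's proof: bound the tail via Proposition~\ref{prop:1}, control the top order statistics through the exponential representation $|z_{D(k)}|^{-p}=\sum_{\tau\le k}E_\tau/\sum_{j\ge\tau}|F_{D(j)}|^p$, condition on the success of count-sketch and the $L_2$ estimator, and observe that the gap and the second-largest coordinate are $\Omega(\|F\|_p)$ with constants independent of $\eps$, so shrinking $\eps$ (by a constant for $p<2$, by $1/\sqrt{\log n}$ for $p=2$) gives the slack. Your device of conditioning on $S=E_1+E_2$ and using that $E_1\mid S$ is uniform on $[0,S]$ is a perfectly good alternative to the paper's direct ``with probability $7/8$'' assertion for producing the separation.

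However, there is a genuine gap in the step ``the top $16$ order statistics contribute at most $O(\|F\|_p)$ to $\|z\|_2$.'' Your good event $\{E_1\le S/\alpha\}\cap\{S\le\beta\}\cap\{E_3,\dots,E_{16}\ge 1/100\}$ bounds $E_1$ only from \emph{above}, never from below, so on this event $|z_{D(1)}|=(1\pm o(1))\|F\|_p/E_1^{1/p}$ is unbounded, and hence so are $\|z\|_2$, $\|\zeta\|_2$, $R=\Theta(\|\zeta\|_2)$, and the count-sketch error $\eps\|\zeta_{\text{tail}(1/\eps^2)}\|_2\le\eps\|\zeta\|_2$. The first test survives this (the gap scales with $|z_{D(1)}|$ when $E_1$ is tiny), but the second test does not: $y_{(2)}\approx|\zeta_{D(2)}|\approx\|F\|_p/S^{1/p}$ while $50\eps\mu R\gtrsim\eps\|F\|_p/E_1^{1/p}$, so whenever $E_1/S\lesssim\eps^p$ the condition $y_{(2)}>50\eps\mu R$ fails, and this sub-event lies inside your $\mathcal{G}$ with positive probability. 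Thus the implication ``$\mathcal{G}\Rightarrow$ both inequalities'' is false as stated, and the $O(\|F\|_p)$ (resp.\ $O(\sqrt{\log n}\|F\|_p)$) bound on $\|z\|_2$ is unjustified. The fix is exactly the paper's extra conditioning: add a lower bound such as $E_1>1/100$ (probability $\ge 99/100$) to $\mathcal{G}$, which caps $|z_{D(t)}|=O(\|F\|_p)$ for all $t\in[16]$ and restores the threshold comparisons; your union bound still clears $1/2$ after this addition (note also that $\pr{E_3,\dots,E_{16}\ge 1/100}\approx 0.87$, not ``arbitrarily close to $1$,'' though this does not affect the final count).
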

	\begin{proof}
		By Proposition \ref{prop:1}, with probability $1-3e^{-4} > .9$ we have $\|z_{tail(16)}\|_2 = O(|F\|_p)$ for $p<2$, and $\|z_{tail(16)}\| = O(\sqrt{\log(n)}\|F\|_p)$ when $p=2$. Observe that for $t \in [16]$ we have $|z_{D(t)}| < \|F\|_p (\frac{2}{\sum_{\tau=1}^t E_\tau})^{1/p}$, and with probability $99/100$ we have $E_1 > 1/100$, which implies that $|z_{D(t)}| = O(\|F\|_p)$ for all $t \in [16]$.  Conditioned on this, we have $\|z\|_2 < q\|F\|_p$ where $q$ is a constant when $p<2$, and $q = \Theta(\sqrt{\log(n)})$ when $p=2$. In either case, we know that the estimate $y$ from count sketch satisfies $\| y - |\zeta|\|_\infty < \eps \|\zeta \|_2 < 2 \eps \|z\|_2 = O(\|F\|_p)$. Thus conditioning on the high probability event that $R = \Theta(\|\zeta\|_2)$, we have that 
		$100 \eps \mu R  = O(\|F\|_p)$, where we can rescale the quantity down by any constant by a suitable rescaling of $\eps$.

		Now note that $|z_{D(1)}| = \|F\|_p / E_1^{1/p}$ and $|z_{D(1)}| = \|F\|_p / (E_1 + E_2 (1 \pm n^{-c+1}))^{1/p}$ where $E_1,E_2$ are independent exponentials. So with probability $7/8$, we have all of $|z_{D(1)}|  = \Theta(\|F\|_p)$, $|z_{D(2)}| = \Theta(\|F\|_p)$ and  $|z_{D(1)}| - |z_{D(2)}|  = \Theta(\|F\|_p)$ with sufficiently scaled constants, so scaling $\nu$ by a sufficiently small constant we have $|\zeta_{D(1)}| = \Theta(\|F\|_p)$, $|\zeta_{D(1)}| - |\zeta_{D(2)}|  = \Theta(\|F\|_p)$ and $|\zeta_{D(2)}| = \Theta(\|F\|_p)$. Conditioned on the event in the prior paragraph and on the high probability success of our $L_2$ estimation algorithm and our count-sketch error, our estimates of $|\zeta_{D(1)}|, |\zeta_{D(2)}|$ via $y$ are $\Theta(1)$-relative error estimates, so for $\eps$ small enough the maximum indices in $y$ and $\zeta$ will coincide, and we will have both $y_{(1)} - y_{(2)} > 100 \eps \mu R  = O(\|F\|_p)$ and $y_{(2)} > 50 \eps \mu R =O(\|F\|_p)$. By a union bound, it follows that this condition holds with probability at least $1 - (1/10 +99/100 + 1/8 + O(n^{-c}))> 1/2$ as desired.

	\end{proof}

Putting together the results of this section, we obtain the correctness of our algorithm as stated in Theorem \ref{thm:mainold}.
	\begin{theorem}\label{thm:mainold}
		Given any constant $c \geq 2$, $\nu \geq n^{-c}$, and $0 < p \leq 2$, there is a one-pass $L_p$ sampler which returns an index $i \in [n]$ such that $\pr{i = j} = \frac{|f_j|^p}{\|F\|_p^p}(1\pm \nu)+n^{-c}$ for all $j \in [n]$, and which fails with probability $\delta > 0$. The space required is $O(\log^2(n) \log(1/\delta)(\log \log n)^2)$ bits for $p < 2$,  and $O(\log^3(n) \log(1/\delta)(\log \log n)^2)$ bits for $p = 2$. 
	\end{theorem}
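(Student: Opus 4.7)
The plan is to mirror the proof of Theorem \ref{thm:main}, relying on the two key lemmas already proved (Lemmas \ref{lem:main2} and \ref{lem:failbound2}) together with the generic derandomization in Theorem \ref{thm:derandomGeneral}. First I would verify correctness of the output distribution. Conditioned on not outputting \texttt{FAIL}, the test $y_{(1)} - y_{(2)} \geq 100\mu\eps R$ combined with the count-sketch guarantee $\|y - |\zeta|\|_\infty \leq \eps\|\zeta_{\text{tail}(1/\eps^2)}\|_2 \leq \eps\|\zeta\|_2$ and the $L_2$ estimator guarantee $R \geq \tfrac{1}{2}\|\zeta\|_2$ (which hold with probability $1-n^{-c}$) force the returned index $i_j\in[n^c]$ to satisfy $|\zeta_{i_j}| = \max_{i'}|\zeta_{i'}|$. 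Since $\texttt{rnd}_\nu(\cdot)$ preserves order, also $i_j = D(1)$ where $D$ is the anti-rank vector from Section \ref{sec:expo}. By Lemma \ref{lem:main2}, $\pr{\neg\texttt{FAIL}\mid D(1)=i_j} = q \pm O(\log(n)\nu)$ where $q = \pr{\neg\texttt{FAIL}}$, and Lemma \ref{lem:failbound2} gives $q \geq 1/2$ in both the $p<2$ and $p=2$ regimes. Summing over the $n^{c-1}$ duplicates of each $i\in[n]$ and using $\pr{D(1)=i_j} = |F_{i_j}|^p/\|F\|_p^p$, the probability of returning $i$ is $\tfrac{|f_i|^p}{\|f\|_p^p}(q\pm O(\log(n)\nu))$, which after rescaling $\nu$ and $c$ by $\poly\log(n)$ and conditioning on outputting an index yields the stated relative-plus-additive error. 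Running $O(\log(1/\delta))$ parallel copies drives the failure probability to $\delta$.

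Next I would handle the space complexity via derandomization. The complete state of one instance of the algorithm is a linear sketch $(A,B) = M\cdot\zeta$, where $M$ contains the count-sketch signs/Gaussian coefficients and the $L_2$ estimator's Gaussians and has $k = O(\eps^{-2}\log n)+O(\log n)$ rows; thus $k = O(\log n)$ for $p<2$ and $k = O(\log^2 n)$ for $p=2$, with each cell an $O(\log n)$-bit integer after the standard discretization of Lemma \ref{lem:runtime}. The transformed stream satisfies $\zeta = D\cdot F$ where $D = \text{diag}(\texttt{rnd}_\nu(1/t_k^{1/p}))$ has i.i.d. entries sampleable with $O(\log n)$ bits. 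I would derandomize $M$ and $D$ in two stages, following the template of Theorem \ref{thm:derandom}. (i) For any fixed $M$, every bucket value of the sketch and every value of $\mu R$ is a $\poly(n)$-bounded linear function of the entries of $D$; the test ``does the algorithm output index $i$'' is determined by at most $\poly(k)$ pairwise comparisons among these linear forms, each a halfspace in $D$. Applying Lemma \ref{lemma:halfspace} with $\lambda = \poly(k)$ halfspaces gives a PRG of seed length $O(k\log(n)(\log\log n)^2)$ that fools this tester to error $n^{-\Omega(k)}$. (ii) With $D$ now produced by this PRG, absorb $D$ into the stream ($F' := DF$) and derandomize $M$ directly by Theorem \ref{thm:derandomGeneral} (or more precisely Corollary \ref{cor:derandomGeneral}, so as to also fool any additional linear combinations of $\text{vec}(M)$ the algorithm inspects), using a further $O(k\log(n)(\log\log n)^2)$ bits. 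Averaging over the two derandomized sources as in Theorem \ref{thm:derandom} shows the overall output distribution changes by at most $n^{-\Omega(k)}$, which is absorbed into the $n^{-c}$ additive error. The total space per instance is $O(k\log(n)(\log\log n)^2) = O(\log^2(n)(\log\log n)^2)$ for $p<2$ and $O(\log^3(n)(\log\log n)^2)$ for $p=2$; running $\log(1/\delta)$ independent derandomized copies in parallel yields the stated bounds.

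The main obstacle will be step (i): showing that the tester ``algorithm outputs index $i$'' can indeed be expressed as a function of only $\poly(k)$ bounded-integer halfspaces in $D$, so that Lemma \ref{lemma:halfspace} applies with $\lambda = \poly(k)$ rather than $\lambda = \poly(n)$. The count-sketch estimate $y_j$ is a median of bucket values associated to $j$, so determining $y_{(1)}, y_{(2)}$ and the sign of $y_{(1)}-y_{(2)} - 100\mu\eps R$ and of $y_{(2)} - 50\mu\eps R$ must be reduced to pairwise comparisons among the $O(k)$ sketch cells. A careful accounting, analogous to how tests \eqref{test1}--\eqref{test3} in the proof of Theorem \ref{thm:derandom} reduce ``is $|B_1| > |B_2|$'' to a constant number of halfspaces, shows that these comparisons only involve $O(k^2)$ halfspaces in the $D$-variables (the extra index tests $y_i = y_{(1)}$ for a specific $i$ being determined by the sketch output and its column identifiers, which are themselves short linear combinations available through Corollary \ref{cor:derandomGeneral}). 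Once this is in place the two-stage derandomization goes through essentially verbatim.
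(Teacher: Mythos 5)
Your correctness argument (conditioning on $\neg\texttt{FAIL}$, using the gap test to force the reported index to equal $D(1)$, then invoking Lemmas \ref{lem:main2} and \ref{lem:failbound2} and summing over duplicates) matches the paper's proof. The problem is in your derandomization, specifically stage (i). You propose to express the tester ``algorithm outputs index $i$'' as a function of $\lambda=\poly(k)$ halfspaces in the exponential-derived variables and then invoke Lemma \ref{lemma:halfspace}; but the seed length in that lemma is $O(\lambda\log(nM/\eps)(\log\log(nM/\eps))^2)$, i.e.\ it grows \emph{linearly} in the number of halfspaces, so $\lambda=\poly(k)$ does not give seed length $O(k\log n(\log\log n)^2)$ as you assert. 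Resolving the medians $y_j$ (each a median of absolute cell values), the argmax/second max among them, and the two threshold tests genuinely requires on the order of $k^2$ pairwise comparisons among the $k=O(\eps^{-2}\log n)$ sketch cells; with $\lambda=\Theta(k^2)$ the seed becomes $O(\log^3 n(\log\log n)^2)$ for $p<2$ and $O(\log^5 n(\log\log n)^2)$ for $p=2$, missing the claimed bounds. (This is exactly why the comparison-based reduction with $O(1)$ halfspaces per row works for the count-max algorithm in Theorem \ref{thm:derandom} but does not transfer to the median-based count-sketch tester here.)

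The paper's route avoids counting comparisons altogether: for fixed count-sketch/Gaussian randomness, the entire state of the algorithm is the linear sketch $Z'\cdot t$ in the (discretized) exponential variables, so one fools the \emph{pointwise} distribution $\pr{Z'\cdot t=y}$ — each such event is an intersection of only $2k$ bounded-integer halfspaces — with error $n^{-ck}$, and then sums over the $n^{O(k)}$ possible sketch values as in Lemma \ref{lemma:derandomGeneral}/Theorem \ref{thm:derandomGeneral}. Since \emph{any} function of the sketch (medians, argmax, the $\mu R$ threshold tests) is then automatically fooled, $\lambda=O(k)$ suffices, giving seed $O(k\log n(\log\log n)^2)$, i.e.\ $O(\log^2 n(\log\log n)^2)$ for $p<2$ and $O(\log^3 n(\log\log n)^2)$ for $p=2$ (where $\eps=\Theta(1/\sqrt{\log n})$ makes $k=\Theta(\log^2 n)$). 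Your stage (ii) — absorbing the scalings into the stream and derandomizing the sketch matrix via Theorem \ref{thm:derandomGeneral}/Corollary \ref{cor:derandomGeneral} — agrees with the paper; replace stage (i) by this sketch-value-fooling argument and the proof goes through.
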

	
	\begin{proof}
 We claim that conditioned on not failing, we have that $i^* = \arg \max_i \{y_i\} = \arg \max_i \{|z_i|\}$. First, condition on the success of our count-sketch estimator, and on the guarantees of our estimate $R$, which occur with probability $1 - n^{-c}$ together. Since the gap between the two largest coordinates in $y$ is at least $100 \eps \mu R > 20 \eps \|\zeta \|_2 \geq 20  \|y - |\zeta|\|_\infty$ ($20$ times the additive error in estimating $|\zeta|$), it cannot be the case that the index of the maximum coordinate in $y$ is different from the index of the maximum coordinate (in absolute value) in $\zeta$, and moreover both $y$ and $\zeta$ must have a unique maximizer. Then we have $|\zeta_{i^*}| - |\zeta_{(2)}| = |\zeta_{(1)}| - |\zeta_{(2)}| > 18 \eps \|\zeta\|_2$, and since $z_{i} = (1 \pm  O(\nu))\zeta_i$ for all $i$, we have $\| |\zeta| - |z| \|_\infty \leq O( \nu) \|\zeta\|_2$. Scaling $\nu$ down by a factor of $\eps = \Omega(\sqrt{1/\log(n)})$ (which is absorbed into the $\tilde{O}(\nu^{-1})$ update time), the gap between the top two items in $\zeta$ is $18$ times large than the additive error in estimating $z$ via $\zeta$. Thus we must have $i^* = \arg \max_i  \{|\zeta_i\|\} = \arg \max_i \{|z_i|\}$, which completes the proof of the claim.		
	
			Now Lemma \ref{lem:main2} states for any $i_j \in [n^c]$ that $\pr{\neg \text{FAIL} \:| \: i_j = \arg \max_{i',j'} \{|z_{i_{j'}'}|\}} = \pr{\neg \text{FAIL}} \pm O(\log(n) \nu) = q \pm O(\log(n)\nu)$, where $q =  \pr{\neg \text{FAIL}} = \Omega(1)$ is a fixed constant, by Lemma \ref{lem:failbound2}, which does not depend on any of the randomness in the algorithm.
		Since conditioned on not failing we have  $\arg \max_i \{y_i\} = \arg \max_i \{|z_i|\}$, the probability we output $i_j \in [n^c]$ is $\pr{\neg \text{FAIL} \cap y_{i_j} \text{ is the max in } y } = \pr{\neg \text{FAIL} \cap |z_{i_j}| \text{ is the max in } |z|}$ (conditioned on the high probability events in the prior paragraph), so the probability our final algorithm outputs $i \in [n]$ is 
		 \[\sum_{j \in [n^{c-1}]}\allowbreak \pr{\neg \text{FAIL} \:| \: i_j = \arg \max_{i',j'}\{|z_{i_{j'}'}|\}}\pr{ i_j = \arg \max_{i',j'}\{|z_{i_{j'}'}|\}}  = \sum_{j \in [n^{c-1}]} \frac{|f_i|^p}{\|F\|_p^p} ( q\pm O(\log(n)\nu))\]\[ =  \frac{|f_i|^p}{\|f\|_p^p}(q\pm O(\log(n)\nu))\]
		  The potential of the failure of the various high probability events that we conditioned on only adds another additive $O(n^{-c})$ term to the error. Thus, conditioned on an index $i$ being returned, we have $\pr{i = j} = \frac{|f_j|^p}{\|f\|_p^p}(1\pm O(\log(n)\nu)) \pm n^{-c}$ for all $j \in [n]$, which is the desired result after rescaling $\nu$ down by a factor of $\Omega(1/\log(n))$ (we need only scale down by $\Omega(1/\sqrt{\log(n)})$ after already rescaling by $\eps = \Theta(1/\sqrt{\log(n)}$ when $p=2$).
	  Running the algorithm $O(\log(\delta^{-1}))$ times in parallel, it follows that at least one index will be returned with probability $1-\delta$.

	   Theorem \ref{thm:derandomold} shows that the entire algorithm can be derandomized to use a random seed with $O(\log^2(n)(\log \log n)^2)$-bits for $p<2$ and  $O(\log^3(n)(\log \log n)^2)$-bits for $p=2$, which dominates the space required to store the sketches of the sampling algorithm themselves. Repeating $O(\log(1/\delta))$ times to obtain $\delta$ failure probability gives the stated space bounds.

\end{proof}

\begin{remark}\label{rem:runtimeold}
	Using roughly the same update-procedures and a similar analysis as in Section \ref{sec:runtime}, one can implement the above $L_p$ sampling algorithm to have $\tilde{O}(\nu)$ update time and $\tilde{O}(1)$ report time, just as in Theorem \ref{thm:main}. The only difference is the use of Rademacher $\{1,-1\}$ variables in the count-sketch instead of Gaussians, and the change to make the variables $g_{i,j,k}$ independent. These Rademacher variables are easier to handle, as one can just compute, for a given bucket $A_{i,j}$ of count-sketch, the number of items which hash into this bucket with a $1$ and $-1$ sign, and add the corresponding value to that bucket. This is simply another computation of a binomial random variable. The variables $g_{i,j,k}$ can be handled in \texttt{Fast-Update} by a modifying the procedure to draw a binomial to determine how many items hash to each bucket $A_{i,j}$ independently for each $j \in [k]$. This is as opposed to the \texttt{Fast-Update} of Figure \ref{fig:update-CS}, which only allows an item to be hashed into a single bucket in each row of $A$. In other words, we change Figure \ref{fig:update-CS} to deal with the modified variables $g_{i,j,k}$ by simply removing step $1(d)$ which decrements the value of $W_k$, which is the counter of items left to be hashed in a row $k$ of $A$.
	
	 To show that the output of this algorithm is the same when only searching through a subset $\mathcal{K}$ of the coordinates (where $\mathcal{K}$ is as in Section \ref{sec:runtime}) for the maximizers $y_{(1)},y_{(2)}$, observe that the test $y_{(2)} \geq 50 \eps \mu R$ enforces that, conditioned on not failing, both $y_{(1)}$ and $y_{(2)}$ will be large enough to be contained in the set $\mathcal{K}$. Thus we can safely implement the \texttt{Fast-Update} procedure to give improved update time, and the \texttt{ExpanderSketch} of Theorem \ref{thm:expander} to obtain the improved query time.
\end{remark}

\section{Derandomizing the Original Algorithm}
We now show how our original algorithm can be derandomized using the same techniques as in Section \ref{sec:runtime}. For this section, we let $B \in \R^{O(\log(n))}$ be the sketch stored for the high probability $L_2$ estimation used in the $L_p$ sampler as in Lemma \ref{lem:L2}. Note that $B = G \cdot \zeta$, where $G$ is a matrix of i.i.d. Gaussian variables.

\begin{theorem}\label{thm:derandomold}
	The algorithm of Section \ref{sec:samplerOld} can be derandomized to run in $O(\log^2(n) \log(1/\delta) \allowbreak (\log \log(n))^2)$ space for $p<2$, and $O(\log^3(n) \log(1/\delta) (\log \log(n))^2)$ space for $p=2$.
\end{theorem}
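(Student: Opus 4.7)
The plan is to derandomize \texttt{$L_p$ Sampler} in two layers, following the template of Theorem \ref{thm:derandom} but using the general linear-sketch derandomization of Theorem \ref{thm:derandomGeneral} in place of Nisan's PRG. As pointed out in the remark at the end of Section \ref{subsec:derandom}, the state maintained by \texttt{$L_p$ Sampler} of Figure \ref{fig:samplerOld} can be written as a linear sketch $(MD)F$, where $F$ is the duplicated frequency vector, $D=\operatorname{diag}(1/t_1^{1/p},\ldots,1/t_{n^c}^{1/p})$ collects the exponential scalings, and $M$ stacks the count-sketch coefficient matrix (in the Minton--Price form of Section \ref{subsec:derandom}, whose entries are i.i.d.) with the $L_2$-estimator Gaussians. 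After $O(\log n)$-bit discretization as in Lemma \ref{lem:runtime}, the entries of $M$ and $D$ are both i.i.d.\ and each samplable with $O(\log n)$ bits. Let $k' = dk+O(\log n)$ denote the number of rows of $M$; under the $\eps$-schedule of \texttt{$L_p$ Sampler}, this equals $O(\log n)$ for $p<2$ and $O(\log^2 n)$ for $p=2$.

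First I would fix any realization of $M$ and derandomize $D$ using Lemma \ref{lemma:halfspace}. Conditioned on $M$, each coordinate of $MDF$ is a linear function of the discretized variables $\{1/t_i^{1/p}\}$, so the event $\{MDF = y\}$ for any fixed target $y\in\R^{k'}$ of $O(\log n)$-bit precision is a conjunction of $2k'$ halfspaces in those variables. Invoking Lemma \ref{lemma:halfspace} with $\lambda = O(k')$ and a sufficiently small error parameter gives a seed of length $O(k' \log n (\log\log n)^2)$ that $n^{-\Omega(k')}$-fools any such conjunction; summing over the $\poly(n)^{k'}$ possible sketch values bounds the total-variation distance between $MDF$ and $MD^*F$ by $n^{-c}$ for any fixed $M$, which averages to the same bound over $M$.

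Second, having fixed a halfspace-PRG seed so that $\zeta^*:=D^*F$ is a fixed vector, I would derandomize $M$ via Corollary \ref{cor:derandomGeneral} applied to the sketch $M\zeta^*$. The tester is the indicator of ``\texttt{$L_p$ Sampler} outputs $i$'' for a fixed $i\in[n^c]$. Conditioned on both the count-sketch tail bound of Theorem \ref{thm:count-sketch2} and the built-in failure test $y_{(2)} > 50\eps\mu R$ (whose failures contribute only $n^{-c}$ each to the error budget), any index output by the sampler must coincide with one of the $O(\eps^{-2})$ heaviest antirank indices of $\zeta^*$, which are deterministic given $\zeta^*$. Therefore the tester can be computed from $M\zeta^*$ together with only $t = O(k')$ additional linear functions of $\operatorname{vec}(M)$: for each of those $O(\eps^{-2})$ antirank indices $l$ and each row $i\in[d]$, a pair of linear functions such as $\sum_j M_{(i,j),l}$ and $\sum_j j\cdot M_{(i,j),l}$ suffices to recover the bucket position and sign required for $y_l$ (at all rows where each coordinate has a unique hash, which by Chernoff holds with probability $1-n^{-\Omega(1)}$ per row). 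Corollary \ref{cor:derandomGeneral} then produces a derandomized $M^*$ from a seed of length $O(k' \log n (\log\log n)^2)$ that fools ``output equals $i$'' to within $n^{-\Omega(k')}$; a union bound over $i\in[n^c]$ plus the step-one error remains well within the $n^{-c}$ budget of a perfect sampler.

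Combining the two seeds gives total space $O(k' \log n (\log\log n)^2) = O(\log^2 n (\log\log n)^2)$ for $p<2$ and $O(\log^3 n (\log\log n)^2)$ for $p=2$, which dominates the raw storage required by the count-sketch and $L_2$ state; running $O(\log(1/\delta))$ independent parallel copies of \texttt{$L_p$ Sampler} with independent PRG seeds yields the stated $\log(1/\delta)$ factor. The main technical obstacle is the $t = O(k')$ bound of the second layer at $p=2$, where $k=O(\log^2 n)$ already nearly saturates the target budget $k+t = O(\log^2 n)$: this forces one to exploit both the failure condition of \texttt{$L_p$ Sampler} (which restricts candidate outputs to the $O(\log n)$ heaviest antirank indices of $\zeta^*$) and the hash-sparsity of the Minton--Price sketch (which encodes each such index's hash/sign information in $O(d)=O(\log n)$ rather than $O(dk)=O(\log^2 n)$ effective linear functions of $M$).
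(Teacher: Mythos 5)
Your proposal follows essentially the same two-layer route as the paper's proof: first fix the count-sketch/Gaussian randomness and derandomize the discretized exponentials by fooling the events $\{Z'\cdot t = y\}$ with the half-space PRG of Lemma \ref{lemma:halfspace}, then fix that seed and derandomize the sketching randomness via Theorem \ref{thm:derandomGeneral}/Corollary \ref{cor:derandomGeneral}, with the same seed lengths and the same averaging argument. The only (easily repaired) imprecision is that the count-sketch entries and the $L_2$-estimator Gaussians are not identically distributed, so they must be handled by two independent PRG instances (as the paper does by splitting $Z$ into $Z_1$ and $Z_2$) rather than as a single i.i.d.\ matrix $M$; your explicit accounting of the $t=O(k')$ auxiliary linear functions needed to recover hash/sign data for the candidate indices is, if anything, more careful than the paper's treatment.
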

\begin{proof}
	We use the same notation $\mathcal{A}(r_e,r_c)$ as in Theorem \ref{thm:derandom}. Recall here that $r_e$ is the randomness required for the exponentials, and $r_c$ is the randomness required for count-sketch (and now $r_c$ must also include the randomness required for the $L_2$ estimation sketch $B$).  For any fixed randomness $r_c$, let $\mathcal{A}_{r_c}(r_e)$ be the tester which tests if our $L_p$ sampler would output the index $i$, where now the bits $r_c$ are hard-coded into the tester, and the random bits $r_e$ for the exponentials are taken as input. 
	
	Now note that the entire sketch stored by our algorithm can be written as $Z \cdot \zeta$, where $Z \in \R^{O(\log(n) / \eps^2) \times n}$ is a fixed matrix defined by the count-sketch randomness $r_c$, and $\zeta$ is the scaled (by inverse exponentials) and rounded stream vector of the algorithm. Here $Z \cdot \zeta = [\texttt{vec}(A); \texttt{vec}(B)]$ where $\texttt{vec}(A)$ denotes the vectorization of the count-sketch matrix $A$ (and resp. $B$), and $[x;y]$ is vector which stacks $x$ on top of $y$. Note that we can pull the scalings by $F$ into the matrix $Z$ (making it into a new fixed matrix $Z'$), so our sketch can be written as $Z' \cdot t$, where for $j \in [n^c]$ we have $t_j = \texttt{rnd}_\nu (1/t_j^{1/p})$ and $t_j$'s are the i.i.d. exponentials.

	Since we are rounding the exponentials to powers of $(1 + \nu)$ anyway, we can restrict the support of the coordinates in $t$ to a discrete support of size $O(\poly(n))$ such that each value occurs with probability at least $1/\poly(n)$ for a suitably larger $\poly(n)$. This allows us to sample the variables $\texttt{rnd}_\nu(1/t_i^{1/p})$  using $O(\log(n))$-bits of space as needed for Lemma \ref{lemma:halfspace}.  Thus our entire algorithm requires $\poly(n)$ random bits to be generated for the exponentials. Similarly, for the random Gaussians used to estimate the $L_2$ in the sketch $B$, one can truncate to $O(\log(n))$-bits, incurring only an additive $n^{-c}$ error in these buckets, which can be absorbed in to the adversarial error which is already handled in Lemma \ref{lem:main2}. Restricting the support of the Gaussians so that each value occurs with probability at least $1/\poly(n)$, it follows that these Gaussians can also be sampled using $O(\log(n))$-bits each. The only remaining randomness are the random signs and $h_{i,j,k}$ in count sketch, each of which have a support of size $2$ and can be sampled with $O(\log(n))$-bits.  So using Lemma \ref{lemma:halfspace}, we can fool the tester which tests if $Z' \cdot t = y$ for any $y$ with $O(\log(n))$ bounded bit-complexity, using a seed of $O(\log^2(n) (\log \log n)^2)$ bits (and $O(\log^3(n) (\log \log n)^2)$ for $p=2$). Then as in Theorem \ref{thm:derandomGeneral}, since we can fool $\pr{Z' \cdot t = y}$, we can also fool any tester which takes as input $y=Z' \cdot t$ and outputs whether or not on input $y$ our algorithm would output $i \in [n]$. Thus if $G(x)$ is one instance of the PRG from Lemma \ref{lemma:halfspace}, we have
	$\pr{\mathcal{A}_{r_c}(r_e)} \sim_{n^{-O(\log(n))}} \pr{\mathcal{A}_{r_c}(G(x))}$, and similarly, as in Theorem \ref{thm:derandom}
	\begin{equation*}
	\begin{split}
	\bpr{\mathcal{A}(r_e,r_c)}  & = 	\sum_{r_c}  \bpr{\mathcal{A}_{r_c}(r_e)}  \bpr{r_c}  \\
	& =\sum_{r_c} ( (\bpr{\mathcal{A}_{r_c}(G(x))} \pm n^{-O(\log(n))}) \bpr{r_c} \\
	& =\sum_{r_c}  (\bpr{\mathcal{A}_{r_c}(G(x))}\bpr{r_c} \pm \sum_{r_c} n^{-O(\log(n))}  \bpr{r_c}  \\ 
	& \sim_{n^{-O(\log(n))}} \bpr{\mathcal{A}(G(x),r_c)} \\		
	\end{split}
	\end{equation*}
	Now fix any seed $G(x)$, and consider $\mathcal{A}_{G(x)}(r_c)$ which on fixed exponential randomness $G(x)$ and fresh count-sketch randomness $r_c$, tests whether out algorithm would output $i \in [n]$. Note that this algorithm simply maintains the same sketch $Z \cdot \zeta = [\texttt{vec}(A), \texttt{vec}(B)]$ as above. Note that the entries of $Z$ are of two forms: the i.i.d. count-sketch randomness and the i.i.d. Gaussians needed for the sketch $B$. By Theorem \ref{thm:derandomGeneral}, we can derandomize both of these separately by two more instances $G(x_2),G(x_3)$ of the PRG of Lemma \ref{lemma:halfspace}, each using seeds $x_2,x_3$ of $O(\log^2(n) (\log\log n)^2)$ bits of space for $p<2$ and $O(\log^3(n) (\log\log n)^2)$ bits of space for $p=2$. So if $Z_1$ is the first set of rows of $Z$ which correspond to the count-sketch randomness, and $Z_2$ is the rest of the rows which contain i.i.d. Gaussians, we have that for all $y,y'$ with $O(\log(n))$-entrywise bounded bit complexity: $\pr{Z_1\cdot  \zeta = y} \sim_{n^{-O(\log(n))}} \pr{G(x_2) \cdot \zeta = y}$ and $\pr{Z_2\cdot  \zeta = y'} \sim_{n^{-O(\log(n))}} \pr{G(x_3) \cdot \zeta = y'}$. Here we are abusing notation and thinking of the PRG randomness $G(x_2)$ as being formed into the matrix which it defines.
	
	 Since $G(x_2)$ is independent of $G(x_3)$, for any $y$ of $O(\log(n))$-entrywise bounded bit complexity, we have $\pr{Z \cdot \zeta = y} \sim_{n^{-O(\log(n))}} \pr{[G(x_2) ; G(x_3)] \cdot \zeta = y}$. Thus we fool the entire tester $\mathcal{A}_{G(x)}(r_c)$ with $\mathcal{A}_{G(x)}(G(x_2) \cup G(x_3))$, meaning $\pr{\mathcal{A}_{G(x)}(r_c)} \sim_{n^{-O(\log(n))}} \pr{\mathcal{A}_{G(x)}(G(x_2) \cup G(x_3))}$, and by a similar averaging arguement as above, we have $\pr{\mathcal{A}(G(x), r_c)} \sim_{n^{-O(\log(n))}} \pr{\mathcal{A}(G(x), G(x_2) \cup G(x_3))}$, thus $	\pr{\mathcal{A}(r_e,r_c)} \sim_{n^{-O(\log(n))}} \pr{ \mathcal{A}(G(x), G(x_2) \cup G(x_3))}$, which completes the proof. We note that any coordinate output by the PRG of Lemma \ref{lemma:halfspace} (and thus Theorem \ref{thm:derandomGeneral}) can be computed in space linear in the seed length required by Proposition \ref{prop:PRGspacetime}, thus the space required to evaluate the generator is linear in the seed length.
\end{proof}

\end{document}